\documentclass[12pt,oneside,reqno]{amsart}
\usepackage{geometry}
\usepackage[utf8]{inputenc}
\usepackage[T1]{fontenc}
\usepackage{verbatim}
\usepackage{amsmath}
\usepackage{amsthm}
\usepackage{mathabx}
\usepackage{amssymb}
\usepackage{amsfonts}
\usepackage{mathtools}
\usepackage{nccmath}
\usepackage{caption}
\usepackage{tikz}
\usepackage{xcolor}
\usepackage{enumitem}
\usepackage{braket}
\usepackage{mathrsfs}
\usepackage{verbatim}
\usepackage[english]{babel}
\usepackage{hyphenat}
\usepackage{bbm}
\usepackage{calligra}
\usepackage{empheq}
\usepackage{mathtools}
\usepackage{dutchcal}
\usepackage[T1]{fontenc}
\usepackage{lmodern}
\usepackage[nopatch=footnote]{microtype}
\usepackage{fmtcount}
\usepackage[dvipsnames]{xcolor}
\usepackage{comment}
\usepackage{soul}

\usepackage[numbers]{natbib}
\usepackage[colorlinks=true, linkcolor=Maroon, urlcolor=Maroon]{hyperref}

\newcommand{\RN}[1]{%
  \textup{\uppercase\expandafter{\romannumeral#1}}%
}

\newif\ifshowdetails
\showdetailsfalse

\ifshowdetails
  \newenvironment{details}{%
    \par\medskip\noindent
    \stepcounter{details}%
    {\color{Red}\bfseries Details \thedetails:}\ %
    \begingroup\color{RoyalBlue}\small
  }{%
    \endgroup\par\medskip
  }
\else
  \excludecomment{details}
\fi

\newcounter{details}[section]
\renewcommand{\thedetails}{\thesection.\arabic{details}}

\usepackage[english]{babel}

\allowdisplaybreaks
\usepackage{bm} 

\usepackage{graphicx,url}

\usepackage[svgnames]{xcolor}

\hypersetup{citecolor=blue}

\hypersetup{linkcolor=Maroon,citecolor=DarkBlue,urlcolor=red}

\usepackage{dsfont}
\usepackage{suetterl} 
\usepackage{pgfplots}	
\pgfplotsset{compat=1.18} 

\usepackage{caption}
\usepackage{subcaption}		
\usepackage{braket}
\usepackage{stmaryrd} 

\usepackage[T1]{fontenc}
\usepackage{currvita}

\usepackage{csquotes} 

\newcommand{\assumref}[1]{\hyperref[#1]{\textcolor{Maroon}{Assumption~\ref*{#1}}}}

\newcommand{\secref}[1]{\hyperref[#1]{\textcolor{Maroon}{Section~\ref*{#1}}}}
\newcommand{\hypref}[1]{\hyperref[#1]{\textcolor{Maroon}{Hypothesis~\ref*{#1}}}}

\newcommand{\appref}[1]{\hyperref[#1]{\textcolor{Maroon}{Appendix~\ref*{#1}}}}

\newcommand{\appsubref}[1]{\hyperref[#1]{\textcolor{Maroon}{Appendix~\ref*{#1}}}}

\newcommand{\thmref}[1]{\hyperref[#1]{\textcolor{Maroon}{Theorem~\ref*{#1}}}}
\newcommand{\lemref}[1]{\hyperref[#1]{\textcolor{Maroon}{Lemma~\ref*{#1}}}}
\newcommand{\propref}[1]{\hyperref[#1]{\textcolor{Maroon}{Proposition~\ref*{#1}}}}
\newcommand{\corref}[1]{\hyperref[#1]{\textcolor{Maroon}{Corollary~\ref*{#1}}}}
\newcommand{\remref}[1]{\hyperref[#1]{\textcolor{Maroon}{Remark~\ref*{#1}}}}
\newcommand{\figref}[1]{\hyperref[#1]{\textcolor{Maroon}{Figure~(\ref*{#1})}}}

\makeatletter
\newcommand{\hitem}[2]{%
  \item[\textbf{(H#1)}] \phantomsection\def\@currentlabel{#1}\label{#2}%
}
\makeatother

\renewcommand{\hypref}[1]{\hyperref[#1]{\textcolor{Maroon}{(H\ref*{#1})}}}

\usepackage{simpler-wick} 	
\newlength{\wdth}

\usepackage{tikz}
\usepgfplotslibrary{fillbetween}
\usetikzlibrary{patterns}

\usetikzlibrary{arrows}
\usetikzlibrary{arrows,positioning,arrows.meta} 

\allowdisplaybreaks		

\def\C{{\mathbb C}}
\def\N{{\mathbb N}}	
\def\R{{\mathbb R}}				 

\def\cB{{\mathcal B}}

\def\cE{{\mathcal E}}

\def\cH{{\mathcal H}}     \def \ch{{\mathcal h}}

\def\cL{{\mathcal L}}
     
\def\cN{{\mathcal N}}

\def\cR{{\mathcal R}}

\def\cX{{\mathcal X}}

\def\cO{{\mathcal O}}

\def \Resa{\mathcal{Res}}
\def \Rema{\mathcal{Rem}}
\def \Resaa{\mathcal{Res}^{(-)}}
\def \Resab{\mathcal{Res}^{(+)}}

\def \Reskaa{\mathcal{Res}_{k}^{(-)}}
\def \Reskab{\mathcal{Res}_{k}^{(+)}}
\def \Remka{\mathcal{Rem}_{k}}
\def \Remkaa{\mathcal{Rem}_{k}^{(-)}}
\def \Remkab{\mathcal{Rem}_{k}^{(+)}}
\def \Remkab{\mathcal{Rem}_{k}^{(+)}}		

\def \Lhar{\Lambda^{Har}_{k,k';j,j'}}
\def \Lp{\Lambda_{k,k';j,j'}^{+}}
\def \Lm{\Lambda_{k,k';j,j'}^{-}}
\def \Gm{\Gamma_{k,k';j,j'}^{-}}
\def \Gp{\Gamma_{k,k';j,j'}^{+}}
\def \Lpkkdjdj{{\Lambda_{k,k';j',j}}^{+}}
\def \Lmkkdjdj{\Lambda_{k,k';j',j}^{-}}
\def \Gmkkdjdj{\Gamma_{k,k';j',j}^{-}}
\def \Gpkkdjdj{\Gamma_{k,k';j',j}^{+}}
\def \FGRkj{{\Gamma_{k,k';j,j'}^{{FGR}}}}
\def \FGRk{{\Gamma_{k,k'}^{{FGR}}}}

\def \ms #1 {\mathscr{#1}}	

\def \pv {\operatorname{PV}}

\def\rchi{\raisebox{0.3ex}{$\chi$}}

\renewcommand{\Im}[1]{ \mathfrak{Im}\prc{ #1 }}
\renewcommand{\Re}[1]{ \mathfrak{Re}\prc{ #1 }}

\def\1{{{\mathds 1}}}

\newcommand{\ceil}[1]{\left\lceil #1 \right\rceil}

\newcommand{\supp}{\operatorname{supp}}

\def\eqn*{\begin{align*}}			
	\def\eeqn*{\end{align*}}		
\def\eqn{\begin{eqnarray}}				  
	\def\eeeqn{\end{eqnarray}}

\makeatletter
\def\namedlabel#1#2{\begingroup					
	#2%
	\def\@currentlabel{#2}%
	\phantomsection\label{#1}\endgroup
}
\makeatother

\theoremstyle{plain}
\newtheorem{theorem}{Theorem}[section]		
\newtheorem{proposition}[theorem]{Proposition}		   

\newtheorem{lemma}[theorem]{Lemma}

\newtheorem{remark}[theorem]{Remark}
\newtheorem*{remark*}{Remark}
		
\newtheorem{assumption}{Assumption}		
\numberwithin{equation}{section}

\def\Hf{\cH_f}

\def\veps{\vep}

\def\vfi{\phi} 

\def\cRep{\cR_\vep}

\def\vep{{\varepsilon}}

\def\eps{\epsilon}

\def\lam{\lambda}

\def\omg{\omega}

\makeatother

\makeatletter
\def\paragraph{\@startsection{paragraph}{4}%
  \z@\z@{-\fontdimen2\font}%
  {\normalfont\bfseries}}
\makeatother

\makeatletter
\def\@tocline#1#2#3#4#5#6#7{\relax
  \ifnum #1>\c@tocdepth 
  \else
    \par \addpenalty\@secpenalty\addvspace{#2}%
    \begingroup \hyphenpenalty\@M
    \@ifempty{#4}{%
      \@tempdima\csname r@tocindent\number#1\endcsname\relax
    }{%
      \@tempdima#4\relax
    }%
    \parindent\z@ \leftskip#3\relax \advance\leftskip\@tempdima\relax
    \rightskip\@pnumwidth plus4em \parfillskip-\@pnumwidth
    #5\leavevmode\hskip-\@tempdima
      \ifcase #1
       \or\or \hskip 2em \or \hskip 3em \else \hskip 4em \fi%
      #6\nobreak\relax
    \hfill\hbox to\@pnumwidth{\@tocpagenum{#7}}\par
    \nobreak
    \endgroup
  \fi}
\makeatother

\usepackage[affil-it]{authblk}
\usepackage{orcidlink}
\usepackage{amsaddr}
 
\newcommand\abs[1]{\left|#1 \right|}
\newcommand\norm[1]{\left\lVert#1 \right\rVert}

\newcommand{\jpp}[1]{{\langle #1 \rangle}}

\newcommand{\inp}[1]{ \left \langle #1 \right \rangle}
\newcommand{\ip}[2]{\langle #1, #2 \rangle}

\newcommand{\prc}[1]{ {\left( #1 \right)}}
\newcommand{\spr}[1]{ {\left[ #1 \right]}}
\newcommand{\brs}[1]{ \left\{ #1 \right\}}
\def\pt{\partial}

\newcommand{\cc}[1]{\overline{#1}}

\newcommand*\mcap{\mathbin{\mathpalette\mcapinn\relax}}
\newcommand*\mcapinn[2]{%
  \vcenter{\hbox{%
    \scalebox{0.7}{$\mathsurround=0pt
      \ifx\displaystyle#1\textstyle\else#1\fi\bigcap$}}}}

\newcommand*\mcupinn[2]{%
  \vcenter{\hbox{%
    \scalebox{0.7}{$\mathsurround=0pt
      \ifx\displaystyle#1\textstyle\else#1\fi\bigcup$}}}}

\begin{document} 
    

 

\title[Trapped bosons in mean field QED, nonlinear resonance cascades and dynamical BEC formation]{Trapped bosons in mean field QED, nonlinear resonance cascades and dynamical BEC formation}
    
\author{Thomas Chen$^\dagger$\orcidlink{0000-0003-2704-1454},\quad Ali Mezher$^\dagger$\orcidlink{0000-0002-9215-1804}}

\address{$^\dagger$Department of Mathematics, University of Texas at Austin, Austin, TX 78712, USA}
\email{tc@math.utexas.edu} 
\email{alimezher@utexas.edu}

\subjclass[2020]{Primary 35Q55, 35B40; Secondary 81Q15, 47F05, 82C10}
\keywords{\texorpdfstring{\hangindent=14pt }{}Bose-Einstein condensation, resonance cascade, dispersive estimates, weak-coupling limit, limiting absorption principle, Fermi's Golden Rule}

\maketitle

\begin{abstract}
 In this paper, we study a system of bosons trapped in a confining potential, interacting with a quantized field of coherent photons in the mean field description of non-relativistic Quantum Electrodynamics (QED) obtained by N. Leopold and P. Pickl in \cite{leopold2017mean}. We derive the effective nonlinear cascade equations governing the emission and absorption of coherent photons by the boson subsystem in a combined weak coupling and macroscopic time scaling limit. We demonstrate that solutions to this nonlinear cascade determine a monotone decreasing energy flow in the boson subsystem, and thereby describe the dynamical formation of  a Bose-Einstein condensate (BEC) with full ground state occupation, under conservation of the total boson $L^2$ mass. We note that this process is crucially different from thermal relaxation to the ground state, and fundamentally depends on the nonlinear nature of the cascade dynamics.
\end{abstract}

\bigskip
\bigskip

\setcounter{tocdepth}{2}
\tableofcontents

\section{Introduction}
\label{intro}

In the present work, we study the behavior of the macroscopic coupled system arising from the mean-field limit of a non-relativistic quantum system of N bosons weakly coupled to a radiation field described by a coherent state (laser), as $N \to \infty$. More precisely, we consider the coupled system for the mild solution $(\vfi_t, u_t)$ to the mean-field PDE system derived by N. Leopold and P. Pickl in \cite{leopold2017mean}, namely, the system of nonlinear Hartree and half-wave equations
\begin{empheq}[left = \empheqlbrace]{align} 
i\partial_t \vfi_t &= \prc{-\Delta+V + \lambda (v \ast |\vfi_t|^2) + \eta \bigl(w \ast (u_t + \cc{u_t})\bigr)} \vfi_t,  \label{eq:intro-system-phi}\\
i\partial_t u_t &= \omega(-i\nabla)u_t + \eta \bigl(w \ast |\vfi_t|^2\bigr), \label{eq:intro-system-u} \\
(\vfi_0, u_0) & = (\vfi_t|_{t=0}, u_t|_{t=0}), \label{eq:intro-system-initial-data}
\end{empheq} 
where $V(x): \R^3 \to [0,\infty)$ is the confining potential, $v,w: \R^3 \to \R$ are two-body real-valued interaction functions that satisfy the hypothesis in \assumref{the-assumptions}. We will assume $\omega(\xi)=|\xi|$, $\xi\in\R^3$, $\eta>0$ small and $\lambda=\eta^2$. The initial data is in the product space
\begin{equation}
    (\vfi_0, u_0) \in H^1(\R^3) \times H^{1/2}(\mathbb{R}^3).
\end{equation}
For background on the derivation of effective PDEs from coupled quantum particle-field models, we refer to \cite{kirkpatrick2021large,kiscst,petrat2016new,pickl2011simple,rodnianski2009quantum, KP,pickl2,boers2016mean,fantechi2025quantum,ammari2022towards,chen2019mean}.  
\bigskip

\subsection{The effective resonance cascade}

The primary objective of this work is to rigorously derive and control the effective dynamics generated by the system in the weak-coupling, long-time regime defined by the scaling
\begin{equation}
T = \eta^2 t, \qquad \eta \to 0.
\end{equation}
We will refer to $t$ as the microscopic time and to $T$ as the macroscopic time. On the macroscopic timescale, second-order interactions with the radiation field accumulate, leading to an $\cO(1)$ evolution of the modal amplitudes of the trapped bosonic component. By projecting the dynamics onto the orthonormal eigenbasis of the confining Hamiltonian operator $\cH_0:=-\Delta+V$, we derive an effective equation for the sequence of probability amplitudes $\{F_k^\eta(T)\}_{k\ge0}$ associated with the discrete increasing, non-degenerate eigenenergy levels $\{E_k\}_{k\ge0}$ with the associated orthonormal eigenbasis $\{\chi_k\}_{k\ge0}$. We will assume $\chi_k$ to be real-valued, without any loss of generality.

The limiting evolution is given by a nonlinear \emph{resonance cascade equation} of the form
\begin{equation}
    \pt_T F_k(T) = \sum_{k'=0}^\infty M_{k,k'} |F_{k'}(T)|^2 F_k(T),
\end{equation}
where $M_{k,k'} \in \C$ contains the Hartree interaction, Lamb Shift, and Fermi Golden Rule contributions associated with the energy levels $E_k$ and $E_{k'}$. The transitions between discrete levels are strictly selected by the resonance condition on the field dispersion relation, namely, the condition is
\begin{equation}
    \label{eq:resonance-condition}
\omega(\xi) = |E_{k'} - E_k|.
\end{equation}
Crucially, the dynamics of this cascade are genuinely nonlinear. The leading resonant contribution is of Fermi-Golden-Rule (FGR) type. The transition rate at each level depends on the instantaneous occupations of other levels. The ground state, with no lower energetic outlet, creates highly asymmetric transition channels near the ground state. This nonlinear resonance setup drives the probability mass downward  under conservation of the boson $L^2$-mass, thereby creating a Bose-Einstein condensate (BEC) with full ground state occupation. Resonances in quantum theory have been extensively studied in the literature \cite{balslev1971spectral,simon1973resonances,simon1978resonances,okamoto1985complex, hunziker1990resonances,waxler1995time,soffer1995time,soffer1999resonances,king1991exponential, king1994resonant, abou2009theory,bach1998quantum,bach1999spectral,soffer1998nonautonomous,hunziker2005asymptotic,karpeshina2006perturbation,perry1981spectral}.
The proof that Bose-Einstein condensates correspond to the energy minimizer in interacting boson systems, and the derivation of nonlinear mean field equations to describe their effective dynamics has been a prominent research area for decades
\cite{MR1868990,MR2143817,MR2276262,MR4406847,MR3385343,MR3165917,MR3513141,MR3987176}, but the rigorous control of the dynamical formation of a BEC is currently an open problem, \cite{MR3742598,bach2022time,Reichl_2019}. The effective cascade equations emerging in the work at hand may, to some readers, appear vaguely similar to wave kinetic equations, which are derived in a significantly different setting; see, for instance, \cite{denghani2023inventiones,hanishatahzhu2024cpam} and the references therein.

\subsection{Spectral singularities and dispersive remainders}
The rigorous transition from the microscopic Hamiltonian PDE dynamics to the macroscopic cascade presents substantial analytical challenges. In the weak-coupling regime, the limit dynamics are governed by a transition matrix $M^\eta$. This matrix comprises an on-shell, energy-conserving resonant term (as in \eqref{eq:resonance-condition}), which represents the Fermi-Golden-Rule (FGR) transition rates between quantum states. It also includes an off-shell, principal-value term that corresponds to Lamb-shift energy renormalizations driven by quantum fluctuations.

These limit coefficients exhibit severe singularities due to small denominators in resolvents of the form $(|\nabla|-\Delta E+i\eta^2)^{-1}$. To rigorously resolve these singularities and establish well-defined macroscopic transition rates, we make use of a Limiting Absorption Principle (LAP) for the half-wave operator. Specifically, we establish uniform resolvent bounds between the polynomially weighted spaces $L^2_s (\mathbb{R}^3)=\{f\in L^2|\,\|\langle x\rangle^s f\|_{L^2}<\infty\}$ and $L^2_{-s}(\mathbb{R}^3)$ for $\frac12<s<1$. The spatial decay condition on the interaction kernel $w(x)$ then controls the frequency-space singularity at the resonance sphere. Consequently, demonstrating that these transition coefficients remain uniformly bounded as $\eta \to 0$ simultaneously validates the timescale $T=\eta^2 t$ and ensures the well-posedness of the limiting macroscopic cascade equations.

\smallskip

Uniform stationary bounds alone, however, are insufficient to guarantee strong convergence, as the non-resonant remainder terms—representing genuinely dispersive radiation—must also vanish on the macroscopic timescale. To address this, we invoke frequency-localized dispersive decay estimates for the half-wave propagator to prove that these remainder terms are governed by quantum fluctuations which disperse to spatial infinity fast enough that they vanish in the macroscopic limit.

\subsection{Main results}

With these analytical building blocks in place, we proceed to present the main theorems. We begin by specifying the functional framework and core assumptions on the Hamiltonian.

\begin{assumption}
\label{the-assumptions}
We assume that the following conditions on the confining potential $V$ and the interaction kernels $v, w$:
\begin{itemize}
\hitem{1}{asm:confining-potential} \textbf{Confining potential:} $V \in C^\infty(\mathbb{R}^3; \mathbb{R})$ is coercive, satisfying $V(x) \ge c_0 |x|^{1+\beta} - C_0$ for some $c_0, C_0>0$ and $\beta > 0$. Consequently, the operator $\cH_0 = -\Delta + V$ on $L^2(\mathbb{R}^3,dx)$ exhibits a purely discrete spectrum $\brs{E_k}_{k=0}^\infty$ with corresponding normalized eigenfunctions $\brs{\chi_k}_{k=0}^\infty$.

\hitem{2}{asm:spectral-genericity} \textbf{Spectral assumptions:} The spectrum of $\cH_0$ has the following properties.  
\begin{enumerate}
    \item The energy eigenvalues $E_j$ are rationally independent. That is, for any finite collection of indices $j_1 \ldots, j_m$, the relation $\sum_{l=1}^m n_l E_{j_l} = 0$ with integers $n_l$ implies that all $n_l$ are zero. 
    In addition, the four-mode gaps
    \begin{eqnarray}
        \delta_N:=\min_{\substack{0\leq k,k',j,j'\leq N \\
        |E_k-E_{k'}-E_j+E_{j'}|>0}}
        \Big\{
        \; |E_k-E_{k'}-E_j+E_{j'}|  \; 
        \Big\}
    \end{eqnarray}
    satisfy the Diophantine condition
    \begin{eqnarray}
        \delta_N\geq c_D\;(1+N)^{-\kappa}
    \end{eqnarray}
    for constants $c_D$, $\kappa>0$.
    \item 
The Fourier transform of products of eigenfunctions $\widehat{\chi_k \overline{\chi_{k'}}}$ do not vanish identically on any spheres centered at the origin.
\end{enumerate}

\hitem{3}{asm:interaction-generality} \textbf{Interaction regularity:} $v,w: \R^3 \to \R$ are even real-valued two-body interaction functions that satisfy the following conditions:
\begin{enumerate}
    \item The classical kernel $v \in W^{1,\infty}(\mathbb{R}^3,dx)$ is even. 
    \item The particle-field coupling kernel $w$ satisfies the polynomially weighted decay condition $w \in L^2_s(\mathbb{R}^3)$ for a fixed $\frac12<s<1$ with $s<\frac{1+\beta}2$.
    In addition, $\| |\nabla|^{-\frac12}w\|_{L^2_s}<C$, and $w \in H^{1/2}(\R^3)\mcap W^{3,1} (\R^3) \mcap L^\infty (\R^3)$, and by interpolation $ w \in L^{3/2}(\R^3)$. Moreover, $|\widehat w(\xi)|>0$ for all $\xi\in\R^3$.
\end{enumerate}
\end{itemize}
\end{assumption}

\begin{remark}
    In \lemref{lm-Vgeneric-spec-1-0}, and referring to \cite{MasonSigalotti2010}, we prove that for generic confining potentials $V$ endowed with the $L^\infty$ topology on bounded perturbations, the eigenvalues of $\cH_0$ are rationally independent. This residuality statement concerns the ambient space of confining potentials equipped with the $L^\infty$ topology on bounded perturbations, but it does not by itself prove residuality within the smooth subclass specified by \hypref{asm:confining-potential}. 
    
    We also note that given a potential $V_0$ satisfying \hypref{asm:confining-potential} so that $-\Delta+V_0$ has rationally dependent eigenvalues, a generic perturbation $V_0\rightarrow V_0+ V_1$ with $\|V_1\|_\infty<\epsilon$ suffices to render the spectrum rationally independent, see \lemref{lm-Vgeneric-spec-1-0}, for any $\epsilon>0$. Therefore, from the perspective of physical implementation, the rational independence of eigenenergies is a natural assumption.

    However, we note that the Diophantine property of four-mode gaps assumed in \hypref{asm:spectral-genericity} is not implied by \lemref{lm-Vgeneric-spec-1-0}, and we do not claim genericity for it.
\end{remark}

Let $(\vfi_t, u_t) \in C(\mathbb{R}; H^1(\mathbb{R}^3)) \times C(\mathbb{R}; H^{1/2}(\mathbb{R}^3))$ be the unique mild solution to the Cauchy problem \eqref{eq:intro-system-phi}–\eqref{eq:intro-system-u} subject to the initial data \eqref{eq:intro-system-initial-data}. We define the interaction picture modal amplitudes evaluated on the macroscopic time scale $T = \eta^2 t$ for $ t > 0 $ and $ \eta > 0$ by
\begin{equation}\label{eq:Feta-def-1-0-0}
F_k^\eta(T) := e^{i E_k T / \eta^2} \langle \chi_k, \vfi_{T/\eta^2} \rangle_{L^2_x}.
\end{equation}
Then, the effective nonlinear cascade dynamics in the weak-coupling limit are captured by \thmref{m:weak-coupling-limit}. Its proof follows from \eqref{eq:ODE-for-Fk(T)} and the bounds established in \secref{lamb-shift-and-resonant-bounds} and \secref{controlling-the-remainders}.

\begin{theorem}[Weak-coupling limit to the effective resonance cascade]
\label{m:weak-coupling-limit}
Assume \emph{\hypref{asm:confining-potential}-\hypref{asm:interaction-generality}}. 
Moreover, assume $\phi_0\in Y^1$, $u_0\in W^{3,1}$, with $\|\phi_0\|_{L^2}=1$, and a Hartree coupling $\lambda=\eta^2$.
For any fixed $T_0 > 0$, the sequence of modal amplitudes $F^\eta = \brs{F_k^\eta}_{k=0}^\infty$ converges strongly in $C([0, T_0]; \ell^2(\C))$ as $\eta \to 0$ to a limit $F \in C([0, T_0]; \ell^2(\C))$. Furthermore, $F(T)$ is the unique global solution to the effective nonlinear resonance cascade equation 
\begin{equation}
\label{eq:diag-cascade}
\partial_T F_k(T)=\sum_{k'\ge 0} M_{k,k'}\,|F_{k'}(T)|^2\,F_k(T)
\;\;\;,\;\;\;F_k(0)=\langle \chi_k,\phi_0\rangle\,,
\end{equation}
where the coefficients $M_{k,k'}:=M_{k,k';k,k'}+M_{k,k;k',k'}-\delta_{k,k'}M_{k,k;k,k}$, defined in \eqref{def:effective-coefficient-Mkkdjjd}, have the form 
\begin{eqnarray}
    \label{eq:M-form}
    M_{k,k'}&=& - i \Big(\Lambda_{k,k';k,k'}^{Har}
    +\Lambda_{k,k;k',k'}^{Har}
    -\delta_{k,k'}\Lambda_{k,k;k,k}^{Har}
    \nonumber\\
    &&\qquad- \Lambda_{k,k';k,k'}^{LS}
    - \Lambda_{k,k;k',k'}^{LS}
    + \delta_{k,k'}\Lambda_{k,k;k,k}^{LS}
    \Big)
    \nonumber\\
    &&
    - \FGRk \prc{\1_{k > k'} - \1_{k' > k}},
\end{eqnarray}
where
\begin{equation}
    \FGRk :=  \inp{w * (\chi_k \cc{\chi_{k'}}), \prc{\delta(\omg(-i \nabla) - |E_{k} - E_{k'}|)} w * (\cc{\chi_{k'}} \chi_{k} )} \geq 0
\end{equation}
 is symmetric in $(k,k')$. Here $\Lhar, \Lambda_{k,k';k,k'}^{LS} \in \R$ are the Hartree energy and Lamb shifts defined in \eqref{eq:definition-of-Lhar-coefficients}, and $\FGRk \in \R$ denotes the effective Fermi-Golden-Rule transition rates defined in \eqref{eq:definition-of-Gamma-kkdjjd}. The Fermi Golden Rule transitions in \eqref{eq:M-form} contain both the absorption of coherent photons for $k > k'$ and the emission of coherent photons for $k < k'$. 
\end{theorem}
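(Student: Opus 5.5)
The plan is to derive an exact integro-differential equation for the modal amplitudes $F^\eta_k(T)$ by eliminating the field, isolate its resonant part, and show that everything else vanishes as $\eta\to0$ uniformly on $[0,T_0]$.

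\textbf{Step 1: eliminate the field.} Write Duhamel's formula for \eqref{eq:intro-system-u}:
\[
u_t=e^{-it|\nabla|}u_0-i\eta\int_0^t e^{-i(t-s)|\nabla|}\,w*|\vfi_s|^2\,ds .
\]
Insert this into \eqref{eq:intro-system-phi} and project onto $\chi_k$ in the interaction picture. This gives
\[
i\partial_t \tilde F_k=\eta\,(\text{free field term})+\lambda\,(\text{Hartree term})-i\eta^2\,(\text{retarded self-interaction}).
\]
The free field term is linear in $e^{-it|\nabla|}u_0$. Since $u_0\in W^{3,1}$, dispersive decay $\|e^{-it|\nabla|}u_0\|_\infty\lesssim \langle t\rangle^{-1}$ controls it. After rescaling to $T$, its contribution is $\eta\int_0^{T/\eta^2}\langle t\rangle^{-1}dt=O(\eta|\log\eta|)\to0$.

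Next expand $|\vfi_s|^2=\sum_{j,j'}\tilde F_j\overline{\tilde F_{j'}}\,e^{-i(E_j-E_{j'})s}\chi_j\chi_{j'}$. Then both the Hartree term and the retarded term become quartic sums over $(k',j,j')$. Each carries an oscillating phase $e^{i(E_k-E_{k'}-E_j+E_{j'})T/\eta^2}$.

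\textbf{Step 2: resolve the memory kernel.} In the retarded term, change variables to $\tau=t-s$. Replace $\tilde F(s)$ by $\tilde F(t)$, at an error controlled by $\partial_t\tilde F=O(\eta^2)$ times the integrable decay of the kernel. The time integral $\int_0^\infty e^{-i\tau(|\nabla|\mp\Delta E)}d\tau$ then becomes the regularized resolvent $-i(|\nabla|\mp\Delta E-i0)^{-1}$.

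By the limiting absorption principle between $L^2_s$ and $L^2_{-s}$, together with $w\in L^2_s$ and $\||\nabla|^{-1/2}w\|_{L^2_s}<\infty$, these resolvents are uniformly bounded. Sokhotski--Plemelj then splits them into a principal value part, which gives the Lamb shift $\Lambda^{LS}$, and a delta part, which gives $\Gamma^{FGR}$.

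The delta contributes only where $|\xi|=|E_k-E_{k'}|>0$, with the correct sign. So for $k>k'$ only one of the $\pm$ resolvents hits the resonance sphere, and for $k<k'$ the other one does. This produces the sign factor $\1_{k>k'}-\1_{k'>k}$ in \eqref{eq:M-form}. The truncation error from finite $t$ (the tail $\int_t^\infty$) is controlled by frequency-localized dispersive decay with weight $\langle\tau\rangle^{-s'}$ for some $s'>1$, which needs $s$ close enough to $1$ or weighted estimates. I expect this step, namely uniform-in-$\eta$ remainder bounds that survive summation over all modes, to be the main obstacle.

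\textbf{Step 3: drop the nonresonant quadruples.} Quadruples with $E_k-E_{k'}-E_j+E_{j'}\neq0$ carry rapidly oscillating phases. Integrate by parts in $T$, which gains $\eta^2/\delta$, and use the Diophantine bound $\delta_N\ge c_D(1+N)^{-\kappa}$. To handle infinitely many modes, truncate at level $N(\eta)$ growing slowly, for example $N\sim\eta^{-\epsilon}$. The tails are controlled by the $H^1$, hence $\cH_0$-form, bound on $\vfi_t$, which is uniform by energy conservation, and by $\phi_0\in Y^1$. This bounds $\sum_{k>N}|F_k^\eta|^2\lesssim E_N^{-1}$.

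By rational independence, the resonant quadruples are exactly $\{k=j,\,k'=j'\}$ and $\{k=k',\,j=j'\}$. These two families, with the overlap $k=k'=j=j'$ subtracted once, give $M_{k,k'}$ as stated.

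\textbf{Step 4: limit and uniqueness.} The equation $\partial_T F=\mathcal M(F)$ has a locally Lipschitz vector field on $\ell^2$, since $\sup_{k,k'}|M_{k,k'}|<\infty$ by the LAP bounds. Its real part preserves $\sum_k|F_k|^2$, because the FGR terms are antisymmetric in $(k,k')$ when weighted by $|F_k|^2|F_{k'}|^2$ and the $\Lambda$ terms are purely imaginary. This gives global existence and uniqueness.

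Finally, compare $F^\eta$ with $F$ via Gr\"onwall on $[0,T_0]$. The errors from Steps 1--3 tend to $0$, and equicontinuity together with tightness from the $Y^1$ bound upgrades this to convergence in $C([0,T_0];\ell^2)$.
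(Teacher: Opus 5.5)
Your outline follows the paper's architecture: Duhamel elimination of $u_t$, a resolvent limit split by Sokhotski--Plemelj into Lamb shift and FGR, averaging of non-resonant quadruples with the Diophantine gap, spectral truncation from the energy bound, and rational independence selecting $k=j,k'=j'$ and $k=k',j=j'$. The genuine gap is Step 2. You leave it open, and the remedy you suggest is not available under the hypotheses. With $w\in L^2_s$ only for some $\frac12<s<1$, the pointwise decay the hypotheses give for the memory kernel $K(\tau)=\langle w*(\overline{\chi_{k'}}\chi_k),e^{-i\tau|\nabla|}\,w*(\overline{\chi_{j'}}\chi_j)\rangle$ is $\langle\tau\rangle^{-s}$ (Theorem~\ref{thm:half-wave-decay}). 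This is not integrable for any admissible $s$, so there is no $s'>1$ to appeal to, and taking $s$ close to $1$ does not help. Your Markov replacement $\tilde F(s)\to\tilde F(t)$ via $|\tilde F(t-\tau)-\tilde F(t)|\lesssim\eta^2\tau$ is worse: at $t=T/\eta^2$ it costs $\eta^2\int_0^t\tau\langle\tau\rangle^{-s}\,d\tau\sim\eta^{2s-2}T^{2-s}$, which diverges as $\eta\to0$. As written, you have not shown that the retarded term converges to the resolvent expression, nor that the remainders vanish. This is where the effective equation is actually derived, so the argument does not close.

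The paper avoids pointwise kernel integrability by integrating by parts in $s$ instead of Taylor-expanding. This produces three pieces:
\begin{itemize}
\item $-i(|\nabla|-(E_j-E_{j'})-i0)^{-1}$ acting on the current amplitudes $A_j(t)\overline{A_{j'}(t)}$, which is the Lamb-shift/FGR term;
\item a boundary term with $e^{-it|\nabla|}(|\nabla|-(E_j-E_{j'})-i0)^{-1}$ against the initial amplitudes;
\item a term with the same operator at time $t-s$ against $\partial_s(A_j\overline{A_{j'}})$.
\end{itemize}
The last two are controlled by the one-sided estimate $\|e^{-it|\nabla|}(|\nabla|-\lambda-i\epsilon)^{-1}\|_{L^2_s\to L^2_{-s}}\lesssim(1+\lambda_0^{-2})\langle t\rangle^{-\gamma}$ for $t>0$ and $0<\gamma<s-\frac12$ (Lemma~\ref{lem:one-sided-half-wave-resolvent}). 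That estimate is proved by a Hilbert--Schmidt bound on the translated Cauchy kernel and uses the sign $-i0$ together with $t>0$ essentially. Combined with $\|\partial_tA\|_{\ell^2}\lesssim\eta\langle t\rangle^{-1}+\eta^2\log\langle t\rangle$, only $\eta^2\int_0^t\langle t-s\rangle^{-\gamma}\,ds\sim\eta^{2\gamma}T^{1-\gamma}$ (up to logarithms) is needed, so a non-integrable rate suffices. The $j=j'$ modes use the separate zero-mode bound with $\||\nabla|^{-1/2}w\|_{L^2_s}$.

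Your Taylor route could be rescued by the same underlying fact. $K$ is the Fourier transform of a spectral density in $H^s(\mathbb R)$, so $\langle\tau\rangle^{s}K\in L^2_\tau$. Hence $K\in L^1$ with $\int_{\tau_0}^\infty|K|\lesssim\tau_0^{\frac12-s}$, and splitting the replacement error at $\tau_0=\eta^{-1}$ gives $O(\eta^{s-\frac12})$ up to logarithms. That averaged integrability, rather than pointwise decay, is the missing idea.

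A secondary issue is in Step 4. The bound $\sup_{k,k'}|M_{k,k'}|<\infty$ does not follow from the LAP, which only gives $|M_{k,k'}|\lesssim\|w*(\chi_k\overline{\chi_{k'}})\|_{L^2_s}^2\lesssim 1+E_{\min\{k,k'\}}$. The paper instead uses two things:
\begin{itemize}
\item the uniform FGR bound of Lemma~\ref{lem:uniform-FGR-bound} for the $|F_k|^2$ dynamics, where the Hartree and Lamb terms only rotate phases;
\item the energy-weighted row bound of Lemma~\ref{lem:bound-Fk2-Mkkd} for the Gr\"onwall comparison.
\end{itemize}
Your Lipschitz argument must therefore be run in that energy class.
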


\begin{lemma}
\label{lm-FGR-generic-1-1}
    Assume that the trapping potential $V$ satisfies hypotheses \hypref{asm:confining-potential} and \hypref{asm:spectral-genericity}, and the interaction potential $w$ satisfies \hypref{asm:interaction-generality} in \assumref{the-assumptions}, and $\omega(i\nabla)=|i\nabla|$. Then, writing $R_{k,k'}:=|E_k-E_{k'}|$, 
\begin{eqnarray}\label{eq-FGRk-pos-1-0}
    \FGRk &=& \int d\xi |\widehat w(\xi)|^2 \big|\widehat{\chi_k\widebar{\chi_{k'}}}(\xi)\big|^2  
    \delta(|\xi| - |E_k-E_{k'}|) 
    \nonumber\\
    &=& R_{k,k'}^2 \int_{S^2} d\sigma(\omega) \; 
    |\widehat w(R_{k,k'}\omega)|^2
    \big|\widehat{\chi_k\widebar{\chi_{k'}}}(R_{k,k'}\omega)\big|^2 
    \; \; >0
\end{eqnarray}
is nonzero for all $k,k'$ with $k\neq k'$. 
\end{lemma}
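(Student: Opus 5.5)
The plan is to pass to Fourier space, use Plancherel together with the convolution theorem, and then write the delta function in polar coordinates. Positivity will then come from continuity of the integrand combined with the non-vanishing hypotheses in \hypref{asm:spectral-genericity}(2) and \hypref{asm:interaction-generality}(2).

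\textbf{Step 1: Fourier representation.} Since $\omega(-i\nabla)=|-i\nabla|$ is a Fourier multiplier, $\delta(\omega(-i\nabla)-R_{k,k'})$ acts as multiplication by $\delta(|\xi|-R_{k,k'})$ on the Fourier side. This is understood as the distributional limit $\frac{1}{\pi}\lim_{\epsilon\downarrow 0}\Im{(|\xi|-R_{k,k'}-i\epsilon)^{-1}}$, which the limiting absorption principle makes rigorous on $L^2_s$ functions.

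By the convolution theorem, $\widehat{w*(\chi_k\cc{\chi_{k'}})}=(2\pi)^{3/2}\,\widehat w\,\widehat{\chi_k\cc{\chi_{k'}}}$, with the constant depending on convention and absorbed into normalization. Since the $\chi_k$ are real-valued, $\cc{\chi_{k'}}\chi_k=\chi_k\cc{\chi_{k'}}$, so both entries of the inner product coincide. Plancherel then yields the first line of \eqref{eq-FGRk-pos-1-0}, with the integrand $|\widehat w|^2|\widehat{\chi_k\cc{\chi_{k'}}}|^2\ge 0$.

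\textbf{Step 2: Polar coordinates.} Write $\xi=r\omega$ and $d\xi=r^2\,dr\,d\sigma(\omega)$. The delta function in $r$ collapses the radial integral at $r=R_{k,k'}$, which gives the second line. This step needs the restriction of the integrand to the sphere to be meaningful. Because $\chi_k\in L^2$, we have $\chi_k\chi_{k'}\in L^1$, so its Fourier transform is continuous and bounded. Likewise $w\in L^{3/2}\cap L^2_s\subset L^1$ (using $s>1/2$... in fact $\langle x\rangle^{-s}\in L^2$ requires $s>3/2$, so instead I would use $w\in W^{3,1}\subset L^1$), so $\widehat w$ is continuous. Hence the integrand is continuous, the trace on the sphere is pointwise, and the formula is justified, for instance by mollifying the delta function.

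\textbf{Step 3: Strict positivity.} For $k\ne k'$, non-degeneracy of the spectrum (\hypref{asm:confining-potential}, strict increase of $E_k$) gives $R_{k,k'}>0$. Hence the sphere $\{|\xi|=R_{k,k'}\}$ is a genuine sphere centered at the origin. By \hypref{asm:spectral-genericity}(2), the continuous function $\widehat{\chi_k\cc{\chi_{k'}}}$ does not vanish identically on this sphere, so it is nonzero on a relatively open subset of $S^2$ of positive $\sigma$-measure. On that set, $|\widehat w|^2>0$ by \hypref{asm:interaction-generality}(2). The integral of a nonnegative continuous function that is positive on an open set is positive, and $R_{k,k'}^2>0$, so $\FGRk>0$.

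The only delicate point is Step 2: justifying the identification of the operator delta function with the surface integral, and the continuity needed to turn ``not identically vanishing'' into positivity on a set of positive measure. Both follow from the $L^1$ integrability established above.
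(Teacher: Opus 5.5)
Your proposal is correct and follows the same route as the paper. The paper's proof just invokes (H2)(2) for the non-vanishing of $\widehat{\chi_k\overline{\chi_{k'}}}$ on the sphere $|\xi|=R_{k,k'}$ and $|\widehat w|>0$ from (H3); you additionally supply the Plancherel and polar-coordinate derivation, and the continuity argument (via $\chi_k\overline{\chi_{k'}}\in L^1$ and $w\in W^{3,1}\subset L^1$) that turns ``not identically zero'' into positivity on a set of positive surface measure, which the paper leaves implicit.

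One small correction: $R_{k,k'}>0$ for $k\neq k'$ does not follow from (H1) itself, but from the rational independence in (H2)(1), or from the paper's standing non-degeneracy assumption.
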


\begin{proof} 
%
By hypothesis \hypref{asm:spectral-genericity}, $\widehat{\rchi_k\widebar{\rchi_{k'}}}$ is not identically zero on any sphere $S_R$, and in particular not on the sphere of radius $R_{k,k'}=|E_{k} - E_{k'}|$.
The asserted positivity $\FGRk>0$ then follows from $|\widehat w(\xi)|>0$ assumed in \hypref{asm:interaction-generality}.
\end{proof}

The following result, which is proved in \secref{effective-resonance-cascade-eq-and-proof-of-BEC}, characterizes the global-in-time behavior of the autonomous macroscopic system \eqref{eq:diag-cascade}.

\begin{theorem}[Dynamical Bose-Einstein condensation]
\label{m:dynamical_bec}           
Under the hypotheses of \thmref{m:weak-coupling-limit}, including the quantitative Diophantine condition in \hypref{asm:spectral-genericity}, let $F^\eta(T)\in\ell^2$ be defined by \eqref{eq:Feta-def-1-0-0}, and let $F(T)\in\ell^2$ solve \eqref{eq:diag-cascade} with the same initial data $F(0)=F^\eta(0)\in\ell^2$. Then,  
\begin{equation}
    \lim_{\eta\rightarrow0}\sup_{T\in[0,T_\eta]}\norm{F^\eta(T) -  F(T) }_{\ell^2} =0\,,
\end{equation} 
for $T_\eta =C_\mu\log\langle 1/\eta\rangle$ where the constant is determined in \eqref{eq:mu-def-1-0} and \eqref{eq:Teta-def-1-0-0}. 

Assume that the initial ground state amplitude is strictly non-zero, i.e., $|F_0(0)| > 0$. Furthermore, as stated in \lemref{lm-FGR-generic-1-1}, assume that the transition rates to the ground state satisfy $\Gamma_{0,k'}^{FGR} > 0$ for all excited states $k' \ge 1$, although they may decay at an arbitrary rate as $k' \to \infty$. Then, 
\begin{enumerate}
    \item The $\ell^2$-mass is conserved, $\|F(T)\|_{\ell^2}=\|F(0)\|_{\ell^2}$ for all $T \ge 0$. In particular, the solution to the effective resonance cascade equation exists globally in time,
\begin{equation}
    \|F(T)\|_{\ell^\infty}\le \|F(T)\|_{\ell^2}=\|F(0)\|_{\ell^2}.
\end{equation}
    \item The solution to the effective resonance cascade equation exhibits complete Bose-Einstein condensation as $T \to \infty$, that is, the occupation mass is transferred toward lower-energy modes and converges entirely to the ground state. More precisely, assume $\sum_{k\geq0}|F_k(0)|^2=1$ with $0<|F_0(0)|\leq1$ and $\sum_{k\geq0}(E_k-E_0)|F_k(0)|^2<C$. Then, the occupation densities of all excited states decay to zero,
\begin{equation}
    \lim_{T \to \infty} \sum_{k' =1}^\infty \abs{F_{k'}(T)}^2 = 0.
\end{equation}
Consequently, the ground state absorbs the entire mass of the system,
\begin{equation}
    \lim_{T \to \infty} \abs{F_0(T)}^2 = \sum_{k=0}^\infty \abs{F_k(0)}^2 = 1,
\end{equation}
which constitutes the complete dynamical formation of a Bose-Einstein Condensate (BEC) in the boson subsystem. Furthermore, 
\begin{equation}
    \pt_T \sum_{k=0}^\infty E_k |F_k(T)|^2 \leq 0, \quad \text{and} \quad \lim_{T \to \infty} \sum_{k=0}^\infty E_k |F_k(T)|^2 = E_0 \norm{F(0)}_{\ell^2}^2,
\end{equation}
that is, the total energy of the particle subsystem decays monotonically with $T$, and concentrates in the ground state. 
\item Assume $0<|F_0(0)|\leq1$ and that only finitely many modes are excited at $T=0$. Let $K>0$ be the smallest index such that $F_k(0) = 0$ for all $k > K$, and let $\tilde{\Gamma}_K := \min_{1\leq k' \leq K} \Gamma_{0,k'}^{FGR}$. Then, the following bound on the convergence rate holds,
\begin{equation}
    \abs{F_0(T)}^2 \geq \frac{1}{1 + \frac{1 - \abs{F_0(0)}^2}{\abs{F_0(0)}^2} e^{- 2 \tilde{\Gamma}_K T}}
\end{equation} 
with $\norm{F(0)}_{\ell^2} = 1$.
\end{enumerate}
In particular, this implies 
\begin{eqnarray}
    \lim_{\eta\rightarrow0}|\braket{\chi_0,\phi_{T_\eta/\eta^2}}| & =& 1\;
    \nonumber\\
    \lim_{\eta\rightarrow0}
    |\braket{\chi_k,\phi_{T_\eta/\eta^2}}| & =& 0\;
    \rm{for \; all \;}k\geq1\,.
\end{eqnarray}
Consequently, by $L^2$ mass conservation, the solution $\phi_{T_\eta/\eta^2}$ to the original system \eqref{eq:boson-subsystem}, \eqref{eq:field-subsystem} converges to the ground state ray as $\eta\rightarrow0$ and $T_\eta\rightarrow\infty$.
\end{theorem}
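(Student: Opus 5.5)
The proof has two parts: the cascade ODE analysis (items (1)–(3)), and the extension of the weak-coupling convergence up to $T_\eta=C_\mu\log\langle 1/\eta\rangle$.

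\textbf{Mass conservation and global existence.} Multiplying \eqref{eq:diag-cascade} by $\overline{F_k}$ and taking real parts gives
\begin{equation*}
\partial_T|F_k|^2=2\sum_{k'}\mathfrak{Re}(M_{k,k'})\,|F_{k'}|^2|F_k|^2 .
\end{equation*}
By \eqref{eq:M-form} the Hartree and Lamb-shift parts are purely imaginary, so
\begin{equation*}
\mathfrak{Re}(M_{k,k'})=-\FGRk\,(\1_{k>k'}-\1_{k'>k}) .
\end{equation*}
This is antisymmetric in $(k,k')$ because $\FGRk$ is symmetric. Summing over $k$ therefore gives $\partial_T\|F\|_{\ell^2}^2=0$, after justifying the interchange of sums, which holds since $\sup_{k,k'}|M_{k,k'}|<\infty$ by the uniform bounds of \secref{lamb-shift-and-resonant-bounds}. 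Local well-posedness in $\ell^2$ follows from the cubic Lipschitz structure with bounded coefficients. The conserved $\ell^2$ norm then extends the local solution to a global one.

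\textbf{Monotone energy flow and condensation.} Writing $\rho_k=|F_k|^2$, the equation above becomes
\begin{equation*}
\dot\rho_k=-2\sum_{k'<k}\Gamma^{FGR}_{k,k'}\rho_k\rho_{k'}+2\sum_{k'>k}\Gamma^{FGR}_{k,k'}\rho_k\rho_{k'} .
\end{equation*}
So mass flows from higher to lower levels. Consequently
\begin{equation*}
\partial_T\sum_k E_k\rho_k=-2\sum_{k>k'}\Gamma^{FGR}_{k,k'}(E_k-E_{k'})\rho_k\rho_{k'}\le0 .
\end{equation*}
Finiteness of the initial energy makes this rigorous, via a truncation argument that uses monotonicity.

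For the ground state, $\dot\rho_0=2\rho_0\sum_{k'\ge1}\Gamma^{FGR}_{0,k'}\rho_{k'}\ge0$. Hence $\rho_0$ is nondecreasing, stays $\ge\rho_0(0)>0$, and converges to some limit $\rho_0^\infty\le1$. Integrating in time gives
\begin{equation*}
\int_0^\infty\sum_{k'\ge1}\Gamma^{FGR}_{0,k'}\rho_{k'}\,dT<\infty .
\end{equation*}
For each fixed $k'$ this makes $\rho_{k'}$ integrable, since $\Gamma^{FGR}_{0,k'}>0$. Combined with the bound $|\dot\rho_{k'}|\le C$, this forces $\rho_{k'}(T)\to0$.

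The main obstacle is ruling out loss of mass to high modes, i.e.\ showing $\sum_{k'\ge1}\rho_{k'}\to0$ rather than only pointwise decay. Here I use the energy. Because $\sum_k(E_k-E_0)\rho_k(T)$ is nonincreasing, it is bounded by its initial value $C$. The tail therefore satisfies $\sum_{k>N}\rho_k\le C/(E_{N+1}-E_0)$ uniformly in $T$, and this tends to $0$ as $N\to\infty$ since $E_k\to\infty$. Together with pointwise decay of finitely many modes, this gives $\sum_{k'\ge1}\rho_{k'}\to0$, so $\rho_0\to1$.

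For the energy limit, the positive dissipation term shows that $\sum_k E_k\rho_k$ decreases to some limit. Tightness plus $\rho_0\to1$ identifies that limit as $E_0$, but only if $\sum_{k}(E_k-E_0)\rho_k\to0$. To get this I would need uniform integrability of $E_k\rho_k$. My first attempt is to argue via dissipation of $\sum_k(E_k-E_0)\rho_k$ against $\rho_0\ge\rho_0(0)$, using the term $\Gamma^{FGR}_{k,0}(E_k-E_0)\rho_k\rho_0$. This is the delicate point, since $\Gamma^{FGR}_{0,k}$ may decay as $k\to\infty$.

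\textbf{Rate for finitely many modes.} From the level equations, modes with $k>K$ remain zero: each $\dot\rho_k$ carries a factor $\rho_k$, so these modes stay zero by uniqueness. Then
\begin{equation*}
\dot\rho_0\ge2\tilde\Gamma_K\rho_0(1-\rho_0) .
\end{equation*}
Comparing with the logistic ODE yields the stated lower bound.

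\textbf{Extending to $T_\eta$.} I would redo the Duhamel comparison from \eqref{eq:ODE-for-Fk(T)}. The remainder estimates from \secref{controlling-the-remainders} give an error $O(\eta^{a})$ for some $a>0$ on each unit macroscopic interval, with the Diophantine bound $\delta_N$ controlling the nonresonant four-mode terms after truncating to $N\sim\eta^{-b}$ modes; the neglected tail is handled by the energy bound. A Grönwall estimate with the Lipschitz constant $\mu$ of the cascade on the unit $\ell^2$ ball then gives
\begin{equation*}
\sup_{T\le T_\eta}\|F^\eta-F\|_{\ell^2}\le C\eta^{a'}e^{\mu T_\eta} .
\end{equation*}
Choosing $C_\mu<a'/\mu$ makes this tend to $0$. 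Combining it with condensation of $F$ gives the final limits for $\braket{\chi_k,\phi_{T_\eta/\eta^2}}$.
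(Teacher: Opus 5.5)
\textbf{Where your proposal matches the paper.} Most of your argument follows the paper:
\begin{itemize}
\item mass conservation from the antisymmetric $\Gamma^{FGR}$ structure;
\item decay of each excited mode, from integrability of $\sum_{k'\ge1}\Gamma^{FGR}_{0,k'}\rho_{k'}$ plus a uniform bound on $\dot\rho_{k'}$ (the paper applies Barbalat's lemma to the sum, you apply it mode by mode);
\item the logistic comparison when finitely many modes are excited;
\item the truncation/Diophantine/Gr\"onwall scheme up to $T_\eta=C_\mu\log\langle1/\eta\rangle$.
\end{itemize}
Your tail control via the energy, $\sum_{k>N}\rho_k\le C/(E_{N+1}-E_0)$, replaces the paper's tail-mass monotonicity $\partial_T\sum_{k>K}\rho_k\le0$. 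Both work; yours uses the assumed finite initial energy, the paper's needs none.

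\textbf{The gap: the energy limit.} You leave $\lim_{T\to\infty}\sum_kE_k\rho_k(T)=E_0$ unresolved, and it is part of item (2). Tightness of the mass does not stop energy from escaping to high modes that carry vanishing mass. Your proposed route through the dissipation term $\Gamma^{FGR}_{k,0}(E_k-E_0)\rho_k\rho_0$ cannot close this. It only gives integrability in $T$ of $\sum_k\Gamma^{FGR}_{k,0}(E_k-E_0)\rho_k$, which says nothing about the high-energy tail when $\Gamma^{FGR}_{k,0}$ decays fast.

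\textbf{The missing idea.} The same antisymmetric structure makes every nondecreasing weight a Lyapunov function. For $e_k$ bounded and nondecreasing in $k$,
\[
\partial_T\sum_k e_k\rho_k=-2\sum_{k<k'}\Gamma^{FGR}_{k,k'}\,(e_{k'}-e_k)\,\rho_k\rho_{k'}\le0 .
\]
Take $e_k=\1_{k>K}\min\{E_k-E_0,R\}$ and let $R\to\infty$. This gives, for every $K$ and all $T$,
\[
\sum_{k>K}(E_k-E_0)\rho_k(T)\le\sum_{k>K}(E_k-E_0)\rho_k(0),
\]
which is exactly the uniform-in-$T$ smallness of the high-energy tail you need. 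Then
\[
\limsup_{T\to\infty}\sum_{k\ge1}(E_k-E_0)\rho_k(T)\le(E_K-E_0)\lim_{T\to\infty}\sum_{1\le k\le K}\rho_k(T)+\sum_{k>K}(E_k-E_0)\rho_k(0).
\]
The first term vanishes by your per-mode decay, and the second tends to $0$ as $K\to\infty$. This tail-energy monotonicity is the paper's argument. With $e_k=\1_{k>K}$ the same identity is the paper's tail-mass monotonicity.

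\textbf{Secondary issue: $\sup_{k,k'}|M_{k,k'}|<\infty$ is not established.} Section 5 does not give this. The LAP bound on the exchange Lamb shift grows like $1+E_{\min\{k,k'\}}$, and the paper only controls $\sup_k\sum_{k'}|M_{k,k'}|\,|F_{k'}|^2$ for states of bounded energy (\lemref{lem:bound-Fk2-Mkkd}).
\begin{itemize}
\item For mass conservation this is harmless: only $\mathfrak{Re}\,M_{k,k'}=\pm\Gamma^{FGR}_{k,k'}$ enters, and $\Gamma^*<\infty$ by \lemref{lem:uniform-FGR-bound}.
\item It does undercut ``cubic Lipschitz with bounded coefficients'' for local well-posedness. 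Repair this as the paper does: solve the closed $\rho$-system in $\ell^1$, where only $\Gamma^{FGR}$ appears. Then set $F_k(T)=\exp\bigl(\int_0^Ta_k\,dS\bigr)F_k(0)$ with $a_k=\sum_{k'}M_{k,k'}\rho_{k'}$.
\item It also undercuts a Lipschitz constant ``on the unit $\ell^2$ ball'' in the Gr\"onwall step. Use instead the Lipschitz bound on energy-bounded states (\lemref{lem:L-uniform-bound-energy}), which applies to both $F$ and $F^\eta$.
\end{itemize}
Finally, the nonresonant and $u_0$ remainders are small only after time integration, not pointwise. So your Gr\"onwall argument must be run on the time-integrated error.
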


The proofs are given in \thmref{thm:complete-BEC} and  \thmref{thm:convergence-eta}.

\begin{remark}
    We note that if \hypref{asm:spectral-genericity} is relaxed and some energy eigenvalues are allowed to be rationally dependent, the cascade dynamics becomes more intricate. We leave an analysis of this situation for future work.
\end{remark}

\begin{remark}
    In this work, we are deriving the cascade equations for macroscopic times $0<T<T_\eta$ with $T_\eta=C_\mu\log\braket{1/\eta}\rightarrow\infty$ in the limit $\eta\rightarrow0$ for a sufficiently small constant $C_\mu$, see \eqref{eq:mu-def-1-0} and \eqref{eq:Teta-def-1-0-0}. The scaling $t=T/\eta^2$ is used because the non-trivial cascade dynamics occurs at a time scale of $O(\eta^{-2})$; extracting it as the main mechanism leading to full ground state occupation is our main interest. 
    This implies complete ground state occupation in the original system \eqref{eq:intro-system-phi}, \eqref{eq:intro-system-u} in the diagonal limit $T_\eta\rightarrow\infty$ as $\eta\rightarrow0$. 
\end{remark}

\subsection{Further background}

We compare the system considered in this paper with several important, related but crucially different, dissipative systems.

\subsubsection{Particle in thermal photon field}
\label{sssec-Boltzmann-0-1}
We first consider the model of a quantum mechanical particle interacting with a photon field
in thermal equilibrium at inverse temperature $\beta$, with Hamiltonian
\begin{eqnarray}\label{eq-H-def-1-0}
\cH = \cH_{part} \otimes {\1} 
+ \eta \int dk \; \widehat w(k)
\Big( e^{-ikx}\otimes a_k +  e^{ikx}\otimes a_k^+\Big) 
+ {\1}\otimes \Hf
\end{eqnarray}
on the tensor product Hilbert space $\mathcal{H}=\mathcal{H}_{part}\otimes{\mathcal F}$ where $\mathcal{H}_{part}=L^2_x({\mathbb R}^3)$ is the particle Hilbert space, and $\cH_{part}=-\frac12\Delta_x$ is the particle Hamiltonian. ${\mathcal F}$ denotes the photon Fock space with vacuum vector $\Omega_f\in{\mathcal F}$, and creation and annihilation operators $a_k^+$, $a_k$ satisfying canonical commutation relations. $\cH_f=\int dk \, \omega(k) a_k^+ a_k$ is the photon field Hamiltonian.

In \cite{erdHos2002linear}, the linear Boltzmann equation is derived for the effective particle dynamics, in a kinetic scaling limit with macroscopic variables $(T,X)=\eta^2(t,x)$, while the photon degrees of freedom are averaged with respect to a Gibbs distribution, that is, $\langle A\rangle_{{\mathcal F}}=\frac1{Z_\beta}(e^{-\beta \Hf+\mu \cN_f}A)$ for Fock space observables $A$, where $\mu$ is the chemical potential, and where $\cN_f=\int dk a_k^+ a_k$ is the photon number operator. See also \cite{erdHos2000linear}.

The expected number of photons of energy $\omega(k)$ {\em absorbed} by the particle is   
\begin{eqnarray}
{\mathcal N}_{\beta,\mu}(k)=\frac{e^{-\beta \omega(k)+\mu}}{1-e^{-\beta \omega(k)+\mu }} \,,
\end{eqnarray}  
while the expected number of photons {\em emitted} is ${\mathcal N}_{\beta,\mu}(k)+1$. While the linear Boltzmann equation derived in \cite{erdHos2002linear} is spatially inhomogeneous, its spatially homogeneous counterpart has the form 
\begin{eqnarray} 
\partial_T f_T(V) = \int dU \sigma(U,V)(f_T(U)-f_T(V)).
\end{eqnarray} 
Here, $f_T(U)$ is the velocity space probability distribution for the particle, and the collision kernel is given by
\begin{eqnarray}
\sigma(U,V) &=& |\widehat w(U-V)|^2
\Big(({\mathcal N}_{\beta,\mu}(U-V)+1)
\; \delta(E(V)-E(U)+\omega(U-V)) 
\nonumber\\
&&
+ \; {\mathcal N}_{\beta,\mu}(U-V)
\; \delta(E(V)-E(U)-\omega(V-U))\Big) \,,
\end{eqnarray}
where $E(V)={V^2}/2$ is the kinetic energy of the particle.
The imbalance between emission and absorption rates drives $f_T(V)$ towards its thermal equilibrium solution, which can be shown to be of the form $f_\infty(V)= const. \, e^{-\beta E(V)}$, corresponding to a Maxwellian (respectively, a Gibbs state). That is, in the limit $T\rightarrow\infty$, the particle settles to its thermal equilibrium. It is important to note that in the zero-temperature limit, ${\mathcal N}_{\beta,\mu}(k)\rightarrow0$ as  $\beta\rightarrow\infty$, so that photon absorption disappears, and only photon emission persists. For other works on quantum Boltzmann dynamics, for instance, see \cite{chenhott,bach2022time,chen2025derivation,escobedo2015finite,escobedo2008derivation} and the references therein.
 
\subsubsection{Open quantum systems}
A fundamental problem in quantum statistical mechanics is to understand the thermal equilibration of a system of particles (electrons) confined to an atom or molecule that interacts with a radiation field in thermal equilibrium. The Hamiltonian still has the form \eqref{eq-H-def-1-0}, but the particle Hamiltonian now has the form $\cH_{part} =-\frac12\Delta_x+V(x)$, where $V$ is a confining potential (or a multi-particle version of this).

For a quantum mechanical atomic subsystem interacting with a quantized radiation field in thermal equilibrium, the system returns to its equilibrium state. That is, all excited eigenstates of $\cH_{part}$ turn into resonances due to the interaction with the thermal photon field. Therefore, any initial state of the atomic subsystem converges to the thermal equilibrium state as $t\rightarrow\infty$. For the full quantum field theoretical models, this is proven in \cite{bach1999spectral,bach1998quantum}. More recently, the return to equilibrium in systems of this type has been studied intensively in the context of quantum information theory; see \cite{ouyang2026approach,sigal2025propagation} and references therein.

\subsubsection{Effective Lindblad dynamics}
To determine the dynamical process of return to equilibrium, averaging over the equilibrium radiation field by taking a partial trace yields an effective model for the atomic subsystem described by an associated density matrix $\gamma_t$. Again defining a macroscopic time $T=\eta^2 t$, and taking a suitable limit, $\gamma_{T/\eta^2}$ converges to an effective density matrix $\rho_T$ as $\eta\rightarrow0$, which is governed by a Lindblad equation
\begin{eqnarray}
\partial_T \rho_T = \frac1i [\cH,\rho_T] + \sum_i\Big(L_i \rho_T L_i^*-\frac12\{L_i^* L_i,\rho_T\}\Big)
\end{eqnarray} 
where $L_i$, $i\in{\mathbb N}$, are linear operators accounting for the coupling of the subsystem to the thermal photon field, and $\{\,\cdot\,,\,\cdot\,\}$ is the anticommutator. Its trace is conserved, $\operatorname{Tr}(\rho_T)=\operatorname{Tr}(\rho_0)$, and $\rho_T$ converges to a thermal equilibrium state as $T\rightarrow\infty$. See
\cite{davies1974markovian,davies1976markovian,falconi2017scattering,jakvsic1997resonances,lindblad1976generators,gorini1976completely,ouyang2026approach,sigal2025propagation,fantechi2025quantum} and the references therein.

\subsubsection{Rayleigh scattering}
Another fundamentally important system is that of electrons confined to an atom or molecule interacting with the quantized radiation field in the framework of Rayleigh scattering in non-relativistic Quantum Electrodynamics. Initially excited atomic states undergo resonant decay, and asymptotically as $t\rightarrow\infty$, the coupled system factors into the renormalized ground state of the atomic system and scattering out states characterized by photons of frequencies determined by the eigenenergy differences in the atomic system (Bohr's frequency condition). See \cite{bach2007renormalized,bach2006infrared,bach2007infrared,frohlich2009spectral,frolich2002asymptotic}.

The system here is at zero temperature (there is no interaction with a thermal photon field), and thus, as discussed in \secref{sssec-Boltzmann-0-1}, it is driven by the resonant decay of excited particle states through the emission of photons, but photon absorption processes are negligible.  

\subsubsection{Coherent radiation field}

As a model for laser cooling, applied in the experimental creation of Bose-Einstein condensates, the following system is important. One assumes that the photons are described by a coherent state in ${\mathcal F}$, at initial time $t=0$, with initial wave function $u_0$. In a suitable mean field limit, it is proven that for $t>0$, the photon state continues to be coherent, of the form $\exp(a^+(u_t)-a(u_t))\Omega_f$. Assuming, in addition, a subsystem of $N$ particles with mean-field pair interactions coupled to the photons, it is proven in \cite{leopold2017mean} that as $N\rightarrow\infty$, one obtains the system \eqref{eq:intro-system-phi}-\eqref{eq:intro-system-initial-data} studied in the work at hand.

In contrast to the systems above, we obtain here that the rates of photon emission and absorption are {\em perfectly balanced}. This is manifested in the fact that both $u_t$ (emissions) and $\cc{u_t}$ (absorptions) in \eqref{eq:intro-system-phi} share the same coefficients. The radiation field is not in thermal equilibrium, as it is coherent. The $L^2$ mass of the particle subsystem is in itself conserved, as the bosons cannot escape the trap, given our assumptions \hypref{asm:confining-potential} on the potential $V$. As stated in \thmref{m:weak-coupling-limit}, we prove that with macroscopic time $T=\eta^2 t$, and in the limit $\eta\rightarrow0$, the eigenmodes in the particle subsystem solve a nonlinear resonance cascade equation that includes both photon emission and absorption contributions. In this context, $\ell^2$ conservation of the sequence of modes is a consequence of the fact that photon emission and absorption are balanced. 

In particular, we prove in \thmref{m:dynamical_bec} that all excited modes deplete as $T\rightarrow\infty$, while the ground state mode eventually absorbs all of the $L^2$ mass, which thereby forms a BEC. To understand the intuition behind this mechanism, we define, for any $K>0$ indexing an excited mode, the accumulated $L^2$ mass of all modes $k > K$, $m_K(T):=\sum_{k>K}|F_k(T)|^2$.  Then,

\begin{equation}
    \pt_t m_K(T) = 2 \sum_{k > K} \sum_{k' \geq 0} \FGRk \prc{\1_{k' > k} - \1_{k > k'}} |F_k(T)|^2 |F_{k'}(T)|^2.
\end{equation}
Hence, all modes $k' >k$ increase $m_K(T)$ through photon absorptions, while all modes $k' < k$ decrease $m_K(T)$ through photon emissions.  Contributions with both $k, k' > K$ are balanced and cancel each other exactly, while modes with $k' \leq K$ all yield negative contributions. We obtain in \thmref{thm:complete-BEC} that $m_K(T) \rightarrow 0$ for all $K > 0$, as $T\rightarrow\infty$. Therefore, the entire $L^2$ mass must concentrate in the ground state. {\em The nonlinear nature of the resonance cascade equations is absolutely crucial to enable this mechanism.}

\subsection{Organization of the paper}
\label{organization-of-the-paper}

The remainder of this paper is organized to provide a complete progression from the microscopic Hamiltonian dynamics to the BEC formation at the macroscopic time scale. In \secref{analysis-of-the-mean-field-equations}, we establish the functional setting, notation, and analytic preliminaries. \secref{conservation-laws} and \secref{lamb-shift-and-fermi-golden-rule} are devoted to the analysis of the limit mean-field PDE system, where we derive the fundamental conservation laws for the total $L^2$-mass and isolate the resonant interactions, rigorously identifying the microscopic Hartree, Lamb-shift, and Fermi-Golden-Rule operators. 

\smallskip

In \secref{effective-resonance-cascade-eq-and-proof-of-BEC}, we characterize the global-in-time behavior of the autonomous effective cascade equation, proving the monotone flow of energy towards lower spectral modes and establishing the dynamical formation of Bose-Einstein condensation. \secref{convergence-analysis:limit-to-effective-cascade} determines the error dynamics in the Duhamel formulation, explicitly setting up the framework to prove $F^\eta \to F$ as a strong limit as $\eta \to 0$.

\smallskip

The core functional analytic estimates required to validate this limit are presented in the final two sections. \secref{lamb-shift-and-resonant-bounds} resolves the singularity modes of the resonance shell (arbitrarily small denominators) of the effective interaction matrix; we deploy a half-wave version of the Limiting Absorption Principle in polynomially weighted spaces to prove the uniform $\mathcal{O}(1)$ boundedness of the resonant transition rates. \secref{controlling-the-remainders} establishes analytic control by invoking dispersive PDE estimates to prove that the non-resonant radiation remainders converge to zero in the limit $\eta \to 0$. 
\smallskip
Finally, \appref{appendix:analytic-preliminaries} collects the necessary technical tools from harmonic analysis and spectral theory. \appref{sec-appB-0-1} established global well-posedness of the initial value problem for $(\phi_t,u_t)$. 

\section{Analysis of mean field equations}
\label{analysis-of-the-mean-field-equations}
We consider the mean-field equations describing the system of trapped bosons in $ \R^3$ coupled to coherent photons (for instance, modeling a laser) with wave function $u_t$,

 \begin{empheq}[left=\empheqlbrace]{align}
i \pt_t \vfi_t &= \cH[u_t, \vfi_t]\vfi_t, \label{eq:boson-subsystem} \\
i \pt_t u_t&= \omg(-i \nabla_y) u_t + \eta (w * |\vfi_t|^2)\label{eq:field-subsystem}
\end{empheq}
where $v, w$ satisfy \assumref{the-assumptions}, and
\begin{equation}
    \label{eq:definition-of-H[u,vfi]}
 \cH[u_t,\vfi_t]:= -\Delta + V + \lam (v * |\vfi_t|^2) + \eta\, (w * (\cc{u_t}+u_t)),
\end{equation}
with initial data $ \vfi_0 \in H^1(\R^3)$, and $u_0 \in H^{1/2}(\R^3).$

We assume that the Schr\"{o}dinger operator of the free particle (boson) system has a purely discrete spectrum,
\begin{equation}
    \sigma ( - \Delta + V) = \brs{E_k: k \in \N_0},
\end{equation}
where  $E_0 < E_1  < E_2 < \ldots$ are eigenvalues, with ground state energy $E_0$. The corresponding orthonormal eigenfunctions are denoted by $\set{\rchi_k}_{k\geq1} \subset L^2(\R_x^3)$, that is,
\begin{equation}
    \prc{-\Delta + V} \, \rchi_k = E_k \, \rchi_k, \quad k \in \N_0.
\end{equation}   
We assume that the eigenvalues  $E_k$ are non-degenerate. Furthermore, it follows from \assumref{the-assumptions} that the energy differences are rationally independent. Specifically, if for any $k < k'$ and $j < j'$ (i.e.  $E_{k'}  -  E_{k} > 0$ and $E_{j'} - E_{j} > 0$) the relation 
\begin{equation}
    n (E_{k'} - E_{k}) = m (E_{j'} - E_{j})
\end{equation}
holds for nonzero integers $n,m \in \N$, then $n=m$ and
\begin{equation}
    E_k = E_j, \qquad E_{k'} = E_{j'}.
\end{equation}
In particular, for $k < k'$ and $j < j'$, 
\begin{equation}
    \label{eq:resonance-condition-for-energy-differences}
    \Delta E_{k,k';j,j'} := E_{k'} - E_{k}  -  (E_{j'} - E_{j})=0 \implies k = j, \quad k'=j'
\end{equation}
follows.

Moreover, we assume without any loss of generality that the eigenfunctions $\rchi_k: \R^3 \to \R$ are real-valued. Indeed, $\overline{(-\Delta+V)\rchi_k}=(-\Delta+V)\overline{\rchi_k}=E_k\overline{\rchi_k}$
implies that both $\rchi_k$ and its complex conjugate $\overline{\rchi_k}$ are eigenvectors; but since $E_k$ is non-degenerate, they are linearly dependent and differ at most by a constant phase factor $e^{i\theta}$, which can be chosen to equal 1.

\subsection{Local and global well-posedness}
\label{subsec:lwp-gwp}

In this subsection, we establish local well-posedness of the coupled mean-field
system. A well-known subtlety arises when treating the external potential $V$, in that the
propagator $U_t = e^{-it(-\Delta+V)}$ is not unitary on the standard Sobolev space
$H^1(\mathbb{R}^3)$ because the gradient does not commute with $V$. To account for
this circumstance, we perform the fixed-point argument in the form domain of the Hamiltonian.

Let $\cH_0 := -\Delta + V$. Since we assume $V(x) \ge -C_0$, the operator $\cH_0$ is 
bounded from below but not strictly positive. To construct a positive-definite norm, we define the shifted, strictly positive operator
\begin{equation}
\cH_{C_0} := -\Delta + V + C_0 + 1 \ge 1.
\end{equation}
The natural energy space for the particle is the form domain $Y^1(\mathbb{R}^3)$, 
defined as the completion of $C_c^\infty(\mathbb{R}^3)$ under the norm
\begin{equation}
\label{eq:Y1-norm-def}
\|\vfi_t\|_{Y^1}^2
:=
\|\cH_{C_0}^{1/2}\vfi_t\|_{L^2}^2
=
\|\nabla \vfi_t\|_{L^2}^2 + \int_{\mathbb{R}^3} V(x)|\vfi_t(x)|^2 \, dx + (C_0+1)\|\vfi_t\|_{L^2}^2.
\end{equation}
By the spectral theorem, the flow $e^{-it\cH_0}$ commutes with $\cH_{C_0}^{1/2}$, 
and is therefore an exact isometry on $Y^1(\mathbb{R}^3)$. Furthermore, because 
$V + C_0 + 1 \ge 1$, we have the continuous embedding $Y^1(\mathbb{R}^3) \hookrightarrow H^1(\mathbb{R}^3)$.

\bigskip

Now, consider the system
\begin{equation}
\label{eq:mf-system-wp}
\begin{cases}
i\partial_t \vfi_t=\Bigl(\cH_0 + \lam (v*|\vfi_t|^2) + \eta \, w*(u_t+\overline{u_t})\Bigr)\vfi_t,\vspace{0.5 em}\\ 
i\partial_t u_t=\omega(-i\nabla)\,u_t+\eta \,(w*|\vfi_t|^2)
\end{cases}
\end{equation}
with initial data
\begin{equation}
\vfi_0\in Y^1(\mathbb{R}^3),\qquad u_0\in H^{1/2}(\mathbb{R}^3).
\end{equation}

\begin{remark}[Compatibility of the dispersive initial data]
\label{rem:field-data-compatibility}
While the global well-posedness theory and energy conservation strictly require only finite energy for the initial field, $u_0 \in H^{1/2}(\mathbb{R}^3)$, the overarching analysis imposes the stronger assumption $u_0 \in W^{3,1}(\mathbb{R}^3)$ to guarantee sufficient time decay of the free dispersive evolution. By the embedding $W^{3,1}(\mathbb{R}^3) \hookrightarrow H^{1/2}(\mathbb{R}^3)$, which ensures the bound
\begin{equation}
\|u_0\|_{H^{1/2}(\mathbb{R}^3)} < \|u_0\|_{W^{3,1}(\mathbb{R}^3)}.
\end{equation}
Thus, the highly regular initial data requisite for the dispersive bounds trivially possesses finite energy, making the decay assumptions perfectly compatible with the global well-posedness framework.
\end{remark}

\begin{proposition}[Local well-posedness in the form domain]
\label{prop:lwp-mf-strong-final}
Assume that:
\begin{enumerate}
    \item $V:\mathbb{R}^3\to\mathbb{R}$ satisfies $V(x) \ge -C_0$ and $\cH_0$ is essentially self-adjoint on $L^2(\mathbb{R}^3)$;
    \item $v\in W^{1,\infty}(\mathbb{R}^3)$;
    \item $w\in H^{1/2}(\mathbb{R}^3)\mcap W^{1,1}(\mathbb{R}^3)$;
    \item $\vfi_0\in Y^1(\mathbb{R}^3)$ and $u_0\in H^{1/2}(\mathbb{R}^3)$.
\end{enumerate}
Then there exists time $t_0>0$ such that the system \eqref{eq:mf-system-wp} admits a unique local in time mild solution
\begin{equation}
(\vfi_t,u_t)\in C([0,t_0];Y^1(\mathbb{R}^3)\times H^{1/2}(\mathbb{R}^3)).
\end{equation}
In particular, the particle and field components satisfy the Duhamel formulas
\begin{align}
\vfi_t(x) &= e^{-it\cH_0}\vfi_0(x) - i\int_0^t e^{-i(t-\tau)\cH_0} \Bigl( \lambda (v*|\vfi_\tau(x)|^2) + \eta \, w*(u_\tau(y)+\overline{u_\tau(y)}) \Bigr)\vfi_\tau(x) \, d\tau,\label{eq:duhamel-particle-LWP} \\
u_t(y) &= e^{-it\omega(-i\nabla)}u_0(y) - i\eta\int_0^t e^{-i(t-\tau)\omega(-i\nabla)} (w*|\vfi_\tau(x)|^2) \, d\tau.
\label{eq:duhamel-field-LWP}
\end{align}
\end{proposition}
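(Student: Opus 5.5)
We prove the proposition with a standard Banach fixed-point argument for the Duhamel map on a closed ball of $X_{t_0}:=C([0,t_0];Y^1\times H^{1/2})$. Given $(\vfi,u)\in X_{t_0}$, let $\Phi(\vfi,u)=(\Phi_1,\Phi_2)$ be the right-hand sides of \eqref{eq:duhamel-particle-LWP}–\eqref{eq:duhamel-field-LWP}. Set $R:=2(\|\vfi_0\|_{Y^1}+\|u_0\|_{H^{1/2}})$ and $B_R:=\{(\vfi,u)\in X_{t_0}:\sup_t(\|\vfi_t\|_{Y^1}+\|u_t\|_{H^{1/2}})\le R\}$. The free flows behave well on these spaces. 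The propagator $e^{-it\cH_0}$ is an isometry on $Y^1$ by the spectral theorem, as noted above, and it is strongly continuous there. The propagator $e^{-it|\nabla|}$ is a unitary Fourier multiplier, so it is an isometry on $H^{1/2}$. Hence the free parts have norm at most $R/2$, and it suffices to show that the Duhamel integrands are locally Lipschitz with values in $Y^1$ and $H^{1/2}$ respectively.

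\textbf{Step 1 (field equation).} We need $\|w*|\vfi|^2\|_{H^{1/2}}\lesssim \|w\|_{H^{1/2}}\||\vfi|^2\|_{L^1}\le \|w\|_{H^{1/2}}\|\vfi\|_{L^2}^2$, which follows from Young's inequality on the Fourier side, $|\widehat{w*f}|\le|\widehat w|\,\|f\|_{L^1}$. The difference estimate is $\||\vfi|^2-|\psi|^2\|_{L^1}\le(\|\vfi\|_2+\|\psi\|_2)\|\vfi-\psi\|_2$. Together these give $\|\Phi_2\|_{H^{1/2}}\le R/2+C\eta t_0R^2$ and a Lipschitz constant of order $\eta t_0R$.

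\textbf{Step 2 (particle equation, the main obstacle).} The key point is that multiplication by a potential $W$ is bounded on $Y^1$ whenever $W\in W^{1,\infty}$. Indeed, $\|W\psi\|_{Y^1}^2=\|\nabla(W\psi)\|^2+\int V|W\psi|^2+(C_0+1)\|W\psi\|^2$. The last two terms are bounded by $\|W\|_\infty^2$ times the corresponding terms for $\psi$; here $V+C_0+1\ge1$ is used, and if $V$ is negative somewhere we write $V=(V+C_0)-C_0$. The gradient term is bounded by $\|W\|_\infty\|\nabla\psi\|+\|\nabla W\|_\infty\|\psi\|$. Hence $\|W\psi\|_{Y^1}\lesssim\|W\|_{W^{1,\infty}}\|\psi\|_{Y^1}$.

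We apply this to the two potentials. For $W=v*|\vfi|^2$ we have $\|W\|_{W^{1,\infty}}\le\|v\|_{W^{1,\infty}}\|\vfi\|_2^2$. For $W=w*(u+\bar u)$ we need to put $u\in H^{1/2}$ into $W^{1,\infty}$. The estimate $\|w*u\|_\infty\le\|w\|_{2}\|u\|_{2}$ handles $W$ itself, and $\|\nabla w*u\|_\infty\le\|\nabla w\|_{L^2}\|u\|_{L^2}$ would handle its gradient, but $w\in W^{1,1}\cap H^{1/2}$ does not obviously give $\nabla w\in L^2$. So we move the derivative onto $u$ instead: $\|\nabla(w*u)\|_\infty\le\|\widehat w\,|\xi|\,\widehat u\|_{L^1}\le\| \langle\xi\rangle^{1/2}\widehat w\|_{L^2}\|\langle\xi\rangle^{1/2}\widehat u\|_{L^2}=\|w\|_{H^{1/2}}\|u\|_{H^{1/2}}$. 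This is exactly the place where the hypotheses $w\in H^{1/2}$ and $u\in H^{1/2}$ are needed. (The full paper also has $w\in W^{3,1}$, which gives further slack.)

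\textbf{Step 3 (closing the argument).} The map is Lipschitz on $B_R$. Differences of the nonlinearity split into terms like $(W_\vfi-W_\psi)\vfi+W_\psi(\vfi-\psi)$, and each is bounded by the estimates above; we use $Y^1\hookrightarrow L^2$ to control $\|\vfi\|_2$. This yields $\|\Phi(\vfi,u)\|\le R/2+Ct_0(\lambda R^3+\eta R^2)$ and $\mathrm{Lip}(\Phi)\le Ct_0(\lambda R^2+\eta R)$. Choosing $t_0$ small makes $\Phi$ a contraction on $B_R$, which gives existence and uniqueness in $B_R$. Uniqueness in the whole of $X_{t_0}$ follows by the usual continuity/Gronwall argument.

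Time continuity of the Duhamel integrals in $Y^1$ and $H^{1/2}$ follows from the strong continuity of the groups together with the integrability of the integrands. Essential self-adjointness of $\cH_0$ is what makes $e^{-it\cH_0}$ well defined and commute with $\cH_{C_0}^{1/2}$.
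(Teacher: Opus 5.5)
Your proposal is correct and follows essentially the same route as the paper. Both arguments are a contraction on $C([0,t_0];Y^1)\times C([0,t_0];H^{1/2})$ built from three ingredients: the $W^{1,\infty}$-multiplier bound on $Y^1$ (the paper's product-estimate lemma), Young's inequality for the field source $w*|\vfi|^2$ in $H^{1/2}$, and an $H^{1/2}$ Cauchy--Schwarz bound placing $w*(u+\overline{u})$ in $W^{1,\infty}$. You go slightly beyond the paper in two harmless ways: you write out the Fourier-side Cauchy--Schwarz step that the paper only cites, and you add the Gronwall argument giving uniqueness outside the ball $B_R$.
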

\noindent
The proof of \propref{prop:lwp-mf-strong-final} is given in \appref{app:lwp-mf-strong-final}.

\begin{theorem}[Global well-posedness in the form domain]
\label{thm:gwp-strong-final}
Assume the hypotheses of \propref{prop:lwp-mf-strong-final} and  \lemref{lem:energy-coercive-lower-bound} hold. If the boson $L^2$-mass and total energy are conserved along local solutions, then for every initial datum $(\vfi_0,u_0)\in Y^1(\mathbb R^3)\times H^{1/2}(\mathbb R^3)$, the mild solution exists globally in time
\begin{equation}
(\vfi_t,u_t)\in C([0,\infty);Y^1(\mathbb R^3)\times H^{1/2}(\mathbb R^3)).
\end{equation}
\end{theorem}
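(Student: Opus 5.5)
The plan is the standard blow-up alternative argument combined with a priori bounds from the conservation laws. First, I would extract from the proof of \propref{prop:lwp-mf-strong-final} the quantitative statement that the local existence time $t_0$ depends only on an upper bound for $\|\vfi_0\|_{Y^1}+\|u_0\|_{H^{1/2}}$, and that it is a nonincreasing function of this quantity. This is the usual output of a contraction argument in a ball of $C([0,t_0];Y^1\times H^{1/2})$. The relevant properties are:
\begin{itemize}
\item $e^{-it\cH_0}$ is an isometry on $Y^1$;
\item $e^{-it\omega(-i\nabla)}$ is an isometry on $H^{1/2}$;
\item the nonlinearities are locally Lipschitz with constants depending polynomially on the norms.
\end{itemize}
For the last point one uses $v\in W^{1,\infty}$, so that $v*|\vfi|^2\in W^{1,\infty}$ with bounds in $\|\vfi\|_{L^2}^2$. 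For the field term, one uses $w\in W^{1,1}\cap H^{1/2}$, so that $w*(u+\cc u)$ and its gradient are bounded in $L^\infty$ by $\|u\|_{H^{1/2}}$. Finally, $\|w*|\vfi|^2\|_{H^{1/2}}\lesssim \|w\|_{H^{1/2}}\|\vfi\|_{L^2}^2$. Uniqueness together with this uniform existence time gives a maximal solution on $[0,t^*)$, with the alternative that either $t^*=\infty$ or $\limsup_{t\to t^*}(\|\vfi_t\|_{Y^1}+\|u_t\|_{H^{1/2}})=\infty$.

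Second, I would establish the a priori bound. By hypothesis, $\|\vfi_t\|_{L^2}=\|\vfi_0\|_{L^2}$ and the total energy $\cE(\vfi_t,u_t)=\cE(\vfi_0,u_0)$ are conserved along the local solution. \lemref{lem:energy-coercive-lower-bound} (assumed) should give a coercive lower bound of the form
\[
\cE(\vfi,u)\ \ge\ c\big(\|\vfi\|_{Y^1}^2+\|u\|_{\dot H^{1/2}}^2\big)-C\big(1+\|\vfi\|_{L^2}^{p}\big)
\]
for some fixed $p$. This gives a uniform bound on $\|\vfi_t\|_{Y^1}$ and on $\||\nabla|^{1/2}u_t\|_{L^2}$ in terms of the initial data. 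The $L^2$ part of $u_t$ is not controlled by the energy. I would bound it directly from the Duhamel formula \eqref{eq:duhamel-field-LWP}:
\[
\|u_t\|_{L^2}\le\|u_0\|_{L^2}+\eta\,t\,\|w\|_{L^2}\|\vfi_0\|_{L^2}^2 .
\]
This grows at most linearly, so it stays finite on every bounded interval.

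Third, I would conclude. If $t^*<\infty$, then the norm $\|\vfi_t\|_{Y^1}+\|u_t\|_{H^{1/2}}$ stays bounded by some $M(t^*)<\infty$ on $[0,t^*)$. Restarting the local theory at time $t^*-t_0(M)/2$ with existence time $t_0(M)$ extends the solution beyond $t^*$, which is a contradiction. Continuity in time with values in $Y^1\times H^{1/2}$ on $[0,\infty)$ follows from the local statement applied on each subinterval and from uniqueness to glue the pieces.

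The main obstacle is the coercivity step. The interaction term $2\eta\,\mathrm{Re}\int (w*|\vfi|^2)\,u$ has to be absorbed into the field energy $\frac12\||\nabla|^{1/2}u\|^2$ without control of $\|u\|_{L^2}$. I expect this to be handled through $\||\nabla|^{-1/2}w\|$, available from \hypref{asm:interaction-generality}, by duality $\dot H^{-1/2}$--$\dot H^{1/2}$. That is exactly what \lemref{lem:energy-coercive-lower-bound} provides, and I would rely on it together with the separate linear-in-time $L^2$ bound above.
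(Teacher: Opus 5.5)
Your proposal is correct and follows the paper's argument: you take the maximal solution with its blow-up alternative, control $\sup_t\|\vfi_t\|_{Y^1}$ through mass and energy conservation together with the coercive bound of \lemref{lem:energy-coercive-lower-bound}, and control $u_t$ by a linear-in-time Duhamel bound. The differences are cosmetic: the paper (\lemref{lem:field-Hs-propagation}) bounds the full norm $\|u_t\|_{H^{1/2}}$ directly from Duhamel, using unitarity on $H^{1/2}$ and $\|w*|\vfi|^2\|_{H^{1/2}}\le\|w\|_{H^{1/2}}\|\vfi\|_{L^2}^2$, so the energy is needed only for $\vfi_t$; and the lemma's coercivity goes through $\dot H^{1/2}\hookrightarrow L^3$ with $w\in L^{3/2}$ rather than $\dot H^{-1/2}$ duality, though either route works.
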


\begin{proof}
Let $(\vfi_t,u_t)$ be the maximal local solution on $[0,T_{\max})$. By the conservation of both the energy $\mathcal{E}$ and the mass $\|\vfi_t\|_{L^2}^2 = \|\vfi_0\|_{L^2}^2$, the left-hand side of \eqref{eq:energy-lower-bound-final} is strictly conserved and equal to $\mathcal{E}[\vfi_0,u_0] + \frac{C_0+1}{2}\|\vfi_0\|_{L^2}^2$. 
Thus, \lemref{lem:energy-coercive-lower-bound} yields
\begin{equation}
\frac12\|\vfi_t\|_{Y^1}^2 + \frac14\|\omega(-i\nabla)^{1/2}u_t\|_{L^2}^2 \le \mathcal{E}[\vfi_0,u_0] + \frac{C_0+1}{2}\|\vfi_0\|_{L^2}^2 + C\|\vfi_0\|_{L^2}^4.
\end{equation}
This provides the uniform a priori bound $\sup_{t < T_{\max}} \|\vfi_t\|_{Y^1}^2 < \infty$. 
Simultaneously, \lemref{lem:field-Hs-propagation} ensures that if $T_{\max} < \infty$, then $\sup_{t < T_{\max}} \|u_t\|_{H^{\frac12}}^2 < \infty$. 
Combining these, we obtain
\begin{equation}
\sup_{0\le t<T_{\max}}
\Bigl(
\|\vfi_t\|_{Y^1(\mathbb R^3)}^2
+
\|u_t\|_{H^{1/2}(\mathbb R^3)}^2
\Bigr)
<\infty.
\end{equation}
This contradicts the blow-up alternative for the local fixed-point map. Therefore, we obtain that 
$$T_{\max}=\infty.$$
This proves the claim. 
\end{proof}

\subsection{Conservation laws} 
\label{conservation-laws}
The coupled system of PDEs \eqref{eq:boson-subsystem} and \eqref{eq:field-subsystem} possesses the following conservation laws: The $L^2$-norm of the particle subsystem  \eqref{eq:boson-subsystem} is by itself conserved, that is 
\begin{equation}
    \label{eq:conservation-of-L2-norm-of-vfi}
    \pt_t \norm{\vfi_t}_{L^2}^2 = -{i} \inp{\vfi_t, \prc{ \cH[u_t, \vfi_t] - \cH[u_t, \vfi_t]} \vfi_t} = 0,
\end{equation}
but the $L^2$-norm of the photon subsystem \eqref{eq:field-subsystem}, due to 
\begin{equation}
    \pt_t \norm{u_t}^2_{L^2} = 2 \Im{\inp{u_t, \eta \, (w* |\vfi_t|^2)}} \neq 0,
 \end{equation}
is not conserved.

The total energy of the coupled system of bosonic particles and coherent photons,
\begin{equation}
    \cE[u_t, \vfi_t] = \inp{\vfi_t, \prc{\frac{1}{2} \prc{-\Delta + V} + \frac{\lam}{4} (v * |\vfi_t|^2) + \frac{\eta}{2} w * (\cc{u_t} + u_t)} \vfi_t}
    +\frac12\langle u, \omega(-i\nabla) u\rangle
\end{equation}
is conserved, that is, $\pt_t \cE[u_t, \vfi_t]=0$.

\subsection{Lamb shift and Fermi Golden Rule}  
\label{lamb-shift-and-fermi-golden-rule}
This section presents an analysis of resonant transitions between excited bosonic (particle) states due to the absorption and emission of coherent photons. We write the boson wave function as an expansion in terms of the eigenfunctions
\begin{equation}
    \label{eq:excited-state-ansatz-for-vfi}
    \vfi_t(x) = \sum_{k=0}^\infty A_k(t) e^{-i t E_k } \rchi_k(x).
\end{equation}
Equivalently,
\begin{equation}
    || \vfi_t ||_{Y^1}^2 = \sum_k E_k | A_k(t) |^2 + ( 1 + C_0 ) \sum_k  | A_k(t) |^2
\end{equation}
Therefore, the density of the bosons is given by
\begin{equation}
    \abs{\vfi_t(x)}^2 = \sum_{j,j'} A_j(t) \cc{A_{j'}(t)} e^{i t (E_{j'} - E_{j})} \rchi_j(x) \cc{\rchi_{j'}}(x).
\end{equation}
It will be established that the amplitude of the excited states $A_k(t)$, for $k \geq 1$, varies very slowly in time $t$, with $\abs{\pt_t {A_k(t)}} = \cO(\eta^2)$. 
\smallskip
First, we solve the mean-field equation \eqref{eq:field-subsystem} for the effective classical field $u_t$ by using the Duhamel formula,
\begin{equation}
    u_t(y) = e^{-i t \omg(-i \nabla)} u_0(y) - i \eta \int_0^t ds \, e^{-i (t-s) \omg(-i \nabla)} (w * |\vfi_s(x)|^2).
\end{equation}
The second term on the RHS is a source term that feeds energy from the particle subsystem into the coherent radiation field amplitude $u_t$. Substituting the above expression for $u_t$ into the particle equation \eqref{eq:boson-subsystem},
\begin{equation}
    i \pt_t \vfi_t = \prc{-\Delta + V  + \lam (v * |\vfi_t|^2)} \vfi_t  +  \Resa(t) + \Rema(t),
\end{equation}
where
\begin{equation}
    \label{eq:Resa(t) definition}
    \Resa(t) = \Resaa(t) + \Resab(t),
\end{equation}
is the resonant term, with 
\begin{align}
    \Resaa(t) &= - i \eta^2 \prc{w * \int_0^t ds \, e^{-i (t-s) \omg(-i \nabla)} (w * |\vfi_s|^2)} \vfi_t,\label{eq:Resaa(t)-term}\\ 
    \Resab(t) &= i \eta^2 \prc{w * \int_0^t ds \, e^{i (t-s) \omg(-i \nabla)} (w * |\vfi_s|^2)} \vfi_t, \label{eq:Resab(t)-term}
\end{align}
and 
\begin{equation}
    \label{eq:Rema(t)}
    \Rema(t) = 2\eta (w *  \Re{ e^{-i t \omg(-i \nabla)} u_0}) \vfi_t 
\end{equation}
is the initial remainder term.

Using the ansatz \eqref{eq:excited-state-ansatz-for-vfi},  we introduce the notations
\begin{align}
    \Reskaa(t) &:= e^{i t  E_k} \inp{\rchi_k, \Resaa(t) },\label{eq:Reskaa(t)-definition}\\
    \Reskab(t) &:= e^{i t  E_k} \inp{\rchi_k, \Resab(t) },\label{eq:Reskab(t)-definition}\\
    \Remka(t) &:= e^{i t  E_k} \inp{\rchi_k, \Rema(t) },\label{eq:Remka(t)-definition}
\end{align}
and taking the inner product with $\rchi_k$,

\begin{align}
    \label{eq:ODE-for-Ak(t)}
    i \pt_t A_k(t) &= \lam \sum_{k', j,j'} \Lhar e^{i t (E_{k'} - E_k - (E_{j'} - E_{j}))} \cc{A_{j'}(t)} A_j(t) A_{k'}(t)\\
     &+ \Reskaa(t) + \Reskab(t) + \Remka(t).
\end{align}
Here, the coefficients
\begin{equation}
    \label{eq:definition-of-Lhar-coefficients}
    \Lhar := \inp{\rchi_k, v *(\cc{ \rchi_{j'}} \rchi_j) \rchi_{k'}} = \inp{\cc{\rchi_{k'}} \rchi_k, v * (\cc{\rchi_{j'}} \rchi_j)} \in \R,
\end{equation} account for the Hartree interaction term. They are real-valued because $v$ and the orthonormal basis vectors $\set{\rchi_k}$ are real-valued functions. Here and in the sequel, for greater generality, we will nevertheless keep track of the expressions for the complex-valued basis vectors. Thus, we write $\cc{\rchi_{k'}}$ instead of $\rchi_{k'}$ in the inner product.

We determine the resonant terms as follows,

\begin{align}
    &\Reskaa(t) = e^{i t E_k} \inp{\rchi_k, \Resaa(t)}\nonumber\\*
=& \sum_{k', j,j'} \inp{\rchi_k, - i \eta^2 w* \prc{\int_0^t ds \,  e^{-i (t-s) \omg(-i \nabla)} (w * (\rchi_j \cc{\rchi_{j'}}))  e^{i s (E_{j'} - E_j)} A_j(s) \cc{A_{j'}(s)}} \rchi_{k'}}\nonumber\\*
& \times A_{k'}(t) e^{i t (E_k - E_{k'})}\\*
     =&- i\eta^2  \sum_{k', j,j'} \int_0^t ds\,  \inp{ w * (\cc{\rchi_{k'}} \rchi_{k}), e^{- i(t-s)[\omg(-i \nabla) - (E_{j} - E_{j'})]} w * (\cc{\rchi_{j'}} \rchi_{j})} \nonumber\\*
    & \times e^{i t [(E_{k} - E_{k'})- (E_{j} - E_{j'})]} A_j(s) A_{j'}(s) A_{k'}(t).
\end{align}

Therefore, we have
\begin{align}
    \Reskaa(t)  =& - \eta^2 \sum_{k', j,j'}   (\Lm + i \Gm) e^{i t (E_{k} - E_{k'} - (E_{j} - E_{j'}))} A_j(t) A_{j'}(t) A_{k'}(t)\nonumber\\
    & + \eta^2 \Remkaa(t),
\end{align}
 where $\Lm$ and $\Gm$ are real-valued (since $w$ and $\rchi_k$ are real-valued functions), and determine the Lamb shift and Fermi Golden Rule, respectively. In particular,
\begin{align}
    \Lm  + i \Gm  :=& \inp{w * (\cc{\rchi_{k'}} \rchi_k), \prc{\frac{1}{{\omega(-i \nabla)} - (E_{j} - E_{j'}) - i 0^+}} w * (\cc{\rchi_{j'}} \rchi_{j})} 
\end{align}
are bounded. \lemref{lm:FGR-LS-a-priori-1-0} yields an a priori bound; the FGR coefficients are, in fact, uniformly bounded in $k,k',j,j'$, as follows from \lemref{lem:uniform-FGR-bound}.

Moreover, we have introduced the remainder term
\begin{align}
    \label{eq:Remkm}
    \Remkaa(t) =& \mathcal{Rem}^{(-),\mathrm{bd}}_k(t) + \mathcal{Rem}^{(-),\mathrm{der}}_k(t) 
\end{align}
where
\begin{align}
    \label{eq:Remkmbd}
    \mathcal{Rem}^{(-),\mathrm{bd}}_k(t) :=&  \sum_{k', j,j'}   \inp{w *(\cc{\rchi_{k'}}\rchi_k), \prc{\frac{ e^{- i t \omega(-i \nabla)} }{\omg(-i \nabla) - (E_{j} - E_{j'}) - i 0^+}}  w *(\cc{\rchi_{j'}} \rchi_{j}) } \nonumber\\
    & \times e^{i t (E_{k} - E_{k'})} A_j(0) \cc{A_{j'}}(0) A_{k'}(t) 
\end{align}
is the boundary term obtained from integration by parts, and
\begin{align}
\label{Rmkmderminus} 
    \mathcal{Rem}^{(-),\mathrm{der}}_k(t)
    &:=
    \sum_{k',j,j'}
    \int_0^t ds\,
    \Bigg\langle
        w*(\overline{\rchi_{k'}}\rchi_k),
        \frac{1}{
            \omega(-i\nabla)-(E_j-E_{j'})-i0^+
        }
        \nonumber\\
    &\hspace{3.0cm}\times
        e^{-i(t-s)\omega(-i\nabla)}
        w*(\overline{\rchi_{j'}}\rchi_j)
    \Bigg\rangle
    \nonumber\\
    &\hspace{1.5cm}\times
    e^{it(E_k-E_{k'})}
    e^{is(E_j-E_{j'})}
    \partial_s
    \left(
        A_j(s)\overline{A_{j'}(s)}
    \right)
    A_{k'}(t)
\end{align}
contains the time derivative of the amplitudes.

Furthermore, we have 
\begin{align}
    \Reskab(t) =& e^{i t E_k} \inp{\rchi_k, \Resab(t)} \\*
        =& \sum_{k', j,j'} \inp{\rchi_k, i \eta^2 \prc{ w * \int_0^t ds \,  e^{i (t-s) \omg(-i \nabla) - is(E_j - E_{j'})}\, w * (\rchi_j \rchi_{j'})   } \rchi_{k'} }\nonumber \\
    & \times  A_j(s) A_{j'}(s)  A_{k'}(t) e^{i t (E_k - E_{k'})} \\
     =& i\eta^2 \sum_{k', j,j'}  \int_0^t ds \,   \inp{w * (\cc{\rchi_{k'}} \rchi_k),   \prc{e^{i (t-s)[\omg(-i \nabla)+ (E_{j} - E_{j'})]}} w * (\cc{\rchi_{j'}} \rchi_{j} } \nonumber \\
     & \times e^{i t [(E_{k} - E_{k'}) - (E_{j} - E_{j'}))]}  A_j(s)  \cc{A_{j'}(s)} A_{k'}(t).
     \end{align}
Therefore, 
\begin{align}
    \label{eq:Remkp}
    \Reskab(t) =& - \eta^2 \sum_{k', j,j'}   (\Lp + i \Gp) e^{i t (E_{k} - E_{k'} - (E_{j} - E_{j'}))}  A_j(t) A_{k'}(t) \cc{A_{j'}}(t)
    \nonumber\\
    & + \eta^2 \Remkab(t),
\end{align}
where $\Lp$ and $\Gp$ are real valued, and

\begin{align}
    \Lp  + i \Gp  :=& \inp{w * (\cc{\rchi_{k'}} \rchi_k), \prc{\frac{1}{{\omega(-i \nabla)} + (E_{j} - E_{j'}) + i 0^+}} w * (\cc{\rchi_{j'}} \rchi_{j})},
\end{align}
and we obtain the remainder term
\begin{align}\label{eq:Remkmplus}
    \Remkab(t) =&   \mathcal{Rem}^{(+),\mathrm{bd}}_k(t)+\mathcal{Rem}^{(+),\mathrm{der}}_k(t)
\end{align}
where
\begin{align}\label{eq:Remkmplusbd}
    \mathcal{Rem}^{(+),\mathrm{bd}}_k(t) :=&   
    \sum_{k', j,j'}   \inp{ w * (\cc{\rchi_{k'}} \rchi_k), \prc{\frac{1}{\omega(-i \nabla) + (E_{j} - E_{j'}) + i 0^{+}}}e^{ i t \omega(-i \nabla)} w *(\cc{\rchi_{j'}} \rchi_{j}) }  \nonumber\\
    & \times e^{i t (E_{k} - E_{k'})} A_j(0) \cc{A_{j'}(0)} A_{k'}(t)
\end{align}
and
\begin{align}
    \mathcal{Rem}^{(+),\mathrm{der}}_k(t)
    &:=
    \sum_{k',j,j'}
    \int_0^t ds\,
    \Bigg\langle
        w*(\overline{\rchi_{k'}}\rchi_k),
        \frac{1}{
            \omega(-i\nabla)+(E_j-E_{j'})+i0^+
        }
        \nonumber\\
    &\hspace{3.0cm}\times
        e^{i(t-s)\omega(-i\nabla)}
        w*(\overline{\rchi_{j'}}\rchi_j)
    \Bigg\rangle
    \nonumber\\
    &\hspace{1.5cm}\times
    e^{it(E_k-E_{k'})}
    e^{is(E_j-E_{j'})}
    \partial_s
    \left(
        A_j(s)\overline{A_{j'}(s)}
    \right)
    A_{k'}(t)
\label{Rmkmderplus}
\end{align}
in analogy to $\mathcal{Rem}^{(-),\mathrm{bd}}_k(t)$ and $\mathcal{Rem}^{(-),\mathrm{der}}_k(t)$.
\smallskip
The relation between $\Lm + i \Gm$ and $\Lp + i \Gp$ is given by the following symmetry properties,
\begin{equation}
   \Lp + i \Gp = \cc{\Lambda_{k,k';j',j} ^{-} + i \Gamma_{k,k';j',j} ^{-}}.
\end{equation}
Since  $\brs{\rchi_j}_j$ are real-valued functions, one gets 
\begin{equation}
\cc{ \Lm + i \Gm} = \Lpkkdjdj + i \Gpkkdjdj. 
\end{equation}
Observe that we have the decomposition 
\begin{align}
\frac{1}{\omega(-i \nabla) \pm (E_j - E_{j'}) \pm i 0^\pm}=&\pv\prc{\frac{1}{\omega(-i \nabla) \pm (E_{j} - E_{j'})}}\nonumber\\
&\mp i \pi \delta\prc{\omega(-i \nabla) \pm (E_{j} - E_{j'})}.
\end{align}
where
\begin{align}
\pv\prc{\frac{1}{\omega(-i \nabla) \pm (E_{j} - E_{j'})}} =&  \frac{\omega(-i \nabla) \pm (E_{j} - E_{j'})}{\abs{\omega(-i \nabla) \pm (E_{j} - E_{j'})}^2}.
\end{align}
Combining the Lamb shift contributions,
\begin{align}
    \Lambda^{LS}_{k,k';j,j'} :=& \Lambda_{k,k';j,j'} ^{-} + \Lambda_{k,k';j,j'} ^{+} = \Lm + \Lmkkdjdj.
\end{align}
Thus,
\begin{equation}
\Lambda^{LS}_{k,k';j,j'} = 2 \inp{w * (\cc{\rchi_{k'}} \rchi_k), \prc{\frac{\omg(-i \nabla) }{{\omega(-i \nabla)^2 - (E_{j} - E_{j'})^2}}} w * (\cc{\rchi_{j'}} \rchi_{j})}.
\end{equation}
The Fermi Golden Rule contributions are given by
\begin{align}
\Gamma_{k,k';j,j'} =& \Gm + \Gp = \Gm - \Gmkkdjdj\nonumber\\
=& \inp{\rchi_k, w * \prc{\delta(\omg(-i \nabla) - (E_{j} - E_{j'})) w * (\cc{\rchi_{j'}} \rchi_j) \rchi_{k'}}}\label{eq:Gamma-kkdjjd-2} \\
& - \inp{\rchi_k, w * \prc{\delta(\omg(-i \nabla) + (E_{j} - E_{j'})) w * (\cc{\rchi_{j'}} \rchi_j) \rchi_{k'}}}.\label{eq:Gamma-kkdjjd-3}
\end{align}
Thus, 
\begin{equation}
    \label{eq:definition-of-Gamma-kkdjjd}
\Gamma_{k,k';j,j'} = \FGRkj(\1_{\brs{j > j'}} - \1_{\brs{j < j'}}),
\end{equation}
where 
\begin{align}
    \FGRkj :=& \inp{\rchi_k, w * \prc{\delta(\omg(-i \nabla) - |E_{j} - E_{j'}|)} w * (\cc{\rchi_{j'}} \rchi_{j} ) \rchi_{k'}}\nonumber\\*
    =& \inp{w * (\rchi_k \cc{\rchi_{k'}}), \prc{\delta(\omg(-i \nabla) - |E_{j} - E_{j'}|)} w * (\cc{\rchi_{j'}} \rchi_{j} )}.
\end{align}
Here, we note that due to $ \omg(- i \nabla) \geq 0$, the term in \eqref{eq:Gamma-kkdjjd-2} on the RHS vanishes for $E_{j'} - E_j > 0$, and the term in \eqref{eq:Gamma-kkdjjd-3} vanishes for $E_{j'} - E_j > 0$. Recalling that the eigenvectors $\rchi_k$ are real-valued functions, we note the following symmetry properties.

\begin{remark}(Symmetry properties of the coefficients)
    \label{rmk:symmetry}
    If $v, w$ are even real-valued functions then the coefficients $\Lhar$,  $ \Lambda_{k,k'}^{LS}$ and $\Gamma_{k,k'}^{FGR}$ are symmetric under both $k \leftrightarrow k'$ and $j \leftrightarrow j'$. Moreover, 
    $\Gamma_{k,k'}^{FGR}$ are non-negative and 
\begin{align}
     \Gamma_{k,k';j,j'} &= \Gamma_{k',k;j,j'} =   -\Gamma_{k,k';j',j}
\end{align}
is antisymmetric under $j \leftrightarrow j'$. That is, 

\begin{align}
    \Lambda_{k,k';j,j'}^{LS}&= \Lambda_{k',k:j,j'}^{LS} = \Lambda_{k',k;j',j}^{LS}\\
    \Gamma_{k,k';j,j'}^{FGR} &= \Gamma_{k',k;j,j'}^{FGR} =  \Gamma_{k,k';j',j}^{FGR}\\
    \Lambda^{Har}_{k,k';j,j'}&= \Lambda^{Har}_{k,k';j',j} = \Lambda^{Har}_{k',k;j,j'}.
\end{align}
Moreover, we note that when $j=j'$, so that $|E_j - E_{j'}|=0$, the terms \eqref{eq:Gamma-kkdjjd-2} and \eqref{eq:Gamma-kkdjjd-3} cancel each other, so that $\Gamma_{k,k';j,j} = 0$.
\end{remark}

\bigskip

We match the strength of the Hartree interaction with the coupling to the radiation field by setting 
\begin{equation}
    \lam = \eta^2.
\end{equation}
Therefore, we obtain 
\begin{align}
     \pt_t A_k(t) =& - \eta^2 \sum_{k', j,j'} \spr{i (\Lhar- \Lambda_{k,k';j,j'}^{LS}) + \Gamma_{k,k';j,j'}}  e^{i t [(E_k - E_{k'} )-(E_j -E_{j'} )]}\\
    &\times  A_j(t)  \cc{A_{j'}(t)} A_{k'}(t) -i \eta^2 \Remkaa(t) -i \eta^2 \Remkab(t) - i \Remka(t).
\end{align}
Now, we define the rescaled time variable $T=\eta^2 t$ and the rescaled amplitude function 
\begin{align}
    F_\mu^\eta(T) :=& A_\mu(T/\eta^2),\quad \mu \in\set{k,k',j,j'},  \label{def:rescaled-time}\\
   M_{k,k';j,j'} :=&  - i \prc{ \Lhar- \Lambda_{k,k';j,j'}^{LS}}  - \Gamma_{k,k';j,j'},\label{def:effective-coefficient-Mkkdjjd}\\
     \Delta E_{k,k';j,j'} :=&  (E_k - E_{k'}) - (E_{j}- E_{j'}), \label{def:resonance-energy-gap}
\end{align}
so that 

\begin{align}
    \label{eq:ODE-for-Fk(T)}
    \pt_T F_k^\eta(T) =& \sum_{k', j,j'} M_{k,k';j,j'} e^{i {T \Delta E_{k,k';j,j'}}/{\eta^2}}  F_j^\eta(T)  \cc{F_{j'}^\eta(T)}   F_{k'}^\eta(T) \\
    &-i \Remkaa(T/\eta^2) -i \Remkab(T/\eta^2) - \frac{i}{\eta^2} \Remka(T/\eta^2).
\end{align}
Sometimes, we will write $\Delta E$ to refer to $ \Delta E_{k,k';j,j'}$ for brevity.

\section{Effective resonance cascade equation and proof of BEC}
\label{effective-resonance-cascade-eq-and-proof-of-BEC}
In this section, we study the effective resonance cascade equation on the time scale $ t = T / \eta^2$. Recall the definition of the resonance function $\Delta E$ in \eqref{def:resonance-energy-gap}, and the effective coefficient $M_{k,k';j,j'}$ in \eqref{def:effective-coefficient-Mkkdjjd}. We will show that the dominant contribution to \eqref{eq:ODE-for-Fk(T)} is given by the coefficients associated with $\Delta E =0$, 
\begin{align} \label{eq:eff-resonant}
    \pt_T F_k(T) =& \sum_{k', j,j'; \Delta E_{k,k';j,j'} = 0}  M_{k,k';j,j'} \cc{F_{j'}(T)}   F_j(T) F_{k'}(T).
\end{align}
Given the assumption of the rational independence of energy gaps  \eqref{eq:resonance-condition-for-energy-differences}, $\Delta E=0$ enforces the direct interaction
\begin{equation}
    k=k', \qquad j=j'
\end{equation}
and the exchange interaction
\begin{equation}
    k =j, \qquad k' = j'.
\end{equation}
We immediately observe that the effective resonance cascade equation reduces to an infinite system of coupled nonlinear ordinary differential equations (ODEs) in diagonal form. That is,  \eqref{eq:eff-resonant} reduces to
\begin{equation}
\partial_T F_k(T)=\sum_{k'\ge0} M_{k,k'}\,|F_{k'}(T)|^2\,F_k(T),
\qquad 
\end{equation}
where $M_{k,k'}:=M_{k,k';k,k'}+M_{k,k;k',k'}-\delta_{k,k'}M_{k,k;k,k}$.

\begin{theorem}
\label{thm:complete-BEC}
Consider the diagonal nonlinear cascade equation
\begin{equation}\label{eq:FkT-ODE-def-1-0-0}
    \partial_T F_k(T)=\sum_{k'\ge 0} M_{k,k'}\,|F_{k'}(T)|^2\,F_k(T)
    \;,\;\;\; F_k(0)=\langle\chi_k,\phi_0\rangle
\end{equation}
with $\sum_k (1+E_k)|F_k(0)|^2<C$ and $\|F(0)\|_{\ell^2}=1$.
The coefficients $M_{k,k'}= M_{k,k';k,k'}+M_{k,k;k',k'}-\delta_{k,k'}M_{k,k;k,k}$, defined in \eqref{def:effective-coefficient-Mkkdjjd}, have the form 
\begin{eqnarray}
    \label{eq:M-form-1-1}
    M_{k,k'}&=& - i \Big(\Lambda_{k,k';k,k'}^{Har}
    +\Lambda_{k,k;k',k'}^{Har}
    -\delta_{k,k'}\Lambda_{k,k;k,k}^{Har}
    \nonumber\\
    &&\qquad- \Lambda_{k,k';k,k'}^{LS}
    - \Lambda_{k,k;k',k'}^{LS}
    + \delta_{k,k'}\Lambda_{k,k';k,k'}^{LS}
    \Big)
    \nonumber\\
    &&
    - \FGRk \prc{\1_{k > k'} - \1_{k' > k}},
\end{eqnarray}
with 
\begin{equation}
    \FGRk :=  \inp{w * (\rchi_k \cc{\rchi_{k'}}), \prc{\delta(\omg(-i \nabla) - |E_{k} - E_{k'}|)} w * (\cc{\rchi_{k'}} \rchi_{k} )} > 0
\end{equation}
 symmetric in $(k,k')$ and $k\neq k'$. Assume that the initial ground state amplitude is strictly nonzero, i.e., $|F_0(0)| > 0$. Furthermore, as stated in \lemref{lm-FGR-generic-1-1}, recall that the transition rates to the ground state satisfy $\Gamma_{0,k'}^{FGR} > 0$ for all excited states $k' \ge 1$, though they may decay at an arbitrary rate as $k' \to \infty$. Then, 
\begin{enumerate}
    \item The infinite cascade is globally well-posed, and the $\ell^2$-mass is conserved, $\|F(T)\|_{\ell^2}=\|F(0)\|_{\ell^2}$ for all $T \ge 0$, so that
\begin{equation}
    \|F(T)\|_{\ell^\infty}\le \|F(T)\|_{\ell^2}=\|F(0)\|_{\ell^2}.
\end{equation}
    \item The solution to the effective resonance cascade equation exhibits complete macroscopic ground state occupation as $T \to \infty$, that is, the probability mass flows strictly toward the ground state. More precisely, assume $\sum_{k\geq0}|F_k(0)|^2=1$ with $0<|F_0(0)|\leq1$ and $\sum_{k\geq0}(E_k-E_0)|F_k(0)|^2<C$. Then, the occupation densities of all excited states decay to zero, and
\begin{equation}
    \lim_{T \to \infty} \sum_{k' =1}^\infty \abs{F_{k'}(T)}^2 = 0.
\end{equation}
Consequently, the ground state absorbs the entire mass of the system,
\begin{equation}
    \label{eq:complete-BEC-statement}
    \lim_{T \to \infty} \abs{F_0(T)}^2 = \sum_{k=0}^\infty \abs{F_k(0)}^2 = 1,
\end{equation}
which constitutes the complete dynamical formation of a Bose-Einstein Condensate (BEC) in the boson subsystem. Furthermore, $\pt_T \sum_k E_k \abs{F_k(T)}^2 \leq 0$, and
\begin{eqnarray}
    \lim_{T\rightarrow\infty} \sum_{k\geq0}E_k|F_k(T)|^2
    = E_0 \|F(0)\|_{\ell^2}^2\,.
\end{eqnarray}
That is, the total energy of the boson subsystem decays monotonically with $T$ and concentrates in the ground state.
\item Assume $0<|F_0(0)|\leq1$ and that only finitely many modes are excited at $T=0$. Let $K>0$ be the smallest index such that $F_k(0) = 0$ for all $k > K$, and let $\tilde{\Gamma}_K := \min_{1\leq k' \leq K} \Gamma_{0,k'}^{FGR}$. Then,  
\begin{equation}
    \abs{F_0(T)}^2 \geq \frac{1}{1 + \frac{1 - \abs{F_0(0)}^2}{\abs{F_0(0)}^2} e^{- 2 \tilde{\Gamma}_K T}}
\end{equation} 
with $\norm{F(0)}_{\ell^2} = 1$.
\end{enumerate}

\end{theorem}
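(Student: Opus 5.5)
The whole argument runs through the moduli $n_k(T):=|F_k(T)|^2$. The Hartree and Lamb shift coefficients are real, so $-i(\cdots)$ is purely imaginary and drops out of $\partial_T n_k = 2\,\mathfrak{Re}(\overline{F_k}\partial_T F_k)$. Using \eqref{eq:M-form-1-1}, this leaves the closed system
\begin{equation*}
\partial_T n_k = -2\sum_{k'\ge0}\FGRk\big(\1_{k>k'}-\1_{k'>k}\big)\, n_k n_{k'} .
\end{equation*}

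For item (1), I sum this identity over $k$. Since $\FGRk$ is symmetric and $(\1_{k>k'}-\1_{k'>k})$ is antisymmetric, the double sum vanishes, so $\sum_k n_k$ is constant. To make this rigorous, I will first obtain local well-posedness by a Picard iteration in $\ell^2$. This uses that $\sup_{k,k'}|M_{k,k'}|<\infty$, which follows from \lemref{lem:uniform-FGR-bound} together with the a priori bounds on the $\Lambda$'s; the nonlinearity $F\mapsto (\sum_{k'}M_{k,k'}|F_{k'}|^2F_k)_k$ is then locally Lipschitz on $\ell^2$. I then justify the interchange of sums, which is absolutely convergent because $|\FGRk|\le C$ and $\sum n_k<\infty$. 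Mass conservation then excludes blow-up and gives global existence. The bound $\|F\|_{\ell^\infty}\le\|F\|_{\ell^2}$ is trivial.

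For item (2), I compute the energy flux by weighting with $E_k$ and symmetrizing in $(k,k')$:
\begin{equation*}
\partial_T\sum_k E_k n_k=-\sum_{k\neq k'}\FGRk\,|E_k-E_{k'}|\,n_kn_{k'}\le0 .
\end{equation*}
Rigor here needs a truncation: use finite sums $\sum_{k\le N}$, where the boundary terms carry a sign, or use the finite-energy assumption to control tails. In particular $\mathcal E(T):=\sum_k(E_k-E_0)n_k$ is nonincreasing and bounded. Its derivative is bounded above by $-2\sum_{k'\ge1}\Gamma^{FGR}_{0,k'}(E_{k'}-E_0)\,n_0 n_{k'}$, and integrating gives
\begin{equation*}
\int_0^\infty n_0\sum_{k'\ge1}\Gamma^{FGR}_{0,k'}(E_{k'}-E_0)\,n_{k'}\,dT<\infty .
\end{equation*}
Separately, $\partial_T n_0=2n_0\sum_{k'\ge1}\Gamma^{FGR}_{0,k'}n_{k'}\ge0$, so $n_0$ is nondecreasing and $n_0(T)\ge n_0(0)>0$. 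Hence $\int_0^\infty\sum_{k'\ge1}\Gamma^{FGR}_{0,k'}n_{k'}\,dT<\infty$. Since $n_0$ is monotone and bounded by $1$, it has a limit $n_0^\infty$.

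The main obstacle is upgrading this integrability to $\sum_{k\ge1}n_k(T)\to0$, since $\Gamma^{FGR}_{0,k'}$ may decay in $k'$. My plan is to split at a cutoff $K$. The tail satisfies $\sum_{k>K}n_k\le \mathcal E(0)/(E_{K+1}-E_0)$ uniformly in $T$, so it can be made small by choosing $K$ large. For the head, $\tilde\Gamma_K\sum_{1\le k\le K}n_k$ is integrable on $[0,\infty)$. Its time derivative is bounded, by the uniform bound on the coefficients and $\sum n=1$, so a Barbalat-type argument shows that it tends to $0$. Together these give $n_0^\infty=1$. The energy limit then follows from the same splitting: the head contributes $\sum_{1\le k\le K}(E_k-E_0)n_k\to0$, while for the tail I need uniform-in-$T$ control of $\sum_{k>K}(E_k-E_0)n_k$. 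I would try to get this from the monotonicity of the tail masses $m_K$ indicated in the introduction, or else settle for weak convergence plus the monotone limit.

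For item (3), the mass of modes $k>K$ is initially zero. Each $n_k$ with $k>K$ satisfies a linear ODE $\partial_Tn_k=c_k(T)n_k$ with $n_k(0)=0$, so by uniqueness $n_k\equiv0$. Then $\partial_Tn_0\ge2\tilde\Gamma_K n_0(1-n_0)$. Comparing with the logistic equation yields the stated lower bound.
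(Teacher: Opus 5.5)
Your argument for mass concentration is sound, and it differs only mildly from the paper's. You bound the mass tail by Chebyshev, $\sum_{k>K}n_k\le \mathcal E(0)/(E_{K+1}-E_0)$, and apply Barbalat to the head $\tilde\Gamma_K\sum_{1\le k\le K}n_k$. The paper instead applies Barbalat to $\sum_{k'\ge1}\Gamma^{FGR}_{0,k'}n_{k'}$ and uses monotonicity of the tail mass. Your linear-ODE uniqueness argument for item (3) is also fine.

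\textbf{First gap: the energy limit.} The last claim of item (2), $\lim_{T\to\infty}\sum_kE_kn_k(T)=E_0$, is left open in your proposal. Monotonicity of the total energy only gives $\mathcal E(T)\downarrow\mathcal E_\infty\ge0$. Since every fixed head vanishes, this only says that the energy tail converges to $\mathcal E_\infty$. ``Weak convergence plus the monotone limit'' cannot exclude $\mathcal E_\infty>0$, i.e.\ energy sitting in ever higher modes. You need the energy tail to be small uniformly in $T$, and the missing observation is that this tail is itself nonincreasing. The paper obtains this by direct computation:
\[
\partial_T\sum_{k>K}(E_k-E_0)n_k=-2\sum_{K<k<k'}\FGRk\,(E_{k'}-E_k)\,n_kn_{k'}-2\sum_{k>K}\sum_{k'\le K}\FGRk\,(E_k-E_0)\,n_kn_{k'}\le0 .
\]
Your idea of using the tail masses $m_L=\sum_{k>L}n_k$ can also be made to work, but only through the Abel summation identity
\[
\sum_{k>K}(E_k-E_0)\,n_k=(E_{K+1}-E_0)\,m_K+\sum_{L>K}(E_{L+1}-E_L)\,m_L .
\]
Since each $m_L$ is nonincreasing and $E_{L+1}>E_L$, this bounds the left side at time $T$ by its value at $T=0$. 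That initial value is small for large $K$ by the finite-energy assumption. As written, this step is not supplied.

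\textbf{Second gap: local well-posedness.} Your argument rests on $\sup_{k,k'}|M_{k,k'}|<\infty$, and the paper's estimates do not give this. The FGR part (\lemref{lem:uniform-FGR-bound}), the Hartree part and the direct Lamb shift are uniformly bounded. The bound on the exchange Lamb shift $\Lambda^{LS}_{k,k';k,k'}$, however, grows like $1+E_{\min\{k,k'\}}$; see \lemref{lm:FGR-LS-a-priori-1-0} and the proof of \lemref{lem:bound-Fk2-Mkkd}. So Picard iteration in plain $\ell^2$ is unjustified.

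The remedy is already in your first paragraph. The moduli obey a closed system involving only $\FGRk$, so solve it in $\ell^1$, using the uniform $\Gamma$-bound, and extend globally by conservation. Then recover $F_k(T)=\exp\bigl(\int_0^T a_k(S)\,dS\bigr)F_k(0)$ with $a_k=\sum_{k'}M_{k,k'}n_{k'}$, which is controlled by the energy-weighted row bound of \lemref{lem:bound-Fk2-Mkkd}. This is the paper's route. Alternatively, iterate in the finite-energy space $\sum_k(1+E_k)|F_k|^2<\infty$.
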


\begin{proof}
We first prove global existence and uniqueness for the system of occupation probabilities $\{p_k(T):=|F_k(T)|^2\}_{k\geq0}$ in $\ell^1$. 
They satisfy the infinite system of ODEs
\begin{align}\label{eq-Tder-Fsq-1-0}
    \pt_T p_k(T) &= \sum_{k'} 2 \Re{ M_{k,k'} p_{k'}(T) } p_k(T)  \nonumber \\
    &= \sum_{k'} 2 \Gamma_{k,k'}^{FGR}  \big(\1_{\brs{k' > k}} - \1_{\brs{k' < k}}\big) p_{k'}(T) p_k(T).
\end{align} 
Local well-posedness in $\ell^1$ follows from a standard fixed-point argument using the Duhamel formula and the fact that $\sup_{k,k'}|\Gamma_{k,k'}^{FGR}|<\Gamma^*$ is uniformly bounded, \lemref{lem:uniform-FGR-bound}. Global well-posedness follows from conservation of $\ell^1$ mass, which we verify next.

Summing over all modes $k \ge 0$, the derivative of the total mass evaluates to:
\begin{align}
    \pt_T \sum_{k=0}^\infty p_k(T) &= \sum_{k, k'} 2 \Gamma_{k,k'}^{FGR}  \big(\1_{\brs{k' > k}} - \1_{\brs{k' < k}}\big) p_{k'}(T) p_k(T) = 0,
\end{align}
which vanishes identically due to the strict antisymmetry of the directional indicator $(\1_{\brs{k' > k}} - \1_{\brs{k' < k}})$ and the underlying symmetry of the transition rates $\Gamma_{k,k'}^{FGR} = \Gamma_{k',k}^{FGR}$. Therefore, the total probability mass is strictly conserved for all time, yielding $\sum_{k=0}^\infty p_k(T) = \sum_{k=0}^\infty p_k(0) = 1$.
\medskip
     
We next prove that the energy remains uniformly bounded. To this end, we consider the weighted sum over the spectrum. By exchanging the summation indices for $k < k'$, we obtain 
\begin{align}
    \label{eq:mass-flow-towards-low-energy-modes}
    \pt_T \sum_{k=0}^\infty E_k \; p_k(T) &= \sum_{k,k'} 2 \Gamma_{k,k'}^{FGR}  \big(\1_{\brs{k' > k}} - \1_{\brs{k' < k}}\big) E_k p_{k'}(T) p_k(T) \nonumber \\
    &= - \sum_{k < k'} 2 \Gamma_{k,k'}^{FGR}  (E_{k'} - E_k) \; 
    p_{k'}(T) p_k(T) \le 0
\end{align}
(to be precise, one should use regularized, non-negative, monotone increasing weights $e_k^{(R)}:=\min\{E_k-E_0,R\}$ in place of $E_k$, and use monotone convergence as $R\rightarrow\infty$).
Since the physical energy levels are strictly ordered ($E_{k'} > E_k$ for $k' > k$), the derivative is non-positive, implying that the probability mass flows strictly toward lower-energy modes and that the overall energy of the boson subsystem decreases monotonically. In particular, 
$\sum_k(1+E_k)p_k(T)\leq \sum_k(1+E_k)p_k(0)<C$ uniformly in $T$. 

To prove local well-posedness of \eqref{eq:FkT-ODE-def-1-0-0}, we define 
\begin{eqnarray}
    a_k(T):=\sum_{k'}M_{k,k'}p_{k'}(T)
\end{eqnarray}
which is uniformly bounded in $\ell^\infty$ due to \lemref{lem:bound-Fk2-Mkkd}. Then,
\begin{eqnarray}
    F_k(T):=\exp\Big(\int_0^TdS\; a_k(S)\Big)F_k(0)
\end{eqnarray} 
solves \eqref{eq:FkT-ODE-def-1-0-0} with the given initial data. Because $p_k(T)$ solves
\begin{eqnarray}
    \partial_T p_k(T)=2\Re{a_k(T) p_k(T)} \,, \;\;\;p_k(0)=|F_k(0)|^2\,,
\end{eqnarray}
so that
\begin{eqnarray}
    p_k(T):=\exp\Big(\int_0^TdS\;2\Re{ a_k(S)}\Big)p_k(0) \geq0 \,,
\end{eqnarray}
uniqueness of scalar ODEs implies that $p_k(T)=|F_k(T)|^2$.
Consequently, $F_k(T)$ solves \eqref{eq:FkT-ODE-def-1-0-0}.
To prove uniqueness, assume two solutions with the same initial data; then, both determine the same $a_k(T)$ and satisfy the same linear scalar ODE; uniqueness thus follows. Using the conservation of $\|F(T)\|_{\ell^2}=\|p(T)\|_{\ell^1}$, we may subsequently enhance local to global well-posedness.

With mass conservation and energy monotonicity established, we isolate the ground state ($k=0$) from the dynamic system. From \eqref{eq-Tder-Fsq-1-0}, we infer that its occupation density satisfies the differential equation
\begin{equation}
    \label{eq:ODE-for-ground-state-occupation}
    \pt_T \abs{F_0(T)}^2 = 2 \abs{F_0(T)}^2 \sum_{k' =1}^\infty \Gamma_{0,k'}^{FGR} \abs{F_{k'}(T)}^2.
\end{equation}
Integrating this directly yields an exponential growth factor:
\begin{equation}
    \abs{F_0(T)}^2 = \abs{F_0(0)}^2 \exp\left( 2 \int_0^T  \sum_{k' =1}^\infty \Gamma_{0,k'}^{FGR} \abs{F_{k'}(S)}^2 \, dS \right).
\end{equation}
Because the total mass is conserved ($\abs{F_0(T)}^2 \le 1$) and the initial ground state amplitude is strictly nonzero, the integral in the exponent must be globally bounded as $T \to \infty$:
\begin{equation}
    \int_0^\infty \sum_{k' =1}^\infty \Gamma_{0,k'}^{FGR} \abs{F_{k'}(S)}^2 \, dS < \infty.
\end{equation}
Crucially, the transition coefficients $\Gamma^{FGR}$ are uniformly bounded via the version of the Limiting Absorption Principle established in \lemref{lem:uniform-lap-for-R-eps-pm}, and the state vector is uniformly bounded in $\ell^2(\C)$ by mass conservation. Moreover, the time derivative of the integrand is uniformly bounded,
\begin{eqnarray}
    \Big|\partial_S \sum_{k' =1}^\infty \Gamma_{0,k'}^{FGR} \abs{F_{k'}(S)}^2  \Big|
    &\leq&\Gamma^*\sum_{k\geq1}|\partial_S \abs{F_{k'}(S)}^2|
    \nonumber\\
    &\leq & 2(\Gamma^*)^2 \sum_{k,k'} \abs{F_{k}(S)}^2\abs{F_{k'}(S)}^2
    \nonumber\\
    &=& 2(\Gamma^*)^2
\end{eqnarray}
where $\Gamma_{k,k'}^{FGR}\leq \Gamma^*$ for all $k,k'$ (see \lemref{lem:uniform-FGR-bound} below), and where we used \eqref{eq-Tder-Fsq-1-0} in passing to the second line.  
Thus, the integrand is uniformly continuous in time. By Barbalat's Lemma, the convergence of the infinite integral of a uniformly continuous, non-negative function guarantees that the integrand strictly vanishes
\begin{equation}
    \label{eq:integrand-in-exponent-is-integrable}
    \lim_{T \to \infty} \sum_{k' =1}^\infty \Gamma_{0,k'}^{FGR} \abs{F_{k'}(T)}^2 = 0.
\end{equation}
Since $\Gamma_{0,k'}^{FGR} > 0$ for all $k' \ge 1$, this immediately implies pointwise decay for every individual excited mode, namely $\lim_{T \to \infty} \abs{F_{k'}(T)}^2 = 0$.

To upgrade this pointwise decay to strong convergence in the total $L^2$-mass, we must establish uniform control over the high-energy tails. We compute the mass flow for modes above an arbitrary high-energy threshold $K \ge 1$,
\begin{equation}
    \pt_T \sum_{k > K} \abs{F_k(T)}^2 = \sum_{k > K} \sum_{k' \ge 0} 2 \Gamma_{k,k'}^{FGR} \big(\1_{\brs{k' > k}} - \1_{\brs{k' < k}}\big) \abs{F_{k'}(T)}^2 \abs{F_k(T)}^2.
\end{equation}
We split the inner sum over $k'$ into two regions: $k' > K$ and $k' \le K$. For the region where both $k, k' > K$, the sum identically vanishes due to the exact antisymmetry of the indicator term and the symmetry of $\Gamma$. We are left strictly with the cross-terms where $k' \le K$. Since $k > K$, this implies $k' < k$, causing the indicator function to evaluate to $-1$,
\begin{equation}
    \label{eq:monotonicity-of-tail-mass}
    \pt_T \sum_{k > K} \abs{F_k(T)}^2 = - \sum_{k > K} \sum_{k' \le K} 2 \Gamma_{k,k'}^{FGR} \abs{F_{k'}(T)}^2 \abs{F_k(T)}^2 \le 0.
\end{equation}
This monotonicity implies that the tail mass is uniformly bounded by its initial value
\begin{equation}
\sum_{k > K} \abs{F_k(T)}^2 \le \sum_{k > K} \abs{F_k(0)}^2.
\end{equation}
Because the initial state resides in $\ell^2(\C)$, for any arbitrary $\epsilon  > 0$, there exists a sufficiently large index $K(\epsilon )$ such that the initial mass tail is strictly bounded by $\epsilon $. We decompose the sum over all excited states into a finite low-energy block and the infinite high-energy tail,
\begin{equation}
    \sum_{k' \ge 1} \abs{F_{k'}(T)}^2 \le \sum_{k'=1}^{K(\epsilon )} \abs{F_{k'}(T)}^2 + \sum_{k' > K(\epsilon )} \abs{F_{k'}(0)}^2 \le \sum_{k'=1}^{K(\epsilon )} \abs{F_{k'}(T)}^2 + \epsilon .
\end{equation}
Taking the limit as $T \to \infty$, the finite sum of strictly vanishing terms evaluates to zero while the tail is bounded by $\epsilon $,
\begin{equation}
    \limsup_{T \to \infty} \sum_{k' \ge 1} \abs{F_{k'}(T)}^2 \le \sum_{k'=1}^{K(\epsilon )} \left( \lim_{T \to \infty} \abs{F_{k'}(T)}^2 \right) + \epsilon  = \epsilon .
\end{equation}
Since this inequality holds for any arbitrary $\epsilon  > 0$, we conclude the strong limit 
\begin{equation}\label{eq:boson-l2-conv-0-1-0}
    \lim_{T \to \infty} \sum_{k' \ge 1} \abs{F_{k'}(T)}^2 = 0 \,.
\end{equation} 
Invoking the global $L^2$-mass conservation, all residual probability mass must therefore accumulate entirely in the ground state
\begin{equation}
    \lim_{T \to \infty} \abs{F_0(T)}^2 = 1 - \lim_{T \to \infty} \sum_{k' \ge 1} \abs{F_{k'}(T)}^2 = 1,
\end{equation}
as claimed. 

To prove that in the limit $T\rightarrow\infty$, the expected energy converges to the ground state energy, we first note that $E_k-E_0>0$ for all $k\geq1$, and that for any $K>0$,
\begin{eqnarray}
    \label{eq:mass-flow-1-1}
    \lefteqn{
    \pt_T \sum_{k>K} (E_k-E_0) \abs{F_k(T)}^2 
    }
    \nonumber\\
    &=& \sum_{k,k'>K} 2 \Gamma_{k,k'}^{FGR}  \big(\1_{\brs{k' > k}} - \1_{\brs{k' < k}}\big) (E_k-E_0) \abs{F_{k'}(T)}^2 \abs{F_k(T)}^2 
    \nonumber\\
    &&\quad+ \sum_{k>K}\sum_{k'\leq K}2\Gamma_{k,k'}^{FGR}  \big(\1_{\brs{k' > k}} - \1_{\brs{k' < k}}\big) (E_k-E_0) \abs{F_{k'}(T)}^2 \abs{F_k(T)}^2 
    \nonumber \\
    &=& - \sum_{K<k < k'} 2 \Gamma_{k,k'}^{FGR}  (E_{k'} - E_k) \abs{F_{k'}(T)}^2 \abs{F_k(T)}^2 
    \nonumber\\
    &&\quad-\sum_{k>K}\sum_{k'\leq K}2\Gamma_{k,k'}^{FGR} (E_k-E_0) \abs{F_{k'}(T)}^2 \abs{F_k(T)}^2 
    \nonumber\\
    &\leq& 0.
\end{eqnarray}
Here we used the monotonicity of the eigenenergies ($E_{k'} > E_k$ for $k' > k$). Therefore, $\sum_{k>K} (E_k-E_0) \abs{F_k(T)}^2\leq \sum_{k>K} (E_k-E_0) \abs{F_k(0)}^2$ uniformly in $T$. 
Thus, for any arbitrary fixed $\epsilon>0$, there exists $K=K(\epsilon)$ such that $\sum_{k>K} (E_k-E_0) \abs{F_k(T)}^2<\epsilon$, uniformly in $T$. 
This implies that 
\begin{eqnarray} 
    \limsup_{T\rightarrow\infty} \sum_{k\geq1} (E_k-E_0)|F_k(T)|^2 
    &\leq&\lim_{T\rightarrow\infty}\sum_{1\leq k\leq K} (E_k-E_0)|F_k(T)|^2
    +\epsilon
    \nonumber\\
    &\leq&(E_K-E_0)\lim_{T\rightarrow\infty}\sum_{1\leq k\leq K} |F_k(T)|^2
    +\epsilon
    \nonumber\\
    &\leq&0+\epsilon
\end{eqnarray}
where we used \eqref{eq:boson-l2-conv-0-1-0}.
Because $\epsilon>0$ is arbitrary, this implies $\lim_{T\rightarrow\infty} \sum_{k\geq1} (E_k-E_0)|F_k(T)|^2 =0$. We assume $F_0(0)>0$ so that \eqref{eq:boson-l2-conv-0-1-0} holds.
From mass conservation, $\sum_{k\geq0}|F_k(T)|^2=1$, and energy monotonicity, $\sum_{k\geq0}E_k|F_k(T)|^2<\sum_{k\geq0}E_k|F_k(0)|^2<C_{E}$, therefore
\begin{eqnarray}
    \lim_{T\rightarrow\infty}\sum_{k\geq0} E_k|F_k(T)|^2
    &=&\lim_{T\rightarrow\infty}\sum_{k\geq0} E_0|F_k(T)|^2+
    \lim_{T\rightarrow\infty}
    \sum_{k\geq1} (E_k-E_0)|F_k(T)|^2  
    \nonumber\\
    &=& E_0 \lim_{T\rightarrow\infty}\sum_{k\geq0} |F_k(T)|^2
    +0
    \nonumber\\
    &=& E_0
\end{eqnarray}
follows.  

Assuming now that only finitely many modes are nonzero at initial time, let $K$ be the smallest integer such that $|F_k(0)|=0$ for all $k>K$.
Since $ \sum_{k > K} |F_k(T)|^2 \leq \sum_{k > K} |F_k(0)|^2=0$  by \eqref{eq:monotonicity-of-tail-mass}, we find from \eqref{eq:ODE-for-ground-state-occupation} that
\begin{align}
    \pt_T |F_0(T)|^2 =& 2 |F_0(T)|^2 \sum_{k' =1}^K \Gamma_{0,k'}^{FGR} |F_{k'}(T)|^2\\
\geq& 2 \tilde{\Gamma}_K |F_0(T)|^2 \sum_{k' =1}^K |F_{k'}(T)|^2\\
=& 2 \tilde{\Gamma}_K |F_0(T)|^2 (1 - |F_0(T)|^2) 
\end{align}
where we used the mass conservation $|F_0(T)|^2 + \sum_{k'=1}^K |F_{k'}(T)|^2 = \|F(T)\|_{\ell^2}^2 = 1$, and where $\tilde{\Gamma}_K := \min_{k' \leq K} \Gamma_{0,k'}^{FGR}$. To integrate this ordinary differential inequality, we define $G(T):= |F_0(T)|^2$. Separating variables and integrating over the interval $[0, T]$ yields 
\begin{equation}
\frac{G(T)}{1-G(T)} \ge \left(\frac{G(0)}{1-G(0)}\right) e^{2 \tilde{\Gamma}_K T}.
\end{equation}
Solving this algebraic inequality for $G(T)$,
\begin{equation}
    |F_0(T)|^2 \ge \frac{1}{ 1 + \frac{1 - |F_0(0)|^2}{|F_0(0)|^2} e^{- 2 \tilde{\Gamma}_K T} }.
\end{equation}
This proves the theorem. 
\end{proof}

\begin{lemma}[Uniform bound on the Fermi Golden Rule coefficients]
\label{lem:uniform-FGR-bound}
Assume that $w\in W^{1,1}(\mathbb{R}^3)$. 
Then the Fermi Golden Rule coefficients satisfy
\begin{equation}
    \Gamma^*
    :=
    \sup_{k\neq k'}
    \Gamma^{\mathrm{FGR}}_{k,k'}
    \leq 
    C\|w\|_{W^{1,1}(\mathbb{R}^3)}^2,
\end{equation}
where $C>0$ is independent of $k$ and $k'$.
\end{lemma}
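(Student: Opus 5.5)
The plan is to work directly with the explicit spherical representation of $\FGRk$ from \lemref{lm-FGR-generic-1-1}. Since $\omega(-i\nabla)=|-i\nabla|$ is a Fourier multiplier, the spectral delta function $\delta(\omega(-i\nabla)-R_{k,k'})$ acts in Fourier space as the surface measure on the sphere of radius $R_{k,k'}=|E_k-E_{k'}|$. Writing $\widehat{w*f}=c\,\widehat w\,\widehat f$ with a normalization constant $c$, and using polar coordinates (coarea formula), this gives
\begin{equation*}
\FGRk = c'\,R_{k,k'}^2\int_{S^2}d\sigma(\omega)\,|\widehat w(R_{k,k'}\omega)|^2\,\big|\widehat{\chi_k\overline{\chi_{k'}}}(R_{k,k'}\omega)\big|^2 ,
\end{equation*}
exactly as in \eqref{eq-FGRk-pos-1-0}, up to Fourier constants. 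The proof then reduces to two pointwise Fourier bounds.

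First, I bound the eigenfunction factor uniformly in $k,k'$. By the Hausdorff--Young bound $\|\widehat g\|_{L^\infty}\le C\|g\|_{L^1}$ and Cauchy--Schwarz with $\|\chi_k\|_{L^2}=\|\chi_{k'}\|_{L^2}=1$, we have $\sup_\xi|\widehat{\chi_k\overline{\chi_{k'}}}(\xi)|\le C\|\chi_k\chi_{k'}\|_{L^1}\le C$, independently of $k,k'$.

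Second, I bound the decay of $\widehat w$ using $w\in W^{1,1}$. We have $|\widehat w(\xi)|\le C\|w\|_{L^1}$, and since $\xi_j\widehat w(\xi)=-i\,\widehat{\partial_j w}(\xi)$, also $|\xi||\widehat w(\xi)|\le C\|\nabla w\|_{L^1}$. Combining the two gives
\begin{equation*}
(1+|\xi|)\,|\widehat w(\xi)|\le C\|w\|_{W^{1,1}} .
\end{equation*}

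Inserting both bounds into the spherical integral yields
\begin{equation*}
\FGRk\le C\,4\pi\,\frac{R_{k,k'}^2}{(1+R_{k,k'})^2}\,\|w\|_{W^{1,1}}^2\le C\|w\|_{W^{1,1}}^2 .
\end{equation*}
This bound is uniform in $k\neq k'$, so taking the supremum proves the lemma.

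The main obstacle is the large-$R$ regime. The resonance sphere grows with the energy gap, so its surface area $4\pi R^2$ must be cancelled by the quadratic decay $|\widehat w|^2\lesssim R^{-2}$. This is precisely what the single derivative in $W^{1,1}$ provides, and the eigenfunction factor contributes no decay but is harmlessly bounded. A secondary point is justifying the delta-function representation rigorously, for instance as the limit of $\frac1\pi\,\mathrm{Im}\,(\omega(-i\nabla)-R-i\epsilon)^{-1}$ in Fourier variables, using continuity of the integrand.
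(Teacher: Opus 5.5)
Your proof is correct and follows the paper's argument. Like the paper, you use the spherical representation of $\Gamma^{\mathrm{FGR}}_{k,k'}$, the uniform bound $\|\widehat{\chi_k\overline{\chi_{k'}}}\|_{L^\infty}\le C$ from Cauchy--Schwarz, and $|\xi|\,|\widehat w(\xi)|\le C\|\nabla w\|_{L^1}$ to cancel the $R_{k,k'}^2$ surface factor. Your additional $(1+|\xi|)$ weight is harmless but not needed, since $R_{k,k'}^2|\widehat w(R_{k,k'}\omega)|^2\le C\|\nabla w\|_{L^1}^2$ already holds for every $R_{k,k'}$.
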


\begin{proof}
Fix $k\neq k'$ and set
\begin{equation}
    R_{k,k'}:=|E_k-E_{k'}|>0.
\end{equation}
By the definition of the Fermi Golden Rule coefficient and spherical
coordinates,
\begin{align}
    \Gamma^{\mathrm{FGR}}_{k,k'}
    &=
    \int_{\mathbb{R}^3}
    |\widehat{w}(\xi)|^2
    \left|
        \widehat{\rchi_k\overline{\rchi_{k'}}}(\xi)
    \right|^2
    \delta\bigl(|\xi|-R_{k,k'}\bigr)\,d\xi
    \notag\\
    &=
    R_{k,k'}^2
    \int_{\mathbb{S}^2}
    |\widehat{w}(R_{k,k'}\omega)|^2
    \left|
        \widehat{\rchi_k\overline{\rchi_{k'}}}
        (R_{k,k'}\omega)
    \right|^2
    d\sigma(\omega).
\end{align}
Since the eigenfunctions are $L^2$-normalized,  
Cauchy-Schwarz  gives
\begin{equation}
    \left\|
        \widehat{\rchi_k\overline{\rchi_{k'}}}
    \right\|_{L^\infty}
    \leq
    C\|\rchi_k\overline{\rchi_{k'}}\|_{L^1}
    \leq
    C\|\rchi_k\|_{L^2}\|\rchi_{k'}\|_{L^2}
    =C.
\end{equation}
Moreover, since $w\in W^{1,1}(\mathbb{R}^3)$ and $\xi_j\widehat{w}(\xi) =-i\,\widehat{\partial_j w}(\xi)$, we find
\begin{equation}
    |\xi|\,|\widehat{w}(\xi)|
    \leq
    C\sum_{j=1}^3
    \|\widehat{\partial_j w}\|_{L^\infty}
    \leq
    C\|\nabla w\|_{L^1}.
\end{equation}
Evaluating this estimate on the resonant sphere
$|\xi|=R_{k,k'}$ yields
\begin{equation}
    R_{k,k'}^2
    |\widehat{w}(R_{k,k'}\omega)|^2
    \leq
    C\|\nabla w\|_{L^1}^2,
    \qquad \omega\in\mathbb{S}^2.
\end{equation}
Therefore,
\begin{align}
    \Gamma^{\mathrm{FGR}}_{k,k'}
    &\leq
    C\|\nabla w\|_{L^1}^2
    \int_{\mathbb{S}^2}
    \left|
        \widehat{\rchi_k\overline{\rchi_{k'}}}
        (R_{k,k'}\omega)
    \right|^2
    d\sigma(\omega)
    \nonumber\\
    &\leq
    C\|\nabla w\|_{L^1}^2.
\end{align}
The constant is independent of $k$ and $k'$. Taking the supremum over
all $k\neq k'$ proves
\begin{equation}
    \Gamma^*
    \leq
    C\|\nabla w\|_{L^1}^2
    \leq
    C\|w\|_{W^{1,1}}^2.
\end{equation}
This establishes the claim.
\end{proof}

\begin{lemma}[Uniform bound on effective coefficients]
\label{lem:bound-Fk2-Mkkd}
Assume the potential $V(x)$ satisfies the bound $\jpp{x}^{2s} \leq C( V(x)+C_0+1)$ on $\mathbb{R}^3$ for some $ \frac12<s<\frac{1+\beta}{2}$ (see \hypref{asm:confining-potential} in \assumref{the-assumptions}) and $C_V > 0$. Let the interaction kernel $w$ satisfy the weighted integrability condition $w \in L^2(\mathbb{R}^3, \jpp{x}^{2s} dx)$. 
Then, for any state $\vfi_t = \sum_k F_k^\eta(\eta^2 t) e^{-itE_k} \rchi_k$ with uniformly bounded particle energy $ \inp{\vfi_t, (-\Delta + V) \vfi_t} < C_E$, the effective coefficients $M_{k,k'}$ satisfy the uniform bound
\begin{equation}
    \sup_{T,\eta}\sup_k\sum_{k'} \abs{M_{k,k'}} \abs{F_{k'}^\eta}^2 \leq C_{E,s},
\end{equation}
where the constant $C'_{E,s} > 0$ is bounded for $s >1/2$ and depends on the value of the conserved total boson and photon energy, and on norms of the interaction potentials $v$ and $w$.
\end{lemma}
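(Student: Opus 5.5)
We bound $\sup_k\sum_{k'}|M_{k,k'}|\,|F^\eta_{k'}|^2$ by splitting $M_{k,k'}$ via \eqref{eq:M-form-1-1} into Hartree, Lamb-shift and Fermi-Golden-Rule parts and controlling each part uniformly in $(k,k')$. Once this is done, the sum is at most $\sup_{k,k'}|M_{k,k'}|\cdot\sum_{k'}|F^\eta_{k'}|^2$. By $L^2$ mass conservation \eqref{eq:conservation-of-L2-norm-of-vfi}, $\sum_{k'}|F^\eta_{k'}(T)|^2=\|\vfi_t\|_{L^2}^2=1$, uniformly in $T$ and $\eta$. So the whole task reduces to a uniform entrywise bound on $M_{k,k'}$. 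The energy bound $C_E$ is then only needed as a fallback, through the weight $\sum_{k'}(1+E_{k'})|F_{k'}|^2\le C_E$, if some coefficient grows with $E_{k'}$.

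\emph{Hartree and FGR parts.} For the Hartree terms, \eqref{eq:definition-of-Lhar-coefficients} gives
\[
|\Lambda^{Har}_{k,k';j,j'}|\le \|v\|_{L^\infty}\,\|\rchi_k\rchi_{k'}\|_{L^1}\,\|\rchi_j\rchi_{j'}\|_{L^1}\le\|v\|_{L^\infty},
\]
using Young's inequality and Cauchy–Schwarz with normalized eigenfunctions. This holds for both index patterns $(k,k';k,k')$ and $(k,k;k',k')$. For the FGR part, $\Gamma^{FGR}_{k,k'}\le\Gamma^*$ by \lemref{lem:uniform-FGR-bound}.

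\emph{Lamb shift.} We write $\Lambda^{LS}$ as the real part of $\langle f,(|\nabla|\mp\Delta E - i0^+)^{-1}g\rangle$ with $f=w*(\rchi_k\rchi_{k'})$, $g=w*(\rchi_j\rchi_{j'})$. The $+$ branch is non-singular: its symbol is bounded by $|\xi|^{-1}$, so it is controlled by $\||\nabla|^{-1/2}w\|_{L^2}^2\,\|\rchi_k\rchi_{k'}\|_{L^1}\|\rchi_j\rchi_{j'}\|_{L^1}$, and this $L^2$ norm is dominated by the assumed $L^2_s$ norm. For the $-$ branch we invoke the limiting absorption principle, \lemref{lem:uniform-lap-for-R-eps-pm}: the resolvent $(|\nabla|-E\mp i\epsilon)^{-1}$ is bounded $L^2_s\to L^2_{-s}$ uniformly in $\epsilon$ and in $E$ in compact sets. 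The term is then bounded by $C\|\jpp{x}^s f\|_{L^2}\|\jpp{x}^s g\|_{L^2}$. To estimate $\|\jpp{x}^s\, w*(\rchi_k\rchi_{k'})\|_{L^2}$, we use $\jpp{x}^s\le C\jpp{x-y}^s\jpp{y}^s$ and Young's inequality, which gives the bound
\[
\|w\|_{L^2_s}\,\|\jpp{y}^s\rchi_k\rchi_{k'}\|_{L^1}\le\|w\|_{L^2_s}\,\|\jpp{y}^s\rchi_k\|_{L^2}.
\]
By the hypothesis $\jpp{x}^{2s}\le C(V+C_0+1)$, this is at most $C(1+E_k)^{1/2}$.

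\emph{Main obstacle.} Two things are delicate. First, the LAP constant must be uniform over all gaps $E_k-E_{k'}$, whose range is unbounded. Second, the weighted norm introduces factors of $(1+E_k)^{1/2}(1+E_{k'})^{1/2}$. For the $(1+E_{k'})$ factors, the plan is to exploit that in the sum over $k'$ they pair with $|F_{k'}|^2$, and $\sum_{k'}(1+E_{k'})|F_{k'}|^2\le C_E$. For the $(1+E_k)$ factor at fixed $k$, I would try first to remove it by high-frequency decay: for large $|\xi|$ the resolvent symbol behaves like $|\xi|^{-1}$ uniformly in the gap, and $w\in W^{3,1}$ gives additional decay of $\widehat w$, so the large gaps should be handled by an $L^2$ bound without weights. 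Only the low-frequency part near the resonance sphere needs the LAP. If this splitting fails, I would fall back on using the bounded FGR form of \lemref{lem:uniform-FGR-bound} for the imaginary part, and accept a constant $C_{E,s}$ depending on energy through $C_E$. This $k$-uniformity is the step I expect to require the most care.
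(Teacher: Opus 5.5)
There is a genuine gap, and it sits exactly at the step you flag as the main obstacle. You never actually obtain a bound that is uniform in $k$ for the Lamb-shift (exchange) part, and the remedies you sketch do not work. The missing idea is simple. In the exchange term $M_{k,k';k,k'}$ both sides of the resolvent pairing are the same function $f=g=w*(\chi_k\chi_{k'})$. You are free to put the weight $\langle y\rangle^s$ on $\chi_{k'}$, not on $\chi_k$, in \emph{both} factors:
\[
\|\langle\cdot\rangle^s\, w*(\chi_k\chi_{k'})\|_{L^2}\le C\|w\|_{L^2_s}\,\|\langle y\rangle^s\chi_k\chi_{k'}\|_{L^1}\le C\|w\|_{L^2_s}\,\|\chi_k\|_{L^2}\,\|\langle y\rangle^s\chi_{k'}\|_{L^2}\le C\|w\|_{L^2_s}\,(E_{k'}+C_0+1)^{1/2}.
\]
By the LAP, the exchange contribution is then at most $C_s\|w\|_{L^2_s}^2(E_{k'}+C_0+1)$, with no dependence on $k$. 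Summing against $|F^\eta_{k'}|^2$ and using the energy and mass bounds gives $\sum_{k'}(E_{k'}+C_0+1)|F^\eta_{k'}|^2\le C_E+C_0+1$. This is exactly how the paper closes the argument. So the energy bound is the essential input, not a fallback. Likewise, the entrywise reduction $\sup_{k,k'}|M_{k,k'}|\cdot\|F^\eta\|_{\ell^2}^2$ from your first paragraph is not what one proves for the Lamb-shift part; only the weighted row sum is bounded.

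The alternatives you propose for removing the $(1+E_k)$ factor would fail. It is false that for large $|\xi|$ the resolvent symbol behaves like $|\xi|^{-1}$ uniformly in the gap. The singularity of $(|\xi|-\lambda-i0)^{-1}$ sits on the sphere $|\xi|=\lambda=|E_k-E_{k'}|$, which moves to high frequency precisely when the gap is large. There the resolvent is not bounded on $L^2$, so no unweighted $L^2$ bound can control the principal value. That requires radial regularity of $\widehat{\chi_k\chi_{k'}}$ across the sphere, i.e.\ spatial weights. Decay of $\widehat w$ only makes the delta-function (FGR) part small, not the principal value. Similarly, the bound $\Gamma^{FGR}_{k,k'}\le\Gamma^*$ controls only the imaginary part and leaves the Lamb shift untouched.

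Your other concern is unnecessary: \lemref{lem:uniform-lap-for-R-eps-pm} is stated with the supremum over all $\lambda\in\mathbb R$, not just compact sets, so unbounded gaps cost nothing in the LAP constant. Your treatments of the Hartree terms, of the non-singular resolvent branch, and of the direct ($\Delta E=0$) terms via $\||\nabla|^{-1/2}w\|_{L^2}$ are fine.
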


\begin{proof}
    Consider the coefficients $M_{k,k'}$ defined in \eqref{def:effective-coefficient-Mkkdjjd}
    \begin{equation}
           M_{k,k'} = -i \Lambda_{k,k'}^{Har} + i M_{k,k'}^{L,F},
\end{equation}
where  $\Lambda_{k,k'}^{Har}$ is the Hartree contribution defined in \eqref{eq:definition-of-Lhar-coefficients} with the exchange contributions $k=j,\, k'=j'$ and direct contributions $k=k'$, $j=j'$, respectively, 
\begin{equation}
    \Lambda_{k,k'}^{Har} = \inp{\rchi_k \cc{\rchi_{k'}}, v * (\rchi_k \cc{\rchi_{k'}}) }
    +
    \inp{\rchi_k \cc{\rchi_{k}}, v * (\rchi_{k'} \cc{\rchi_{k'}}) }
    -
    \delta_{k,k'}\inp{\rchi_k \cc{\rchi_{k}}, v * (\rchi_{k} \cc{\rchi_{k}}) },
\end{equation}
and 
    \begin{equation}
        M_{k,k'}^{L,F} :=   \Lambda_{k,k'}^{LS}  +i  \Gamma_{k,k'}^{-}.
    \end{equation}
The terms $\Lambda_{k,k'}^{LS},\,\Gamma_{k,k'}^{-} \in \R$ are the Lamb shift and Fermi Golden Rule contributions, respectively, given by \eqref{eq:M-form-1-1}.
\begin{eqnarray}\label{eq:Lamb-iGam-id-1-0}
    \Lambda_{k,k'}^{LS}  + i  \Gamma_{k,k'}^{-} 
    &=& \lim_{\veps \to 0}
    \inp{w * (\rchi_k \cc{\rchi_{k'}}) , \cRep^{-} \, w * (\cc{\rchi_{k'}} \rchi_k) }
    \nonumber\\
    &&\qquad+ \lim_{\veps \to 0} \inp{w * (\rchi_k \cc{\rchi_{k'}}) , \cRep^{+} \, w * (\cc{\rchi_{k'}} \rchi_k) }
    \nonumber\\
    &&\qquad\qquad
    +2\inp{w * (\rchi_k \cc{\rchi_{k}}) , |\nabla|^{-1}\, w * (\cc{\rchi_{k'}} \rchi_{k'}) }
    \nonumber\\
    &&\qquad\qquad\qquad
    -2\delta_{k,k'}\inp{w * (\rchi_k \cc{\rchi_{k}}) , |\nabla|^{-1}\, w * (\cc{\rchi_{k}} \rchi_{k}) }
\end{eqnarray}
where $\cRep^{\pm}$ are the resolvent operators defined by
\begin{equation}
    \cRep^{\pm} := \prc{\frac{1}{\omg(-i \nabla) \pm (E_{k} - E_{k'}) \pm i \veps}}.
\end{equation}
The term on the second line of \eqref{eq:Lamb-iGam-id-1-0} accounts for direct contributions and only contributes to $\Lambda_{k,k'}^{LS}$ but not to $\Gamma_{k,k'}^{-}$, while the term on the first line accounts for exchange contributions and contributes to both.
Since the contribution of the Hartree term is uniformly bounded, it gives a constant linear shift to the calculations. Thus, we focus on the contributions of the Lamb shift and the Fermi Golden Rule, which can be estimated together because they have the same structure. 
We write
\begin{eqnarray}
     M_{k,k'}^{L,F} &=:& \widetilde M_{k,k'}^{L,F} + 
    2\inp{w * (\rchi_k \cc{\rchi_{k}}) , |\nabla|^{-1}\, w * (\cc{\rchi_{k'}} \rchi_{k'}) }
    \nonumber\\
    &&\qquad\qquad\qquad
    -2\delta_{k,k'}\inp{w * (\rchi_k \cc{\rchi_{k}}) , |\nabla|^{-1}\, w * (\cc{\rchi_{k}} \rchi_{k}) }
\end{eqnarray}
Then we estimate the contributions from both resolvents $\cR_\veps^{\pm}$ in the same way. We thereby obtain
\begin{eqnarray}
    \abs{\widetilde M_{k,k'}^{L,F}} 
    \leq C_{s} \norm{\jpp{\cdot}^s w * (\rchi_k \cc{\rchi_{k'}}) }_{L^2}^2
\end{eqnarray}
from
\begin{align}
           \big|\langle w * &(\rchi_k(x) \cc{\rchi_{k'}(x)}), \cR_\veps^{\pm} w * (\cc{\rchi_{k'}(x) } \rchi_k(x)) \rangle
           \big|
           \\
           =& \abs{ \inp{\jpp{x}^s w * (\rchi_k(x) \cc{\rchi_{k'}(x)}), \jpp{x}^{-s} \cR_\veps^{\pm} w * (\cc{\rchi_{k'}(x) } \rchi_k(x)) }}\\
        \leq& \norm{\jpp{\cdot}^s w * (\rchi_k \cc{\rchi_{k'}})}_{L^2} \norm{\jpp{\cdot}^{-s} \cR_\veps^{\pm} w * (\cc{\rchi_{k'} } \rchi_k) }_{L^2}\\
        \leq& C_{s} \norm{\jpp{\cdot}^s w * (\rchi_k \cc{\rchi_{k'}})}_{L^2}^2,
\end{align}
where we used the limiting absorption principle proved in \lemref{lem:uniform-lap-for-R-eps-pm}
to obtain the last inequality. Define the linear operator 
\begin{equation}
    A \cc{\rchi_{k'}}(x) := \jpp{x}^s \prc{w * \rchi_k \cc{\rchi_{k'}}}(x).
\end{equation}
Then, we have 
\begin{align}
        \sum_{k'\geq0} \abs{ \widetilde M_{k,k'}^{L,F}}|F_{k'}^\eta(T)|^2 & \leq C_{s} 
        \sum_{k'\geq0} \norm{\jpp{\cdot}^s w * (\rchi_k \cc{\rchi_{k'}})}_{L^2}^2 |F_{k'}^\eta(T)|^2
        \nonumber\\
        &\leq C_s \|w\|_{L^2_s}^2
        \sum_{k'\geq0} \|\rchi_k \cc{\rchi_{k'}}\|_{L^1_s}^2
        |F_{k'}^\eta(T)|^2
        \nonumber\\
        &\leq C_s \|w\|_{L^2_s}^2
        \|\rchi_k\|_{L^2}^2 \sum_{k'\geq0} \|\rchi_{k'}\|_{L^1_s}^2|F_{k'}^\eta(T)|^2
        \nonumber\\
        &\leq C_s \|w\|_{L^2_s}^2 \sum_{k'\geq0} 
        (1+E_{k'})|F_{k'}^\eta(T)|^2
        \nonumber\\
        &\leq C_{E,s}\|w\|_{L^2_s}^2 
\end{align}
where the constant only depends on the value of the conserved energy. Here we used Peetre's inequality $ \jpp{y}^{2s} \leq C_s \jpp{y-y'}^{2s} \jpp{y'}^{2s}$, and coercivity of the trapping potential $V$, whereby
\begin{eqnarray}
    \int dy |\chi_k(y)|^2\langle y\rangle^{2s}
    &\leq&
    \int dy |\chi_k(y)|^2 (V(y)+C_0) 
    \nonumber\\
    &\leq&
    \Big\langle \chi_k, (-\Delta+V+C_0) \chi_k\Big\rangle 
    \nonumber\\
    &\leq&C_0(E_k+1) \,,
\end{eqnarray}
see \lemref{lem:moment-bound-for-phi}.
This implies that 
\begin{equation}
    \label{eq:bound-of-rows-of-Mkkd-1-2}
    \sup_{k}\sum_{k'\geq0}  \abs{\widetilde M_{k,k'}^{L,F}} |F_{k'}^\eta|^2\leq C_E.
\end{equation}
holds.

\smallskip

Furthermore, using similar arguments, we have 
\begin{eqnarray}
    \abs{ 
    \inp{w * (\rchi_k \cc{\rchi_{k}}) , |\nabla|^{-1}\, w * (\cc{\rchi_{k'}} \rchi_{k'}) } }
    &\leq& \|  w \|_{\dot H^{-\frac12}}^2 
    \| \cc{\rchi_{k}} \rchi_{k} \|_{L^1} \|   \|\cc{\rchi_{k'}} \rchi_{k'} \|_{L^1}
    \nonumber\\
    &=&  \|  w \|_{\dot H^{-\frac12}}^2 
    \| \rchi_{k} \|_{L^2}^2 \|\rchi_{k'} \|_{L^2}^2 
    \nonumber\\
    &\leq& \|  w \|_{\dot H^{-\frac12}}^2  \,.
\end{eqnarray}
This implies 
\begin{eqnarray}
    \sup_k\sum_{k'\geq0}
    \big|\inp{w * (\rchi_k \cc{\rchi_{k}}) , |\nabla|^{-1}\, w * (\cc{\rchi_{k'}} \rchi_{k'}) }\big| \;  |F_{k'}^\eta|^2
    &\leq&C\|  w \|_{\dot H^{-\frac12}}^2 \sum_{k'\geq0}  |F_{k'}^\eta|^2
    \nonumber\\
    &\leq&C \; \|  w \|_{\dot H^{-\frac12}}^2 \,,
\end{eqnarray}
using that $\| \chi_{k'}\|_{L^2}^2=1$, and $\sum_{k'} |F_k^\eta|^2=1$.

For the Hartree terms, we obtain
\begin{eqnarray}
    \sup_k \sum_{k'\geq0}|\Lambda_{k,k'}^{Har}|\;|F_{k'}^\eta|^2 
    &=& 
    \sup_k \sum_{k'\geq0}\|v\|_{L^\infty} 
    \|\chi_k\|_{L^2}^2\|\chi_{k'}\|_{L^2}^2 |F_{k'}^\eta|^2
    \nonumber\\
    &\leq &\|v\|_{L^\infty} 
\end{eqnarray}
from $\|\chi_k\|_{L^2}=1$ and $\sum_{k'\geq0}  |F_{k'}^\eta|^2=1$.

This proves the claim. 
\end{proof}

\section{Convergence analysis: limit to the effective cascade}
\label{convergence-analysis:limit-to-effective-cascade}
Consider the two systems in $\ell^2$,

\begin{empheq}[left=\empheqlbrace]{align}
    F(T) &\; {\rm solves} \; \pt_T F(T) = M[F] \label{eq:equation-of-F-1-0}\\
F^\eta(T)&=\Big(\langle\chi_k,e^{iT\cH_0/\eta^2}\phi_{T/\eta^2}\rangle\Big)_{k\geq0}\;{\rm\;where\;}\phi_t\;{\rm solves\;}\eqref{eq:boson-subsystem}, \eqref{eq:field-subsystem} 
    \label{eq:equation-of-F-eta}
    \\
    F(0)&=F^\eta(0)
\end{empheq}
where $F= \prc{F_k}_{k \geq 0}$, $F^\eta = \prc{F_k^\eta}_{k \geq 0}$, and  $M[F]$ is the effective cascade operator
\begin{equation}
M[F]_k := \sum_{k'} M_{k,k'} \abs{F_{k'}}^2 F_k,
\end{equation}
with $M_{k,k'}$ denoting the effective coefficient defined in \eqref{def:effective-coefficient-Mkkdjjd}, $ F_k^\eta(T) = A_k(T/\eta^2)$ is the rescaled amplitude function defined in \eqref{def:rescaled-time}.
The respective initial data are equal; that is, 
\begin{equation}
F_k(0) = F_k^\eta(0), \qquad \forall k \geq 0.
\end{equation}
For any initial data $\phi_0 \in L^2(\R^3)$ for the particle subsystem \eqref{eq:boson-subsystem}, \thmref{m:dynamical_bec} shows that the $\ell^2$-mass is conserved for both the effective cascade equation
\begin{equation}
\norm{F(T)}_{\ell^2}^2 = \norm{\phi_0}_{L^2}^2
\end{equation}
and the limit cascade equation
\begin{equation}
\norm{F^\eta(T)}_{\ell^2}^2 = \norm{\phi_0}_{L^2}^2.
\end{equation}
Moreover,
\begin{equation}
 \norm{F^\eta(T)}_{\ell^\infty} \leq \norm{F^\eta}_{\ell^2} =\norm{\phi_0}_{L^2}
\end{equation}
 is uniformly bounded for all $T \in \R$ and $\eta > 0$.

\begin{theorem}
\label{thm:convergence-eta}
The solutions to the systems \eqref{eq:equation-of-F-1-0} and \eqref{eq:equation-of-F-eta} with equal initial data, $F(0)=F^\eta(0)\in\ell^2$, satisfy
\begin{equation}
    \lim_{\eta\rightarrow0}\sup_{T\in[0,T_\eta]}\norm{F^\eta(T) -  F(T) }_{\ell^2} =0\,,
\end{equation} 
for $T_\eta =C_\mu\log\langle 1/\eta\rangle$ where the constant is determined in \eqref{eq:mu-def-1-0} and \eqref{eq:Teta-def-1-0-0}. In particular, this implies 
\begin{eqnarray}
    \lim_{\eta\rightarrow0}|\braket{\chi_0,\phi_{T_\eta/\eta^2}}| & =& 1\;
    \nonumber\\
    \lim_{\eta\rightarrow0}
    |\braket{\chi_k,\phi_{T_\eta/\eta^2}}| & =& 0\;
    \rm{for \; all \;}k\geq1\,.
\end{eqnarray}
Consequently, by $L^2$ mass conservation, the solution $\phi_{T_\eta/\eta^2}$ to the original system \eqref{eq:boson-subsystem}, \eqref{eq:field-subsystem} converges to the ground state ray as $\eta\rightarrow0$ and $T_\eta\rightarrow\infty$.
\end{theorem}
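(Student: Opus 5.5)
The plan is a Gronwall argument on the difference $D(T):=F^\eta(T)-F(T)$ in $\ell^2$. We combine it with oscillatory averaging of the non-resonant terms, and the logarithmic time horizon $T_\eta$ absorbs the exponential Gronwall factor.

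\textbf{Step 1: Duhamel form of the difference.} Start from \eqref{eq:ODE-for-Fk(T)} and subtract \eqref{eq:FkT-ODE-def-1-0-0}. Since $F(0)=F^\eta(0)$, this gives
\begin{align*}
D_k(T) &= \int_0^T \big(M[F^\eta]_k - M[F]_k\big)(S)\,dS + \mathcal{N}^\eta_k(T) + \mathcal{R}^\eta_k(T),
\end{align*}
with the following pieces:
\begin{itemize}
\item $\mathcal{N}^\eta_k(T)$ collects the non-resonant sum over $\Delta E_{k,k';j,j'}\neq0$, carrying the phase $e^{iS\Delta E/\eta^2}$.
\item $\mathcal{R}^\eta_k(T)$ collects the three remainder integrals involving $\Remkaa$, $\Remkab$ and $\eta^{-2}\Remka$.
\end{itemize}
The resonant part with $\Delta E=0$ reduces to $M[F^\eta]$ exactly, by the rational independence \eqref{eq:resonance-condition-for-energy-differences}.

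\textbf{Step 2: Lipschitz bound for the cascade operator.} We estimate the first term in $\ell^2$ by $L\int_0^T\|D(S)\|_{\ell^2}dS$. For this we use \lemref{lem:bound-Fk2-Mkkd}, uniform boundedness of the $\Lambda$ and $\Gamma$ entries (\lemref{lem:uniform-FGR-bound} and the LAP), conservation $\|F\|_{\ell^2}=\|F^\eta\|_{\ell^2}=1$, and the uniform energy bounds. The expansion $|F^\eta_{k'}|^2F^\eta_k-|F_{k'}|^2F_k$ is trilinear, so $L$ depends only on $\sup_{k,k'}|M_{k,k'}|$ and the energy constant.

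\textbf{Step 3: Non-resonant terms.} These need a second integration by parts in $S$. Writing $e^{iS\Delta E/\eta^2}=\frac{\eta^2}{i\Delta E}\partial_S e^{iS\Delta E/\eta^2}$ gains a factor $\eta^2/|\Delta E|$, at the cost of a boundary term and an $S$-derivative of the cubic amplitude products, which is $O(1)$ by the equation. The difficulty is the infinite sum over $(k',j,j')$ with small denominators. I would truncate modes at $N$, using the energy bound $\sum_k(1+E_k)|F^\eta_k|^2\le C$ together with the growth of $E_k$ (Weyl law for the coercive $V$) to make the tail $O(E_N^{-1/2})$. Inside the truncation, the Diophantine bound $\delta_N\ge c_D(1+N)^{-\kappa}$ gives a contribution of order $\eta^2(1+N)^{\kappa+c}(1+T)$. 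Choosing $N=\eta^{-\mu}$ with $\mu$ small makes both pieces $O(\eta^{\gamma})$ for some $\gamma>0$; this is where the constant $\mu$ of \eqref{eq:mu-def-1-0} enters.

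\textbf{Step 4: Remainders.} The boundary and derivative terms $\mathcal{Rem}^{(\pm),\mathrm{bd}}$ and $\mathcal{Rem}^{(\pm),\mathrm{der}}$ are controlled as follows:
\begin{itemize}
\item For the boundary pieces, use weighted dispersive decay of $e^{\mp it|\nabla|}$ composed with the LAP resolvent between $L^2_s$ and $L^2_{-s}$. Integrated against $dS$ at microscopic time $S/\eta^2$, these vanish as $\eta\to0$.
\item For the derivative pieces, use $\partial_sA=O(\eta^2)$ together with local decay integrable in $t-s$.
\item For the linear term $\eta^{-2}\Remka$, use the $W^{3,1}$ dispersive decay $|t|^{-1}$-type of $e^{-it|\nabla|}u_0$. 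After rescaling, this gives $\eta^{-1}\int_0^{T/\eta^2}\langle t\rangle^{-1-}dt\cdot\eta^2$, plus an oscillation argument for the non-integrable part, hence $o(1)$.
\end{itemize}
Here I take the bounds of \secref{lamb-shift-and-resonant-bounds} and \secref{controlling-the-remainders} as given, in the form $\sup_{T\le T_\eta}\|\mathcal{R}^\eta(T)\|_{\ell^2}\le C\eta^{\gamma}(1+T_\eta)^2$.

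\textbf{Step 5: Gronwall and the BEC consequence.} Gronwall yields
\[
\sup_{T\le T_\eta}\|D(T)\|_{\ell^2}\le C\eta^{\gamma}(1+T_\eta)^2e^{LT_\eta}.
\]
With $T_\eta=C_\mu\log\langle1/\eta\rangle$ and $C_\mu L<\gamma$, the right side tends to $0$; this fixes \eqref{eq:Teta-def-1-0-0}. For the BEC statement:
\begin{itemize}
\item $|\langle\chi_k,\phi_{T_\eta/\eta^2}\rangle|=|F^\eta_k(T_\eta)|$.
\item $\big||F^\eta_0(T_\eta)|-|F_0(T_\eta)|\big|\le\|D(T_\eta)\|_{\ell^2}\to0$, and $|F_0(T_\eta)|\to1$ by \thmref{thm:complete-BEC}, since $T_\eta\to\infty$.
\end{itemize}
The excited modes then vanish by mass conservation.

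\textbf{Main obstacle.} I expect Step 3 to be the hardest: balancing the Diophantine small divisors against the truncation tail so that the result is uniform over a growing time window. This is what forces the merely logarithmic $T_\eta$.
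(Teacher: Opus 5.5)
Your architecture matches the paper's. You take Duhamel for the difference, a Lipschitz bound for $M[\cdot]$, smallness only of the \emph{time-integrated} forcing (essential, since the non-resonant and $u_0$ terms are small only after integration), oscillatory integration by parts with Diophantine small divisors, and Gronwall on a logarithmic window. Your integral-form Gronwall is a legitimate, slightly cleaner substitute for the paper's evolution family $U_\eta(T,S)$ with integration by parts against $\mathcal{B}_N^\eta$.

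\textbf{The gap: high modes.} In Step 4 you take the remainder bounds ``as given'' in the $N$-independent form $C\eta^\gamma(1+T_\eta)^2$ for the full, untruncated remainders. However, \lemref{lm:Rem-der-bd-1-0} and \lemref{lm:Rem-bd-bd-1-0} only treat sums over $0\le k',j,j'\le N$. Their constants, such as $(1+\delta_N^{-2})N$, come from \lemref{lem:one-sided-half-wave-resolvent} with $\lambda_0=\delta_N$ and from Cauchy--Schwarz over finitely many $(j,j')$. These constants blow up as $N\to\infty$; since $E_N\lesssim N^{2/3}$, the gaps $|E_j-E_{j'}|$ are not bounded below. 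Your Step 3 tail estimate $O(E_N^{-1/2})$ ``from the energy'' fails for the same reason. After the $s$-integration by parts, the Lamb-shift/FGR coefficients depend on $(j,j')$ through the resolvent at $E_j-E_{j'}$, so the $(j,j')$-sum can no longer be resummed into $|\phi|^2$. A termwise bound, with LAP entries of size $\sqrt{1+E_{\min\{j,j'\}}}$, would need $\ell^1$-type control of $F$, which finite energy does not provide.

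\textbf{The missing idea: truncate the PDE before that integration by parts}, as in \lemref{lem:low-en-phi-1-0}. Replace $\phi_t$ by $\phi_t^<=P_N\phi_t$ both in the boson equation and in the source of the field, with $u_t^<$ generated by $|\phi^<|^2$. The resulting error is an $L^2$ object, controlled by $\|e^{-it|\nabla|}w\|_{L^\infty}\lesssim\langle t\rangle^{-1}\|w\|_{W^{3,1}}$ and $\|\phi_t^>\|_{L^2}\lesssim\epsilon$. This gives $\int_0^T\|\mathcal{Rem}_{\phi}^<(S)\|_{\ell^2}\,dS\lesssim\langle T\rangle^2\epsilon\log\langle1/\eta^2\rangle$. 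Note the logarithm, which is absent from your tail estimate; it is why $\epsilon$ must be a power of $\eta$. Only then are all mode sums finite and the $N$-dependent bounds apply. With $E_N\lesssim N^{2/3}$ and $\delta_N^{-1}\lesssim N^\kappa$, the integrated forcing is $\lesssim\langle T_\eta\rangle^2(\epsilon+N^{3+2\kappa}\eta^{2\gamma})\log\langle1/\eta^2\rangle$. Taking $\epsilon=\eta^K$, $N\sim\epsilon^{-2/\theta}$ and $\mu<\min\{K,\gamma\}$ beats $e^{C_LT_\eta}\sim\eta^{-\mu}$.

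\textbf{Smaller points.}
\begin{enumerate}
\item The Lipschitz constant in Step 2 must come from the energy-weighted row bound of \lemref{lem:bound-Fk2-Mkkd} together with $|M_{k,k'}|=|M_{k',k}|$, as in \lemref{lem:L-uniform-bound-energy}. The LAP gives exchange Lamb-shift entries only of size $1+E_{\min\{k,k'\}}$, not a finite $\sup_{k,k'}|M_{k,k'}|$.
\item $\partial_S F^\eta$ is not $O(1)$ but $\lesssim\eta^{-1}\langle S/\eta^2\rangle^{-1}+\log\langle S/\eta^2\rangle$. This is harmless, since only its time integral, of size $O(\log\langle1/\eta\rangle)$, enters.
\item The $u_0$ term needs no oscillation argument: $\eta\int_0^{T/\eta^2}\langle t\rangle^{-1}\,dt$ already vanishes as $\eta\to0$.
\end{enumerate}
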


\begin{proof}
Our strategy consists of the following steps.
We first apply a spectral projection to separate low-energy from high-energy modes. The high-energy tail can be controlled uniformly in time due to the boundedness of the conserved energy; given any $\epsilon >0$, there exists $N=N(\epsilon)$ such that the tail accounting for modes $k>N$ is bounded by $O(\epsilon)$ in an appropriate norm.
The low-energy modes are then shown to be dominated by a finite-dimensional coupled system of ODEs of $N+1$ modes, while the remainder consists of terms coupling low to high modes, and low-energy terms vanishing as $\eta\rightarrow0$. Control of the low-energy terms vanishing as $\eta\rightarrow0$ requires weighted $L^2_s$ decay of the half-wave propagator combined with a limiting absorption principle, among other ingredients. 

Given an arbitrary $\epsilon>0$, let $N=\ceil{ C_E\epsilon^{-2/\theta}}$ so that \lemref{lem:low-en-phi-1-0} holds.
Let $I\subset\R_+$ be a compact interval that contains the set 
$$
    \Big\{|E_j-E_{j'}|\;\Big|\;\,0\leq j,j'\leq N\;,\;j\neq j'\;\Big\}\;\subset I\,.
$$ 
Furthermore, assume $s>\frac12$, and let $0<\gamma<\min\{s-\frac12,1\}$.

Denote by $P_N$ the spectral projection onto the span of the eigenspaces associated with the first $N+1$ eigenvalues $E_0<E_1<\cdots<E_N$ of the free boson Hamiltonian $\cH_0=-\Delta+V$. Let $\phi_t^<:=P_N \phi_t=\sum_{0\leq k\leq N}e^{iE_k T/\eta^2}F_k^\eta(T)\rchi_k$ and $\phi_t^>:=(1-P_N)\phi_t$ with $t=\frac{T}{\eta^2}$. 
From \lemref{lm-spectral-tight-1-0} and \lemref{lem:low-en-phi-1-0} follows that the high energy tail is bounded by
\begin{eqnarray}
    \|\phi^>_t\|_{L^2}^2 = \sum_{k>N}|F^\eta(T)|^2 < \eps^2 
\end{eqnarray}
uniformly in $T$ and $\eta$.

Then, we obtain the low-energy contributions
\begin{eqnarray}
    \label{eq:def-low-en-phi-1-0}
    i\partial_t\phi_t^< &=& \cH_0\phi_t^<+\eta^2 v*|\phi_t^< |^2 \phi_t^<
    \nonumber\\
    &&+\eta P_N \Big( w*(u_t^<+\overline{u_t^<})\phi_t^< \Big)
    + \mathcal{Rem_{\phi}^<(t)}
\end{eqnarray}
where 
\begin{eqnarray}
    u_t^< &:= e^{-it\omega(-i\nabla)}u_0 
    - i\eta \int_0^t e^{-i(t-\tau)|\nabla|} (w*|\phi_\tau^<|^2) \, d\tau.
    \label{eq:duhamel-field-ll-thm}
\end{eqnarray}
It follows from \lemref{lem:low-en-phi-1-0} that $\mathcal{Rem_{\phi}^<(t)}$ can be controlled by a bound of order $O(\langle T\rangle^2\epsilon\log\langle1/\eta^2\rangle)$.
This implies that $\phi^<_t$ satisfies the self-consistent nonlinear system obtained from setting $\mathcal{Rem_{\phi}^<(t)}$ to zero, up to errors that converge to zero in a diagonal limit $\epsilon=\epsilon(\eta)\rightarrow0$, $T\in[0,T_\eta]$ with $T_\eta\rightarrow\infty$ as $\eta\rightarrow0$, as we demonstrate next.

Accordingly, let $P_N F^\eta=(F_0^\eta,\dots,F_N^\eta,0,0,\dots)$ and $P_N F=(F_0,\dots,F_N,0,0,\dots)$, and $ P_N^\perp:=(\1-P_N)$. We control the difference between $F^\eta$ and $F$ in the system \eqref{eq:equation-of-F-1-0} by
\begin{eqnarray}\label{eq:Feta-F-diff-1-0}
    \| F^\eta(T)-F(T) \|_{\ell^2}
    &\leq& \| P_N (F^\eta(T)-P_N F(T)) \|_{\ell^2} 
    \nonumber\\
    &&\qquad\qquad
    + \| P_N^\perp F^\eta(T) \|_{\ell^2}
    + \| P_N^\perp F(T) \|_{\ell^2}
    \nonumber\\
    &\leq&
    \| P_N (F^\eta(T)-P_N F(T)) \|_{\ell^2} 
    + 2\;\epsilon  
\end{eqnarray} 
using \lemref{lm-spectral-tight-1-0}.

To control the difference $P_N (F^\eta(T)-P_N F(T))$, we introduce the truncation remainders
\begin{eqnarray}
    \mathcal{Q}_N(T) &:=& P_N(M[F(T)]-M[P_N F(T)])
    \nonumber\\
    \mathcal{Q}_N^\eta(T) &:=& P_N(M[F^\eta(T)]-M[P_N F^\eta(T)])
    \nonumber\\
    &&\qquad
    + P_N \mathcal{Rem}^\eta[F^\eta(T)] + P_N \mathcal{Rem_{\phi}^<}(T)
\end{eqnarray}
where we recall  the effective cascade operator
\begin{equation}
    M[F]_k = \sum_{k'} M_{k,k'} \abs{F_{k'}}^2 F_k  \,,
\end{equation}
and where $P_N \mathcal{Rem}^\eta$ is given in \eqref{eq:Remeta-def-1-0-0} and $\mathcal{Rem_{\phi}^<}$ in \eqref{eq:Rem-l2-def-1-1-0}.
Then, the low energy modes satisfy the following projected equations,
\begin{align}
    \pt_T P_N F &= P_N M[P_N F] +\mathcal{Q}_N(T) 
    \label{eq:equation-of-F}\\
    \pt_T P_N F^\eta &= P_N M[P_N F^\eta] + \mathcal{Q}_N^\eta(T) .  \label{eq:equation-of-F-eta-trunc}
\end{align}
Let 
\begin{eqnarray}
    \ch_N^\eta(T):=P_N F^\eta(T)-P_N F(T) \,.
\end{eqnarray}
Then, $\ch_N^\eta(0)=0$ since the initial data agree, and we find
\begin{eqnarray}
    \partial_T \ch_N^\eta = P_N(M[P_N F^\eta(T] - M[P_N F])
    + \mathcal{Q}_N^\eta  - \mathcal{Q}_N 
\end{eqnarray}
We define the linearized operator
\begin{equation}
    \label{def:the-linearized-operator}
    \cL[P_N F,P_N F^\eta] \, \ch_n^\eta := M[P_N F^\eta] - M[P_N F] \,.
\end{equation}
By \lemref{lem:L-uniform-bound-energy}, we have  strong continuity and uniform boundedness of the generator $\cL[P_N F,P_N F^\eta](T)$ in $\mathcal{B}(\ell^2(\C))$. Therefore, the operator family generates a unique, strongly continuous evolution system $\{U_\eta(T, S)\}_{0 \leq s \leq T \leq T_{max}}$. Because the remainder term is continuous in time, the solution to the Cauchy problem is given by the Duhamel formula
\begin{equation}
    \ch_N^\eta(T) = U_\eta(T, 0) \ch_N^\eta(0) + \int_0^T U_\eta(T, S)
    ( \mathcal{Q}_N^\eta  - \mathcal{Q}_N )(S) \, dS. 
\end{equation}
where $\ch^\eta(0) = 0$ by hypothesis. Furthermore, the standard growth estimate for the evolution family yields the uniform bound
\begin{equation}
     \|U_\eta(T, S)\|_{\mathcal{B}(\ell^2} \leq \exp\left( \int_S^T \|\cL[P_N F,P_N F^\eta](\tau)\|_{\mathcal{B}(\ell^2)} \, d\tau \right) \leq e^{C_L(T-S)}. 
\end{equation}
Let 
\begin{eqnarray}
    \cB_N^\eta(T):= \int_0^T dS\; (\mathcal{Q}_N^\eta  - \mathcal{Q}_N) \,,
\end{eqnarray}
so that
\begin{eqnarray}
    \lefteqn{
    \int_0^T dS\; U_\eta(T, S)
    ( \mathcal{Q}_N^\eta  - \mathcal{Q}_N )(S)  
    }
    \nonumber\\
    &=&
    \int_0^T dS \; U_\eta(T, S) \; \partial_S \cB_N^\eta(S)
    \nonumber\\
    &=&U_\eta(T,T)\cB_N^\eta(T)-U_\eta(T, 0) \;  \cB_N^\eta(0)
    - \int_0^T dS \; (\partial_S U_\eta(T, S)) \;   
    \cB_N^\eta(S)
    \nonumber\\
    &=&\cB_N^\eta(T) 
    - \int_0^T dS \; (\partial_S U_\eta(T, S)) \;   
    \cB_N^\eta(S)\,.
\end{eqnarray}
because $U_\eta(T,T)=\1$ and $\cB_N^\eta(0)=0$.
From \lemref{lem:L-uniform-bound-energy}, we infer that 
\begin{eqnarray}
    \|\partial_S U_\eta(T,S) \|_{\cB(\ell^2)}
    &\leq& \|U_\eta(T,S) \|_{\cB(\ell^2)}
    \|\cL[P_N F,P_N F^\eta](S)\|_{\mathcal{B}(\ell^2)}
    \nonumber\\
    &\leq& C_L e^{C_L T} \,.
\end{eqnarray}
This implies 
\begin{align*}
    \|\ch_N^\eta(T)\|_{\ell^2} &\leq 
    \Big\|\int_0^T U_\eta(T, S)  
    ( \mathcal{Q}_N^\eta  - \mathcal{Q}_N )(S)\, dS
    \Big\|_{\ell^2} 
    \nonumber\\
    &\leq 
    (1+C_L T e^{C_L T})\sup_{S\in[0,T]}\|\cB_N^\eta(S)\|_{\ell^2} 
\end{align*}  
From the bounds on the truncation remainder terms proven in \lemref{lem:low-en-phi-1-0} and \lemref{lm:QN-QNeta-bds-1-0},  it follows that for $T\in[0,T_\eta]$,
\begin{eqnarray}
    \sup_{S\in[0,T_\eta]}\|\cB_N^\eta\|_{\ell^2} & \leq &
    C_{\cE}\langle T_\eta \rangle^2 \;\epsilon\;\log\braket{1/\eta^2} \, + \,
    C_{s,\gamma,\cE}\langle T_\eta\rangle^2 \;(1+\delta_N^{-2})\; N\; \eta^{2\gamma}\log \langle 1/\eta^2\rangle
    \nonumber\\
    &&\qquad +C_{s,v,w}\braket{T_\eta}^2 
    \frac{(1+E_N)}{\delta_N}  N^{\frac32}
    \eta^2   \log \langle 1/\eta^2 \rangle\,.
\end{eqnarray}
Here we use the upper bound in \eqref{eq:EN-upplowbd-1-0} to conclude that
\begin{eqnarray}
    (1+E_N) \leq  C N^{\frac23} 
\end{eqnarray}
and we recall that
where we recalled that
\begin{eqnarray}
    \delta_N^{-1}\leq C N^{\kappa}
\end{eqnarray} 
by the Diophantine condition in \hypref{asm:spectral-genericity}.
We thus arrive at 
\begin{eqnarray}
    \sup_{S\in[0,T_\eta]}\|\cB_N^\eta\|_{\ell^2} & \leq &
    C_{\cE}\langle T_\eta \rangle^2 \;\epsilon \; \log \langle 1/\eta^2 \rangle\, 
    \nonumber\\
    &&\qquad + \,C_{s,\gamma,\cE,s,v,w}\braket{T_\eta}^2 
    N^{3+2\kappa}  
    \eta^{2\gamma}   \log \langle 1/\eta^2 \rangle\,
\end{eqnarray}
by consolidating terms in the upper bound.

Next, we observe that for 
\begin{eqnarray}\label{eq:Teta-def-1-0-0}
    T_\eta=C_\mu \log\langle1/\eta\rangle
    \;{\rm with} \;C_\mu=\frac{\mu}{C_L} 
\end{eqnarray}
we find
\begin{eqnarray}
    C_L T_\eta e^{C_L T_\eta} \leq C \eta^{-\mu} 
    \log \langle1/\eta\rangle
\end{eqnarray}
Therefore, 
\begin{eqnarray}
    \|\ch_N^\eta(T)\|_{\ell^2}
    &\leq&
    C \eta^{-\mu} 
    \log \langle1/\eta\rangle
    \sup_{S\in[0,T_\eta]}\|\cB_N^\eta(S)\|_{\ell^2} 
    \nonumber\\
    &\leq&
    C \eta^{-\mu} 
    (\log \langle1/\eta\rangle)^4
    \Big(\epsilon+N^{3+2\kappa}  
    \eta^{2\gamma}  \Big)
\end{eqnarray}
We next use that $N = \ceil{ C_E\epsilon^{-2/\theta}}$ with $\theta>0$ defined in \eqref{eq:theta-def-1-0}, so that   
\begin{eqnarray}
    \|\ch_N^\eta(T)\|_{\ell^2}
    &\leq&
    C    
    ( \log\langle1/\eta\rangle)^4
    \eta^{-\mu} \Big(\epsilon 
    +\epsilon^{-2(3+2\kappa)/\theta}\eta^{2\gamma}
     \Big) \,.
\end{eqnarray}
Next, we chose
\begin{eqnarray}
    \epsilon(\eta) =\eta^K \,
\end{eqnarray}
with
\begin{eqnarray}
    K:= \min\Big\{\frac{\gamma}{2}\;,\;\frac{\gamma \; \theta}{2(3+2\kappa)}\Big\}\;>0 \,,
\end{eqnarray}
so that 
\begin{eqnarray}
    \|\ch_N^\eta(T)\|_{\ell^2}
    &\leq&
    C  
    (\log\langle1/\eta\rangle)^4
    \eta^{-\mu} (\eta^K+\eta^\gamma) \,,
\end{eqnarray}
and we choose
\begin{eqnarray}\label{eq:mu-def-1-0}
    \mu:=\frac12\min\{K,\gamma\} \,.
\end{eqnarray}
Consequently, taking the limit as $\eta \to 0$ gives
\begin{equation} 
    \lim_{\eta \to 0} \sup_{T\in[0,T_\eta]}
    \|\ch_N^\eta(T)\|_{\ell^2} =0\,.
\end{equation}
From \eqref{eq:Feta-F-diff-1-0} and \lemref{lem:low-en-phi-1-0}, we thus find that
\begin{eqnarray}
    \lefteqn{
    \lim_{\eta\rightarrow0}
    \sup_{T\in[0,T_\eta]}\| F^\eta(T)-F(T) \|_{\ell^2}
    }
    \nonumber\\
    &\leq& 
    \lim_{\eta\rightarrow0}\Big(
    \sup_{T\in[0,T_\eta]}
    \Big(\| P_N (F^\eta(T)-P_N F(T)) \|_{\ell^2} 
    \Big)
    + C\;\epsilon(\eta)  \Big)
    \nonumber\\
    &\leq& 
    \lim_{\eta\rightarrow0}\Big(
    \sup_{T\in[0,T_\eta]}\|\ch_N^\eta(T)\|_{\ell^2}
    + C\;\epsilon(\eta) \Big)
    \nonumber\\
    &=&0 \,.
\end{eqnarray}
We conclude that 
\begin{eqnarray}
    \lim_{\eta\rightarrow0}
    \sup_{T\in[0,T_\eta]}\|F^\eta(T) - F(T)\|_{\ell^2} \;=\;0  
\end{eqnarray} 
along a diagonal limit with $T_\eta=C\log\braket{1/\eta}$.
As a consequence, we also obtain that
\begin{eqnarray}
    \lim_{\eta\rightarrow0}|\braket{\chi_0,\phi_{T_\eta/\eta^2}}| & =& 1\;
    \nonumber\\
    \lim_{\eta\rightarrow0}
    |\braket{\chi_k,\phi_{T_\eta/\eta^2}}| & =& 0\;
    \rm{for \; all \;}k\geq1\,.
\end{eqnarray}
We therefore conclude that the solution $\phi_{T_\eta/\eta^2}$ to the original system \eqref{eq:boson-subsystem}, \eqref{eq:field-subsystem} converges to the ground state ray as $\eta\rightarrow0$ and $T_\eta\rightarrow\infty$, under conservation of the $L^2$ mass.
\end{proof}

In the rest of this section, we prove the lemmas used for the proof of \thmref{thm:convergence-eta}.
From the uniform boundedness of the energy in the boson subsystem, we obtain the following spectral tightness property.

\begin{lemma}[Spectral tightness]
\label{lm-spectral-tight-1-0}
Recall that $V(x)>-C_0+c_0|x|^{1+\beta}$ for $\beta>0$, by the coercivity assumption \hypref{asm:confining-potential}.

There are positive constants $c_*, C_*, C^*$ depending on $V$, $c_0,C_0,\beta$ such that
\begin{eqnarray}\label{eq:EN-upplowbd-1-0}
    c_*(N+1)^\theta-C_* \leq E_N \leq C^* (N+1)^{\frac23}
\end{eqnarray}
with 
\begin{eqnarray}\label{eq:theta-def-1-0}
    \theta:=\frac{2(1+\beta)}{3(3+\beta)} \,.
\end{eqnarray}
Moreover, given any $\epsilon>0$, there exists $N=\ceil{ C_E\epsilon^{-2/\theta}}$ such that 
\begin{eqnarray}
    \Big(\sum_{k>N}|F^\eta(T)|^2\Big)^{\frac12}   
    \;,\;\;
    \Big(\sum_{k>N}|F(T)|^2\Big)^{\frac12} < \eps 
\end{eqnarray}
uniformly in $T$ and $\eta$.
\end{lemma}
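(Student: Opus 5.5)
The proof has three parts: Weyl-type eigenvalue bounds giving \eqref{eq:EN-upplowbd-1-0}, a uniform energy bound for both $F^\eta$ and $F$, and a Chebyshev-type tail estimate in the spectral index.

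\emph{Eigenvalue bounds.} Both sides of \eqref{eq:EN-upplowbd-1-0} follow from the min-max principle combined with phase-space counting of $N(E):=\#\{k: E_k\le E\}$.

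For the lower bound on $E_N$ we need an upper bound on $N(E)$. By coercivity, $\cH_0 \ge -\Delta + c_0|x|^{1+\beta} - C_0$, so min-max gives $N(E)\le \#\{k: E_k(-\Delta+c_0|x|^{1+\beta})\le E+C_0\}$. A Cwikel–Lieb–Rozenblum or semiclassical phase-space bound then controls the latter count by
$$C\,\big|\{(x,\xi): |\xi|^2+c_0|x|^{1+\beta}\le 2(E+C_0)\}\big| \lesssim (E+C_0)^{\frac32+\frac{3}{1+\beta}} .$$
Inverting this gives $E_N \ge c(N+1)^{\frac{2(1+\beta)}{3(3+\beta)}}-C$. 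This is exactly the exponent $\theta$ in \eqref{eq:theta-def-1-0}, since $\frac{3}{2}+\frac{3}{1+\beta}=\frac{3(3+\beta)}{2(1+\beta)}$.

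For the upper bound $E_N\le C^*(N+1)^{2/3}$ we need a lower bound on $N(E)$. I would take $N+1$ orthonormal test functions and apply min-max. Here I expect the main obstacle: this direction needs a growth bound $V\le C\langle x\rangle^{m}$ on bounded regions. As stated, the exponent $2/3$ seems to require confining the test functions to a fixed box where $V$ is bounded and using Dirichlet-box eigenvalues, which scale like $N^{2/3}$. On a fixed ball $B$, Dirichlet bracketing gives $E_N(\cH_0)\le E_N(-\Delta_B^{D}+\sup_B V)\le C(N+1)^{2/3}+\sup_B V$, by Weyl's law on $B$. This is the route I would take, with the constant depending on $\sup_B V$.

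\emph{Uniform energy bound.} Both sequences carry bounded energy, uniformly in $T$ and $\eta$:
$$\sum_k(E_k-E_0)|F_k(T)|^2\le C_E, \qquad \sum_k(E_k-E_0)|F^\eta_k(T)|^2\le C_E .$$
For $F$, this is energy monotonicity, \eqref{eq:mass-flow-towards-low-energy-modes}, from \thmref{thm:complete-BEC}. For $F^\eta$, note that $\sum_k(E_k+C_0+1)|F^\eta_k|^2=\|\phi_t\|_{Y^1}^2$. This is uniformly bounded by the a priori bound from energy and mass conservation in the proof of \thmref{thm:gwp-strong-final}, i.e. via \lemref{lem:energy-coercive-lower-bound}, with a constant depending only on $\cE[\phi_0,u_0]$ and $\|\phi_0\|_{L^2}$.

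\emph{Tail estimate.} Chebyshev in the spectral index gives
$$\sum_{k>N}|F_k|^2\le \frac{C_E}{E_{N+1}-E_0}\le \frac{C_E}{c_*(N+1)^\theta-C_*-E_0}.$$
Choosing $N=\lceil C_E\epsilon^{-2/\theta}\rceil$, with $C_E$ enlarged to absorb $c_*,C_*,E_0$, makes the right-hand side smaller than $\epsilon^2$. Taking square roots yields the claim for both $F^\eta$ and $F$, uniformly in $T$ and $\eta$.
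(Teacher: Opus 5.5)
Your proposal is correct and follows the paper's proof essentially step for step. For the lower bound, both compare with $-\Delta+c_0|x|^{1+\beta}$, apply a CLR phase-space count and invert it. For the upper bound, both use min-max with Dirichlet test functions on a bounded region where $V$ is bounded; the paper takes explicit sine modes on a cube $(0,L)^3$ where you invoke Weyl's law on a ball. For the tails, both use a Chebyshev estimate, with the $Y^1$ bound from \lemref{lem:energy-coercive-lower-bound} for $F^\eta$ and energy monotonicity \eqref{eq:mass-flow-towards-low-energy-modes} for $F$. Your weight $E_k-E_0$ in place of the paper's $E_k+C_0+1$ is only cosmetic.
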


\begin{proof}
 Writing $\alpha:=1+\beta$, we define the comparison operator
\begin{equation}
    \cH_c:=-\Delta+c|x|^\alpha.
\end{equation}
The coercivity assumption \hypref{asm:confining-potential} implies that $\cH_0\geq \cH_c-C_0$.
The minimax principle, therefore, implies
\begin{equation}\label{eq:minmax-comparison-Hc}
    E_N(\cH_0)
    \geq
    E_N(\cH_c)-C_0.
\end{equation}
Thus, it suffices to obtain a lower bound on the
eigenvalues of $\cH_c$.

Let
\begin{equation}
    \mathcal{N}_c(\Lambda)
    :=
    \#\{N\geq0:E_N(\cH_c)\leq\Lambda\}
\end{equation}
denote the eigenvalue counting function of $\cH_c$. Then, by Birman-Schwinger theory \cite{CFKS1987schrodinger},
\begin{equation}\label{eq:counting-function-phase-space-bound}
    \mathcal{N}_c(\Lambda)
    \leq
    C \int_{\R^3}
    \bigl(\Lambda-c|x|^\alpha\bigr)_+^{3/2}\,dx,
\end{equation}
for $\Lambda$ sufficiently large. 
The integral in \eqref{eq:counting-function-phase-space-bound} can be evaluated by scaling, yielding in spherical coordinates
\begin{align}
    \eqref{eq:counting-function-phase-space-bound}&=
    C
    \int_0^{(\Lambda/c)^{1/\alpha}}
    (\Lambda-cr^\alpha)^{3/2}r^2\,dr.
    \nonumber\\ 
    &=
    4\pi c^{-3/\alpha}
    \Lambda^{\frac32+\frac3\alpha}
    \int_0^1
    (1-s^\alpha)^{3/2}s^2\,ds
    \nonumber\\
    &=
    C_{\alpha,c}
    \Lambda^{\frac32+\frac3\alpha}.
\label{eq:phase-space-scaling}
\end{align}
where we applied the change of variables $r=(\Lambda/c)^{1/\alpha}s$.
It follows that
\begin{equation}
    \mathcal{N}_c(\Lambda)
    \leq
    C_{\alpha,c}
    (1+\Lambda)^{q_\alpha},
    \qquad
    q_\alpha
    :=
    \frac32+\frac3\alpha.
\label{eq:counting-function-polynomial}
\end{equation}
Thus,
\begin{equation}
    N+1
    \leq
    \mathcal{N}_c(E_N(\cH_c))
    \leq 
    C_{\alpha,c}
    \bigl(1+E_N(\cH_c)\bigr)^{q_\alpha},
\end{equation}
using the definition of
$E_N(\cH_c)$, combined with \eqref{eq:counting-function-polynomial}.

Hence,
\begin{equation}
    E_N(\cH_c)
    \geq
    c_{\alpha,c}
    (N+1)^{1/q_\alpha}
    -1.
\label{eq:Hc-eigenvalue-lower}
\end{equation}
where
\begin{equation}
    \frac{1}{q_\alpha}
    =
    \frac{2(1+\beta)}{3(3+\beta)} =: \theta,
\end{equation}
we infer that \eqref{eq:minmax-comparison-Hc} and
\eqref{eq:Hc-eigenvalue-lower} imply
\begin{equation}
    E_N
    \geq
    c_*
    (N+1)^{\theta}
    -C_*
\end{equation}
follows.

For the upper bound, we fix a cube $Q=(0,L)^3$. By smoothness of $V$, $V_Q:=\sup_{x\in Q}V(x)<\infty$. 
For $m\geq1$, let $D_m\subset H^1_0(Q)$, extended by zero to $\R^3$, be spanned by the following modes satisfying Dirichlet boundary conditions,
\begin{eqnarray}
    \prod_{\ell=1}^3
    \sin\big(\frac{\pi n_\ell x_\ell}{L}\big)
\end{eqnarray}
for $1\leq n_\ell\leq m$. Then, $dim (D_m)=m^3$, and for every nonzero $f\in S_m$,
\begin{eqnarray}
    \frac{\langle f, \cH_0 f\rangle}{\|f\|_{L^2}^2}
    \leq V_Q+\frac{3\pi^2 m^2}{L^2} \,.
\end{eqnarray}
We choose now $m=\lceil (N+1)^{1/3}\rceil$. Since $m^3\geq N+1$, the minimax principle yields
\begin{eqnarray}
    E_N\leq V_Q + \frac{3\pi^2}{L^2}\lceil (N+1)^{1/3}\rceil^2 \leq C^* (N+1)^{\frac23}
\end{eqnarray}
as claimed.

Next, assume that we are given an arbitrary $\epsilon>0$. Due to the energy bound \eqref{eq:energy-lower-bound-final}, we find that
\begin{equation}
    \sum_{k\geq0}(E_k+C_0+1)|F_k^\eta(T)|^2 < C_{\cE}
\end{equation}
where the constant $C_{\cE}$ only depends on the value of the conserved energy.
Then, using that
$E_k+C_0+1>0$, and the fact that the eigenvalues are increasing,  
\begin{align}\label{eq:tail-from-energy}
    \sum_{k>N}|F_k^\eta(T)|^2
    &\leq
    \frac{1}{E_{N+1}+C_0+1}
    \sum_{k>N}
    (E_k+C_0+1)|F_k^\eta(T)|^2
    \nonumber\\
    &\leq
    \frac{C_{\cE}}{E_{N+1}+C_0+1}
    \nonumber\\
    &\leq
    C_{\cE}'
    N^{-\theta}  
    \nonumber\\
    &\leq \epsilon^2 \,,
\end{align}
for $N=\ceil{ C_E\epsilon^{-2/\theta}}$.
All constants are independent of $\eta$ and $T$.

Moreover, we recall from \eqref{eq:mass-flow-towards-low-energy-modes} that
\begin{eqnarray}
    \sum_k E_k|F_k(T)|^2\leq \sum_k E_k|F_k(0)|^2 \,.
\end{eqnarray}
Following the same arguments as above, we obtain
\begin{eqnarray}
    \sum_{k>N}|F(T)|^2 \leq C \eps^2 
\end{eqnarray}
where we may choose $N$ large enough so that $C=1$.
\end{proof}
Exploiting this spectral tightness property, we separate the high- and low-energy modes as follows. 

\begin{lemma}
\label{lem:low-en-phi-1-0}
Given $\epsilon>0$, let $N=\ceil{ C_E\epsilon^{-2/\theta}}$, and denote by $P_N$ the spectral projection onto the span of the eigenspaces associated with the first $N+1$ eigenvalues $E_0<E_1<\cdots<E_N$ of the free boson Hamiltonian $\cH_0=-\Delta+V$. Let $\phi_t^<:=P_N \phi_t$ and $\phi_t^>:=(1-P_N)\phi_t$. Then, for any finite $0<T<\infty$ and $t=\frac{T}{\eta^2}$, 
    \begin{eqnarray}
    \label{eq:def-low-en-phi-1-0-in-lem}
        i\partial_t\phi_t^< &=& \cH_0\phi_t^<
        +\eta^2 P_N\Big( v*|\phi_t^< |^2 \phi_t^<\Big)
        \nonumber\\
        &&+\eta P_N \Big( w*(u_t^<+\overline{u_t^<})\phi_t^< \Big)
        + Rem_{\phi}^<(t)
    \end{eqnarray}
where 
    \begin{eqnarray}
        u_t^< &:= e^{-it\omega(-i\nabla)}u_0 
        - i\eta \int_0^t e^{-i(t-\tau)\omega(-i\nabla)} (w*|\phi_\tau^<|^2) \, d\tau.
        \label{eq:duhamel-field-ll-lem}
    \end{eqnarray}
Let
\begin{eqnarray}\label{eq:Rem-l2-def-1-1-0}
    \mathcal{Rem_{\phi}^<}(T):=\Big(
    \Big\langle\chi_k\;,\; \frac1{i\eta^2}e^{iT E_k/\eta^2}\;Rem_{\phi}^< (T/\eta^2)\Big\rangle\Big)_{k\geq0}
    \;\in\;\ell^2\,.
\end{eqnarray}
Then,
    \begin{eqnarray}
        \int_0^{T} dS \;\Big\| \mathcal{Rem_{\phi}^<}(S)\Big\|_{\ell^2} \;\leq\; 
        C_{u_0,v,w,\cE}\; \braket{T}^2 \; \epsilon\;
        \log\langle1/\eta^2\rangle      \,.
    \end{eqnarray}
The constant only depends on the interaction potentials $v$ and $w$.
Moreover,
    \begin{eqnarray}
        \|\phi_t^>\|_{L^2} <C \epsilon \,.
    \end{eqnarray}
The constants are independent of $N$, $\eta$, and $t$.
\end{lemma}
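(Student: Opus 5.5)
Apply $P_N$ to \eqref{eq:boson-subsystem} and track what is lost when the full $\phi_t$ and $u_t$ are replaced by $\phi_t^<$ and $u_t^<$ inside the nonlinearities. Since $P_N$ commutes with $\cH_0$, we have $i\partial_t\phi_t^< = \cH_0\phi_t^< + \eta^2 P_N(v*|\phi_t|^2\phi_t) + \eta P_N(w*(u_t+\overline{u_t})\phi_t)$. This defines $Rem_\phi^<(t)$ as the sum of three differences:
\[
\eta^2 P_N\big(v*|\phi_t|^2\phi_t - v*|\phi_t^<|^2\phi_t^<\big),\quad
\eta P_N\big(w*(u_t+\overline{u_t})\phi_t^>\big),\quad
\eta P_N\big(w*((u_t-u_t^<)+\overline{(u_t-u_t^<)})\phi_t^<\big).
\]
The bound $\|\phi_t^>\|_{L^2}<C\epsilon$ comes first. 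It is exactly \lemref{lm-spectral-tight-1-0}, obtained from energy conservation and the coercive lower bound \eqref{eq:energy-lower-bound-final}, so it is uniform in $t$ and $\eta$.

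After the rescaling in \eqref{eq:Rem-l2-def-1-1-0}, each term is divided by $\eta^2$, and the three terms are handled as follows.

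\textbf{Hartree difference.} Expand $|\phi|^2 = |\phi^<|^2 + 2\Re{\overline{\phi^<}\phi^>} + |\phi^>|^2$. Using $\|v\|_{L^\infty}$ and $\|\phi\|_{L^2}=1$, every cross term is $O(\|\phi^>\|_{L^2}) = O(\epsilon)$ in $L^2$. Dividing by $\eta^2$ gives $O(\epsilon)$ per unit macroscopic time, hence $O(T\epsilon)$ after integration.

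\textbf{Term with $\phi^>$.} Here $w*(u_t+\overline{u_t})$ must be bounded in $L^\infty$. Split $u_t$ into the free part $e^{-it|\nabla|}u_0$, controlled by dispersive decay $\langle t\rangle^{-1}\|u_0\|_{W^{3,1}}$-type bounds, and the Duhamel part. For the Duhamel part I would use $\|w*f\|_{L^\infty}\le\|w\|_{L^2}\|f\|_{L^2}$ together with the weighted propagator estimates and the LAP of \lemref{lem:uniform-lap-for-R-eps-pm}, in order to show $\eta\|w*u^{\rm Duh}_t\|_{L^\infty}\lesssim \eta^2\log\langle t\rangle$. The naive $L^2$ bound would only give $\eta^2 t$, which is far too large. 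The product $\eta\|w*u_t\|_{L^\infty}\|\phi^>\|_{L^2}$, divided by $\eta^2$ and integrated over $t\in[0,T/\eta^2]$, then yields $\epsilon(\log\langle 1/\eta^2\rangle + T\log\langle T/\eta^2\rangle)$ from the free part and the Duhamel part. This is bounded by $\langle T\rangle^2\epsilon\log\langle1/\eta^2\rangle$.

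\textbf{Field difference.} We have $u_t-u_t^< = -i\eta\int_0^t e^{-i(t-\tau)|\nabla|}\,w*(|\phi_\tau|^2-|\phi_\tau^<|^2)\,d\tau$, where the source is $O(\epsilon)$ in $L^1$. The same weighted dispersive/LAP estimate as in the previous term gives $\|w*(u_t-u_t^<)\|_{L^\infty}\lesssim \eta\epsilon\log\langle t\rangle$. Multiplying by $\eta$ and dividing by $\eta^2$ leaves $\epsilon\log\langle t\rangle$, and integrating in $T$ gives $T\epsilon\log\langle T/\eta^2\rangle$.

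The main obstacle is the field estimate shared by the last two terms: a bound on $\sup_t\|w*\int_0^t e^{-i(t-\tau)|\nabla|}(w*\rho_\tau)\,d\tau\|_{L^\infty}$ that grows only logarithmically in $t$, for sources $\rho_\tau$ that are merely uniformly bounded in $L^1$, with no oscillation available to exploit. My first attempt would be to pair against $\delta_x$, write the kernel as $\int_0^t\langle w_x, e^{-is|\nabla|}w*\rho\rangle\,ds$, and use frequency-localized decay $|\langle f,e^{-is|\nabla|}g\rangle|\lesssim\langle s\rangle^{-1}$ for $f,g$ with $W^{3,1}$-type regularity of $w$. This would give the $\log\langle t\rangle$ growth. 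Low frequencies would be controlled by $\||\nabla|^{-1/2}w\|_{L^2_s}$ from \hypref{asm:interaction-generality}. If decay of order $\langle s\rangle^{-1}$ cannot be achieved, I would fall back on the weaker rate in $\langle T\rangle^2$, which the statement permits.
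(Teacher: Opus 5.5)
Your decomposition of $Rem^<_\phi$ matches the paper's proof: the Hartree difference, the term carrying $\phi_t^>$, and the term carrying $u_t-u_t^<$. So do the use of \lemref{lm-spectral-tight-1-0} for $\|\phi_t^>\|_{L^2}$ and your $\eta$-bookkeeping.

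The gap is the step you flag yourself. The logarithmic $L^\infty$ bound on the Duhamel part of the field is never established, and two of the three routes you offer for it fail.
\begin{itemize}
\item The LAP (\lemref{lem:uniform-lap-for-R-eps-pm}) is a stationary resolvent bound. It only helps after integrating by parts against an oscillating phase $e^{i\tau(E_j-E_{j'})}$, and, as you note, an arbitrary $L^1$ source $\rho_\tau$ supplies no such phase.
\item The weighted propagator bound (\thmref{thm:half-wave-decay}) decays only like $\langle t-\tau\rangle^{-s}$ with $s<1$. The Duhamel integral then grows like $t^{1-s}=(T/\eta^2)^{1-s}$, and after the $\eta^{-2}$ rescaling you are left with a factor $\epsilon\,\eta^{-2(1-s)}$.
\item Your fallback is therefore false. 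The $\langle T\rangle^2$ in the statement absorbs growth in $T$, never a negative power of $\eta$.
\item The inequality $\|w*f\|_{L^\infty}\le\|w\|_{L^2}\|f\|_{L^2}$ is also the wrong starting point. The half-wave propagator is unitary on $L^2$, so no dispersion is visible there.
\end{itemize}

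The missing observation is the one your ``first attempt'' points toward, and it is exactly what the paper does. Fourier multipliers commute with convolution, so $e^{-i(t-\tau)|\nabla|}(w*\rho_\tau)=(e^{-i(t-\tau)|\nabla|}w)*\rho_\tau$. Since $w\in W^{3,1}$ by \hypref{asm:interaction-generality}, the $W^{3,1}\to L^\infty$ dispersive bound of \lemref{lem:initial-field-remainder} applies to $w$ itself. Young's inequality then gives
\[
\|u_t-e^{-it|\nabla|}u_0\|_{L^\infty}\le C\eta\|w\|_{W^{3,1}}\int_0^t\langle t-\tau\rangle^{-1}\,\||\phi_\tau|^2\|_{L^1}\,d\tau\le C\eta\|w\|_{W^{3,1}}\log(2+t).
\]
Since $\||\phi_\tau|^2-|\phi_\tau^<|^2\|_{L^1}\le 2\|\phi_\tau^>\|_{L^2}\le C\epsilon$, the same argument gives $\|u_t-u_t^<\|_{L^\infty}\le C\eta\,\epsilon\log(2+t)$. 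Then use $\|w*g\|_{L^\infty}\le\|w\|_{L^1}\|g\|_{L^\infty}$ in place of your $L^2$ version. No weights, no LAP, and no input from $\||\nabla|^{-\frac12}w\|_{L^2_s}$ are needed. With these two field bounds, the rest of your bookkeeping yields the claimed $\langle T\rangle^2\,\epsilon\log\langle 1/\eta^2\rangle$.
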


\begin{proof}
Let $\epsilon>0$ and $N=\ceil{ C_E\epsilon^{-2/\theta}}$.
Then, with $t=T/\eta^2$, we have
     \begin{eqnarray}
        \|\phi_t^>\|_{L^2}^2 = \sum_{k>N}|F_k^\eta(T)|^2 <C \epsilon^2 \,.
    \end{eqnarray}
from \lemref{lm-spectral-tight-1-0}.

Next, we find from $\partial_t \phi_t^<= P_N\partial_t \phi_t$ that the remainder term in \eqref{eq:def-low-en-phi-1-0} is given by
    \begin{eqnarray}
        Rem_\phi^<(t) &=& Rem^<_{Har}(t)+Rem^<_{phot}(t)
    \end{eqnarray}
where the Hartree error term is given by
    \begin{eqnarray}
        Rem^<_{Har}(t) &=& \eta^2 P_N\,v*(|\phi_t|^2-|\phi_t^<|^2)\phi_t 
        +\eta^2 P_N \,v*|\phi_t|^2 \phi_t^>
    \end{eqnarray}
and the error term stemming from the photon interaction is given by
    \begin{eqnarray}
        Rem^<_{phot}(t) &=& 
        \eta P_N\Big( w*(2\Re{u_t-u_t^<} )
        \phi_t^< \Big)
        \nonumber\\
        &&+\eta P_N\Big( w*(2\Re{u_t})\phi_t^> \Big)
    \end{eqnarray}
where
    \begin{eqnarray}
        u_t-u_t^< &=  
        - i\eta \int_0^t e^{-i(t-\tau)|\nabla|} (w*(|\phi_\tau|^2-|\phi_\tau^<|^2) \, d\tau
    \end{eqnarray}
holds.
    
First of all, we have
    \begin{eqnarray}
        \|u_t-u_t^<\|_{L^\infty} &\leq& 
        - \eta \int_0^t \Big\| (e^{-i(t-\tau)\omega(-i\nabla)} w)*(|\phi_\tau|^2-|\phi_\tau^<|^2) \Big\|_{L^\infty} \, d\tau
        \nonumber\\
        &\leq &\eta \int_0^t \Big\| e^{-i(t-\tau)\omega(-i\nabla)} w \|_{L^\infty}\, \||\phi_\tau|^2-|\phi_\tau^<|^2\|_{L^1} \, d\tau
        \nonumber\\
        &\leq &\eta \|w \|_{W^{3,1}}\|\phi_\tau+\phi_\tau^<\|_{L^2} \,
        \|\phi_\tau^>\|_{L^2} 
        \int_0^t \langle t-\tau \rangle^{-1}\, d\tau
        \nonumber\\
        &\leq &C\|w \|_{W^{3,1}}\braket{T} 
        \;\epsilon \; \eta \log\langle 1/\eta^2\rangle
    \end{eqnarray}
    for $t=\frac{T}{\eta^2}$, and using $\|\phi_\tau+\phi_\tau^<\|_{L^2} \leq 2\|\phi_\tau\|_{L^2}=2$ and $\|\phi_\tau^>\|_{L^2} <C\epsilon$, as well as the dispersive bound in \lemref{lem:initial-field-remainder} for the half wave propagator.

    Similarly, we have
    \begin{eqnarray}
        \|u_t\|_{L^\infty} &\leq& \|e^{-it\omega(-i\nabla)}u_0 \|_{L^\infty}+
        \eta \int_0^t \Big\| (e^{-i(t-\tau)\omega(-i\nabla)} w)*|\phi_\tau|^2 \Big\|_{L^\infty} \, d\tau
        \nonumber\\
        &\leq &\langle t\rangle^{-1}\|u_0\|_{W^{3,1}}
        +\eta \|w \|_{W^{3,1}}\|\phi_\tau\|_{L^2}^2
        \|\phi_\tau\|_{L^2} 
        \int_0^t \langle t-\tau \rangle^{-1}\, d\tau
        \nonumber\\
        &\leq &\langle T/\eta^2 \rangle^{-1} \|u_0\|_{W^{3,1}}\; 
        +C\|w \|_{W^{3,1}}\braket{T}
        \; \eta \log\langle 1/\eta^2\rangle  \,
    \end{eqnarray}
    from similar considerations.

    Therefore,
    \begin{eqnarray}
        \|Rem^<_{Har}(t)\|_{L^2} &\leq& 
        \eta^2 \|v\|_{L^\infty}\|\phi_t+\phi_t^<\|_{L^2}
        \|\phi_t^>\|_{L^2}\|\phi_t^<\|_{L^2}
        +\eta^2\|v\|_{L^\infty}\|\phi_t\|_{L^2}^2 \|\phi_t^>\|_{L^2}
        \nonumber\\*
        &\leq& 
        C \|v\|_{L^\infty}\epsilon \; \eta^2 \,.
    \end{eqnarray}
    Moreover,
    \begin{eqnarray}
        \|Rem^<_{phot}(t)\|_{L^2} &\leq& 
        C\eta \| w \|_{L^1} \|u_t-u_t^<\|_{L^\infty}\|\phi_t^< \|_{L^2}
        +C\eta \| w\|_{L^1}\|u_t\|_{L^\infty}\|\phi_t^> \|_{L^2}
        \nonumber\\
        &\leq &
        \langle T/\eta^2\rangle^{-1}\|u_0\|_{W^{3,1}}\| w\|_{L^1}\;\epsilon\;
        \nonumber\\
        &&\qquad
        +C\| w\|_{L^1}\|w \|_{W^{3,1}}\braket{T}
        \; \epsilon \; \eta^2 \log\langle 1/\eta^2\rangle   \,.
    \end{eqnarray}
Thus, for $t=\frac{T}{\eta^2}$, we define
\begin{eqnarray}
    \mathcal{Rem_{\phi}^<}(T):=\Big(\Big\langle\chi_k\;,\; \frac{1}{i\eta^2}e^{iT E_k/\eta^2}\;Rem_{\phi}^< (T/\eta^2)\Big\rangle\Big)_{k\geq0}
    \;\in\;\ell^2
\end{eqnarray}
we obtain 
\begin{eqnarray}
        \int_0^T dS \;\Big\|\mathcal{Rem_{\phi}^<}(S)\Big\|_{\ell^2}
        &=&
        \int_0^{T} dS \;\Big\|\frac{1}{\eta^2}Rem_{\phi}^<(S/\eta^2)\Big\|_{L^2}
        \nonumber\\
        &\leq& 
        C_{u_0,v,w,\cE}\; 
        \langle T\rangle^2  \; \epsilon\;\log\langle1/\eta^2\rangle      \,.
\end{eqnarray}
This proves the claim.
\end{proof}

\smallskip

We now pick an arbitrary but fixed, small $\epsilon>0$, and let $N=\ceil{ C_E\epsilon^{-2/\theta}}$ as assumed in the two preceding lemmata.
Spectral tightness ensures that the high-energy modes with $k>N$ only contribute negligible errors, and we may now control the limit $\eta\rightarrow0$ for the system defined by the low-energy modes.


\begin{lemma}[Bounding the linearized operator]
\label{lem:L-uniform-bound-energy}
For simplicity of notation, we write $\ch^\eta\equiv\ch_N^\eta$.
Let $\cL[P_N F, P_N F^\eta]$ be the linearized operator defined in \eqref{def:the-linearized-operator}, and $\ch^\eta = P_N F^\eta - P_N F$ be the difference sequence between the two solutions. Let the assumptions of \lemref{lem:bound-Fk2-Mkkd} be satisfied. Then, 
\begin{equation}
\|\cL[P_N F,P_N F^\eta]\ch^\eta\|_{\ell^2} \le \tilde{C} \|\ch^\eta\|_{\ell^2}
\end{equation}
for a uniform constant $\tilde{C}:=\tilde{C}(s,w,C_V,C_E) > 0$. 
\end{lemma}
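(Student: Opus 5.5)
The plan is to exploit the cubic structure of the cascade operator $M[F]_k=\sum_{k'}M_{k,k'}|F_{k'}|^2F_k$ and write the difference $M[P_NF^\eta]-M[P_NF]$ as a sum of terms linear in $\ch^\eta$ with bounded coefficients. Write $G:=P_NF^\eta$, $H:=P_NF$, so $\ch^\eta=G-H$. Using $|G_{k'}|^2-|H_{k'}|^2=\Re{(\cc{G_{k'}}+\cc{H_{k'}})\ch^\eta_{k'}}$, we decompose
\begin{equation*}
M[G]_k-M[H]_k=\underbrace{\sum_{k'}M_{k,k'}|G_{k'}|^2\,\ch^\eta_k}_{=: (\mathrm{I})_k}
+\underbrace{H_k\sum_{k'}M_{k,k'}\bigl(|G_{k'}|^2-|H_{k'}|^2\bigr)}_{=: (\mathrm{II})_k}.
\end{equation*}
This defines $\cL[H,G]$ as a real-linear operator on $\ell^2$; that is all that is used in the Duhamel argument in \thmref{thm:convergence-eta}.

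For the diagonal term (I), we invoke \lemref{lem:bound-Fk2-Mkkd} directly. Both $G$ and $H$ are truncations of states with bounded energy: $F^\eta$ has bounded energy by energy conservation and \lemref{lem:energy-coercive-lower-bound}, and $F$ has bounded energy by the monotonicity \eqref{eq:mass-flow-towards-low-energy-modes}. Truncation by $P_N$ only decreases $\sum_k(1+E_k)|F_k|^2$. Hence $\sup_k\sum_{k'}|M_{k,k'}||G_{k'}|^2\le C_{E,s}$, and therefore $\|(\mathrm{I})\|_{\ell^2}\le C_{E,s}\|\ch^\eta\|_{\ell^2}$.

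The term (II) is the main obstacle, because it is not diagonal and requires a Schur-type bound on the matrix $M_{k,k'}H_k\cc{G_{k'}}$ (and the analogous one with $H_{k'}$). We estimate
\begin{equation*}
\|(\mathrm{II})\|_{\ell^2}^2\le\sum_k|H_k|^2\Big(\sum_{k'}|M_{k,k'}|\,(|G_{k'}|+|H_{k'}|)\,|\ch^\eta_{k'}|\Big)^2,
\end{equation*}
and apply Cauchy--Schwarz in $k'$ with the split $|M_{k,k'}|^{1/2}(|G_{k'}|+|H_{k'}|)\cdot|M_{k,k'}|^{1/2}|\ch^\eta_{k'}|$. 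The first factor is bounded by $2\sup_k\sum_{k'}|M_{k,k'}|(|G_{k'}|^2+|H_{k'}|^2)\le 4C_{E,s}$, again by \lemref{lem:bound-Fk2-Mkkd}. What remains is $\sum_{k,k'}|H_k|^2|M_{k,k'}||\ch^\eta_{k'}|^2$, which needs $\sup_{k'}\sum_k|M_{k,k'}||H_k|^2\le C$. This is the column version of the same lemma, and it follows from the symmetry of $M_{k,k'}$ in modulus under $k\leftrightarrow k'$: the Hartree, Lamb-shift and FGR coefficients are symmetric by \remref{rmk:symmetry}, and the $|\nabla|^{-1}$ direct term is symmetric too.

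In case the weighted bound $\|\jpp{\cdot}^s w*(\rchi_k\cc{\rchi_{k'}})\|^2\lesssim 1+E_{k}$ is not symmetric enough for this step, we would fall back on the cruder bound $|\widetilde M^{L,F}_{k,k'}|\lesssim \|w\|_{L^2_s}^2(1+E_k)^{1/2}(1+E_{k'})^{1/2}$ from the Peetre/moment argument in \lemref{lem:bound-Fk2-Mkkd}. Since $\sum_k(1+E_k)|H_k|^2\le C_E$, this closes the estimate. Combining (I) and (II) gives $\tilde C=\tilde C(s,w,C_V,C_E)$, uniform in $N$, $\eta$ and $T$.
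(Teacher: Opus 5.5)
Your proof is correct and is essentially the paper's. The paper writes your term (II) as two pieces, $\mathrm{II}_k=F_k\sum_{k'}M_{k,k'}F^\eta_{k'}\overline{\ch^\eta_{k'}}$ and $\mathrm{III}_k=F_k\sum_{k'}M_{k,k'}\overline{F_{k'}}\ch^\eta_{k'}$. It bounds the diagonal term by \lemref{lem:bound-Fk2-Mkkd}, and handles each off-diagonal piece with the same $|M_{k,k'}|^{1/2}$ Cauchy--Schwarz split, Tonelli, and column-sum bound that you use. In that Tonelli step the paper implicitly relies on $|M_{k,k'}|=|M_{k',k}|$, which you justify explicitly.

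Your fallback is not needed, and as stated it would not work. The product bound $|\widetilde M^{L,F}_{k,k'}|\lesssim(1+E_k)^{1/2}(1+E_{k'})^{1/2}$ leaves the weight $(1+E_{k'})^{1/2}$ on $|\ch^\eta_{k'}|^2$, so the resulting constant depends on $N$. What closes the estimate is the exact modulus symmetry you already invoke, or equivalently the symmetric bound $|\widetilde M^{L,F}_{k,k'}|\lesssim 1+E_{\min\{k,k'\}}$, obtained by placing the weight on the lower eigenfunction.
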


\begin{proof}

Fix $0\leq k\leq N$, and expand the $k$-th component as 
\begin{equation}\prc{\cL[P_N F,P_N F^\eta] \ch^\eta }_k = \text{I}_k + \text{II}_k + \text{III}_k,
\end{equation}
where
\begin{align}
    \text{I}_k &:= \left(\sum_{k'} M_{k,k'} |F_{k'}^\eta|^2\right) \ch^\eta_k, \\
    \text{II}_k &:= F_k \sum_{k'} M_{k,k'} F_{k'}^\eta \cc{\ch^\eta_{k'}}, \\
    \text{III}_k &:= F_k \sum_{k'} M_{k,k'} \cc{F_{k'}} \ch^\eta_{k'}.
\end{align}
We first establish a uniform bound on the multiplier $a_k^\eta := \sum_{k'} |M_{k,k'}| |F_{k'}^\eta|^2$. 
By \lemref{lem:bound-Fk2-Mkkd}, we obtain 
\begin{equation}
 \|\text{I}\|_{\ell^2}^2 = \sum_k |a_k^\eta \ch^\eta_k|^2 \le \left( \sup_k a_k^\eta \right)^2 \sum_k |\ch^\eta_k|^2 \le (C'_E)^2 \|\ch^\eta\|_{\ell^2}^2. 
\end{equation}
To bound the off-diagonal term II, we apply the Cauchy--Schwarz inequality to the inner sum over $k'$,
\begin{equation}
|\text{II}_k|^2 \le |F_k|^2 \left( \sum_{k'} |M_{k,k'}| |F_{k'}^\eta|^2 \right) \left( \sum_{k'} |M_{k,k'}| |\ch^\eta_{k'}|^2 \right) = |F_k|^2 a_k^\eta \left( \sum_{k'} |M_{k,k'}| |\ch^\eta_{k'}|^2 \right). 
\end{equation}
Summing over $k$ and applying the uniform bound $a_k^\eta \le C'_E$, 
\begin{equation}
\|\text{II}\|_{\ell^2}^2 \le C'_E \sum_k |F_k|^2 \left( \sum_{k'} |M_{k,k'}| |\ch^\eta_{k'}|^2 \right). 
\end{equation}
Applying Tonelli's theorem and factoring out $|\ch^\eta_{k'}|^2$,
\begin{equation}
\|\text{II}\|_{\ell^2}^2 \le C'_E \sum_{k'} |\ch^\eta_{k'}|^2 \left( \sum_{k} |M_{k',k}| |F_k|^2 \right) =  C'_E \sum_{k'} |\ch^\eta_{k'}|^2 a_{k'}[F].
\end{equation}
By identical logic on $F$ instead of $F^\eta$, \lemref{lem:bound-Fk2-Mkkd} gives $\sup_{k'} a_{k'}[F] \le C'_E$. Therefore:
\begin{equation}
 \|\text{II}\|_{\ell^2}^2 \le C'_E (C'_E) \sum_{k'} |\ch^\eta_{k'}|^2 = (C'_E)^2 \|\ch^\eta\|_{\ell^2}^2. 
\end{equation}
The term III is structurally identical to the term II and satisfies the same bound. Summing the bounds for $\text{I}$, $\text{II}$, and $\text{III}$ via the triangle inequality concludes the proof.
\end{proof}

\begin{remark}[$\mathbb{R}$-linearity of the linearized operator]
\label{rem:R-linear}
Due to the presence of complex conjugation in the expansion of the cubic difference (specifically in terms $\text{II}$ and $\text{III}$), the operator $\cL[P_N F, P_N F^\eta]$ acts as an $\mathbb{R}$-linear bounded operator on the complex space $\ell^2$. Consequently, for the abstract Cauchy problem and the application of the Duhamel formula in the subsequent convergence theorem, we formally view $\ell^2$ as a real Banach space. This perspective preserves all norm bounds and operator estimates without requiring $\mathbb{C}$-linearity.
\end{remark}

\begin{lemma}\label{lm:QN-QNeta-bds-1-0}
Let  $\epsilon>0$ and $N=N(\epsilon)=\ceil{ C_E\epsilon^{-2/\theta}}$ so that \lemref{lem:low-en-phi-1-0} holds. Let $I\subset\R_+$ be a compact interval that contains the set 
$$
    \Big\{|E_j-E_{j'}|\;\Big|\;\,0\leq j,j'\leq N\;,\;j\neq j'\;\Big\}\;\subset I\,.
$$ 
Moreover, assume $s>\frac12$, and let $0<\gamma<\min\{s-\frac12,1\}$. 
Then, for any finite $0<T<\infty$, the truncation remainders
\begin{eqnarray}
    \mathcal{Q}_N(T) &:=& P_N(M[F(T)]-M[P_N F(T)])
    \nonumber\\
    \mathcal{Q}_N^\eta(T) &:=& P_N(M[F^\eta(T)]-M[P_N F^\eta(T)])
    \nonumber\\
    &&\qquad
    + P_N \mathcal{Rem}^\eta[F^\eta(T)] + P_N \mathcal{Rem_{\phi}^<}(T)
\end{eqnarray}
satisfy the following bounds,
\begin{eqnarray}\label{eq:QN-F-bd-1-1}
    \int_0^T dS \; \|\mathcal{Q}_N(S)\|_{\ell^2} \leq 
    C_{\cE,T}\; \epsilon
\end{eqnarray}
and 
\begin{eqnarray}  
    \Big\|\int_0^T dS \; \mathcal{Q}_N^\eta(S) \Big\|_{\ell^2}
    &\leq&   C_{u_0}\braket{T} \eta \log \langle 1/\eta^2\rangle  +
    C_{s,\gamma,\cE}\braket{T}\,(1+\delta_N^{-2})
    N\; \eta^{2\gamma} \;\log \langle 1/\eta^2\rangle\,
    \nonumber\\
    &&\qquad +C_{s,v,w}\braket{T} 
    \frac{(1+E_N)}{\delta_N}  N^{\frac32}
    \eta^2   \log \langle 1/\eta^2 \rangle
    \nonumber\\
    &&\qquad +C_{u_0,v,w,\cE}\braket{T}^2 \epsilon\;  \log \langle 1/\eta^2\rangle   \,.
\end{eqnarray}
\end{lemma}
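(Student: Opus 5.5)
The plan is to treat the two truncation remainders separately. In each I split $\mathcal{Q}_N^\eta$ into four pieces: the high/low coupling term, the nonresonant oscillatory part of the cascade, the dispersive remainders $\mathcal{Rem}^{(\pm)}_k$ and $\mathcal{Rem}_k$, and the already controlled $\mathcal{Rem_{\phi}^<}$. Each piece matches one term of the asserted bound.

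\emph{Bound on $\mathcal{Q}_N$.} For $0\le k\le N$, only the indices $k'>N$ contribute, so
\[
(\mathcal{Q}_N)_k=\sum_{k'>N}M_{k,k'}|F_{k'}|^2F_k .
\]
By Cauchy--Schwarz and $\|F\|_{\ell^2}=1$,
\[
\|\mathcal{Q}_N\|_{\ell^2}\le \sup_k\sum_{k'>N}|M_{k,k'}||F_{k'}|^2 .
\]
I then repeat the proof of \lemref{lem:bound-Fk2-Mkkd}, keeping only $k'>N$. The Hartree and $|\nabla|^{-1}$ parts give at most $C\sum_{k'>N}|F_{k'}|^2\le C\epsilon^2$ by \lemref{lm-spectral-tight-1-0}. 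The LAP part gives $C\sum_{k'>N}(1+E_{k'})|F_{k'}|^2$. This tail of a convergent, monotonically controlled energy sum can be made $\le\epsilon$ by enlarging the constant $C_E$ in the choice of $N$: use \eqref{eq:mass-flow-towards-low-energy-modes} together with the tail estimate in \eqref{eq:mass-flow-1-1}. Integrating over $[0,T]$ gives \eqref{eq:QN-F-bd-1-1}. The term $P_N(M[F^\eta]-M[P_NF^\eta])$ in $\mathcal{Q}_N^\eta$ is handled identically, using the conserved energy of the full system and producing part of the last $\braket{T}^2\epsilon$ term. The piece $P_N\mathcal{Rem_{\phi}^<}$ is bounded directly by \lemref{lem:low-en-phi-1-0}.

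\emph{Nonresonant terms.} The terms of \eqref{eq:ODE-for-Fk(T)} with $\Delta E_{k,k';j,j'}\neq0$, all indices $\le N$, carry the factor $e^{iT\Delta E/\eta^2}$. I integrate by parts in $S$, writing
\[
e^{iS\Delta E/\eta^2}=\frac{\eta^2}{i\Delta E}\,\partial_S e^{iS\Delta E/\eta^2},
\qquad |\Delta E|\ge\delta_N .
\]
The boundary terms are bounded by $\eta^2\delta_N^{-1}$ times a cubic expression in $F^\eta$. The bulk term contains $\partial_S(F^\eta_j\overline{F^\eta_{j'}}F^\eta_{k'})$, which I bound using \eqref{eq:ODE-for-Fk(T)} again: the cubic part is $O(1)$ by \lemref{lem:bound-Fk2-Mkkd}, and the remainders are controlled below. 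The coefficients $|M_{k,k';j,j'}|$ for indices $\le N$ are bounded, via the LAP and moment bounds, by $C(1+E_N)$. Summing over $k',j,j'\le N$ with Cauchy--Schwarz should cost $N^{3/2}$. Together this gives the term $C\braket{T}\frac{1+E_N}{\delta_N}N^{3/2}\eta^2\log\langle1/\eta^2\rangle$.

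\emph{Dispersive remainders.} For the initial remainder $\eta^{-2}\mathcal{Rem}_k$, I use the $L^\infty$ decay $\|e^{-it|\nabla|}u_0\|_\infty\lesssim\langle t\rangle^{-1}\|u_0\|_{W^{3,1}}$ (\lemref{lem:initial-field-remainder}) and $\|w\|_{L^1}$. Substituting $t=S/\eta^2$ gives
\[
\int_0^T\eta^{-1}\langle S/\eta^2\rangle^{-1}\,dS\lesssim\eta\log\langle1/\eta^2\rangle\,\braket{T},
\]
which is the $C_{u_0}$ term. For $\mathcal{Rem}^{(\pm),\mathrm{bd}}_k$ and $\mathcal{Rem}^{(\pm),\mathrm{der}}_k$, I need local decay of the regularized resolvent composed with the propagator,
\[
\bigl\|\langle x\rangle^{-s}(|\nabla|\mp E\mp i0)^{-1}e^{\mp it|\nabla|}\langle x\rangle^{-s}\bigr\|\lesssim C_I\langle t\rangle^{-\gamma},
\qquad 0<\gamma<\min\{s-\tfrac12,1\},
\]
uniformly for $E$ in the compact interval $I$. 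I would obtain this by interpolating the LAP of \lemref{lem:uniform-lap-for-R-eps-pm} with Hölder continuity in the spectral parameter. The expected constant degenerates like $\delta_N^{-2}$: a smooth cutoff separates distinct resonance energies, whose spacing is at least of order $\delta_N$, and each derivative of the cutoff costs $\delta_N^{-1}$. With $t=S/\eta^2$, the boundary terms then integrate to $\eta^{2\gamma}\braket{T}$. The derivative terms carry an additional $\int_0^t\langle t-s\rangle^{-\gamma}|\partial_s(A_j\bar A_{j'})|\,ds$, where $|\partial_sA|=O(\eta^2)$, producing at most a $\log$ factor. Summing $j,j'\le N$ against the weighted norms $\|\langle x\rangle^s w*(\chi_k\chi_{k'})\|_{L^2}$ (the moment bound) gives the factor $N$.

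\emph{Main obstacle.} The hardest step will be this weighted local-decay estimate with explicit $\delta_N$- and $N$-dependence, and with the $-i0^+$ regularization commuting correctly with the time integral. If direct interpolation fails, I would first try a smooth spectral cutoff of $|\nabla|$ near the resonant shells, with a stationary-phase bound in $L^2_s\to L^2_{-s}$ away from them.
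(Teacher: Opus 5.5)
Your decomposition of $\mathcal{Q}_N^\eta$ matches the paper's. Your treatment of the nonresonant piece also matches: integration by parts against $e^{iS\Delta E/\eta^2}$ with $|\Delta E|\ge\delta_N$. The weighted decay estimate you flag as the main obstacle is \lemref{lem:one-sided-half-wave-resolvent} with $\lambda_0=\delta_N$. There the factor $\delta_N^{-2}$ comes from keeping the cutoff around $|E_j-E_{j'}|\ge\delta_N$ away from the origin, not from separating distinct resonance energies.

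The genuine gap is the high--low coupling term. Your bound $\|\mathcal{Q}_N\|_{\ell^2}\le\sup_k\sum_{k'>N}|M_{k,k'}||F_{k'}|^2$ forces you to show that the \emph{energy} tail $\sum_{k'>N}(1+E_{k'})|F_{k'}|^2$ is $O(\epsilon)$, and this is not available.
\begin{itemize}
\item For $F^\eta$ there is no tail monotonicity. Energy conservation bounds $\sum_k(1+E_k)|F^\eta_k(T)|^2$ as a whole, but nothing in the a priori bounds prevents an $O(1)$ portion of it from sitting above $N$. Spectral tightness only makes the \emph{mass} tail $\sum_{k'>N}|F^\eta_{k'}|^2\lesssim(1+E_{N+1})^{-1}\le\epsilon^2$ small.
\item For $F$, \eqref{eq:mass-flow-1-1} bounds the energy tail by its initial value. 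That value tends to zero at a rate set by the profile of $F(0)$, not by the energy. So no enlargement of $C_E$ gives $\le\epsilon$ at $N=\lceil C_E\epsilon^{-2/\theta}\rceil$.
\item Placing the moment weight on the low index instead yields $(1+E_N)\epsilon^2$. This is only $O(1)$, because $E_N\gtrsim N^\theta\sim\epsilon^{-2}$.
\end{itemize}
The rate in $\epsilon$ is essential, since \thmref{thm:convergence-eta} later sets $\epsilon=\eta^K$.

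The missing idea is a Schur-type bound that extracts the mass tail instead of the energy tail. With $a_k:=\sum_{k'>N}|M_{k,k'}||F_{k'}|^2$,
\[
\|\mathcal{Q}_N\|_{\ell^2}^2\le\Big(\sup_k a_k\Big)\sum_{k\le N}|F_k|^2a_k=\Big(\sup_k a_k\Big)\sum_{k'>N}|F_{k'}|^2\sum_{k\le N}|M_{k',k}||F_k|^2 .
\]
This uses $|M_{k,k'}|=|M_{k',k}|$: the Hartree and Lamb parts are symmetric and the FGR part is antisymmetric. Both row sums are bounded by \lemref{lem:bound-Fk2-Mkkd}, with the moment weight on the summed index. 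Hence $\|\mathcal{Q}_N\|_{\ell^2}\le C_{\cE}\epsilon$, and the same argument works verbatim for $F^\eta$.

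Two smaller points.
\begin{itemize}
\item In the nonresonant integration by parts, you cannot bound $\partial_S F^\eta$ through \eqref{eq:ODE-for-Fk(T)} and \lemref{lem:bound-Fk2-Mkkd}. That lemma controls only the diagonal coefficients $M_{k,k'}$, not the full sum over $M_{k,k';j,j'}$ with unrestricted indices. The remainders in that ODE are also controlled only after time integration. Use instead the direct PDE bound \eqref{eq:der-t-Ak-apriori-1-0}, $\big(\sum_j|\partial_tA_j(t)|^2\big)^{1/2}\le C\eta\langle t\rangle^{-1}+C\eta^2\log\langle t\rangle$.
\item The $j=j'$ terms have resolvent $|\nabla|^{-1}$ at energy $0\notin I$. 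They fall outside your uniform-in-$I$ decay estimate and need the separate zero-mode bound using $|\nabla|^{-1/2}w\in L^2_s$.
\end{itemize}
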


\begin{proof}
First of all, we have, for $0\leq k\leq N$,
\begin{eqnarray}
    (\mathcal{Q}_N)_k &=& (M[F]-M[P_N F])_k 
    \nonumber\\
    &=&\sum_{k'>N}M_{k,k'}|F_{k'}|^2\;F_k
\end{eqnarray}
so that 
\begin{eqnarray}\label{eq:QN-F-bd-1-0}
    \|\mathcal{Q}_N\|_{\ell^2}^2&=&
    \sum_{k\leq N}\Big(\sum_{k'>N}
    |M_{k,k'}| \; |F_{k'}|^2\Big) \; 
    \Big(\sum_{k''>N}|M_{k,k''}| \; |F_{k}|^2 \;|F_{k''}|^2\Big) 
    \nonumber\\
    &\leq&
    \Big( \sup_k \sum_{k'>N}
    |M_{k,k'}| \; |F_{k'}|^2\Big) \; 
    \sum_{k''>N}\Big(\sum_{k\leq N} |M_{k,k''}| \; |F_{k}|^2 \Big)\;|F_{k''}|^2 
    \nonumber\\
    &\leq& 
    \Big(\sup_{k}\sum_{k'>N} |M_{k,k'}| \; |F_{k'}|^2\Big)
    \Big(\sup_{k''}\sum_{k\leq N} |M_{k,k''}| \; |F_{k}|^2\Big)
    \Big(\sum_{k''>N} |F_{k''}|^2\Big)
    \nonumber\\
    &\leq& 
    C_{\cE}^2
    \Big(\sum_{k''>N} |F_{k''}|^2\Big) 
    \nonumber\\
    &\leq& 
    C_{\cE}^2 \; \epsilon^2
\end{eqnarray}
where we use \lemref{lem:bound-Fk2-Mkkd} and $|M_{k,k'}|=|M_{k',k}|$. The constant $C_{\cE}$ only depends on the  energy $\sum_{k}|F_k(0)|^2$ at initial time $T=0$.

Next we consider  
\begin{eqnarray} 
    \mathcal{Q}_N^\eta(T) &:=& P_N(M[F^\eta(T)]-M[P_N F^\eta(T)])
    \nonumber\\
    &&\qquad
    + P_N \mathcal{Rem}^\eta[F^\eta(T)]
    + P_N \mathcal{Rem_{\phi}^<}(T)
\end{eqnarray}
Similarly as in \eqref{eq:QN-F-bd-1-0}, we obtain
\begin{eqnarray} 
    \| P_N(M[F^\eta(T)]-M[P_N F^\eta(T)]) \|_{\ell^2}
    \;\leq\; C_{\cE}^2 \; \epsilon
\end{eqnarray}
where the constant $C_{\cE}$ only depends on the conserved energy of the combined system of bosons and photons. 
From \eqref{eq:Rem-pm-def-1-0}  and \eqref{eq:Rem-u0-def-1-0}, we have
\begin{eqnarray}\label{eq:Remeta-def-1-0-0}
    P_N\mathcal{Rem}^\eta &=&\mathcal{Rem}^{\Delta E\neq0}(t) +
    \mathcal{Rem}_{u_0}(t)
    \nonumber\\
    &&
    +\mathcal{Rem}^{(-),bd}(t)+\mathcal{Rem}^{(+),bd}(t)
    \nonumber\\
    &&
    +\mathcal{Rem}^{(-),der}(t)+\mathcal{Rem}^{(+),der}(t)
\end{eqnarray}
where 
\begin{eqnarray}
    \lefteqn{
    \mathcal{Rem}^{\Delta E\neq0}_k(t=T/\eta^2)
    }
    \nonumber\\
    &:&=
    \sum_{\substack{ 0\leq k', j,j'\leq N \\\Delta E_{k,k';j,j'} \neq 0} }  P_N  M_{k,k';j,j'} 
    e^{iT\Delta E_{k,k';j,j'}/\eta^2} \cc{F_{j'}(T)}   F_j(T) F_{k'}(T)
     \,.
\end{eqnarray}
Using \lemref{lem:initial-field-remainder}, \lemref{lm:Rem-der-bd-1-0}, \lemref{lm:Rem-bd-bd-1-0}, we obtain that
\begin{eqnarray}
    \lefteqn{
    \|\int_0^T dS \; P_N\mathcal{Rem}^\eta(S)\|_{\ell^2}
    }
    \nonumber\\
    &\leq&   C_{u_0}\braket{T} \eta \log \langle 1/\eta^2\rangle  +
    C_{s,\gamma,\cE}\braket{T}\,(1+\delta_N^{-2})
    N\; \eta^{2\gamma} \;\log \langle 1/\eta^2\rangle\,
    \nonumber\\
    &&\qquad +C_{s,v,w}\braket{T} 
    \frac{(1+E_N)}{\delta_N}  N^{\frac32}
    \eta^2   \log \langle 1/\eta^2 \rangle
\end{eqnarray}
where $0<\gamma<s-\frac12\leq\frac12$.
Moreover, \lemref{lem:low-en-phi-1-0} yields
\begin{eqnarray}
        \int_0^{T} dS \;\Big\| \mathcal{Rem_{\phi}^<}(S)\Big\|_{\ell^2} < 
        C_{u_0,v,w,\cE} \; \epsilon\;
        \langle T\rangle^2 \;\log\langle1/\eta^2\rangle      \,.
    \end{eqnarray}
Collecting all terms in the upper bound, we arrive at the claim.
\end{proof}

\section{Lamb Shift and resonance bounds}
\label{lamb-shift-and-resonant-bounds}

In the scaling $T = \eta^2 t$, the interactions between the confined bosons and the radiation field produce $\cO(1)$ effective transition rates (the Fermi Golden Rule) and energy renormalizations (the Lamb shift). In this section, we rigorously establish that these resonant coefficients are finite, well-defined, and uniformly bounded in the limit $\eta \to 0$, controlling the frequency-space singularity via the Limiting Absorption Principle in weighted spaces.

\begin{lemma}\label{lm:FGR-LS-a-priori-1-0}
    Let $s>\frac12$. Assume that $\gamma<\min\{s-\frac12,1\}$, and that $I\subset\R_+$ is a compact interval so that $|E_j-E_{j'}|\in I$.
    The Lamb shift and Fermi Golden Rule coefficients are bounded,
    \begin{equation}
        |\Lm  + i \Gm| < C_{s,\gamma,I}\|w\|_{L^2_s}^2
        \sqrt{1+E_{\min\{k,k'\}}}\sqrt{1+E_{\min\{j,j'\}}}
    \end{equation} 
    where the constant depends on $s$, $I$, $\gamma$.
\end{lemma}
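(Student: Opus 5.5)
The plan is to write the coefficient as a boundary value of a weighted resolvent pairing and invoke the limiting absorption principle for the half-wave operator. Set $f:=w*(\cc{\rchi_{k'}}\rchi_k)$, $g:=w*(\cc{\rchi_{j'}}\rchi_j)$ and $\mu:=E_j-E_{j'}$. Then
\[
\Lm+i\Gm=\lim_{\vep\to0}\inp{f,(\omega(-i\nabla)-\mu-i\vep)^{-1}g}.
\]
For every $\vep>0$ we insert weights and use Cauchy--Schwarz:
\[
|\inp{f,(\omega(-i\nabla)-\mu-i\vep)^{-1}g}|\le \|\jpp{x}^sf\|_{L^2}\,\big\|\jpp{x}^{-s}(\omega(-i\nabla)-\mu-i\vep)^{-1}\jpp{x}^{-s}\big\|_{\cB(L^2)}\,\|\jpp{x}^sg\|_{L^2}.
\]
This is exactly the device used in the proof of \lemref{lem:bound-Fk2-Mkkd}.

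We split into two cases according to the sign of $\mu$.

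\emph{Case $\mu>0$.} Here $|\mu|\in I$ with $I$ compact in $\R_+$. The uniform LAP bound from \lemref{lem:uniform-lap-for-R-eps-pm} controls the middle factor by $C_{s,\gamma,I}$ uniformly in $\vep\in(0,1]$. The H\"older exponent $\gamma<\min\{s-\frac12,1\}$ enters here: it gives H\"older continuity in the spectral parameter, so the limit $\vep\to0$ exists and obeys the same bound.

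\emph{Case $\mu\le0$ (including $j=j'$).} The operator $\omega(-i\nabla)-\mu$ is bounded below by $|\mu|\ge0$, so there is no resonance. If $\mu<0$ the resolvent has norm at most $|\mu|^{-1}$, which is bounded because $|\mu|$ is bounded below on $I$. For $\mu=0$ I would instead bound the pairing by $\||\nabla|^{-1/2}f\|_{L^2}\||\nabla|^{-1/2}g\|_{L^2}$ and use $\||\nabla|^{-1/2}w\|_{L^2_s}<C$ from \hypref{asm:interaction-generality}; the same weighted argument then applies with $w$ replaced by $|\nabla|^{-1/2}w$.

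It remains to bound the weighted norms of $f$ and $g$. By Peetre's inequality, $\jpp{x}^s\le C_s\jpp{x-y}^s\jpp{y}^s$, and Young's inequality gives
\[
\|\jpp{x}^s f\|_{L^2}\le C_s\|w\|_{L^2_s}\,\|\jpp{y}^s\rchi_k\rchi_{k'}\|_{L^1}.
\]
For the $L^1$ factor we choose to put the weight on the lower index. Cauchy--Schwarz gives
\[
\|\jpp{y}^s\rchi_k\rchi_{k'}\|_{L^1}\le \|\jpp{y}^s\rchi_{m}\|_{L^2}\,\|\rchi_{m'}\|_{L^2}, \qquad m=\min\{k,k'\}.
\]
The coercivity bound $\jpp{y}^{2s}\le C(V+C_0+1)$ holds since $2s<1+\beta$, and the moment estimate used in \lemref{lem:bound-Fk2-Mkkd} (see \lemref{lem:moment-bound-for-phi}) then yields $\|\jpp{y}^s\rchi_m\|_{L^2}^2\le C(1+E_m)$. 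Hence $\|\jpp{x}^sf\|_{L^2}\le C\|w\|_{L^2_s}\sqrt{1+E_{\min\{k,k'\}}}$, and similarly for $g$. Multiplying the three factors gives the stated estimate.

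The main obstacle is the middle factor: proving the LAP bound for $|\nabla|$ between $L^2_s$ and $L^2_{-s}$, uniformly in $\vep$ and in the resonance radius $\mu$ on the compact set $I$. I would take \lemref{lem:uniform-lap-for-R-eps-pm} as given. If it had to be proved directly, I would pass to polar coordinates in Fourier space, write the resolvent as $\int_0^\infty (r-\mu-i\vep)^{-1}\,h(r)\,dr$ with $h(r)$ the pairing of the spherical restrictions, and use the trace lemma: $\widehat{\jpp{x}^{-s}\phi}$ restricts to spheres in a $C^{0,\gamma}$ fashion in $r$ for $\gamma<s-\frac12$. The Plemelj--Privalov estimate then bounds the principal value and delta parts, with constants uniform for $\mu\in I$ because $I$ stays away from $r=0$.
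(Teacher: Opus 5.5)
Your proposal is correct and follows essentially the same route as the paper: weighted Cauchy--Schwarz around the resolvent, the uniform LAP of \lemref{lem:uniform-lap-for-R-eps-pm}, Peetre plus Young for the convolution, Cauchy--Schwarz placing the weight on the lower-index eigenfunction, and the coercivity bound $\|\rchi_m\|_{L^2_s}^2\le C(1+E_m)$. Your case split on the sign of $\mu$ is unnecessary, since that LAP is uniform over all $\lambda\in\R$, including $\lambda\le0$. Your $\mu=0$ variant via $|\nabla|^{-1/2}w$ would change the norm appearing in the bound, but that case is excluded by $|E_j-E_{j'}|\in I$ with $I\subset\R_+$ compact.
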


\begin{proof}
    Invoking the limiting absorption principle in \lemref{lem:uniform-lap-for-R-eps-pm}, we find
\begin{eqnarray}
    |\Lm  + i \Gm|&<&C
    \|({\omega(-i \nabla)} - (E_{j} - E_{j'}) - i 0^+)^{-1}\|_{L^2_s\rightarrow L^2_{-s}}
    \nonumber\\
    &&\qquad
    \|w * (\cc{\rchi_{k'}} \rchi_k)\|_{L^2_s} \,
    \|w * (\cc{\rchi_{j'}} \rchi_{j}) \|_{L^2_s}
    \nonumber\\
    &<&C_{s,\gamma,I}\|w\|_{L^2_s}^2
    \|\cc{\rchi_{k'}} \rchi_k\|_{L^1_s}  \,
    \|\rchi_{j}\rchi_{j'}\|_{L^1_s} 
\end{eqnarray}
Here used Peetre's inequality, $\langle x\rangle^s\leq C \langle x-y\rangle^s\langle y\rangle^s$, to obtain
\begin{eqnarray}
    \|w * (\cc{\rchi_{k'}}\, \rchi_k)\|_{L^2_s} 
    &<&C
    \|w\|_{L^2_s}\|\cc{\rchi_{k'}} \rchi_k\|_{L^1_s}
    \nonumber\\
    &<&C
    \|w\|_{L^2_s} \|\rchi_{\min\{k,k'\}}\|_{L^2_s}\|\rchi_{\max\{k,k'\} }\|_{L^2}  
\end{eqnarray}
Passing to the last line, we used Cauchy-Schwarz to put the $\langle x\rangle^s$ weight on the eigenfunction associated with the smaller eigenenergy. 
The $L^2_s$ norms of the eigenfunctions are uniformly bounded by $\| \rchi_{k} \|_{L^2_s} < C \, \sqrt{1+E_k}$, which is a consequence of the coercivity of the trapping potential,
\begin{eqnarray}\label{eq:chi-L2s-bd-1-0}
    \|\rchi_{k}\|_{L^2_s}^2
    &=&\int dx \langle x\rangle^{2s}|\rchi_k(x)|^2
    \nonumber\\*
    &<&
    \int dx (V(x)+C)|\rchi_k(x)|^2
    \nonumber\\*
    &<&C \int dx (-\Delta+V(x)+C)|\rchi_k(x)|^2
    \nonumber\\*
    &<&C(1+E_k)
\end{eqnarray}
This proves the claim.
\end{proof}

\begin{lemma}[Uniform first-moment bound for the boson density]
\label{lem:moment-bound-for-phi}
Let $d=3$ and recall from \hypref{asm:confining-potential} in \assumref{the-assumptions} that the external trapping potential satisfies $V(x) \geq c|x|^{1+\beta} - C_0$ for some constants $c > 0$, $C_0 \geq 0$, and $\beta>0$. 
Let $(\phi_t, u_t)$ be a solution to the macroscopic dynamics such that the boson mass is conserved, $\|\phi_t\|_{L^2} = \|\phi_0\|_{L^2}$, and the total energy $\mathcal{E}[\phi_t, u_t]$ is bounded by $E_0$ for all $t \geq 0$, with
\begin{equation}
    \begin{aligned}
        \mathcal{E}[\phi, u] = & \frac{1}{2} \int_{\R^3} |\nabla \phi(x)|^2 \, dx + \frac{1}{2} \int_{\R^3} V(x) |\phi(x)|^2 \, dx + \frac{\lambda}{4} \int_{\R^3} (v * |\phi|^2)(x) |\phi(x)|^2 \, dx \\
        & + \frac{\eta}2 \int_{\R^3} (w * (\bar{u} + u))(x) |\phi(x)|^2 \, dx + \frac{1}{2} \int_{\R^3} \bar{u} \, \omega(-i \nabla) u \, dx.
    \end{aligned}
\end{equation}
Assume that the interaction kernel is symmetric ($w(x) = w(-x)$) and satisfies $w \in L^{3/2}(\R^3) \mcap L^\infty(\R^3)$, and let the field dispersion relation be $\omega(\xi) = |\xi|$. Then, there exists a constant $C > 0$ depending only on the initial data and system parameters, $\{E_0, \|\phi_0\|_{L^2}, w, \lambda, \eta, c, C_0\}$, such that for all $t \ge 0$
\begin{equation}\label{eq:phi-coerc-bd-1-0}
    \int_{\R^3} |x|^r |\phi_t(x)|^2 \, dx \leq C < \infty.
\end{equation}
for $0\leq r \leq 1+\beta$.
In particular, the first moment of the density $\rho_t = |\phi_t|^2$ is uniformly bounded in time.
\end{lemma}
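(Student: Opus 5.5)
The plan is to prove a uniform lower bound on the energy: all terms of $\mathcal{E}$ except the potential term are either nonnegative or controlled, so that $\int V|\phi_t|^2\,dx$ stays bounded. Coercivity of $V$ then converts this into the moment bound. We use the conservation of $\mathcal{E}$ along solutions, or the assumed bound $\mathcal{E}[\phi_t,u_t]\le E_0$, together with mass conservation $\|\phi_t\|_{L^2}=\|\phi_0\|_{L^2}$.

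\textbf{Step 1 (Hartree term).} Since $v\in L^\infty$,
\[
\Big|\tfrac{\lambda}{4}\int (v*|\phi|^2)|\phi|^2\,dx\Big|\le \tfrac{\lambda}{4}\|v\|_{L^\infty}\|\phi_0\|_{L^2}^4 ,
\]
which is a constant.

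\textbf{Step 2 (field coupling).} This is the main obstacle, because the coupling term has no sign. We write
\[
\tfrac{\eta}{2}\int (w*(u+\bar u))|\phi|^2\,dx=\eta\,\mathfrak{Re}\,\langle u,\, w*|\phi|^2\rangle
\]
(using that $w$ is even and real) and absorb it into $\tfrac12\langle u,|\nabla|u\rangle$. In Fourier variables, Cauchy--Schwarz gives
\[
|\langle u, w*\rho\rangle|\le \||\nabla|^{1/2}u\|_{L^2}\,\||\nabla|^{-1/2}(w*\rho)\|_{L^2},
\qquad \rho=|\phi|^2 .
\]
By Young's inequality this is bounded by $\tfrac14\||\nabla|^{1/2}u\|^2+C\eta^2\||\nabla|^{-1/2}(w*\rho)\|^2$.

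The last norm is controlled through the Hardy--Littlewood--Sobolev inequality. We have $\||\nabla|^{-1/2}f\|_{L^2}\le C\|f\|_{L^{3/2}}$, since $\dot H^{1/2}\hookrightarrow L^3$ dualizes to $L^{3/2}\hookrightarrow \dot H^{-1/2}$. Young's convolution inequality then gives
\[
\|w*\rho\|_{L^{3/2}}\le \|w\|_{L^{3/2}}\|\rho\|_{L^1}=\|w\|_{L^{3/2}}\|\phi_0\|_{L^2}^2 .
\]
This is exactly where the hypothesis $w\in L^{3/2}$ enters. Hence
\[
\mathcal{E}\ge \tfrac12\|\nabla\phi\|^2+\tfrac12\int V|\phi|^2+\tfrac14\||\nabla|^{1/2}u\|^2 - C ,
\]
with $C$ depending on $\eta,\lambda,\|v\|_\infty,\|w\|_{L^{3/2}},\|\phi_0\|_{L^2}$. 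This is essentially the content of the coercive lower bound \lemref{lem:energy-coercive-lower-bound}, which we could invoke directly.

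\textbf{Step 3 (moments).} From Step 2 and $\mathcal{E}\le E_0$ we get
\[
\int (V+C_0)|\phi_t|^2\,dx\le 2E_0+2C+2C_0\|\phi_0\|_{L^2}^2 ,
\]
using that $V+C_0\ge0$ and discarding the nonnegative kinetic and field terms. Coercivity gives $c|x|^{1+\beta}\le V+C_0$, so
\[
\int |x|^{1+\beta}|\phi_t|^2\,dx\le c^{-1}\big(2E_0+2C+2C_0\|\phi_0\|^2\big).
\]
For $0\le r\le 1+\beta$ we use $|x|^r\le 1+|x|^{1+\beta}$, which adds $\|\phi_0\|_{L^2}^2$ to the bound. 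This yields \eqref{eq:phi-coerc-bd-1-0} uniformly in $t$, and in particular the case $r=1$, the first moment.
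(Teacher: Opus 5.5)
Your proof is correct and follows the paper's argument: you bound the Hartree term in $L^\infty$, absorb the field coupling into the field energy using $w\in L^{3/2}$, Young's convolution inequality and the Sobolev embedding $\dot H^{1/2}\hookrightarrow L^3$ (you use the dual form $L^{3/2}\hookrightarrow \dot H^{-1/2}$, which is the same estimate), and then turn the resulting bound on $\int V|\phi_t|^2$ into moment bounds via coercivity of $V$. Two of your details are more careful than the paper's written proof: you correctly use $v$ rather than $w$ in the Hartree bound, so the constant also depends on $\|v\|_{L^\infty}$, and the inequality $|x|^r\le 1+|x|^{1+\beta}$ properly covers the case $r<1+\beta$ near the origin.
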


\begin{proof}
By isolating the potential energy term from the total energy bound $\mathcal{E}[\phi_t, u_t] \leq E_0$, and discarding the non-negative boson kinetic energy $\frac{1}{2}\|\nabla \phi\|_{L^2}^2 \ge 0$, we obtain the upper bound
\begin{equation}
    \frac{1}{2} \int V(x) |\phi|^2 \leq E_0 - \frac{\lambda}{4} \int (w * |\phi|^2) |\phi|^2 - \frac{\eta}2 \int (w * (\bar{u} + u)) |\phi|^2 - \frac{1}{2} \| |\nabla|^{1/2} u \|_{L^2}^2.
\end{equation}
Since $w \in L^\infty$, Young's convolution inequality trivially bounds the Hartree interaction,
\begin{equation}
    -\frac{\lambda}{4} \int (w * |\phi|^2) |\phi|^2 \leq \frac{|\lambda|}{4} \|w\|_{L^\infty} \|\phi\|_{L^2}^4.
\end{equation}
We control the boson-field coupling term, utilizing the symmetry of $w$ to transpose the convolution, applying H\"{o}lder's inequality followed by Young's inequality, then invoking the fractional Sobolev continuous embedding $\dot{H}^{1/2}(\mathbb{R}^3) \hookrightarrow L^3(\mathbb{R}^3)$ to absorb the field variable
\begin{equation}
    \begin{aligned}
        - \eta \int (\bar{u} + u) (w * |\phi|^2) &\leq 2 \eta \|u\|_{L^3} \|w * |\phi|^2\|_{L^{3/2}} \\
        &\leq 2\eta \|u\|_{L^3} \|w\|_{L^{3/2}} \|\phi\|_{L^2}^2 \\
        &\leq 2C_S \eta \|w\|_{L^{3/2}} \|\phi\|_{L^2}^2 \| |\nabla|^{1/2} u \|_{L^2},
    \end{aligned}
\end{equation}
where $C_S$ is the Sobolev embedding constant. To handle the remaining dependence on the field $u$, we combine this coupling bound with the negative definite field kinetic energy and complete the square ($bX - \frac{1}{2}X^2 \le \frac{1}{2}b^2$),
\begin{equation}
    2C_S \eta \|w\|_{L^{3/2}} \|\phi\|_{L^2}^2 \norm{|\nabla|^{1/2} u}_{L^2} - \frac{1}{2} \norm{|\nabla|^{1/2} u}_{L^2}^2  \leq 2 C_S^2 \eta^2 \|w\|_{L^{3/2}}^2 \|\phi\|_{L^2}^4.
\end{equation}
Consolidating these estimates, the potential energy is bounded by the conserved quantities
\begin{equation}
    \frac{1}{2} \int V(x) |\phi|^2 \leq E_0 + \frac{|\lambda|}{4} \|w\|_{L^\infty} \|\phi\|_{L^2}^4 + 2 C_S^2 \eta^2 \|w\|_{L^{3/2}}^2 \|\phi\|_{L^2}^4.
\end{equation}
Applying the coercivity assumption on the trapping potential, $V(x) \geq c|x|^{1+\beta} - C_0$, we conclude
\begin{equation}
    \frac{c}{2} \int |x|^r |\phi|^2 \leq \frac{1}{2} \int V(x) |\phi|^2 + \frac{C_0}{2} \|\phi\|_{L^2}^2 \leq C(E_0, \phi_0, w, \lambda, \eta)
\end{equation}
for $0\leq r\leq 1+\beta$.
Because the right-hand side is composed entirely of time-independent conserved constants, the uniform moment bound holds for all $t \ge 0$.
\end{proof}

\begin{remark}
The uniform moment bound
\begin{equation}\sup_{t \ge 0} \int_{\mathbb{R}^3} |x| \rho_t(x) \, dx \le C\end{equation}
serves a critical dual purpose. Physically, it establishes the global-in-time macroscopic stability of the boson system; the coercive trap $V(x)$ prevents unbounded expansion despite continuous energy injection from the quantized radiation field and internal scattering. Analytically, this spatial localization is the fundamental prerequisite for the weak-coupling limit. The spatial moment bound guarantees the $C^1$-regularity of the Fourier-transformed density $\widehat{\rho}_t(\xi)$,
$$ |\widehat{\rho_t}(\xi + \delta) - \widehat{\rho_t}(\xi)| \le \int_{\mathbb{R}^3} |\delta| |x| |\rho_t(x)| \, dx = |\delta| \int_{\mathbb{R}^3} |x| |\rho_t(x)| \, dx \,,
$$ 
which is necessary to evaluate the highly singular principal-value resolvents (the Lamb shift) on the resonance sphere $|\xi| = |\Delta E|$ without divergence, thereby ensuring the effective cascade generator remains uniformly bounded.
\end{remark}

\section{Bounds on the remainders}
\label{controlling-the-remainders}

Let  $\epsilon>0$ and $N=\ceil{ C_E\epsilon^{-2/\theta}}$ so that \lemref{lem:low-en-phi-1-0} holds.  
In this section, we show that the remainders of the truncated system  \eqref{eq:equation-of-F} corresponding to \eqref{eq:Remkm}, \eqref{eq:Remkmplus}, namely 
\begin{align}\label{eq:Rem-pm-def-1-0}
    \mathcal{Rem}_k^{(\pm)}(t) =& \mathcal{Rem}^{(\pm),\mathrm{bd}}_k(t) + \mathcal{Rem}^{(\pm),\mathrm{der}}_k(t) 
\end{align}
and
\begin{equation}\label{eq:Rem-u0-def-1-0}
    \Rema_{u_0}(t) = 2 \eta (w *  \Re{ e^{-i t | \nabla|} u_0}) \vfi_t,
\end{equation}
all vanish as $\eta \to 0$. 

Before bounding the remainder term involving the freely evolved initial field $e^{-i t |\nabla|} u_0$, we note a subtle but critical scaling issue. In the macroscopic cascade equation, the remainder is evaluated on the time scale $T = \eta^2 t$ and multiplied by the scaling factor $\eta^{-2}$. Due to the free dispersive decay of the radiation field $\mathcal{O}(\jpp{t}^{-1})$, this scaled remainder behaves as $\mathcal{O}(\eta^{-1})$ near $T=0$, which is singular as $\eta \to 0$. Therefore, this remainder cannot be bounded uniformly in $L^\infty_T$. Instead, we exploit the rapid decay by establishing a strong $L^1_T$ vanishing bound over the macroscopic time interval, which is sufficient to close the Duhamel expansion.

\begin{lemma}[Free evolution term]
\label{lem:initial-field-remainder}
Assume for the initial photon field $u_0 \in W^{3,1}(\mathbb{R}^3)$ and $ w \in L^1(\mathbb{R}^3)$, and that $\|\vfi_t\|_{L^2} = 1$ for all $t \ge 0$ for the boson wave function. Under these regularity conditions, the free half-wave evolution satisfies the uniform dispersive decay estimate
\begin{equation}
    \|e^{-it|\nabla|}u_0\|_{L^\infty(\mathbb{R}^3)} \le C_0 \|u_0\|_{W^{3,1}} \langle t \rangle^{-1}
\end{equation}
for all $t \ge 0$, where $\jpp{t}:=(1 + t^2)^{1/2}$. 
Define the remainder contribution from the initial field
\begin{equation}
\label{eq:Rem_u0(t)}
    \mathcal{Rem}_{u_0}(t) := 2\eta \Big( w * \Re{ e^{-i t |\nabla|} u_0 } \Big) \vfi_t,
\end{equation}
and its associated modal coefficients $\mathcal{Rem}_k(t) := e^{i t E_k} \inp{\rchi_k, \mathcal{Rem}(t)}$.
Then, for any finite macroscopic time $T > 0$, the time-integrated, scaled remainder vanishes strongly in $\ell^2(\C)$ as $\eta \to 0$, that is 
\begin{equation}
    \int_0^T \left\| \frac{1}{\eta^2} \left( \mathcal{Rem}_{u_0,k}(s/\eta^2) \right)_{k \in \mathbb{N}} \right\|_{\ell^2(\C)} ds \;\le\; C \braket{T} \|w\|_{L^1}\|u_0\|_{W^{3,1}} \eta \log\langle1/\eta^2\rangle .
\end{equation}
\end{lemma}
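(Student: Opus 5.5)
The proof has two parts: a pointwise dispersive estimate for the free half-wave flow, and an $L^1_T$ bound on the rescaled remainder that uses this decay.

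\emph{Step 1: dispersive decay.} I will split the flow as $e^{-it|\nabla|}u_0 = e^{-it|\nabla|}P_{\le 1}u_0 + \sum_{M\ge 1} e^{-it|\nabla|}P_M u_0$ with a smooth Littlewood--Paley decomposition. For $|t|\le 1$ the bound is trivial from $\|e^{-it|\nabla|}u_0\|_{L^\infty}\le \|\widehat{u_0}\|_{L^1}$. Since $|\xi|^3|\widehat{u_0}(\xi)|\lesssim \|u_0\|_{W^{3,1}}$ and $|\widehat{u_0}|\le\|u_0\|_{L^1}$, this gives $\|\widehat{u_0}\|_{L^1}\lesssim \|u_0\|_{W^{3,1}}$.

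For $|t|\ge1$ I will use the standard frequency-localized wave dispersive estimate in $\mathbb{R}^3$,
\[
\|e^{-it|\nabla|}P_M f\|_{L^\infty}\lesssim M^{3}\langle Mt\rangle^{-1}\|P_Mf\|_{L^1}.
\]
This follows from stationary phase on the sphere, using decay of $\widehat{d\sigma}$. Summing over dyadic $M\ge1$ gives $t^{-1}\sum_M M^2\|P_M u_0\|_{L^1}\lesssim t^{-1}\|u_0\|_{W^{3,1}}$. For the low-frequency piece, the kernel of $e^{-it|\nabla|}P_{\le1}$ has $L^\infty$ norm $\lesssim\langle t\rangle^{-1}$, which yields $\langle t\rangle^{-1}\|u_0\|_{L^1}$. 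Combining the two regimes gives the claimed $C_0\|u_0\|_{W^{3,1}}\langle t\rangle^{-1}$.

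\emph{Step 2: the modal $\ell^2$ norm.} Since $\{\rchi_k\}$ is orthonormal and the phases $e^{itE_k}$ are unimodular, Parseval gives
\[
\|(\mathcal{Rem}_{u_0,k}(t))_k\|_{\ell^2}=\|\mathcal{Rem}_{u_0}(t)\|_{L^2}.
\]
Young's inequality and $\|\vfi_t\|_{L^2}=1$ then give
\[
\|\mathcal{Rem}_{u_0}(t)\|_{L^2}\le 2\eta\|w\|_{L^1}\|e^{-it|\nabla|}u_0\|_{L^\infty}\le 2C_0\eta\|w\|_{L^1}\|u_0\|_{W^{3,1}}\langle t\rangle^{-1}.
\]

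\emph{Step 3: time integration.} With $t=S/\eta^2$ and the change of variables $dS=\eta^2dt$,
\[
\int_0^T\eta^{-2}\|\mathcal{Rem}_{u_0}(S/\eta^2)\|_{L^2}\,dS=\int_0^{T/\eta^2}\|\mathcal{Rem}_{u_0}(t)\|_{L^2}\,dt\le C\eta\|w\|_{L^1}\|u_0\|_{W^{3,1}}\log\langle T/\eta^2\rangle.
\]
Finally, $\log\langle T/\eta^2\rangle\le \log\langle T\rangle+\log\langle1/\eta^2\rangle+C\lesssim\langle T\rangle\log\langle 1/\eta^2\rangle$ for small $\eta$, which gives the stated bound.

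The main obstacle is Step 1: obtaining the $\langle t\rangle^{-1}$ rate with only the $W^{3,1}$ norm requires the frequency-localized stationary phase estimate. In particular, the sum over dyadic high frequencies must use the $M^2$ weight, which is where the three derivatives of $u_0$ are consumed. Steps 2 and 3 are routine.
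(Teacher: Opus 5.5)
\textbf{Gap in Step 1 at small times.} Your Steps 2 and 3 coincide with the paper's proof: Parseval, then H\"older/Young with $\|\phi_t\|_{L^2}=1$, then the substitution $t=S/\eta^2$ that produces the logarithm. The paper never proves the dispersive estimate; it only invokes it. So Step 1 goes beyond the paper. Your $|t|\ge 1$ argument is fine: the frequency-localized bound $M^3\langle Mt\rangle^{-1}$ together with $\|P_M u_0\|_{L^1}\lesssim M^{-3}\|\nabla^3 u_0\|_{L^1}$ makes $\sum_{M\ge1}M^2\|P_Mu_0\|_{L^1}$ converge.

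The gap is the regime $|t|\le 1$, which you call trivial. From $|\widehat{u_0}(\xi)|\lesssim \min\{1,|\xi|^{-3}\}\,\|u_0\|_{W^{3,1}}$ you cannot conclude $\widehat{u_0}\in L^1$. In $\mathbb{R}^3$ the integral $\int \min\{1,|\xi|^{-3}\}\,d\xi$ diverges logarithmically at high frequency, and $W^{3,1}(\mathbb{R}^3)$ is exactly the endpoint where pointwise Fourier decay is insufficient. Your Littlewood--Paley bound does not rescue this. For $|t|\le 1$ it gives $\sum_{M\ge1}\langle Mt\rangle^{-1}\|\nabla^3u_0\|_{L^1}\sim \log(2+1/|t|)\,\|\nabla^3 u_0\|_{L^1}$, which blows up as $t\to0$.

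\textbf{How to repair it.} For the integrated conclusion of the lemma this loss is harmless: $\log(2+1/t)$ is integrable on $[0,1]$, so Step 3 is unaffected. However, the uniform pointwise bound is part of the statement, and it needs a separate argument at small times. Here is one that works.

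Let $G=-\frac{1}{96\pi}|x|^3$, the fundamental solution of $\Delta^3$ in $\mathbb{R}^3$. Write $u_0=\sum_{|\alpha|=3}\frac{3!}{\alpha!}\,K_\alpha*\partial^\alpha u_0$ with $K_\alpha:=\partial^\alpha G$, which is bounded and homogeneous of degree $0$.

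In polar coordinates, $e^{-it|\nabla|}K_\alpha(x)=c\int_0^\infty e^{-itr}\,r^{-1}\int_{S^2}e^{irx\cdot\omega}\,\omega^\alpha\,d\sigma(\omega)\,dr$. Since $\omega^\alpha$ is odd, the inner integral is $O(\min\{r|x|,(r|x|)^{-1}\})$. The $r$-integral is therefore bounded by an absolute constant, uniformly in $x$ and $t$.

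This gives $\|e^{-it|\nabla|}u_0\|_{L^\infty}\lesssim \|\nabla^3 u_0\|_{L^1}$ for all $t$. Combined with your large-time argument, it yields the stated $\langle t\rangle^{-1}$ estimate.
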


\begin{proof}
By Parseval's identity and the orthonormality of the basis $\{\rchi_k\}$, the $\ell^2$-norm of the discrete modal sequence maps isometrically to the $L^2$-norm of the continuous spatial state,
\begin{equation}
    \left\| \left( \mathcal{Rem}_{u_0,k}(t) \right)_{k \in \mathbb{N}} \right\|_{\ell^2(\C)} = \|\mathcal{Rem}_{u_0}(t)\|_{L^2(\mathbb{R}^3)}.
\end{equation}
Utilizing the $L^2$-normalization of the macroscopic boson state ($\|\vfi_t\|_{L^2} = 1$), we separate the radiation field via H\"{o}lder's inequality, followed by Young's convolution inequality. That is, given $w \in L^1(\mathbb{R}^3)$,
\begin{align*}
    \|\mathcal{Rem}_{u_0}(t)\|_{L^2} &\le 2\eta \left\| w * \Re{ e^{-i t \omg(-i \nabla)} u_0 } \right\|_{L^\infty} \|\vfi_t\|_{L^2} \\
    &\le 2\eta \|w\|_{L^1} \left\| e^{-i t |\nabla|} u_0 \right\|_{L^\infty}.
\end{align*}
Applying the dispersive estimate for the free half-wave propagator yields the uniform-in-time continuous decay bound, 
\begin{equation}
    \|\mathcal{Rem}(t)\|_{L^2} \le C_0 \eta \|w\|_{L^1}\|u_0\|_{W^{3,1}} \langle t \rangle^{-1}.
\end{equation}
To evaluate the impact on the macroscopic cascade dynamics, we integrate the kinematically scaled remainder over the macroscopic time interval $s \in [0, T]$. Substituting the microscopic bound yields the integral
\begin{eqnarray}
    \int_0^T \frac{1}{\eta^2} \|\mathcal{Rem}_{u_0}(S/\eta^2)\|_{L^2} \, dS &\le& \int_0^T \frac{C_0 \eta \|w\|_{L^1}\|u_0\|_{W^{3,1}}}{\eta^2 \langle S/\eta^2 \rangle} \, dS
    \nonumber\\
    &=& C_0 \|w\|_{L^1}\|u_0\|_{W^{3,1}} \eta \int_0^T \frac{1}{\sqrt{\eta^4 + S^2}} \, ds,
    \nonumber\\
    &<& C \|w\|_{L^1} \|u_0\|_{W^{3,1}}
    \eta \;\ln\frac{T}{\eta^2} \,.
\end{eqnarray}   
which vanishes as $\eta \to 0$ for any finite macroscopic time $T > 0$. 
\end{proof}

\begin{remark}[Regularity of initial data]
While  \lemref{lem:initial-field-remainder} assumes $u_0 \in W^{3,1}(\mathbb{R}^3)$, the mathematically sharp requirement for the $O(t^{-1})$ free dispersive estimate in three dimensions is $u_0 \in B^2_{1,1}(\mathbb{R}^3) \hookrightarrow H^{1/2}(\R^3)$. If one restricts the data to standard Lebesgue-Sobolev spaces, any fractional regularity $s > 2$, such that $u_0 \in W^{s,1}(\mathbb{R}^3)$, is sufficient to guarantee the embedding $W^{s,1}(\mathbb{R}^3) \hookrightarrow B^2_{1,1}(\mathbb{R}^3)$ (but $B_{1,1}^2(\R^3) \hookrightarrow W^{s,1}(\R^3)$ for $s \leq 2$). 

However, we intentionally impose the integer regularity condition $W^{3,1}(\mathbb{R}^3)$ to establish a globally consistent functional framework for the full interacting system. In subsequent sections, controlling the nonlinear resonant interactions between the coupled fields requires distributing derivatives across product terms. Because fractional Sobolev spaces lack a clean Leibniz product rule, estimating the nonlinearities in the Duhamel expansion under the assumption $s > 2$ is analytically cumbersome. By assuming $W^{3,1}(\mathbb{R}^3)$ from the outset, we secure the free evolution decay while ensuring the data possesses sufficient integer regularity to close the nonlinear energy estimates using standard calculus.
\end{remark}

We recall from \eqref{eq:Remkm} the remainder term
\begin{align}
    \label{eq:Remkm-1-1}
    \Remkaa(t) =& \mathcal{Rem}^{(-),\mathrm{bd}}_k(t) + \mathcal{Rem}^{(-),\mathrm{der}}_k(t) 
\end{align}
where
\begin{align}
    \label{eq:Remkmbd-1-1}
    \mathcal{Rem}^{(-),\mathrm{bd}}_k(t) :=&  \sum_{k', j,j'}   \inp{w *(\cc{\rchi_{k'}}\rchi_k), \prc{\frac{ e^{- i t \omega(-i \nabla)} }{\omg(-i \nabla) - (E_{j} - E_{j'}) - i 0^+}}  w *(\cc{\rchi_{j'}} \rchi_{j}) } \nonumber\\
    & \times e^{i t (E_{k} - E_{k'})} A_j(0) \cc{A_{j'}}(0) A_{k'}(t) 
\end{align}
is the boundary term obtained from integration by parts, and
\begin{align}\label{Rmkmder-1-1} 
    \mathcal{Rem}^{(-),\mathrm{der}}_k(t)
    &:=
    \sum_{k',j,j'}
    \int_0^t ds\,
    \Bigg\langle
        w*(\overline{\rchi_{k'}}\rchi_k),
        \prc{\frac{1}{
            \omega(-i\nabla)-(E_j-E_{j'})-i0^+
        }}
        \nonumber\\
    &\hspace{3.0cm}\times
        e^{-i(t-s)\omega(-i\nabla)}
        w*(\overline{\rchi_{j'}}\rchi_j)
    \Bigg\rangle
    \nonumber\\
    &\hspace{1.5cm}\times
    e^{it(E_k-E_{k'})}
    e^{is(E_j-E_{j'})}
    \partial_s
    \left(
        A_j(s)\overline{A_{j'}(s)}
    \right)
    A_{k'}(t)
\end{align}
contains the time derivative of the amplitudes.
The energy modes are labeled by $0\leq j,j',k,k'\leq N(\epsilon)$.

Moreover, we recall from \eqref{eq:Remkmplus}
\begin{align}\label{eq:Remkmplus-1-1}
    \Remkab(t) =&   \mathcal{Rem}^{(+),\mathrm{bd}}_k(t)+\mathcal{Rem}^{(+),\mathrm{der}}_k(t)
\end{align}
with definitions given in \eqref{eq:Remkmplusbd} and
and \eqref{Rmkmderplus}.

\begin{lemma}
\label{lm:Rem-der-bd-1-0}
Let  $\epsilon>0$ and $N=N(\epsilon)=\ceil{ C_E\epsilon^{-2/\theta}}$ so that \lemref{lem:low-en-phi-1-0} holds. Let $I\subset\R_+$ be a compact interval that contains the set 
$$
    \Big\{|E_j-E_{j'}|\;\Big|\;\,0\leq j,j'\leq N\;,\;j\neq j'\;\Big\}\;\subset I\,.
$$ 
Moreover, assume $s>\frac12$, and let $0<\gamma<\min\{s-\frac12,1\}$. Then, for any finite $0<T<\infty$,
\begin{eqnarray}
    \label{eq:Remkmder-1-2}
    \int_0^T dS\; \|(\mathcal{Rem}^{(\pm),\mathrm{der}}_k(S/\eta^2))_{k}\|_{\ell^2}
    < C_{s,\gamma,I,\cE,w,T}  
    N \eta^{2\gamma}\log\langle 1/\eta^2\rangle 
\end{eqnarray}
converges to zero as $\eta\rightarrow0$.
\end{lemma}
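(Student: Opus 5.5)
We would bound $\mathcal{Rem}^{(-),\mathrm{der}}_k$ directly; the $(+)$ case is identical, with $e^{i(t-s)\omega(-i\nabla)}$ and the conjugate resolvent. The key structural input is that the amplitudes vary slowly. From the $A_k$ equation and \lemref{lem:bound-Fk2-Mkkd},
\begin{equation*}
\partial_s\big(A_j(s)\overline{A_{j'}(s)}\big)=O(\eta^2)+\text{(contributions of }\mathcal{Rem}_{u_0}\text{ and of the } \mathcal{Rem}^{(\pm)}\text{ themselves)}.
\end{equation*}
Summed over $j,j'\le N$ with Cauchy--Schwarz, this yields a bound of the form $\sum_{j,j'}|\partial_s(A_j\overline{A_{j'}})|\lesssim N\big(\eta^2 C_{\cE}+|\text{remainders}(s)|\big)$. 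The $u_0$-contribution is handled in time-integrated form by \lemref{lem:initial-field-remainder}, which contributes $\eta\log\langle 1/\eta^2\rangle$ after rescaling. The self-referential remainder terms are absorbed by a bootstrap/Gronwall argument on $[0,T/\eta^2]$.

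The second ingredient is a weighted local decay estimate for the resolvent composed with the half-wave propagator:
\begin{equation*}
\Big|\Big\langle f,\big(\omega(-i\nabla)-E-i0^+\big)^{-1}e^{-i\tau\omega(-i\nabla)}g\Big\rangle\Big|\le C_{s,\gamma,I}\,\langle\tau\rangle^{-\gamma}\,\|f\|_{L^2_s}\|g\|_{L^2_s},\qquad E\in I.
\end{equation*}
We would prove this from the limiting absorption principle \lemref{lem:uniform-lap-for-R-eps-pm}. The resolvent is Hölder continuous of order $\gamma<s-\frac12$ in the spectral parameter as a map $L^2_s\to L^2_{-s}$. Writing the operator via the spectral measure, the kernel in the radial variable $r=|\xi|$ is $\frac{1}{r-E-i0}e^{-i\tau r}$ against a $C^{\gamma}$ density $r^2\int_{S^2}\hat f\bar{\hat g}$. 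A standard Fourier-side argument then gives decay $\langle\tau\rangle^{-\gamma}$ (the Hölder modulus of the density, plus the explicit $\pi i\delta$ part $e^{-i\tau E}$, which is bounded but oscillatory). The oscillatory on-shell piece, $e^{-i(t-s)E}$ times a bounded coefficient, does not decay. Combined with $e^{is(E_j-E_{j'})}$ it gives a phase $e^{-it(E_j-E_{j'})}$ independent of $s$, so it simply contributes $|\int_0^t\partial_s(A_j\overline{A_{j'}})ds|\le 2$. We would handle it by noting that it is exactly the term already accounted for in the resonant $\Lambda+i\Gamma$ decomposition, so the genuine remainder is only the decaying part. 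We would verify this bookkeeping first, since it is the likeliest source of an error.

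Assembling the pieces, $|\mathcal{Rem}^{(-),\mathrm{der}}_k(t)|$ is bounded by
\begin{equation*}
\sum_{k'}|A_{k'}|\sum_{j,j'}\|w*(\chi_{k'}\chi_k)\|_{L^2_s}\|w*(\chi_{j'}\chi_j)\|_{L^2_s}\int_0^t\langle t-s\rangle^{-\gamma}|\partial_s(A_j\overline{A_{j'}})|\,ds.
\end{equation*}
The weighted norms are controlled as in \lemref{lm:FGR-LS-a-priori-1-0} and \lemref{lem:bound-Fk2-Mkkd} via Peetre and the energy bound, and the $\ell^2_k$ sum is taken with the same Schur-type argument. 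With $|\partial_s(\cdot)|\lesssim\eta^2$, the time integral gives $\eta^2\langle t\rangle^{1-\gamma}$. At $t=S/\eta^2$ this is $\eta^{2\gamma}S^{1-\gamma}$, and integrating over $S\in[0,T]$ yields $C_T N\eta^{2\gamma}$. The $u_0$-piece, being $L^1$ in time after scaling, contributes the $\log\langle1/\eta^2\rangle$ factor.

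The main obstacle is the local-decay estimate above with the $i0^+$ resolvent present. The LAP only gives Hölder regularity of order $\gamma$ near the resonance sphere, and the singular factor $(r-E)^{-1}$ must be paired with the oscillation in a principal-value sense. We would first try splitting $\frac{1}{r-E-i0}=\mathrm{PV}+i\pi\delta$, estimating the PV part by a Hölder-difference argument on $|r-E|<\tau^{-1}$ versus its complement.
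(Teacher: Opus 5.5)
There is a genuine gap, and it sits in the step you flagged yourself. Your target estimate is the right one; it is \lemref{lem:one-sided-half-wave-resolvent}. But your description of its on-shell part is wrong, and the repair you propose would not work.

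In $\mathcal{Rem}^{(-),\mathrm{der}}_k$ the resolvent carries $-i0^+$ and the propagator runs forward in time, $\tau=t-s>0$. $\mathcal{Rem}^{(+),\mathrm{der}}_k$ is the complex conjugate of the same causal combination. For this combination there is \emph{no} non-decaying on-shell piece. From $(\rho-E-i\epsilon)^{-1}=i\int_0^\infty e^{-iu(\rho-E)-\epsilon u}\,du$, the multiplier $e^{-i\tau\rho}(\rho-E-i\epsilon)^{-1}$ is a superposition of $e^{-iv\rho}$ with $v\ge\tau$ only.

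In your splitting this appears as a cancellation. As $\tau\to+\infty$, $\pv\int e^{-i\tau\rho}b(\rho)(\rho-E)^{-1}\,d\rho\to-i\pi e^{-i\tau E}b(E)$, which exactly cancels the $+i\pi e^{-i\tau E}b(E)$ from the $\delta$ part. So estimating the PV and $\delta$ parts separately, as in your last paragraph, cannot give decay for either one, however regular the spectral density is.

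The bookkeeping argument is also false on its own terms. $\Lambda^{-}_{k,k';j,j'}+i\Gamma^{-}_{k,k';j,j'}$ multiplies $A_j(t)\overline{A_{j'}(t)}A_{k'}(t)$, which is the $s=t$ endpoint of the integration by parts and has already been split off. An additional on-shell term proportional to $A_j(t)\overline{A_{j'}(t)}-A_j(0)\overline{A_{j'}(0)}$ would be $O(1)$. Since $\mathcal{Rem}^{(\pm)}_k$ enters $\partial_T F^\eta_k$ with coefficient $-i$ and no power of $\eta$, nothing could absorb it, and the lemma would be false. Such a term does occur for the anti-causal combination ($+i0$ with $\tau>0$), which is why the paper stresses the sign.

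The paper exploits exactly this one-sidedness. It applies a cutoff $\rchi$ localizing to within $\lambda_0=\delta_N$ of the resonance. Contour integration then shows that the Fourier-dual kernel of $e^{-it\rho}\rchi(\rho)(\rho-\lambda-i\epsilon)^{-1}$ lives essentially on $\{r\le -t+1\}$. A Hilbert--Schmidt bound with weights $\langle r\rangle^{-s}$, dual to the $H^s$ norm of the radial profiles $\rho\widehat f(\rho\omega)$, gives $\langle t\rangle^{\frac12-s}$. The off-resonant part is handled by \thmref{thm:half-wave-decay}, and the total constant is $1+\delta_N^{-2}$.

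A Hölder route can also work if you subtract $b(E)\psi(\rho)$ for a smooth bump $\psi$, instead of splitting into PV and $\delta$. The $\psi$-piece decays rapidly by the causal representation. The rest, $(b-b(E)\psi)/(\rho-E)$, is an $L^1$ function with a Hölder-type $L^1$ translation modulus, so the argument of \lemref{lem:translation-decay} applies.

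Three further points:

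\textbf{The $j=j'$ terms.} Your estimate for $E\in I\subset\R_+$ does not cover them; there the operator is $e^{-i\tau|\nabla|}|\nabla|^{-1}$. The paper needs its separate zero-mode estimate and the hypothesis $\||\nabla|^{-1/2}w\|_{L^2_s}<\infty$.

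\textbf{No bootstrap is needed for $\partial_s A$.} Directly from the PDE, $\|\partial_t(e^{it\cH_0}\phi_t)\|_{L^2}\le\eta^2\|v\|_{L^\infty}+2\eta\|w\|_{L^1}\|u_t\|_{L^\infty}\lesssim\eta\langle t\rangle^{-1}+\eta^2\log\langle t\rangle$. This uses $L^\infty$ dispersive decay applied to $u_0$ and to $w$ in the Duhamel term. By contrast, \lemref{lem:bound-Fk2-Mkkd}, which you cite, only controls the diagonal coefficients $M_{k,k'}$, not the full four-index sum in the exact equation for $A_k$.

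\textbf{The $\ell^2_k$ summation.} To get the factor $N$ as stated, sum over $k'$ first to form $\phi_t^<$. Then use Bessel's inequality in $k$ together with $\|(w*f)g\|_{L^2}\lesssim\|w\|_{L^2_s}\|f\|_{L^2_{-s}}\|g\|_{L^2_s}$. Putting absolute values inside the $k'$ sum loses orthogonality and costs additional $N$-dependent factors.
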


\begin{proof}
Let $\psi(t):=e^{it\cH_0}\phi_t=\sum_k A_k(t)\rchi_k$. Then, 
\begin{eqnarray}
    \label{eq:mf-system-wp-1-1}
    i\partial_t \psi_t&=&
    e^{it\cH_0}\Big(\eta^2 (v*|\vfi_t|^2) + \eta \, w*(u_t+\overline{u_t})\Bigr)\vfi_t,
    \nonumber\\
    u_t&=&e^{-it|\nabla|}\,u_0-i\eta\int_0^t d\tau \,
    (e^{-i(t-\tau)|\nabla|}\,w)*|\phi_\tau|^2
\end{eqnarray}
with initial data
\begin{equation}
\vfi_0\in Y^1(\mathbb{R}^3),\qquad u_0\in W^{3,1}(\mathbb{R}^3).
\end{equation}
It follows that
\begin{eqnarray}
    \|u_t\|_{L^\infty}&<&C\langle t \rangle^{-1}\|u_0\|_{W^{3,1}}
    +\eta \int_0^t d\tau \; \langle t-\tau\rangle^{-1} \|w\|_{W^{3,1}}
    \|\phi_t\|_{L^2}^2
    \nonumber\\
    &<&C\langle t \rangle^{-1}\|u_0\|_{W^{3,1}}
    +C \eta \|w\|_{W^{3,1}} \log \langle t\rangle \,.
\end{eqnarray}
Therefore,  with $t=T/\eta^2$,
\begin{eqnarray}
    \|\partial_t\psi_t\|_{L^2}
    &<&C\eta^2 \|v\|_{L^\infty}\|\phi_t\|_{L^2}^2
    +C \eta \|w\|_{L^1}\langle T/\eta^2 \rangle^{-1}\|u_0\|_{W^{3,1}}  \|\phi_t\|_{L^2}
    \nonumber\\
    &&+C \eta^2 \log \langle T/\eta^2 \rangle \|w\|_{L^1} \|w\|_{W^{3,1}} \|\phi_t\|_{L^2}
    \nonumber\\
    &<&C \eta \langle T/\eta^2\rangle^{-1}
    +C \eta^2\log \langle T/\eta^2 \rangle \,.
\end{eqnarray}
It follows that for any finite macroscopic time $T>0$,
\begin{eqnarray}\label{eq:der-t-Ak-apriori-1-0}
    \big(\sum_j|\partial_t A_j(t)|^2\big)^{1/2}
    \;=\; \|\partial_t\psi_t\|_{L^2}
    \;\leq\; C \eta \langle T/\eta^2\rangle^{-1}
    +C \eta^2\log \langle T/\eta^2 \rangle \,,
\end{eqnarray} 
uniformly in $k\geq0$.

In \eqref{Rmkmder-1-1}, the summation in the index $0\leq k'\leq N$ yields $\phi_t^<=\sum_{k'=0}^N e^{-it E_{k'}}A_{k'}(t)$. Taking the $\ell^2$ norm with respect to the index $k$, we then find
\begin{align}\label{Rmkmder-1-2} 
    \Big\|(\mathcal{Rem}^{(-),\mathrm{der}}_k(t))_k\Big\|_{\ell^2}
    &=
    \Big\|\sum_{j,j'}
    \Big(\int_0^t ds\,
    \Bigg\langle
        w*(\overline{\phi_t^<}\rchi_k),
        \Big(\frac{1}{
            \omega(-i\nabla)-(E_j-E_{j'})-i0^+
        }
        \nonumber\\
    &\hspace{3.0cm}\times
        e^{-i(t-s)\omega(-i\nabla)}
        w*(\overline{\rchi_{j'}}\rchi_j)
    \Bigg\rangle
    \nonumber\\
    &\hspace{1.5cm}\times
    e^{it E_k}
    e^{is(E_j-E_{j'})}
    \partial_s
    \left(
        A_j(s)\overline{A_{j'}(s)}
    \right) \;\Big)_k \Big\|_{\ell^2}
    \\
    &\leq C \sum_{j,j'}
    \int_0^t ds\, 
    \Big\| \Big( w* \Big(\frac{1}{
            \omega(-i\nabla)-(E_j-E_{j'})-i0^+
        }
        \nonumber\\
    &\hspace{3.0cm}\times
        e^{-i(t-s)\omega(-i\nabla)}
        w*(\overline{\rchi_{j'}}\rchi_j) \Big)\phi_t^< \Big\|_{L^2}
    \nonumber\\
    &\hspace{1.5cm}\times
    |\partial_s
    \left(
        A_j(s)\overline{A_{j'}(s)}
    \right)| \,.
    \label{Rmkmder-1-3} 
\end{align}
We next use 
\begin{eqnarray}\label{eq:triple-L2s-bd-1-0}
    \| (w*f) g \|_{L^2} \leq C \| w \|_{L^2_s}\|f\|_{L^2_{-s}}\|g\|_{L^2_s}
\end{eqnarray}
which follows from
\begin{eqnarray}
    | (w*f)(x) | \leq  C \| w(x - \; \cdot \; )\|_{L^2_s} \| f \|_{L^2_{-s}} 
    \leq C \langle x\rangle^s \| w \|_{L^2_s} \| f \|_{L^2_{-s}} \,.
\end{eqnarray}
The latter is obtained from 
\begin{eqnarray}
    \| w(x - \; \cdot \; )\|_{L^2_s}^2 &=& \int dy \; \langle y \rangle^{2s} \; |w(x-y)|^2
    \nonumber\\
    &\leq& C \langle x \rangle^{2s} \int dy \; \langle x-y \rangle^{2s} \; |w(x-y)|^2
\end{eqnarray}
using Peetre's inequality, $\langle y \rangle^{2s}\leq C \langle x \rangle^{2s}\langle x-y \rangle^{2s}$. 

Let
$$
    I_N:=\Big\{|E_j-E_{j'}|\;\Big|\;\,0\leq j,j'\leq N\;,\;j\neq j'\;\Big\} \,.
$$ 
Clearly,
\begin{eqnarray}
    0<\delta_N\leq \min I_N\,.
\end{eqnarray}
Therefore, the compact interval $J_N:=[\delta_N, E_N-E_0]\subset\R_+$  contains $I_N$. We may therefore use \lemref{lem:one-sided-half-wave-resolvent} with $\lambda_0=\delta_N$.

We split the sum into $\sum_{j=j'}$ and $\sum_{j\neq j'}$, where if $E_j>E_{j'}$,
\begin{eqnarray}
    \lefteqn{
    \eqref{Rmkmder-1-3}\big|_{\sum_{j\neq j'}}
    }
    \nonumber\\
    &\leq& 
    C \| w \|_{L^2_s} \| \phi_t^< \|_{L^2_s} \sum_{j\neq j'}
    \int_0^t ds\,
    \Big\|\frac{1}{
            \omega(-i\nabla)-(E_j-E_{j'})-i0^+
        } 
        e^{-i(t-s)\omega(-i\nabla)}\Big\|_{L^2_s\rightarrow L^2_{-s}}
        \nonumber\\
    &&\qquad\times 
    \|w*(\overline{\rchi_{j'}}\rchi_j)\|_{L^2_s}
        |\partial_s
    \left(
        A_j(s)\overline{A_{j'}(s)}
    \right)|
    \nonumber\\
    &\leq& C_{s,\gamma}(1+\delta_N^{-2}) 
    \| w \|_{L^2_s} \| \phi_t^< \|_{L^2_s} 
    \int_0^t ds\, \langle t-s\rangle^{-\gamma}
        \label{Rmkmder-1-4}\\
    &&\qquad\times 
    \|w\|_{L^2_s} \sum_{j\neq j'} \|\overline{\rchi_{j'}}\rchi_j\|_{L^2_s}
    \left(
        |\partial_s A_j(s)||A_{j'}(s)|+|A_j(s)||\partial_s A_{j'}(s)|
    \right)
    \nonumber\\
    &\leq& C_{s,\gamma}(1+\delta_N^{-2})
    \| w \|_{L^2_s}^2 \| \phi_t^< \|_{L^2_s} 
    \int_0^t ds \langle t-s \rangle^{-\gamma}
        \label{Rmkmder-1-5}\\
    &&\qquad\times 
    \sum_{j\neq j'} \sqrt{1+E_{\min\{j,j'\}} } 
    \left(
        |\partial_s A_j(s)||A_{j'}(s)|+|A_j(s)||\partial_s A_{j'}(s)|
    \right)
    \nonumber\\
    &\leq& C_{s,\gamma}(1+\delta_N^{-2}) \braket{T}\; 
    \| w \|_{L^2_s}^2 \| \phi_t^< \|_{L^2_s}  
    \int_0^t ds \; \langle t-s \rangle^{-\gamma} \|\partial_s A(s)\|_{\ell^2}
    \label{Rmkmder-1-6}\\
    &&\qquad\times 
    N\Big(\sum_{j} (1+E_j ) \, |A_j(s)|^2\Big)^{\frac12}
    \nonumber\\
    &\leq& C_{s,\gamma,\cE,w}(1+\delta_N^{-2}) \braket{T}\; N  \; 
    \eta^{2\gamma}\log \, \langle 1/\eta^2\rangle 
    \,.
    \label{Rmkmder-1-8}
\end{eqnarray}
using \lemref{lem:one-sided-half-wave-resolvent} to pass to \eqref{Rmkmder-1-4}. We recalled \eqref{eq:chi-L2s-bd-1-0}  to pass to \eqref{Rmkmder-1-5}. To pass to \eqref{Rmkmder-1-6}, we used Cauchy-Schwarz in the $j,j'$ summation. To obtain the last line, we used \eqref{eq:der-t-Ak-apriori-1-0}, and the fact that both $\| \phi_t^< \|_{L^2_s}<C_{\cE}$, $(\sum_{j} (1+E_j ) |A_j(s)|^2)^{\frac12}<C_{\cE}$ are uniformly bounded a constant that depends on the value of the conserved energy, and monotonicity $E_{\min\{j,j'\}}<E_{\max\{j,j'\}}$.  

If, on the other hand, $E_j<E_{j'}$, we use
\begin{eqnarray}\label{eq:Posen-dispbd-1-0}
    \lefteqn{
    \| e^{-it|\nabla|}(|\nabla|+|E_j-E_{j'}|-i\epsilon)^{-1}w\|_{L^2_{-s}}
    }
    \nonumber\\
    &\leq &C_s \braket{t}^{-s}
    \|(|\nabla|+|E_j-E_{j'}|)^{-1}w\|_{L^2_{s}} 
    \nonumber\\
    &\leq&C_s \braket{t}^{-s}
    \; (|E_j-E_{j'}|)^{-1} +|E_j-E_{j'}|)^{-2} ) 
    \; \|w\|_{L^2_{s}}
    \nonumber\\
    &\leq&C_s (1+\delta_N^{-2})\braket{t}^{-s}
    \;   \|w\|_{L^2_{s}}
\end{eqnarray}
because $|E_j-E_{j'}|^{-1}\leq \delta_N^{-1}$. Thus, the same upper bound \eqref{Rmkmder-1-8} is valid.

For the sum $\sum_{j=j'}$, we use the zero mode decay estimate in \lemref{lem:one-sided-half-wave-resolvent} to obtain
\begin{eqnarray}
\eqref{Rmkmder-1-3}\big|_{\sum_{j=j'}} &\leq& 
    C \| |\nabla|^{-\frac12} w \|_{L^2_s}^2 \| \phi_t^< \|_{L^2_s} \sum_{j}
    \int_0^t ds\,
    \langle t-s\rangle^{-s}
        \nonumber\\
    &&\qquad\times 
    \||\nabla|^{-\frac12} w\|_{L^2_s}
    \|(\overline{\rchi_{j}}\rchi_j)\|_{L^1_s}
        |\partial_s
    \left(
        A_j(s)\overline{A_{j'}(s)}
    \right)|
    \nonumber\\
    &\leq& C_{s,\gamma} \; 
    \| |\nabla|^{-\frac12} w \|_{L^2_s}^2 \| \phi_t^< \|_{L^2_s}
    \int_0^t ds\,
    \langle t-s\rangle^{-s}
    \|\partial_s A(s)\|_{\ell^2}
    \label{Rmkmder-1-7}\\
    &&\qquad\times 
    N\Big(\sum_{j} (1+E_j ) |A_j(s)|^2\Big)^{\frac12}   
    \nonumber\\
    &\leq& C_{s,\gamma,\cE,w} \; \braket{T}   \; 
    N \; \eta^{2\gamma} \log\langle1/\eta^2\rangle 
    \,,
\end{eqnarray}
using similar arguments as above for $\sum_{j\neq j'}$.

It thus follows that
\begin{eqnarray}
    \int_0^T dS\; \|(\mathcal{Rem}^{(\pm),\mathrm{der}}_k(S/\eta^2))_{k}\|_{\ell^2}
    < C_{s,\gamma,\cE,w} (1+\delta_N^{-2})\braket{T}
    \; N \eta^{2\gamma}\log\langle 1/\eta^2\rangle
\end{eqnarray}
as claimed.
\end{proof}

\begin{lemma}
\label{lm:Rem-bd-bd-1-0}
Let  $\epsilon>0$ and $N=N(\epsilon)=\ceil{ C_E\epsilon^{-2/\theta}}$ so that \lemref{lem:low-en-phi-1-0} holds. Recall  
\begin{eqnarray}
    \delta_N = \min_{0\leq k,k';j,j'\leq N}\{|\Delta E_{k,k';j,j'}|\;{\rm nonzero}\} \,,
\end{eqnarray}  
and the Diophantine condition in \hypref{asm:spectral-genericity}. Then, for any finite $0<T<\infty$,
\begin{eqnarray}\label{eq:Remkmbd-1-2}
    \int_0^T dS\; \|(\mathcal{Rem}^{(\pm),\mathrm{bd}}_k(S/\eta^2))_{k}\|_{\ell^2}
    < C_{s,\gamma,\cE,w} (1+\delta_N^{-2})\braket{T}    N  \eta^{2\gamma} \,.
\end{eqnarray}
Furthermore, we have
\begin{eqnarray}
    \Big\|\int_{0}^T dS\; \mathcal{Rem}^{\Delta E\neq0}(S/\eta^2)\Big\|_{\ell^2}
    &\leq& C_{s,v,w}\braket{T} 
    \frac{(1+E_N)}{\delta_N}  N^{\frac32}
    \eta^2   \log \langle 1/\eta^2 \rangle \,.
\end{eqnarray}
\end{lemma}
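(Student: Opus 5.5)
The two bounds have different sources of smallness. For the boundary term the smallness comes from dispersive decay in $t$. For the non-resonant term it comes from the fast oscillation of $e^{iT\Delta E/\eta^2}$ with $|\Delta E|\ge\delta_N$.

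\emph{Boundary terms.} As in the proof of \lemref{lm:Rem-der-bd-1-0}, I first carry out the $k'$ summation in \eqref{eq:Remkmbd-1-1}, which produces $\phi_t^<$. Parseval in $k$ then gives
\begin{equation*}
\big\|(\mathcal{Rem}^{(-),\mathrm{bd}}_k(t))_k\big\|_{\ell^2}\le \sum_{j,j'\le N}\Big\|\Big(w*\big(e^{-it\omega(-i\nabla)}(\omega(-i\nabla)-(E_j-E_{j'})-i0^+)^{-1}w*(\overline{\rchi_{j'}}\rchi_j)\big)\Big)\phi_t^<\Big\|_{L^2}\,|A_j(0)||A_{j'}(0)|.
\end{equation*}
By the triple estimate \eqref{eq:triple-L2s-bd-1-0}, each summand is at most $C\|w\|_{L^2_s}\|\phi_t^<\|_{L^2_s}$ times the $L^2_s\to L^2_{-s}$ norm of $e^{-it\omega}(\omega-(E_j-E_{j'})-i0^+)^{-1}$, applied to $w*(\overline{\rchi_{j'}}\rchi_j)\in L^2_s$.

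I then split into three cases.
\begin{itemize}
\item For $E_j>E_{j'}$, I use \lemref{lem:one-sided-half-wave-resolvent} with $\lambda_0=\delta_N$, which gives decay $C_{s,\gamma}(1+\delta_N^{-2})\langle t\rangle^{-\gamma}$.
\item For $E_j<E_{j'}$ the resolvent is nonsingular, and the bound \eqref{eq:Posen-dispbd-1-0} gives $\langle t\rangle^{-s}$.
\item For $j=j'$ I use the zero-mode estimate together with $\||\nabla|^{-1/2}w\|_{L^2_s}<C$.
\end{itemize}
The $j,j'$ sum is controlled by Cauchy--Schwarz as in \eqref{Rmkmder-1-6}, using $\|\overline{\rchi_{j'}}\rchi_j\|_{L^1_s}\le C\sqrt{1+E_{\min\{j,j'\}}}$ and the energy bound $\sum_j(1+E_j)|A_j(0)|^2\le C_\cE$. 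This produces a factor $N$.

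Finally I rescale $t=S/\eta^2$. We have $\int_0^T\langle S/\eta^2\rangle^{-\gamma}dS\le \eta^{2\gamma}\int_0^T S^{-\gamma}dS\le C\langle T\rangle\eta^{2\gamma}$, since $\gamma<1$. Note that no $\eta^{-2}$ prefactor appears, because $\mathcal{Rem}^{(\pm)}$ already carries the $\eta^2$ from \eqref{eq:ODE-for-Fk(T)}. The $(+)$ case is identical, with the roles of the two sign cases exchanged.

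\emph{Non-resonant term.} I integrate by parts in $S$, writing
\begin{equation*}
e^{iS\Delta E/\eta^2}=\frac{\eta^2}{i\Delta E}\,\partial_S e^{iS\Delta E/\eta^2},\qquad |\Delta E|\ge\delta_N .
\end{equation*}
This produces two kinds of terms. The boundary terms are bounded by $\frac{\eta^2}{\delta_N}|M_{k,k';j,j'}||F_jF_{j'}F_{k'}|$ evaluated at $S=0$ and $S=T$. The bulk term contains $\partial_S(\overline{F^\eta_{j'}}F^\eta_jF^\eta_{k'})$, and for it I use \eqref{eq:der-t-Ak-apriori-1-0} rescaled. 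This gives $\|\partial_S F^\eta(S)\|_{\ell^2}\le C\eta^{-1}\langle S/\eta^2\rangle^{-1}+C\log\langle S/\eta^2\rangle$, whose $S$-integral over $[0,T]$ is $O(\langle T\rangle\log\langle1/\eta^2\rangle)$. Here I take $F$ in the definition to mean $F^\eta$.

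The coefficients are bounded by \lemref{lm:FGR-LS-a-priori-1-0} and the Hartree bound: $|M_{k,k';j,j'}|\le C_{s,v,w}\sqrt{1+E_{\min\{k,k'\}}}\sqrt{1+E_{\min\{j,j'\}}}\le C(1+E_N)$. It remains to sum the cubic expressions over $k',j,j'\le N$ and take the $\ell^2$ norm in $k\le N$. Using $\|F\|_{\ell^2}=1$, I would estimate $\sum_{j,j'}|F_j||F_{j'}|\le N$ and $\sum_{k'}|F_{k'}|\le\sqrt N$. I would spend one $\sqrt N$ and one $\ell^2$ factor, aiming for the claimed power $N^{3/2}$.

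\emph{Expected main obstacle.} The delicate step is this combinatorial counting in the non-resonant bound: getting exactly $N^{3/2}$, and not $N^2$, while the coefficients are only bounded in absolute value and not summable. If the crude count gives $N^2$, I would instead fold the $k'$-sum back into $\phi^<$ via Parseval, as was done for the boundary terms. A second point needs care: the boundary part of the first claim carries no logarithm, which requires using only the initial data $A(0)$ and no derivatives.
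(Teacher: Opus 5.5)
Your treatment of $\mathcal{Rem}^{(\pm),\mathrm{bd}}$ follows the paper's argument. That means folding $k'$ into $\phi_t^<$, the triple estimate \eqref{eq:triple-L2s-bd-1-0}, the one-sided resolvent decay with $\lambda_0=\delta_N$, the three sign cases, Cauchy--Schwarz producing the factor $N$, and $\int_0^T\langle S/\eta^2\rangle^{-\gamma}dS\le C\langle T\rangle\eta^{2\gamma}$. For the non-resonant term, the integration by parts in $S$ and the rescaled bound \eqref{eq:der-t-Ak-apriori-1-0} also match the paper.

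The gap is the step you flagged. With only the pointwise bound $|M_{k,k';j,j'}|\le C(1+E_N)$, taking the $\ell^2$ norm over $k\le N$ costs a factor $\sqrt{N+1}$. This comes on top of $\sum_{k',j,j'\le N}|F^\eta_{k'}F^\eta_jF^\eta_{j'}|\le (N+1)^{3/2}$, so you land at $N^2$. With only pointwise information on $M$ this count cannot be improved, whatever you do with Cauchy--Schwarz on the $F$'s.

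Your fallback does not repair this. After the integration by parts every term carries $1/\Delta E_{k,k';j,j'}$, and the sum is restricted to $\Delta E_{k,k';j,j'}\neq0$. Both couple $k'$ to $(k,j,j')$, so the $k'$-sum can no longer be reassembled into $\phi^<$. Without the integration by parts, on the other hand, there is no factor $\eta^2$ to gain.

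The missing idea is orthogonality in the free index $k$ rather than in $k'$. Fix $(k',j,j')$. The constituents of $M_{k,k';j,j'}$ are all bounded by $|\langle\rchi_k,X_{k',j,j'}\rangle|$ for a single $L^2$ function $X_{k',j,j'}$. This function is built from $v*(\cc{\rchi_{j'}}\rchi_j)\,\rchi_{k'}$ and $\bigl(w*(|\nabla|\mp(E_j-E_{j'})\mp i0^+)^{-1}w*(\cc{\rchi_{j'}}\rchi_j)\bigr)\rchi_{k'}$. Bessel's inequality, \eqref{eq:triple-L2s-bd-1-0} and the uniform LAP then give $\|(M_{k,k';j,j'})_{k\le N}\|_{\ell^2}\le C_{s,v,w}(1+E_N)$. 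For $j=j'$ the bound is $C_{s,v,w}$, via $\|w\|_{\dot H^{-1/2}}$.

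Your integration-by-parts bounds have, for each $k$, the form $|M_{k,k';j,j'}|\,\eta^2\delta_N^{-1}$ times a quantity independent of $k$, since the phases have modulus one and $|\Delta E|\ge\delta_N$. Apply Minkowski over $(k',j,j')$ and then this Bessel bound. You get $C(1+E_N)\delta_N^{-1}\eta^2$ times the sum over $k',j,j'\le N$ of the cubic terms and their $S$-derivatives, and that sum is where $N^{3/2}$ comes from.

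Your $N^2$ version would still be absorbed by the $N^{3+2\kappa}$ used in the proof of \thmref{thm:convergence-eta}. It does not, however, establish the lemma as stated.
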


\begin{proof}
As in the previous lemma, let
$I_N=\{|E_j-E_{j'}|\;\big|\;\,0\leq j,j'\leq N,j\neq j'\}$, and we again use that $0<\delta_N\leq \min I_N$.
Because the compact interval $J_N:=[\delta_N, E_N-E_0]\subset\R_+$  contains $I_N$, we may use \lemref{lem:one-sided-half-wave-resolvent} with $\lambda_0=\delta_N$.
Moreover, assume $s>\frac12$, and let $0<\gamma<\min\{s-\frac12,1\}$. 

Summation in the index $0\leq k'\leq N$ in \eqref{eq:Remkmbd-1-1}  yields $\phi_t^<=\sum_{k'=0}^N e^{-it E_{k'}}A_{k'}(t)$.
The $\ell^2$ norm with respect to the index $k$ results in
\begin{align}
        \label{eq:Remkmbd-1-3}
    \Big\|\big(\mathcal{Rem}^{(-),\mathrm{bd}}_k(t)\big)_k\Big\|_{\ell^2} 
    =& \Big\| \sum_{j,j'} 
    \Big( \inp{w *(\cc{\phi_t^<}\rchi_k),  \frac{ e^{- i t |\nabla|}  }{|\nabla| - (E_{j} - E_{j'}) - i 0^+} 
    w *(\cc{\rchi_{j'}} \rchi_{j}) } \nonumber\\
    & \times e^{i t E_{k} } A_j(0) \cc{A_{j'}}(0) \; \Big)_k \Big\|_{\ell^2}
    \nonumber\\
    \leq&C\sum_{j,j'} \Big\| 
    \Big(w*\Big(\frac{ e^{- i t |\nabla|}  }{|\nabla| - (E_{j} - E_{j'}) - i 0^+} 
    w *(\cc{\rchi_{j'}} \rchi_{j}) \Big) \phi_t^<  \Big\|_{L^2} \nonumber\\
    & \qquad \times | A_j(0)| \; |A_{j'}(0)| 
\end{align}
We split the sum into $\sum_{j=j'}$ and $\sum_{j\neq j'}$.
Using \eqref{eq:triple-L2s-bd-1-0} and \eqref{eq:chi-L2s-bd-1-0}, we obtain that when $E_j>E_{j'}$, the sum $\sum_{j\neq j'}$ is bounded by
\begin{align}\label{eq:Rembd-bd-1-2}
    \eqref{eq:Remkmbd-1-3}\big|_{\sum_{j\neq j'}}
    \leq& C \sum_{j,j'} \|w\|_{L^2_s}  \Big\| \frac{ e^{- i t |\nabla|}  }{|\nabla| - (E_{j} - E_{j'}) - i 0^+} \Big\|_{L^2_s\rightarrow L^2_{-s}}
     \| \phi_t^< \|_{L^2_s} \nonumber\\
    & \qquad \times \|w *(\cc{\rchi_{j'}} \rchi_{j})\|_{L^2_s} \; | A_j(0)| \; |A_{j'}(0)|
    \nonumber\\
    \leq& C_{s,\gamma}(1+\delta_N^{-2}) \|w\|_{L^2_s} \langle T/\eta^2 \rangle^{-\gamma}  
    \nonumber\\
    & \qquad \times  \sum_{j,j'}  \|w\|_{L^2_s} \|\cc{\rchi_{j'}} \rchi_{j}\|_{L^1_s} \; | A_j(0)| \; |A_{j'}(0)|
    \nonumber\\ 
    \leq& C_{s,\gamma}(1+\delta_N^{-2})  \|w\|_{L^2_s}^2  
    \langle T/\eta^2 \rangle^{-\gamma}  
    \nonumber\\
    & \qquad \times \sum_{j,j'}   \sqrt{1+E_{\min\{j,j'\}}} \; | A_j(0)| \; |A_{j'}(0)|
    \nonumber\\
    \leq&  C_{s,\gamma}(1+\delta_N^{-2})  \|w\|_{L^2_s}^2   
    \langle T/\eta^2 \rangle^{-\gamma}  
    \nonumber\\
    & \qquad \times N\Big( \sum_{j}  \sqrt{1+E_{j}} \; | A_j(0)|^2\Big)^{\frac12} \; 
    \Big( \sum_{j'} |A_{j'}(0)|\Big)^{\frac12}
    \nonumber\\
    \leq&  C_{s,\gamma,\cE,w}(1+\delta_N^{-2})    N \langle T/\eta^2 \rangle^{-\gamma}   \,. 
\end{align}
If $E_j<E_{j'}$, we use \eqref{eq:Posen-dispbd-1-0}, and the same upper bound \eqref{eq:Rembd-bd-1-2} is valid.

For the sum $\sum_{j = j'}$, we find
\begin{align}\label{eq:Rembd-bd-1-3}
    \eqref{eq:Remkmbd-1-3}\big|_{\sum_{j = j'}}
    =&\Big\| \sum_{j} 
    \Big( \inp{w *(\cc{\phi_t^<}\rchi_k),  \frac{ e^{- i t |\nabla|}  }{|\nabla|- i 0^+} 
    w *(\cc{\rchi_{j'}} \rchi_{j}) } \nonumber\\
    &\qquad \times e^{i t (E_{k} - E_{k'})} A_j(0) \cc{A_{j'}}(0) \; \Big)_k \Big\|_{\ell^2}
    \nonumber\\
    \leq& C \sum_{j} \||\nabla|^{-\frac12}w\|_{L^2_s}  
     \| \phi_t^< \|_{L^2_s} \nonumber\\
    & \qquad \times \|(|\nabla|^{-\frac12} w) *(\cc{\rchi_{j}} \rchi_{j})\|_{L^2_s} \; | A_j(0)|^2
    \nonumber\\
    \leq& C  \||\nabla|^{-\frac12}w\|_{L^2_s}^2  \langle t\rangle^{-s}
     \| \phi_t^< \|_{L^2_s} N 
     \Big(\sum_j| A_j(0)|^2 \Big)^{\frac12}
     \Big(\sum_j(E_j+1)| A_j(0)|^2 \Big)^{\frac12}
     \nonumber\\
    \leq& C_{s,\cE,w}   \; N \langle T/\eta^2\rangle^{-s} \,.
\end{align}
Therefore,
\begin{eqnarray}
    \int_0^T dS\; \eqref{eq:Remkmbd-1-3} \leq 
    C_{s,\gamma,\cE,w} \braket{T}  (1+\delta_N^{-2}) N \; \eta^{2\gamma}
\end{eqnarray}
This is the desired bound.

Finally, we recall
\begin{eqnarray}
    \lefteqn{
    \mathcal{Rem}^{\Delta E\neq0}_k(S/\eta^2)
    }
    \nonumber\\
    &&=
    \sum_{\substack{ 0\leq k', j,j'\leq N \\\Delta E_{k,k';j,j'} \neq 0} }  P_N  M_{k,k';j,j'} 
    e^{iS\Delta E_{k,k';j,j'}/\eta^2}
     \cc{F_{j'}^\eta(S)} F_j^\eta(S) F_{k'}^\eta(S)\,
\end{eqnarray}
so that 
\begin{eqnarray}
 \label{eq:RemkDEd-1-0}
    \lefteqn{
    \Big\|\int_0^T dS \;\big(\mathcal{Rem}^{\Delta E\neq0}_k(S/\eta^2)\big)_k\Big\|_{\ell^2_{k\leq N}} 
    }
    \nonumber\\
    &=& \Big\| \Big(\sum_{k',j,j'\leq N} M_{k,k';j,j'}   \int_0^T dS\;
    e^{iS\Delta E_{k,k';j,j'}/\eta^2} 
    \cc{F_{j'}^\eta(S)} F_j^\eta(S) F_{k'}^\eta(S)
    \Big)_k 
     \Big\|_{\ell^2_{k\leq N}}
    \nonumber
\end{eqnarray}
The time integral is bounded by 
\begin{eqnarray}
    \lefteqn{
    \Big|\int_0^T dS\;
    e^{iS\Delta E_{k,k';j,j'}/\eta^2} 
    \cc{F_{j'}^\eta(S)} F_j^\eta(S) F_{k'}^\eta(S) \Big|
    }
    \nonumber\\
    &\leq&
    \Big|\frac{\eta^2}{i\Delta E_{k,k';j,j'}}
    \Big(e^{iT\Delta E_{k,k';j,j'}/\eta^2}
    \cc{F_{j'}^\eta(T)} F_j^\eta(T) F_{k'}^\eta(T)
    -\cc{F_{j'}^\eta(0)} F_j^\eta(0) F_{k'}^\eta(0)\Big)\Big|
    \nonumber\\
    &&+\Big|\frac{\eta^2}{i\Delta E_{k,k';j,j'}}
    \int_0^T dS\;
    e^{iS\Delta E_{k,k';j,j'}/\eta^2} 
    \partial_S\big(\cc{F_{j'}^\eta(S)} F_j^\eta(S) F_{k'}^\eta(S)\big)\Big|
    \nonumber\\
    &\leq&\frac{2\eta^2}{\delta_N}\Big(
    |\cc{F_{j'}^\eta(0)} F_j^\eta(0) F_{k'}^\eta(0)|
    + |\cc{F_{j'}^\eta(T)} F_j^\eta(T) F_{k'}^\eta(T)|
    \Big)
    \nonumber\\
    &&+\frac{\eta^2}{\delta_N}\int_0^T dS\;
    \big|\partial_S\big(\cc{F_{j'}^\eta(S)} F_j^\eta(S) F_{k'}^\eta(S)\big)\big|
    \,,
\end{eqnarray}
where
\begin{eqnarray}
    \delta_N := \min_{0\leq k,k';j,j'\leq N}\{|\Delta E_{k,k';j,j'}|
    \;{\rm nonzero}\} \,.
\end{eqnarray}
Next, we control $M_{k,k';j,j'}$. Taking the $\ell^2$ norm in the index $k\leq N$ while keeping $k',j,j'$ fixed,
\begin{eqnarray}
    \|M_{k,k';j,j'}\|_{\ell^2_{k\leq N}}
    &\leq& 
    \|\inp{\rchi_k, v *(\cc{ \rchi_{j'}} \rchi_j) \rchi_{k'}} \|_{\ell^2_{k\leq N}}
    \nonumber\\
    &&\quad+ 
    \Big\|\inp{w *(\cc{\rchi_{k'}}\rchi_k),  
    \frac{ 1 }{|\nabla| - (E_{j} - E_{j'}) - i 0^+} 
    w *(\cc{\rchi_{j'}} \rchi_{j}) }
    \Big)_k \Big\|_{\ell^2_{k\leq N}}
    \nonumber\\
    &&\quad+ 
    \Big\|\inp{w *(\cc{\rchi_{k'}}\rchi_k),  
    \frac{ 1 }{|\nabla| + (E_{j} - E_{j'}) + i 0^+} 
    w *(\cc{\rchi_{j'}} \rchi_{j}) }
    \Big)_k \Big\|_{\ell^2_{k\leq N}}
\end{eqnarray}
For the Hartree term,
\begin{eqnarray}
    \|\inp{\rchi_k, v *(\cc{ \rchi_{j'}} \rchi_j) \rchi_{k'}} \|_{\ell^2_{k\leq N}}
    &\leq&\|v *(\cc{ \rchi_{j'}} \rchi_j) \rchi_{k'}\|_{L^2}
    \nonumber\\
    &\leq&\|v\|_{L^\infty}
\end{eqnarray}
since $\|\rchi_{k'}\|_{L^2}=1$.
Moreover, we use for $j\neq j'$ and $E_j>E_{j'}$,
\begin{eqnarray}\label{eq:res-bd-Rem-1-1-0}
    \lefteqn{
    \Big\|\inp{w *(\cc{\rchi_{k'}}\rchi_k),  
    \frac{ 1 }{|\nabla| - (E_{j} - E_{j'}) - i 0^+} 
    w *(\cc{\rchi_{j'}} \rchi_{j}) }
    \Big)_k \Big\|_{\ell^2_{k\leq N}} 
    }
    \nonumber\\
    &\leq&C 
    \Big\|w *\Big( 
    \frac{ 1 }{|\nabla| - (E_{j} - E_{j'}) - i 0^+} 
    w *(\cc{\rchi_{j'}} \rchi_{j}) 
    \Big)\rchi_{k'} \Big\|_{L^2}
    \nonumber\\
    &\leq&C_s \|w\|_{L^2_s}\|w*(\cc{\rchi_{j'}} \rchi_{j}) \|_{L^2_s}
    \|\rchi_{k'} \|_{L^2_s}
    \nonumber\\
    &\leq&C_s \|w\|_{L^2_s}^2\|\cc{\rchi_{j'}} \rchi_{j} \|_{L^1_s}
    \|\rchi_{k'} \|_{L^2_s}
    \nonumber\\
    &\leq&C_s \|w\|_{L^2_s}^2
    \sqrt{1+E_{\min\{j,j'\}}}
    \sqrt{1+E_{k'}}
\end{eqnarray}
where we invoked \eqref{eq:triple-L2s-bd-1-0}, \lemref{lem:uniform-lap-for-R-eps-pm}, and similar arguments as in \eqref{eq:Rembd-bd-1-2}. For the opposite case $E_j<E_{j'}$, the operator norm of the resolvent is bounded, but for simplicity, we bound the corresponding expression by \eqref{eq:res-bd-Rem-1-1-0}, which majorizes it. The case with "$+$" instead of "$-$" in the resolvent is similar.

For $j=j'$, we use 
\begin{eqnarray}
    \lefteqn{
    \Big\|\inp{w *(\cc{\rchi_{k'}}\rchi_k),  
    \frac{ 1 }{|\nabla|} 
    w *(|\rchi_{j}|^2)  }
    \Big)_k \Big\|_{\ell^2_{k\leq N}}
    }
    \nonumber\\
    &=&
    \Big\|\inp{(|\nabla|^{-\frac12}w) *(\cc{\rchi_{k'}}\rchi_k),  
    (|\nabla|^{-\frac12}w) *(|\rchi_{j}|^2})
    \Big)_k \Big\|_{\ell^2_{k\leq N}}
    \nonumber\\
    &\leq&C 
    \Big\|(|\nabla|^{-\frac12}w) *\Big( 
    (|\nabla|^{-\frac12}w) *(| \rchi_{j}|^2) 
    \Big)\rchi_{k'} \Big\|_{L^2}
    \nonumber\\
    &\leq&C_s \|w\|_{\dot H^{-1/2}}^2\|\cc{\rchi_{j'}} \rchi_{j}\|_{L^1}
    \|\rchi_{k'} \|_{L^2}
    \nonumber\\
    &\leq&C_s \|w\|_{\dot H^{-1/2}}^2 
\end{eqnarray}
This implies that for $j\neq j'$,
\begin{eqnarray}
    \sup_{0\leq k',j,j'\leq N}\|M_{k,k';j,j'}\|_{\ell^2_{k\leq N}}
    \leq C_{s,v,w}(1+E_N)
\end{eqnarray}
by monotonicity of the eigenenergies, $E_k\leq E_N$ for all $k\leq N$, and
\begin{eqnarray}
    \sup_{0\leq k',j\leq N}\|M_{k,k';j,j}\|_{\ell^2_{k\leq N}}
    \leq C_{s,v,w} 
\end{eqnarray}
for $j=j'$.

Therefore,
\begin{eqnarray}
    \eqref{eq:RemkDEd-1-0}&\leq&C_{s,v,w}
    \frac{\eta^2}{\delta_N} (1+E_N)
    \nonumber\\
    &&\qquad\times
    \sum_{0\leq k',j,j'\leq N}  \Big(
    |\cc{F_{j'}^\eta(0)} F_j^\eta(0) F_{k'}^\eta(0)|
    + |\cc{F_{j'}^\eta(T)} F_j^\eta(T) F_{k'}^\eta(T)|
    \nonumber\\
    &&\qquad\qquad+\int_0^T dS\;
    \big|\partial_S\big(\cc{F_{j'}^\eta(S)} F_j^\eta(S) F_{k'}^\eta(S)|\Big)
    \nonumber\\
    &&+ C_{s,v,w}
    \frac{ \eta^2}{\delta_N}
    \sum_{0\leq k',j\leq N}  
    \Big( | F_j^\eta(0)|^2  |F_{k'}^\eta(0)|
    +| F_j^\eta(T)|^2  |F_{k'}^\eta(T)|
    \nonumber\\
    &&\qquad\qquad
    + \int_0^T dS\;
    \big|\partial_S\big(| F_j^\eta(S) |^2 F_{k'}^\eta(S))|\Big)
    \nonumber\\
    &\leq&C_{s,v,w}
    \frac{ \eta^2}{\delta_N} (1+E_N) N^{\frac32}
    \Big(\|F^\eta\|_{\ell^2}^3 + \int_0^T dS\;\|\partial_S F^\eta\|_{\ell^2}\|F^\eta\|_{\ell^2}^2\Big)
    \nonumber\\
    &&+C_{s,v,w}
    \frac{ \eta^2}{\delta_N} N
    \Big(\|F^\eta\|_{\ell^2}^3 + \int_0^T dS\;\|\partial_S F^\eta\|_{\ell^2}\|F^\eta\|_{\ell^2}^2\Big)
    \nonumber\\
    &\leq&C_{s,v,w}
    \frac{ \eta^2}{\delta_N} (1+E_N) N^{\frac32}
    \Big(1 + \int_0^T dS\;\|\partial_S F^\eta\|_{\ell^2}
    \Big)
    \nonumber\\
    &\leq&C_{s,v,w}\langle T\rangle
    \frac{(1+E_N)}{\delta_N}  N^{\frac32}
    \eta^2   \log \langle 1/\eta^2 \rangle  
\end{eqnarray}
using Cauchy Schwarz. Moreover, we used mass conservation $\|F^\eta\|_{\ell^2}=1$, and 
\begin{eqnarray}
    \|\partial_S F^\eta\|_{\ell^2}
    = \eta^{-2} \|\partial_t A^\eta\|_{\ell^2}
    \leq  C \eta^{-1} \langle T/\eta^2\rangle^{-1}
    +C  \log \langle T/\eta^2 \rangle 
\end{eqnarray} 
from \eqref{eq:der-t-Ak-apriori-1-0}.

In summary, we arrive at
\begin{eqnarray}
    \Big\|\int_0^T dS \;\mathcal{Rem}^{\Delta E\neq0}(S/\eta^2)
    \Big\|_{\ell^2}
    &\leq&C_{s,v,w}\langle T\rangle
    \frac{(1+E_N)}{\delta_N}  N^{\frac32}
    \eta^2   \log \langle 1/\eta^2 \rangle  .
\end{eqnarray}
This proves the claim.
\end{proof}
\bigskip

\appendix
\label{appendix}

\section{Weighted decay estimates and LAP}
\label{appendix:analytic-preliminaries}

\begin{lemma}[Weighted decay for the free half-wave resolvent]
\label{lem:one-sided-half-wave-resolvent}
Let 
\begin{equation}
 	\|f\|_{L^2_s}:=\|\langle x\rangle^sf\|_{L^2(\mathbb R^3)},
 	\qquad \langle x\rangle=(1+|x|^2)^{1/2},
\end{equation}
and fix a compact interval $J=[\lambda_0,\lambda_1]\subset(0,\infty)$ with $\lambda_0>0$ small.  If $\frac12<s<1$ and
\begin{equation}
 	0<\gamma<\min\left\{s-\frac12,1\right\},
\end{equation}
then for $t>0$ and any $M>s$,
\begin{equation}
\label{eq:one-sided-main}
 	\sup_{\substack{\lambda\in J\\0<\epsilon \le1}}
 	\|e^{-it|\nabla|}(|\nabla|-\lambda-i\epsilon )^{-1} \, w\|_{L^2_{-s}}
 	\le C_{s,\gamma} \, (1+\lambda_0^{-2}) \langle t\rangle^{-\gamma} \, \|w\|_{L^2_s} \,.
\end{equation}
The combination  of $t>0$ and the sign of the imaginary part in $(|\nabla|-\lambda-i\epsilon )^{-1}$ is essential.  

Moreover, the following zero mode decay estimate holds,
\begin{eqnarray}
    |\langle g, e^{-it|\nabla|} |\nabla|^{-1} f\rangle|
    \leq C_{s}\langle t\rangle^{-s}\||\nabla|^{-\frac12} f\|_{L^2_{s}}
    \||\nabla|^{-\frac12} g\|_{L^2_{s}}
\end{eqnarray}
for a constant depending only on $\frac12<s<1$.
\end{lemma}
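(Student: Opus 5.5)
We plan to argue by duality and in the Fourier representation. For $g\in L^2_s$ consider
\[
\langle g, e^{-it|\nabla|}(|\nabla|-\lambda-i\epsilon)^{-1}w\rangle=\int_0^\infty \frac{e^{-itr}}{r-\lambda-i\epsilon}\,a(r)\,dr,\qquad a(r):=r^2\int_{S^2}\overline{\widehat g(r\omega)}\,\widehat w(r\omega)\,d\sigma(\omega).
\]
The bound \eqref{eq:one-sided-main} then amounts to showing that this scalar integral is at most $C_{s,\gamma}(1+\lambda_0^{-2})\langle t\rangle^{-\gamma}\|g\|_{L^2_s}\|w\|_{L^2_s}$, uniformly in $\lambda\in J$ and $0<\epsilon\le1$. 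The basic input is the Agmon trace lemma: for $s>\frac12$, the restriction $f\mapsto \widehat f|_{S_r}$ is bounded from $L^2_s$ to $L^2(S^2)$, locally uniformly in $r$, and is Hölder continuous in $r$ of every order $\gamma<s-\frac12$. Consequently $a$ is locally $C^{0,\gamma}$, with
\[
|a(r)-a(r')|\le C_{s,\gamma}\,\max(r,r')^{2}\,|r-r'|^{\gamma}\,\|g\|_{L^2_s}\|w\|_{L^2_s}.
\]
In addition, $\int_0^\infty|a(r)|\,dr\le\|g\|_{L^2}\|w\|_{L^2}$ by Plancherel.

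Next we split the integral with a smooth cutoff $\chi(r)$ supported in $|r-\lambda|\le\lambda_0/2$, which keeps the local piece away from $r=0$.

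\textbf{Local piece.} We write $a(r)=a(\lambda)+(a(r)-a(\lambda))$.
\begin{itemize}
\item For the constant term we use the sign structure. Since $t>0$ and the pole $\lambda+i\epsilon$ lies in the upper half plane, $\int_{\R}e^{-itr}(r-\lambda-i\epsilon)^{-1}dr=0$ (close the contour in the lower half plane). Hence $a(\lambda)\int\chi(r)e^{-itr}(r-\lambda-i\epsilon)^{-1}dr$ equals $-a(\lambda)\int(1-\chi)e^{-itr}(r-\lambda-i\epsilon)^{-1}dr$. The latter integrand is smooth, so integrating by parts gives decay like $\langle t\rangle^{-1}\lambda_0^{-1}$, uniformly in $\epsilon$. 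This is exactly where the combination of $t>0$ and the sign of $i\epsilon$ is used.
\item For the difference term, $\chi(r)(a(r)-a(\lambda))/(r-\lambda-i\epsilon)$ is dominated by $|r-\lambda|^{\gamma-1}$, uniformly in $\epsilon$. We would show that its Fourier transform decays like $t^{-\gamma'}$ for every $\gamma'<\gamma$. The method is the standard one for Hölder functions: compare with the translate $r\mapsto r+\pi/t$ and split at $|r-\lambda|\sim 1/t$. Taking $\gamma$ slightly smaller than the trace exponent absorbs the loss of $\gamma-\gamma'$.
\end{itemize}

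\textbf{Far piece.} Here $(1-\chi)/(r-\lambda-i\epsilon)$ is a smooth symbol bounded by $C\lambda_0^{-1}$, with derivatives of size $\lambda_0^{-2}$; this is the source of the factor $1+\lambda_0^{-2}$. The term $(1-\chi)(r-\lambda-i\epsilon)^{-1}a(r)$ is then a product of this symbol with $a$. We would obtain $\langle t\rangle^{-\gamma}$ decay from the $C^{0,\gamma}$ regularity of $a$, via the same translation argument, together with its integrability at $\infty$. Near $r=0$ we use that $a(r)=O(r^2)$ and that $r-\lambda$ is bounded away from zero.

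\textbf{Zero mode estimate.} Moving $|\nabla|^{-1/2}$ onto each side writes the pairing as $\langle |\nabla|^{-1/2}g, e^{-it|\nabla|}|\nabla|^{-1/2}f\rangle=\int_0^\infty e^{-itr}b(r)\,dr$, where $b$ is built from the traces of $|\nabla|^{-1/2}f$ and $|\nabla|^{-1/2}g$. The trace lemma gives $b\in C^{0,s-1/2}$, and $b$ vanishes at $r=0$ because of the $r^2$ Jacobian. We expect to obtain the weighted local-decay rate $\langle t\rangle^{-s}$ either from this regularity or, more cleanly, from the physical-space kernel of $e^{-it|\nabla|}$ in three dimensions, which is concentrated on the light cone $|x|=t$, combined with the $\langle x\rangle^{s}$ weights.

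\textbf{Main obstacle.} We expect the hardest step to be the Hölder trace estimate with uniform, explicitly polynomial control in $r$ on all of $[0,\infty)$. That uniform control is what makes the far region summable and keeps the constant independent of $\lambda_1$ apart from harmless factors. If the large-$r$ growth of the Hölder constant causes trouble, we would first try inserting a Littlewood–Paley decomposition in $r$ and summing the dyadic pieces.
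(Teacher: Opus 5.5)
\textbf{Gap in the zero-mode estimate.} Your argument for the zero-mode estimate does not reach the claimed rate. Pointwise $C^{0,s-\frac12}$ regularity of the spectral density $b$ only yields $\bigl|\int e^{-itr}b(r)\,dr\bigr|\lesssim\langle t\rangle^{-(s-\frac12)}$. H\"older regularity of order $\alpha$ cannot give better than $t^{-\alpha}$ in general, so this is half a power short of $\langle t\rangle^{-s}$.

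The missing idea is to measure the modulus of continuity of $b$ in $L^1$ instead of pointwise. This is how the paper proves the underlying bound $\|e^{-it|\nabla|}\|_{L^2_s\to L^2_{-s}}\lesssim\langle t\rangle^{-s}$. Let $u_F(\rho)=\rho\widehat F(\rho\,\cdot)$, extended by zero to $\rho<0$. The radial lifting gives $\|u_F\|_{H^s(\R;L^2(S^2))}\lesssim\|F\|_{L^2_s}$, hence $\|u_F(\cdot+h)-u_F\|_{L^2(\R;L^2(S^2))}\lesssim|h|^s\|F\|_{L^2_s}$. Applying Cauchy--Schwarz in $\rho$ to $b=\langle u_F,u_G\rangle_{L^2(S^2)}$ then gives $\|b(\cdot+h)-b\|_{L^1}\lesssim|h|^s\|F\|_{L^2_s}\|G\|_{L^2_s}$. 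Take $h=\pi/t$ in $(e^{ith}-1)\int e^{-it\rho}b(\rho)\,d\rho=\int e^{-it\rho}\bigl(b(\rho+h)-b(\rho)\bigr)\,d\rho$. This gives $\langle t\rangle^{-s}$, and the zero-mode bound follows with $F=|\nabla|^{-\frac12}f$, $G=|\nabla|^{-\frac12}g$.

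Your physical-space fallback does not fill this in. Only $\sin(t|\nabla|)/|\nabla|$ has kernel supported on $|x|=t$. The kernel of $\cos(t|\nabla|)/|\nabla|$ is $\frac{1}{2\pi^2}\,\mathrm{p.v.}\,(|x|^2-t^2)^{-1}$, which is not concentrated on the light cone.

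\textbf{The main resolvent bound.} Here your route differs from the paper's and can be made rigorous. The paper treats the near piece through the Fourier transform in $\rho$ of $e^{-it\rho}\chi(\rho)(\rho-\lambda-i\epsilon)^{-1}$. That transform is the translate by $t$ of a kernel supported in $(-\infty,0)$, smoothed by $\mathcal F\chi$, and the paper bounds it from $H^s$ to $H^{-s}$ by a Hilbert--Schmidt computation with the Sobolev weights. The far piece is reduced to the propagator bound via $\|a_{\lambda,\epsilon}\|_{W^{1,\infty}}\lesssim\lambda_0^{-2}$. You use the same sign fact only at the single point $r=t$, to annihilate $a(\lambda)$, and handle the rest by a H\"older translation argument. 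This is more elementary and isolates the mechanism, at the cost of a logarithmic loss that $\gamma<s-\frac12$ absorbs.

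Three points in your write-up need correcting:
\begin{enumerate}
\item The obstacle you flag does not arise. Since $u_f\in H^s(\R;L^2(S^2))$ with zero trace at $\rho=0$, one-dimensional Sobolev embedding gives $\sup_\rho\|u_f(\rho)\|_{L^2(S^2)}\lesssim\|f\|_{L^2_s}$ and $C^{0,s-\frac12}$ continuity with an $r$-independent constant. So $a$ is uniformly bounded and uniformly H\"older, with no $\max(r,r')^2$ factor. Your stated bound would also introduce a $\lambda_1$-dependence into the local piece, which the statement does not allow.
\item Your claim $a(r)=O(r^2)$ is false in general for $\widehat f\in H^s$ with $s<\frac32$; only $a(r)=O(r^{2s-1})$ holds. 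Boundedness of $a$ is all you need near $r=0$, so this is harmless once corrected.
\item In the far piece, pointwise H\"older continuity alone does not close the translation argument, because $(1-\chi)(r-\lambda-i\epsilon)^{-1}\sim r^{-1}$ is not integrable. Either interpolate using $a\in L^1$ and that symbol in $L^p$ for some $p>1$, or use the $L^1$ modulus of continuity described above.
\end{enumerate}
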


\begin{proof}
As in \lemref{lem:translated-cauchy}, let $\widetilde\rchi\in C_c^\infty(\R)$ be a smooth bump function with 
\begin{eqnarray}
    \widetilde\rchi(\rho)=
    \left\{
    \begin{array}{cl}
        1 & \rho\in (-\frac12,\frac12) \\
        0 & \rho\in  (-\infty,-\frac{3}4)\cup(\frac{3}4,\infty)
    \end{array}\right. \,,
\end{eqnarray}
and let $\chi(\rho):=\widetilde\chi((\rho-\lambda)/{\lambda_0})$, so  that $\rm{supp}(\chi)\subset(\frac{\lambda_0}4,\lambda_1+\frac{3\lambda_0}4)$ for every $\lambda\in J$.  
We decompose
\begin{equation}
\label{eq:frequency-decomposition}
 	(|\nabla|-\lambda-i\epsilon )^{-1}
	 =\rchi(|\nabla|)\, (|\nabla|-\lambda-i\epsilon )^{-1}
 	+(1-\rchi(|\nabla|)) \,(|\nabla|-\lambda-i\epsilon )^{-1}.
\end{equation}
On the annulus $\supp(\rchi)$, we have  
\begin{align}
\label{eq:polar-comparison}
 	\|\rho\, \widehat f(\rho\, \omega)\|_{H^s_\rho L^2_\omega}
 	&\le C_s 
    \|\widehat f\|_{H^s_\xi},\\
 	\|\widehat u\|_{H^{-s}_\xi}
 	&\le C_s 
    \|\rho\widehat u(\rho \, \omega)\|_{H^{-s}_\rho L^2_\omega}.
\end{align}
The first inequality follows because full derivatives control radial derivatives; the second is its dual.  Moreover,
\begin{equation}
 	\|\widehat f\|_{H^s_\xi}=c\|f\|_{L^2_s},\qquad
 	\|\widehat u\|_{H^{-s}_\xi}=c\|u\|_{L^2_{-s}}.
\end{equation}
Therefore, we obtain
\begin{equation}
\label{eq:near-final}
 	\|e^{-it|\nabla|}\, \rchi(|\nabla|)\, (|\nabla|-\lambda-i\epsilon )^{-1} \, w\|_{L^2_{-s}}
 	\le C_s (1+\lambda_0^{-M})\langle t\rangle^{-\gamma}\|w\|_{L^2_s}.
\end{equation}
by using \lemref{lem:translated-cauchy} with
$\mathcal H=L^2(\mathbb S^2)$ and $M>s$.

For the complementary term, we define
\begin{equation}
 	a_{\lambda,\epsilon }(\rho)
 	:=\frac{1-\rchi(\rho)}{\rho-\lambda-i\epsilon }.
\end{equation}
Here we obtain $\|\chi\|_{W^{1,\infty}}\leq C\lambda_0^{-1}$, and moreover, $|\frac1{\rho-\lambda}|\leq C\lambda_0^{-1}$  $\rho\in\rm{supp}(1-\chi)$. 
Therefore, 
\begin{eqnarray}
    \| a_{\lambda,\epsilon }\|_{W^{1,\infty}} < C \lambda_0^{-2} \,.
\end{eqnarray}  
Let $\max\{\frac12,\gamma\}<\sigma<\min\{s,1\}$.
Therefore, we find that 
\begin{eqnarray}\label{eq:alameps-term-1-0}
    \Big\|e^{-it|\nabla|}a_{\lambda,\epsilon }(|\nabla|)w\Big\|_{L^2_{-s}}
    &\leq&\Big\|e^{-it|\nabla|}a_{\lambda,\epsilon}(|\nabla|)w\Big\|_{L^2_{-\sigma}}
    \nonumber\\
    &\leq&\big\|e^{-it|\nabla|}\big\|_{L^2_\sigma\rightarrow L^2_{-\sigma}}
    \|a_{\lambda,\epsilon }(|\nabla|)w\|_{L^2_{\sigma}}
    \nonumber\\
    &\leq&C \langle t\rangle^{-\sigma}\|a_{\lambda,\epsilon }\|_{W^{1,\infty}}\|w\|_{L^2_{\sigma}}
    \nonumber\\
    &\leq&C \langle t\rangle^{-\gamma}\|a_{\lambda,\epsilon }\|_{W^{1,\infty}}\|w\|_{L^2_{s}}
\end{eqnarray}
where we used  \thmref{thm:half-wave-decay} and 
\begin{eqnarray}
    \|a_{\lambda,\epsilon }(|\nabla|)w\|_{L^2_{\sigma}}
    &=&\|a_{\lambda,\epsilon }\widehat w\|_{H^\sigma}
    \nonumber\\
    &\leq&\|a_{\lambda,\epsilon }\|_{W^{1,\infty}} \|\widehat w \|_{L^2}
    + \|a_{\lambda,\epsilon }\|_{L^\infty} \|\widehat w \|_{H^\sigma}
    \nonumber\\
     &\leq&C\lambda_0^{-2}\|w\|_{L^2_\sigma} \,.
\end{eqnarray}
The bounds obtained so far hold uniformly for all $\lambda\in J$.
Combining \eqref{eq:near-final} and \eqref{eq:alameps-term-1-0}, we therefore get
\begin{equation}
\label{eq:one-sided-main-1-1}
 	\sup_{\substack{\lambda\in J\\0<\epsilon \le1}}
 	\|e^{-it|\nabla|}(|\nabla|-\lambda-i\epsilon )^{-1} \, w\|_{L^2_{-s}}
 	\le C_{s,\gamma} \,(\lambda_0^{-2}\,+ (1+\lambda_0^{-M})) \langle t\rangle^{-\gamma} \, \|w\|_{L^2_s} \,
\end{equation}
for small $\lambda_0$.
We may choose $M=1>s$, whereby we arrive at the claim.

We add the remark that, due to the jump formula
\begin{equation}
 	(|\nabla|-\lambda+i0)^{-1}-(|\nabla|-\lambda-i0)^{-1}
 	=-2\pi i\,\delta(|\nabla|-\lambda),
\end{equation} the resolvent with the opposite sign $+i\epsilon$ contains
$-2\pi i e^{-it\lambda}\delta(|\nabla|-\lambda)w$, which is nonzero in
$L^2_{-s}$ for generic $w$. Hence, it cannot decay. Thus, the combination of signs $-i\epsilon$ and $t>0$ is significant for the result proven here. 

The zero mode decay estimate follows immediately from \thmref{thm:half-wave-decay}.
\end{proof}

\begin{lemma}[Time decay for translated Cauchy kernel]
\label{lem:translated-cauchy}
Let $\cH$ be a Hilbert space, and $F\in H^s(\R,\cH)$ with $s>\frac12$. Let $J=[\lambda_0,\lambda_1]\subset\mathbb (0,\infty)$ a compact interval, 
and let $\widetilde\rchi\in C_c^\infty(\R)$ be a smooth bump function with
\begin{eqnarray}
    \widetilde\rchi(\rho)=
    \left\{
    \begin{array}{cl}
        1 & \rho\in (-\frac12,\frac12) \\
        0 & \rho\in  (-\infty,-\frac{3}4)\cup(\frac{3}4,\infty)
    \end{array}\right. \,.
\end{eqnarray}
Define $\chi(\rho):=\widetilde\chi((\rho-\lambda)/{\lambda_0})$, and we note that $\rm{supp}(\chi)\subset(\frac{\lambda_0}4,\lambda_1+\frac{3\lambda_0}4)$ for every $\lambda\in J$.  
If
\begin{equation}
 	q_{t,\lambda,\epsilon }(\rho)
 	:=\frac{e^{-it\rho}\rchi(\rho)}{\rho-\lambda-i\epsilon },
\end{equation}
then, for $0<\gamma<s-\frac12$ and $M>s$,
\begin{equation}
\label{eq:cauchy-estimate}
 	\|q_{t,\lambda,\epsilon }F\|_{H^{-s}(\mathbb R;\mathcal H)}
 	\le C_M (1+\lambda_0^{-M})\langle t\rangle^{-\gamma}
 	\|F\|_{H^s(\mathbb R;\mathcal H)},
\end{equation}
uniformly for $t\ge0$, $\lambda\in J$, and $0<\epsilon \le1$. Notably, the constant is independent of $\lambda_0$ and $\lambda_1$, so that the bound depends only on $\lambda_0$.
\end{lemma}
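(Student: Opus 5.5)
Here $F\in H^s(\R;\cH)$, and we estimate $q_{t,\lambda,\epsilon}F$ in $H^{-s}$ by duality. For $G\in H^s(\R;\cH)$ we study the pairing
\[
\langle G, q_{t,\lambda,\epsilon}F\rangle=\int_\R \frac{e^{-it\rho}\chi(\rho)\langle G(\rho),F(\rho)\rangle_{\cH}}{\rho-\lambda-i\epsilon}\,d\rho .
\]
Set $h(\rho):=\chi(\rho)\langle G(\rho),F(\rho)\rangle_\cH$. Since $H^s(\R)$ with $s>\frac12$ is an algebra, and multiplication by $\chi$ costs a factor of at most $C(1+\lambda_0^{-1})^{\lceil s\rceil}$ (each derivative of $\chi$ contributes $\lambda_0^{-1}$), we get
\[
\|h\|_{H^s}\le C(1+\lambda_0^{-M})\|F\|_{H^s}\|G\|_{H^s}
\]
with $M>s$. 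The claim therefore reduces to a scalar estimate: for $h\in H^s(\R)$ supported in $\supp\chi$,
\[
\Big|\int \frac{e^{-it\rho}h(\rho)}{\rho-\lambda-i\epsilon}\,d\rho\Big|\le C\langle t\rangle^{-\gamma}\|h\|_{H^s},
\]
uniformly in $\lambda$ and $\epsilon$.

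\textbf{Fourier representation.} The key tool is the one-sided Fourier representation of the Cauchy kernel. For $\epsilon>0$,
\[
(\rho-\lambda-i\epsilon)^{-1}=i\int_0^\infty e^{-i\tau(\rho-\lambda-i\epsilon)}\,d\tau .
\]
Inserting this and using Fubini (justified since $h\in L^1$ and the $\tau$-integral carries the factor $e^{-\epsilon\tau}$) gives
\[
i\int_0^\infty e^{i\tau\lambda-\epsilon\tau}\,\widehat h(t+\tau)\,d\tau ,
\]
with the convention $\widehat h(\xi)=\int e^{-i\xi\rho}h(\rho)\,d\rho$. Here the signs matter: for $t\ge0$ only frequencies $\xi=t+\tau\ge t$ appear. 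This is exactly the mechanism behind the remark in the paper that the opposite sign $+i\epsilon$ fails.

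\textbf{Frequency tail bound.} Consequently the absolute value is at most $\int_t^\infty|\widehat h(\xi)|\,d\xi$. By Cauchy--Schwarz with the weight $\langle\xi\rangle^{s}$,
\[
\int_t^\infty|\widehat h|\,d\xi\le\Big(\int_t^\infty\langle\xi\rangle^{-2s}\,d\xi\Big)^{1/2}\|h\|_{H^s}\le C_s\langle t\rangle^{\frac12-s}\|h\|_{H^s},
\]
which is finite because $s>\frac12$. Since $\gamma<s-\frac12$, this gives $\langle t\rangle^{-\gamma}$, in fact with room to spare; the slack $\gamma<s-\frac12$ strictly can absorb a logarithmic loss if one appears. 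All bounds are uniform in $\epsilon\in(0,1]$ and $\lambda\in J$, because the $\tau$-integral is dominated by $|\widehat h|$ independently of both parameters. Finally, taking the supremum over $\|G\|_{H^s}\le1$ yields the $H^{-s}(\R;\cH)$ bound.

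\textbf{Main obstacle.} The step needing most care is the interchange of the $\tau$ and $\rho$ integrals together with $\widehat h\in L^1$. In one dimension, $H^s\subset \cF L^1$ holds precisely when $s>\frac12$, which is what makes the tail integral finite. The other delicate point is tracking the $\lambda_0$-dependence of $\|\chi\,\cdot\|_{H^s\to H^s}$ via interpolation between $L^2$ and $H^1$. Handling this by the crude bound $\|\chi\|_{W^{1,\infty}}\lesssim\lambda_0^{-1}$ gives the stated factor $(1+\lambda_0^{-M})$ for any $M\ge1>s$.
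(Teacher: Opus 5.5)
Your argument is correct and gives the same rate $\langle t\rangle^{\frac12-s}$ as the paper, but it is organized differently.

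\textbf{The paper's route.} The paper works with the multiplication operator directly. It computes the Fourier transform $k_t$ of $q_{t,\lambda,\epsilon}$ as the $t$-translate of the one-sided kernel $c_0e^{-ir(\lambda+i\epsilon)}\mathbf{1}_{(-\infty,0)}(r)$, convolved with $\mathcal{F}(\chi)$. From the rapid decay of $\mathcal{F}(\chi)$ it derives a pointwise majorant for $k_t$; this is where $B_{M,\chi}\lesssim\max\{1,\lambda_0^{-M}\}$ and the condition $M>s$ come from. It then bounds the weighted kernel $\langle r\rangle^{-s}k_t(r-r')\langle r'\rangle^{-s}$ in Hilbert--Schmidt norm. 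The $\cH$-valued case is automatic there, since the operator is $T_t\otimes\mathbf{1}_{\cH}$.

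\textbf{Your route.} You dualize and fold $F$, $G$ and $\chi$ into one scalar function $h\in H^s\cap L^1$. You then use the same one-sided Fourier support of the Cauchy kernel inside the pairing. The whole estimate collapses to the tail $\int_t^\infty|\widehat h|$, with no kernel majorant and no Hilbert--Schmidt computation.

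Your route also shows that the cutoff plays no role in the decay. Since $\langle G,F\rangle_{\cH}$ already lies in $L^1\cap H^s$, you get $\|e^{-it\rho}(\rho-\lambda-i\epsilon)^{-1}F\|_{H^{-s}}\le C_s\langle t\rangle^{\frac12-s}\|F\|_{H^s}$ uniformly in $\lambda\in\R$ and $\epsilon>0$. The parameter $\lambda_0$ then enters only through the multiplier norm of $\chi$ on $H^s$.

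\textbf{Two bookkeeping points.}
\begin{enumerate}
\item The product estimate you need is the sesquilinear $\cH$-valued one, $\|\langle G,F\rangle_{\cH}\|_{H^s}\lesssim\|G\|_{H^s(\R;\cH)}\|F\|_{H^s(\R;\cH)}$. It follows from the scalar Fourier-side proof after applying Cauchy--Schwarz in $\cH$ inside the frequency convolution.
\item The factor $(1+\lambda_0^{-1})^{\lceil s\rceil}$ does not give $(1+\lambda_0^{-M})$ for every $M>s$; it fails for $s<M<1$. Use instead the interpolated bound $\|\chi\,\cdot\|_{H^s\to H^s}\lesssim(1+\lambda_0^{-1})^s\lesssim 1+\lambda_0^{-M}$, which you mention at the end and which covers all $M\ge s$.
\end{enumerate}
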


\begin{proof}
For notational convenience, let $\mathcal{F}(h)(r)=\int_{\R } e^{-ir\rho}h(\rho)\,d\rho$ denote the 1 dimensional Fourier transform, and $R_{-\lambda-i\epsilon }(\rho)=(\rho-\lambda-i\epsilon )^{-1}$.  Contour integration gives
\begin{equation}\label{eq:F1-res-ker-1-0}
 	\mathcal{F}( R_{-\lambda-i\epsilon })(r)
 	=c_0e^{-ir(\lambda+i\epsilon )}\mathbf1_{(-\infty,0)}(r).
\end{equation}
Let 
\begin{equation}
    R_{-\lambda-i\epsilon }^{(\rchi,t)}(\rho):=e^{-it\rho} \rchi(\rho) R_{-\lambda-i\epsilon }(\rho)
\end{equation}
and
\begin{equation}\label{eq:ktr-res-ker-1-0}
	k_t(r):= 
    \mathcal{F}(R_{-\lambda-i\epsilon }^{(\rchi,t)})(r)
\end{equation}
The factor $e^{-it\rho}$ translates the kernel \eqref{eq:F1-res-ker-1-0} by $t$, and multiplication
by $\rchi$ convolves it with the Schwartz function $\mathcal{F}(\rchi)$.  Hence,
\begin{eqnarray}
\label{eq:kernel-majorant}
 	|k_t(r)|&\leq& \big( \; |\mathcal{F}(\rchi)|*
 	\mathbf1_{(-\infty,0)} \; \big)(r+t) 
\end{eqnarray}
The factor $e^{\epsilon  r}$ is bounded by 1 in \eqref{eq:F1-res-ker-1-0} because the support condition enforces $r<0$, and remains uniformly bounded in \eqref{eq:ktr-res-ker-1-0}. Therefore, the constants are uniform in $\epsilon $.

Given $M>0$, let
\begin{eqnarray}\label{eq:BMchi-def-1-0}
    B_{M,\chi} &:=& \|\mathcal{F}(\chi) \|_{L^1} 
    + \sup_{x\geq 1}x^M\int_x^\infty dy |\mathcal{F}( \chi)|
    \nonumber\\
    &\leq& C_{M,\widetilde\chi}\max\{1,\lambda_0^{-M} \} \,.
\end{eqnarray}
This is is a consequence of scaling, $|\mathcal{F}(\chi)(r)|=\lambda_0|\mathcal{F}(\widetilde\chi)(\lambda_0 r)|$, and \eqref{eq:BMchi-def-1-0} follows from a change of variables $r\rightarrow \lambda_0 r$ and the rapid decay of $\mathcal{F}(\widetilde\chi)$ (on a scale independent of $\lambda_0$).
The definition \eqref{eq:BMchi-def-1-0} implies that
\begin{eqnarray}
\label{eq:kernel-majorant-1-1}
 	|k_t(r-r')|&\leq& C \, B_{M,\chi}\,\big(
 	\mathbf1_{r-r'\leq -t+1} + \langle t+r-r'\rangle^{-M}\mathbf1_{r-r'\geq-t+1} \big) \,.
\end{eqnarray}
That is, $k_t(r)$ decays rapidly outside $\{ r \in \R \;|\; r+t \leq 1\}$.

Subsequently, we find 
\begin{eqnarray}
	\mathcal{F}(q_{t,\lambda,\epsilon }F)(r) = \int_{\R } dr' k_t(r-r')\mathcal{F}(F)(r') \,.
\end{eqnarray}
Thus, the operator conjugated by Sobolev weights is given by $T_t\otimes \mathbf1_{\mathcal H}$ with
\begin{eqnarray}
    (T_t f)(r):=\int_{\R}dr' K_t(r,r')f(r')
\end{eqnarray} 
and
\begin{equation}
	 K_t(r,r')=\langle r\rangle^{-s}k_t(r-r')
 	\langle r'\rangle^{-s} .
\end{equation}
The operator norm of $T_t\otimes\mathbf1_{\mathcal H}$ is bounded by 
\begin{eqnarray}
    \| T_t \otimes\mathbf1_{\mathcal H} \|^2 \leq \| T_t \|_{HS}^2 = 
    \int 
    \langle r\rangle^{-2s} |k_t(r-r')|^2
 	\langle r'\rangle^{-2s}\,dr\,dr' \,.
\end{eqnarray} 
Using Cauchy-Schwarz, we obtain two terms corresponding to the right-hand side of \eqref{eq:kernel-majorant-1-1}. 
The contribution from the term supported on $r-r'\leq-t+1$ in \eqref{eq:kernel-majorant-1-1} is bounded by
\begin{eqnarray}
\label{eq:volterra-hs}
 	\iint_{r-r'<-t+1}\langle r\rangle^{-2s}
 	\langle r'\rangle^{-2s}\,dr\,dr'
    &=&
    \int_{u\leq-t+1}du\;\Big(\int_{\R}dr 
    \langle r\rangle^{-2s}\langle r-u\rangle^{-2s}\Big)
    \nonumber\\
    &\leq&
    C_s \int_{u\leq-t+1}du\; 
    \langle u \rangle^{-2s}  
    \nonumber\\
    &\leq& C_s\langle t\rangle^{1-2s}
\end{eqnarray}
where we recall that $s>\frac12$.
The contribution for the term  in \eqref{eq:kernel-majorant-1-1} supported on $r-r'\geq-t+1$ is bounded by
\begin{eqnarray}
\label{eq:volterra-hs-1-1}
 	\lefteqn{
    \int_{r-r'\geq -t+1}\,dr\,dr' \;
    \langle r\rangle^{-2s}
    \langle t+r-r'\rangle^{-2M}
 	\langle r'\rangle^{-2s}
    }
 	\nonumber\\
    &=& 
    \int_{u\geq -t+1}du \;
    \langle t+u\rangle^{-2M}
    \int_{\R}dr\,\langle r\rangle^{-2s}
 	\langle r-u\rangle^{-2s} 
    \nonumber\\
    &\leq& 
    C_s
    \int_{u\geq -t+1}du \;
    \langle t+u\rangle^{-2M} \langle u\rangle^{-2s}
 	\nonumber\\
    &\leq& C_s\langle t\rangle^{-2s}
\end{eqnarray}
for $M>s$.
Here we substituted $u=r-r'$ to pass to the second line, and used the convolution bound $\int_{\R}dr\,\langle r \rangle^{-2s} \langle r-u\rangle^{-2s}\leq C_s \langle u\rangle^{-2s}$.
    
Thus, we arrive at
\begin{eqnarray}
    \| T_t \otimes\mathbf1_{\mathcal H}  \|  \leq C_s \, B_{M,\chi}  \,\langle t\rangle^{-\gamma}
\end{eqnarray}
with $0<\gamma<s-\frac12$.
This implies \eqref{eq:cauchy-estimate}.
\end{proof}

\begin{theorem}[Weighted decay for the free half-wave propagator]
\label{thm:half-wave-decay}
Let \(1/2<s<1\).  There exists a constant \(C_s>0\) such that
\begin{equation}
    \norm{e^{-it|\nabla|}f}_{L^2_{-s}(\R^3)}
    \leq C_s\langle t\rangle^{-s}
    \norm{f}_{L^2_s(\R^3)}
    \label{eq:weighted-half-wave-decay}
\end{equation}
for every \(t\in\R\) and every \(f\in L^2_s(\R^3)\).   
\end{theorem}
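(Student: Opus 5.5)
The plan is to prove the two endpoint cases $s=0$ and $s=1$ of the operator bound
\[
\|\langle x\rangle^{-s}e^{-it|\nabla|}\langle x\rangle^{-s}\|_{L^2\to L^2}\le C_s\langle t\rangle^{-s},
\]
and then interpolate. This bound is equivalent to \eqref{eq:weighted-half-wave-decay}.

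\emph{Short times and the case $s=0$.} For $|t|\le 1$ the bound is immediate for every $s\ge0$, since $e^{-it|\nabla|}$ is unitary and $\langle x\rangle^{-s}\le 1$. The case $s=0$ holds for all $t$ for the same reason. So it suffices to treat $|t|\ge1$ at $s=1$ and then interpolate.

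\emph{The endpoint $s=1$.} The key identity is the Heisenberg evolution of position under the half-wave group:
\[
e^{it|\nabla|}\,x_j\,e^{-it|\nabla|}=x_j+t\,\frac{\partial_j}{|\nabla|}
\]
up to a factor $\pm i$. This follows because $[x_j,|\nabla|]$ is a multiple of $\partial_j/|\nabla|$, which commutes with $|\nabla|$. Set $R_j:=\partial_j/|\nabla|$, so that $\sum_j R_j^2=-1$. Multiplying the identity by $R_j$ and summing over $j$ gives
\[
t\,e^{-it|\nabla|}=c\sum_j R_j\big(x_j e^{-it|\nabla|}-e^{-it|\nabla|}x_j\big)
\]
for a unimodular constant $c$. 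Sandwiching between $\langle x\rangle^{-1}$, the term $\langle x\rangle^{-1}R_j e^{-it|\nabla|}x_j\langle x\rangle^{-1}$ is bounded, since $R_j$ and $e^{-it|\nabla|}$ are $L^2$-bounded and $|x_j|\langle x\rangle^{-1}\le1$. For the other term we write $R_jx_j=x_jR_j+[R_j,x_j]$. The piece $\langle x\rangle^{-1}x_jR_je^{-it|\nabla|}\langle x\rangle^{-1}$ is bounded as well. The commutator $[R_j,x_j]$ is a Fourier multiplier homogeneous of degree $-1$, bounded pointwise in $\xi$ by $C|\xi|^{-1}$, so it is dominated by $|\nabla|^{-1}$. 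In $\R^3$, Hardy's inequality $\||x|^{-1}f\|_{L^2}\le 2\|\nabla f\|_{L^2}$ gives by duality that $|\nabla|^{-1/2}\langle x\rangle^{-1/2}$ and $\langle x\rangle^{-1}|\nabla|^{-1}\langle x\rangle^{-1}$ are $L^2$-bounded. Factoring the commutator as $(\text{bounded})\circ|\nabla|^{-1/2}$ on each side, and using that $e^{-it|\nabla|}$ commutes with these multipliers, controls this term uniformly in $t$. Altogether
\[
|t|\,\|\langle x\rangle^{-1}e^{-it|\nabla|}\langle x\rangle^{-1}\|\le C,
\]
which is the $s=1$ bound for $|t|\ge1$.

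\emph{Interpolation.} Consider the analytic family $T_z:=\langle x\rangle^{-z}e^{-it|\nabla|}\langle x\rangle^{-z}$ on the strip $0\le\Re z\le1$. On $\Re z=0$ and $\Re z=1$ the imaginary powers $\langle x\rangle^{-i\Im z}$ are unitary. Hence $\|T_z\|\le1$ on the first line and $\|T_z\|\le C\langle t\rangle^{-1}$ on the second. Stein's interpolation theorem (or the three-lines lemma applied to $\langle g,T_zf\rangle$ for simple $f,g$) yields $\|T_s\|\le C_s\langle t\rangle^{-s}$ for $0<s<1$, in particular for $\tfrac12<s<1$. The restriction $s>\tfrac12$ is not needed here; it is only needed in the downstream lemmas.

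\emph{Main obstacle.} The delicate step is the commutator/low-frequency term at $s=1$: $R_j$ and $[R_j,x_j]$ are singular at $\xi=0$, and controlling them relies on three-dimensional Hardy-type bounds. If the commutator bookkeeping fails, I would instead work in Fourier polar coordinates, as in the proof of \lemref{lem:one-sided-half-wave-resolvent}. There $e^{-it\rho}$ acts as translation by $t$ in the variable dual to $\rho$, and
\[
\sup_r\langle r\rangle^{-s}\langle r+t\rangle^{-s}\le C\langle t\rangle^{-s}
\]
gives the decay directly; the low-frequency region $\rho\to0$, with its Jacobian, would then be handled by the same Hardy bound.
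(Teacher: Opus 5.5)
Your proposal is correct, but it follows a genuinely different route from the paper.

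\textbf{The paper's route.} The paper works directly at $\frac12<s<1$. It writes $\langle g,e^{-it|\nabla|}f\rangle=\int_{\R}e^{-it\rho}\,b_{f,g}(\rho)\,d\rho$, where $b_{f,g}(\rho)=\langle u_f(\rho),u_g(\rho)\rangle_{L^2(S^2)}$ is built from the zero-extended radial profiles $u_f=\rho\,\widehat f(\rho\,\cdot)$. It then uses \lemref{lem:radial-lifting} to get the $L^1$ translation modulus $\|b_{f,g}(\cdot+h)-b_{f,g}\|_{L^1}\lesssim |h|^s\|f\|_{L^2_s}\|g\|_{L^2_s}$; the zero-trace and zero-extension step there is where $s>\frac12$ enters. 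Finally, \lemref{lem:translation-decay} converts this H\"older regularity into $\langle t\rangle^{-s}$ decay.

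\textbf{Your route.} You prove only the integer endpoint $s=1$, using the exact Heisenberg identity $t\,e^{-it|\nabla|}=c\sum_j R_j\bigl(x_je^{-it|\nabla|}-e^{-it|\nabla|}x_j\bigr)$. The only non-trivial term is the commutator $[R_j,x_j]$, a multiplier bounded by $|\xi|^{-1}$, and its low-frequency singularity is absorbed by Hardy's inequality in $\R^3$. You then recover every intermediate $s$ by Stein interpolation against the trivial unitary bound.

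\textbf{What each approach buys.} Your route is more operator-theoretic and slightly more general. It avoids fractional Sobolev spaces on the half-line, trace theorems and the translation lemma. It gives the estimate for the whole range $0\le s\le 1$ with $C_s=C^s$, which confirms your remark that $s>\frac12$ is an artifact of the method here and is needed only elsewhere, for instance in the LAP. The paper's route stays inside the radial-Fourier framework it reuses for \lemref{lem:translated-cauchy} and \lemref{lem:uniform-lap-for-R-eps-pm}. Your endpoint can also be checked in that language: at $s=1$, Hardy shows $b_{f,g}\in W^{1,1}(\R)$, so $|\int e^{-it\rho}b_{f,g}\,d\rho|\le \|b_{f,g}'\|_{L^1}/|t|$.

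\textbf{One imprecision.} Plain duality from Hardy gives boundedness of $\langle x\rangle^{-1}|\nabla|^{-1}\langle x\rangle^{-1}$. The factor you actually need is $B=|\nabla|^{-1/2}\langle x\rangle^{-1}$, which is bounded because $\|B\|^2=\|B^*B\|$. Boundedness of $|\nabla|^{-1/2}\langle x\rangle^{-1/2}$ is Kato's inequality; it follows from Hardy via operator monotonicity of the square root, not from duality alone.
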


The proof uses the following radial lifting estimate.

\begin{lemma}[Radial lifting]
\label{lem:radial-lifting}
For \(F\in H^\sigma(\R^3)\), define
\begin{equation}
    (\mathcal RF)(\rho,\omega)=\rho F(\rho \, \omega),
    \qquad \rho>0,\quad \omega\in S^2.
\end{equation}
If \(0\leq\sigma\leq1\), then
\begin{equation}
    \norm{\mathcal RF}_{H^\sigma(\R_+;L^2(S^2))}
    \leq C_\sigma\norm{F}_{H^\sigma(\R^3)}.
    \label{eq:radial-lifting}
\end{equation}
If \(1/2<\sigma\leq1\), the boundary trace of \(\mathcal RF\) at
\(\rho=0\) vanishes.  Consequently, its extension by zero to \(\R\),
denoted by \(E_0\mathcal RF\), satisfies
\begin{equation}
    \norm{E_0\mathcal RF}_{H^\sigma(\R;L^2(S^2))}
    \leq C_\sigma\norm{F}_{H^\sigma(\R^3)}.
    \label{eq:zero-extension}
\end{equation}
\end{lemma}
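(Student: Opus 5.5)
We prove \eqref{eq:radial-lifting} by interpolation between the endpoints $\sigma=0$ and $\sigma=1$, and then treat the boundary trace separately.

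\textbf{Endpoint $\sigma=0$.} Polar coordinates give
\[
\|\mathcal RF\|_{L^2(\R_+;L^2(S^2))}^2=\int_0^\infty\!\!\int_{S^2}\rho^2|F(\rho\omega)|^2\,d\sigma(\omega)\,d\rho=\|F\|_{L^2(\R^3)}^2,
\]
so $\mathcal R$ is an isometry $L^2(\R^3)\to L^2(\R_+;L^2(S^2))$.

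\textbf{Endpoint $\sigma=1$.} For $F\in C_c^\infty(\R^3)$ we have
\[
\partial_\rho(\mathcal RF)=F(\rho\omega)+\rho\,\omega\cdot\nabla F(\rho\omega).
\]
The second term is bounded in $L^2(d\rho\,d\omega)$ by $\|\nabla F\|_{L^2}$, again by polar coordinates. The first term is $\rho^{-1}\mathcal RF$, and Hardy's inequality in three dimensions,
\[
\int|F|^2|x|^{-2}\,dx\le 4\|\nabla F\|_{L^2}^2,
\]
shows that its $L^2(d\rho\,d\omega)$ norm is also controlled by $\|\nabla F\|_{L^2}$. Hence $\|\mathcal RF\|_{H^1(\R_+;L^2(S^2))}\le C\|F\|_{H^1(\R^3)}$, and density extends this to all of $H^1$.

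\textbf{Interpolation.} The operator $\mathcal R$ is linear and bounded at both endpoints. Complex interpolation gives \eqref{eq:radial-lifting} for $0\le\sigma\le1$, because $[L^2,H^1]_\sigma=H^\sigma$ holds both on $\R^3$ and on the half-line with values in the Hilbert space $L^2(S^2)$. On the half-line we may use the restriction of $H^\sigma(\R;\mathcal H)$, which is a retract via Stein extension.

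\textbf{Vanishing trace.} For $\sigma>\frac12$ the trace $\mathcal RF(0^+)\in L^2(S^2)$ exists and depends continuously on $\mathcal RF\in H^\sigma$. It suffices to check that it vanishes on a dense class. For $F\in C_c^\infty(\R^3)$, $\|\rho F(\rho\,\cdot)\|_{L^2(S^2)}\le C\rho\to0$. By continuity of the trace composed with $\mathcal R$, the trace vanishes for all $F\in H^\sigma(\R^3)$.

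\textbf{Zero extension.} Zero extension is bounded $H^\sigma(\R_+)\to H^\sigma(\R)$ for $\sigma<\frac12$, but for $\frac12<\sigma\le1$ only on the subspace of vanishing trace, i.e.\ $H^\sigma_0(\R_+)$. The case $\sigma=\frac32$ is excluded and is not needed here. Since the trace vanishes, \eqref{eq:zero-extension} follows.

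\textbf{Main obstacle.} The step that needs the most care is the one identifying the range of $\mathcal R$ with the vanishing-trace subspace. One cannot simply interpolate into $H^\sigma_0$, since $[L^2,H^1_0]_{1/2}$ is the Lions–Magenes space $H^{1/2}_{00}$.

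The way around this is to avoid interpolating into the target with boundary conditions. Instead, combine the plain interpolation bound \eqref{eq:radial-lifting} with the trace argument above. For $\frac12<\sigma<1$, the identification of $H^\sigma_0(\R_+)$ with the kernel of the trace, together with the boundedness of zero extension on it, is classical; it can be proved via the Hardy-type inequality
\[
\int_0^\infty\rho^{-2\sigma}\|g(\rho)\|^2\,d\rho\lesssim\|g\|_{H^\sigma}^2
\]
valid for traceless $g$.
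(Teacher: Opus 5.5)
Your proposal is correct and matches the paper's proof step for step: the polar-coordinate isometry at $\sigma=0$, the formula for $\partial_\rho(\rho F(\rho\omega))$ combined with the three-dimensional Hardy inequality at $\sigma=1$, complex interpolation, vanishing of the trace at $\rho=0$ by density, and boundedness of zero extension on the traceless subspace for $\frac12<\sigma<\frac32$. As a side remark, the obstacle you flag can be avoided altogether: since $\mathcal R$ already maps $H^1(\R^3)$ boundedly into the traceless part of $H^1(\R_+;L^2(S^2))$, the composition $E_0\mathcal R$ is bounded $L^2(\R^3)\to L^2(\R;L^2(S^2))$ and $H^1(\R^3)\to H^1(\R;L^2(S^2))$, so interpolating $E_0\mathcal R$ directly on the whole line gives the zero-extension bound for every $0\le\sigma\le1$ without the fractional zero-extension theorem.
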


\begin{proof}
At \(\sigma=0\), polar coordinates give
\begin{equation}
    \norm{\mathcal RF}_{L^2(\R_+;L^2(S^2))}^2
    =\int_0^\infty\int_{S^2}
      \rho^2|F(\rho \, \omega)|^2\,d\omega\,d\rho
    =\norm{F}_{L^2(\R^3)}^2.
    \label{eq:radial-isometry}
\end{equation}
For \(F\in C_c^\infty(\R^3)\), differentiation in \(\rho\) yields
\begin{equation}
    \partial_\rho\bigl(\rho F(\rho \, \omega)\bigr)
    =F(\rho \, \omega)+\rho\,\omega\cdot\nabla F(\rho \, \omega).
    \label{eq:radial-derivative}
\end{equation}
The second term on the right satisfies
\begin{equation}
    \int_0^\infty\int_{S^2}
    \rho^2|\omega\cdot\nabla F(\rho \, \omega)|^2\,d\omega\,d\rho
    \leq \norm{\nabla F}_{L^2(\R^3)}^2.
\end{equation}
For the first term, Hardy's inequality in dimension three gives
\begin{equation}
    \begin{aligned}
    \int_0^\infty\int_{S^2}|F(\rho \, \omega)|^2\,d\omega\,d\rho
    &=\int_{\R^3}\frac{|F(\xi)|^2}{|\xi|^2}\,d\xi\\
    &\leq4\norm{\nabla F}_{L^2(\R^3)}^2.
    \end{aligned}
\end{equation}
Together with \eqref{eq:radial-isometry}, this proves
\eqref{eq:radial-lifting} for \(\sigma=1\).  Complex interpolation between
the cases \(\sigma=0\) and \(\sigma=1\) proves
\eqref{eq:radial-lifting} for all \(0\leq\sigma\leq1\).

Suppose now that \(1/2<\sigma\leq1\).  The trace map
\begin{equation}
    H^\sigma(\R_+;L^2(S^2))\longrightarrow L^2(S^2),
    \qquad u\longmapsto u(0),
\end{equation}
is continuous.  For smooth \(F\), one has
\(\rho F(\rho\,\cdot)\to0\) in \(L^2(S^2)\) as \(\rho\downarrow0\).
The trace therefore vanishes for general \(F\in H^\sigma(\R^3)\) by
density and \eqref{eq:radial-lifting}.  Finally, when
\(1/2<\sigma<3/2\), extension by zero is bounded on the closed subspace of
\(H^\sigma(\R_+;L^2(S^2))\) consisting of functions with zero boundary
trace.  This proves \eqref{eq:zero-extension}.
\end{proof}

\begin{lemma}[Fourier decay from translation ]
\label{lem:translation-decay}
Let $\frac12<s<1$, and suppose that \(b\in L^1(\R)\) satisfies
\begin{equation}
    \norm{b(\,\cdot+h)-b}_{L^1(\R)}\leq M|h|^s,
    \qquad h\in\R.
    \label{eq:translation-modulus}
\end{equation}
Then
\begin{equation}
    \left|\int_\R e^{-it\rho}b(\rho)\,d\rho\right|
    \leq C_s\langle t\rangle^{-s}
    \bigl(\norm{b}_{L^1(\R)}+M\bigr).
    \label{eq:fourier-decay-modulus}
\end{equation}
\end{lemma}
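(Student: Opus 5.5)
The proof is the standard Riemann--Lebesgue-with-modulus argument. Set $\widehat b(t):=\int_\R e^{-it\rho}b(\rho)\,d\rho$. For $|t|\leq 1$ we use only the trivial bound $|\widehat b(t)|\leq \norm{b}_{L^1}$, which is at most $2^{s/2}\langle t\rangle^{-s}\norm{b}_{L^1}$. So only $|t|\geq1$ needs work.

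For $|t|\geq 1$ take the half-period shift $h:=\pi/t$. The substitution $\rho\mapsto\rho+h$ gives
\begin{equation*}
    \widehat b(t)=\int_\R e^{-it(\rho+h)}b(\rho+h)\,d\rho=-\int_\R e^{-it\rho}b(\rho+h)\,d\rho ,
\end{equation*}
because $e^{-ith}=e^{-i\pi}=-1$. Averaging this with the original expression yields
\begin{equation*}
    \widehat b(t)=\frac12\int_\R e^{-it\rho}\bigl(b(\rho)-b(\rho+h)\bigr)\,d\rho .
\end{equation*}
The oscillating factor has modulus one, so
\begin{equation*}
    |\widehat b(t)|\leq\frac12\norm{b(\cdot+h)-b}_{L^1(\R)}\leq \frac{M}{2}\,\pi^s|t|^{-s}
\end{equation*}
by \eqref{eq:translation-modulus}. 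For $|t|\geq1$ we have $|t|^{-s}\leq 2^{s/2}\langle t\rangle^{-s}$, which gives the bound $C_sM\langle t\rangle^{-s}$.

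Combining the two regimes gives \eqref{eq:fourier-decay-modulus} with, for instance, $C_s=2^{s/2}\max\{1,\pi^s/2\}$. There is no genuine obstacle here. The only points needing care are that the shift must exactly produce the sign flip, and that the substitution is legitimate because $b\in L^1$, so both integrals converge absolutely. The restriction $\frac12<s<1$ plays no role in this step and is only carried along for later use in \thmref{thm:half-wave-decay}.
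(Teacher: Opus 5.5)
Your proof is correct and is the same argument as the paper's: a half-period shift $h=\pi/|t|$ (you take $\pi/t$, which is equivalent because the hypothesis allows any $h\in\R$) turns $\widehat b(t)$ into an integral of $b(\cdot+h)-b$, which gives $C_sM|t|^{-s}$ for $|t|\geq1$, and the trivial $L^1$ bound handles $|t|\leq1$. Your averaging identity is exactly the paper's relation $(e^{ith}-1)\widehat b(t)=\int_\R e^{-it\rho}\bigl(b(\rho+h)-b(\rho)\bigr)\,d\rho$ with $|e^{ith}-1|=2$, and your explicit constant is also correct.
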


\begin{proof}
Let $\widehat b_1(t)=\int_\R e^{-it\rho}b(\rho)\,d\rho$.  Translation
gives
\begin{equation}
    (e^{ith}-1)\widehat b_1(t)
    =\int_\R e^{-it\rho}\bigl(b(\rho+h)-b(\rho)\bigr)\,d\rho.
\end{equation}
For $|t|\geq1$, choose $h=\pi/|t|$.  Since $|e^{ith}-1|=2$, equation \eqref{eq:translation-modulus} gives
\begin{equation}
    |\widehat b_1(t)|\leq C_sM|t|^{-s}.
\end{equation}
For $|t|\leq1$, the assertion follows from
$$|\widehat b_1(t)|\leq\norm{b}_{L^1}$$.
\end{proof}

\begin{proof}[Proof of \thmref{thm:half-wave-decay}]
It suffices initially to take \(f,g\in\mathcal S(\R^3)\).  Define the
zero-extended radial Fourier profiles
\begin{equation}\label{eq:uf-gf-def-1-0}
    u_f(\rho,\omega)
    =\begin{cases}
      \rho\widehat f(\rho \, \omega),&\rho>0,\\
      0,&\rho<0,
    \end{cases}
    \qquad
    u_g(\rho,\omega)
    =\begin{cases}
      \rho\widehat g(\rho \, \omega),&\rho>0,\\
      0,&\rho<0.
    \end{cases}
\end{equation}
\lemref{lem:radial-lifting} and the identity
\begin{equation}
    \norm{\widehat f}_{H^s(\R^3)}=\norm{f}_{L^2_s(\R^3)}
    \label{eq:fourier-weight-identity}
\end{equation}
give
\begin{equation}
    \norm{u_f}_{H^s(\R;L^2(S^2))}
    \leq C_s\norm{f}_{L^2_s},
    \qquad
    \norm{u_g}_{H^s(\R;L^2(S^2))}
    \leq C_s\norm{g}_{L^2_s}.
    \label{eq:profile-Hs-bound}
\end{equation}
Introducing the scalar spectral density
\begin{equation}
    b_{f,g}(\rho)
    =\ip{u_f(\rho)}{u_g(\rho)}_{L^2(S^2)}.
    \label{eq:spectral-density}
\end{equation}
Cauchy--Schwarz in $\rho$ implies
\begin{equation}
    \norm{b_{f,g}}_{L^1(\R)}
    \leq\norm{u_f}_{L^2(\R;L^2(S^2))}
         \norm{u_g}_{L^2(\R;L^2(S^2))}.
    \label{eq:density-L1}
\end{equation}
Moreover,
\begin{equation}
    \begin{aligned}
    b_{f,g}(\rho+h)-b_{f,g}(\rho)
    ={}&\ip{u_f(\rho+h)-u_f(\rho)}{u_g(\rho+h)}_{L^2(S^2)}\\
       &+\ip{u_f(\rho)}{u_g(\rho+h)-u_g(\rho)}_{L^2(S^2)}.
    \end{aligned}
\end{equation}
The Hilbert-valued Sobolev translation estimate
\begin{equation}
    \norm{u(\,\cdot+h)-u}_{L^2(\R;\mathcal H)}
    \leq C_s|h|^s\norm{u}_{H^s(\R;\mathcal H)},
    \qquad 0<s<1,
    \label{eq:sobolev-translation}
\end{equation}
therefore yields
\begin{equation}
    \norm{b_{f,g}(\,\cdot+h)-b_{f,g}}_{L^1(\R)}
    \leq C_s|h|^s
    \norm{f}_{L^2_s}\norm{g}_{L^2_s}.
    \label{eq:density-translation}
\end{equation}
Applying \lemref{lem:translation-decay} to $b_{f,g}$, and using
\eqref{eq:density-L1}--\eqref{eq:density-translation}, gives
\begin{equation}
    \left|\int_\R e^{-it\rho}b_{f,g}(\rho)\,d\rho\right|
    \leq C_s\langle t\rangle^{-s}
    \norm{f}_{L^2_s}\norm{g}_{L^2_s}.
    \label{eq:density-fourier-decay}
\end{equation}
On the other hand, Plancherel and polar coordinates show that
\begin{equation}
    \begin{aligned}
    \ip{e^{-it|D|}f}{g}_{L^2(\R^3)}
    &=\int_{\R^3}e^{-it|\xi|}\widehat f(\xi)
      \overline{\widehat g(\xi)}\,d\xi\\
    &=\int_0^\infty e^{-it\rho}\rho^2
      \int_{S^2}\widehat f(\rho \, \omega)
      \overline{\widehat g(\rho \, \omega)}\,d\omega\,d\rho\\
    &=\int_\R e^{-it\rho}b_{f,g}(\rho)\,d\rho.
    \end{aligned}
    \label{eq:spectral-representation}
\end{equation}
Combining \eqref{eq:density-fourier-decay} and
\eqref{eq:spectral-representation}, we obtain
\begin{equation}
    \left|\ip{e^{-it|D|}f}{g}_{L^2}\right|
    \leq C_s\langle t\rangle^{-s}
    \norm{f}_{L^2_s}\norm{g}_{L^2_s}.
\end{equation}
Since $(L^2_{-s})^*=L^2_s$, duality proves
\eqref{eq:weighted-half-wave-decay}.  The general case follows by density.
\end{proof}

The following is a limiting absorption principle for the half-wave operator in weighted $L^2_s$ spaces. For more details see \cite{agmon1975spectral,goldberg2004limiting,ionescu2006agmon,burak2019limiting,rodnianski2015effective}.

\begin{lemma}[Uniform LAP for the half-wave operator]
\label{lem:uniform-lap-for-R-eps-pm}
For $\frac12<s<1$, 
one has
\begin{equation}
\label{eq:one-sided-main-in-uni-LAP}
    \sup_{\substack{\lambda\in\R}}
 	\|(|\nabla|-\lambda-i\epsilon )^{-1}\|_{L^2_s\rightarrow L^2_{-s}}
 	\le C_{s}  \,,  
\end{equation}
uniformly in $\epsilon>0$ and $\lambda\in\R$.

\end{lemma}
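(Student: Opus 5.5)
The plan is to reduce the three-dimensional resolvent estimate to a one-dimensional Cauchy-kernel estimate on the radial frequency variable. This is the same mechanism as in \lemref{lem:translated-cauchy}, but at $t=0$ and without the frequency cutoff $\rchi$, so that no $\lambda_0$-dependence enters. By duality, $(L^2_{-s})^*=L^2_s$, it suffices to show
\begin{equation*}
\big|\langle g,(|\nabla|-\lambda-i\epsilon)^{-1}f\rangle\big|\le C_s\|f\|_{L^2_s}\|g\|_{L^2_s}
\end{equation*}
for $f,g\in\mathcal S(\R^3)$, uniformly in $\lambda\in\R$ and $\epsilon>0$. The general case then follows by density.

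\textbf{Step 1: radial representation.} As in the proof of \thmref{thm:half-wave-decay}, I use Plancherel and polar coordinates together with the zero-extended radial profiles $u_f,u_g$ from \eqref{eq:uf-gf-def-1-0}. This writes the pairing as
\begin{equation*}
\langle g,(|\nabla|-\lambda-i\epsilon)^{-1}f\rangle=\int_\R \frac{\langle u_g(\rho),u_f(\rho)\rangle_{L^2(S^2)}}{\rho-\lambda-i\epsilon}\,d\rho .
\end{equation*}
Extending over $\rho<0$ is harmless because the profiles vanish there. By \lemref{lem:radial-lifting} with $\sigma=s\in(\frac12,1)$, together with \eqref{eq:fourier-weight-identity}, we have $\|u_f\|_{H^s(\R;L^2(S^2))}\le C_s\|f\|_{L^2_s}$, and likewise for $g$. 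It therefore remains to show that multiplication by $R(\rho):=(\rho-\lambda-i\epsilon)^{-1}$, tensored with $\1_{L^2(S^2)}$, is bounded from $H^s(\R;L^2(S^2))$ to $H^{-s}(\R;L^2(S^2))$ with a constant independent of $\lambda$ and $\epsilon$.

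\textbf{Step 2: Hilbert--Schmidt bound.} In one-dimensional Fourier variables, multiplication by $R$ becomes convolution with the kernel from \eqref{eq:F1-res-ker-1-0},
\begin{equation*}
k(r)=c_0e^{-ir(\lambda+i\epsilon)}\mathbf 1_{(-\infty,0)}(r).
\end{equation*}
On the support $r<0$ we have $|e^{\epsilon r}|\le1$, so $|k(r)|\le|c_0|$ uniformly in $\lambda\in\R$ and $\epsilon>0$. Conjugating by the Sobolev weights gives the operator with kernel $\langle r\rangle^{-s}k(r-r')\langle r'\rangle^{-s}$. Its Hilbert--Schmidt norm is bounded by
\begin{equation*}
|c_0|\Big(\int_\R\langle r\rangle^{-2s}dr\Big),
\end{equation*}
which is finite precisely because $s>\frac12$. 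Since tensoring with the identity on $\mathcal H=L^2(S^2)$ preserves operator norms, this yields $\|R\,u\|_{H^{-s}(\R;\mathcal H)}\le C_s\|u\|_{H^s(\R;\mathcal H)}$.

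\textbf{Step 3: conclusion.} Pairing $R\,u_f\in H^{-s}$ against $u_g\in H^s$ gives the desired bound. Note that $\lambda\le0$ needs no separate treatment: the kernel bound in Step 2 never used the sign of $\lambda$.

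\textbf{Main obstacle.} I expect the main care to be in Step 1 rather than in the kernel estimate. One has to justify identifying the three-dimensional pairing with the one-dimensional $H^{-s}$--$H^s$ duality pairing of zero-extended vector-valued profiles. This requires the vanishing trace at $\rho=0$ from \lemref{lem:radial-lifting} for $s>\frac12$, so that $E_0\mathcal R\widehat f$ genuinely lies in $H^s(\R;L^2(S^2))$. It also requires checking that the Schwartz approximation and the $\epsilon$-regularized multiplier commute with the limits. The Hilbert--Schmidt computation itself is routine. It is uniform because, unlike in \lemref{lem:translated-cauchy}, no cutoff is inserted and no decay in $t$ is sought, so only boundedness of $k$ is needed.
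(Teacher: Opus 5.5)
Your argument is correct. It shares Step 1 with the paper (same radial representation, same use of \lemref{lem:radial-lifting}) but handles the one-dimensional core by a different route.

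\textbf{The paper's route.} It applies \lemref{lm:unif-halfline-bd-1-0} fiberwise in $\omega$. That lemma bounds the Cauchy integral of the scalar density $h=\overline{u_g}\,u_f$ by real-variable means:
\begin{itemize}
\item it uses $H^s\hookrightarrow C^{0,\alpha}$ with $\alpha<s-\frac12$;
\item it uses $h(0)=0$ to extend $h$ by zero while staying in $C^{0,\alpha}\cap L^1\cap L^\infty$;
\item it splits $\int H(r)(r-\lambda-i\epsilon)^{-1}\,dr$ into a near-diagonal H\"older piece, a bounded $\arctan$ piece and an $L^1$ tail.
\end{itemize}
Only then does it integrate over $S^2$ with Cauchy--Schwarz.

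\textbf{Your route.} You pass to the Fourier side of the radial variable. There the resolvent becomes convolution with the bounded, half-line-supported kernel $\mathcal F(R)$. The weighted Hilbert--Schmidt bound then gives an operator estimate $H^s(\R;\mathcal H)\to H^{-s}(\R;\mathcal H)$ directly. The condition $s>\frac12$ enters only through $\langle r\rangle^{-s}\in L^2$, and uniformity in $(\lambda,\epsilon)$ is immediate from $|e^{\epsilon r}|\le 1$ on $r<0$.

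\textbf{What each buys.} Your route is shorter and avoids both the H\"older product estimate and the fiberwise application. It also makes the LAP literally the $t=0$, $\rchi\equiv 1$ case of \lemref{lem:translated-cauchy}, which explains why no $\lambda_0$ appears. The paper's route uses no Fourier structure of the kernel and only needs the density to lie in $L^1\cap L^\infty\cap C^{0,\alpha}$ with $h(0)=0$. It would therefore adapt to densities not arising as products of $H^s$ factors.

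\textbf{Your flagged obstacle.} The step you flag as the main obstacle is benign. For Schwartz $f,g$ and fixed $\epsilon>0$, both $u_f$ and $R$ lie in $L^2$, and $Ru_f\in L^2(\R;\mathcal H)$. So the convolution identity and the identification with the $H^{-s}$--$H^s$ pairing follow from Plancherel in $\rho$ and weighted Cauchy--Schwarz. No limit in $\epsilon$ is needed.
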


\begin{proof}
Let $f,g\in\mathcal S(\R^3)$, and we recall the
zero-extended radial Fourier profiles as in \eqref{eq:uf-gf-def-1-0}.
Then,
\begin{eqnarray}
    \Big\langle g, (|\nabla|-\lambda-i\epsilon)^{-1}f \Big\rangle
    &=&\int d\Omega(\omega) \int_0^\infty d\rho
    \frac1{\rho-\lambda-i\epsilon} 
    \overline{u_{ g}(\rho,\omega)} 
    u_{ f}(\rho,\omega) \,.
\end{eqnarray}
From \lemref{lm:unif-halfline-bd-1-0} follows that this is bounded by
\begin{eqnarray}
    &\leq&C_s \int d\Omega(\omega) 
    \|u_{ f}(\rho,\omega)\|_{H^s(\mathbb R_+)}
    \|u_{ g}(\rho,\omega)\|_{H^s(\mathbb R_+)}
    \nonumber\\
    &\leq&
    C_s  \|\widehat f\|_{H^s(\R^3)}
    \|\widehat g\|_{H^s(\R^3)}
    \nonumber\\
    &=&
    C_s  \|f\|_{L^2_s(\R^3)}
    \|g\|_{L^2_s(\R^3)}
\end{eqnarray}
using \lemref{lem:radial-lifting} for $\frac12<s<1$. This proves the claim.
\end{proof}

\begin{lemma}[Uniform half-line Cauchy-transform bound]
\label{lm:unif-halfline-bd-1-0}
Let $s>\frac12$ and let $f,g\in H^s(\mathbb R_+)$ satisfy
$f(0)=g(0)=0$. Then
\begin{equation}
 \sup_{\lambda\in\mathbb R,\,\epsilon>0}
 \left|\int_0^\infty \frac{f(r)g(r)}{r-\lambda-i\epsilon}\,dr\right|
 \leq C_s\|f\|_{H^s(\mathbb R_+)}\|g\|_{H^s(\mathbb R_+)}.
 \label{eq:uniform-cauchy-bound}
\end{equation}
In particular, the constant is uniform for $\lambda\in\R$ and $\epsilon>0$.
\end{lemma}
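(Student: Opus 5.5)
The plan is to extend $f$ and $g$ by zero to all of $\mathbb R$ and move the whole estimate to the Fourier side. There the Cauchy kernel becomes a bounded, one-sided exponential, uniformly in $\lambda\in\mathbb R$ and $\epsilon>0$, and the bound reduces to an $L^1$ estimate for $\widehat{f}$ and $\widehat{g}$.

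\emph{Step 1 (reduction to $\frac12<s<1$ and extension).} Since $\|\cdot\|_{H^{s}(\mathbb R_+)}$ is monotone in $s$, it suffices to treat $\frac12<s<1$. For such $s$ and vanishing traces $f(0)=g(0)=0$, extension by zero $E_0$ is bounded $H^s(\mathbb R_+)\to H^s(\mathbb R)$, exactly as used in \lemref{lem:radial-lifting}. Set $F:=E_0f$ and $G:=E_0g$, so that $\|F\|_{H^s(\mathbb R)}\le C_s\|f\|_{H^s(\mathbb R_+)}$, and similarly for $G$. The integral in \eqref{eq:uniform-cauchy-bound} then equals $\int_{\mathbb R} F(r)G(r)\,R_{-\lambda-i\epsilon}(r)\,dr$.

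\emph{Step 2 (Fourier representation).} By density, first take $F,G$ Schwartz (or compactly supported smooth approximants, with the final bound passing to the limit). Write $h:=FG$, so $\widehat h=(2\pi)^{-1}\widehat F*\widehat G$. By Parseval together with \eqref{eq:F1-res-ker-1-0}, the Fourier transform of $R_{-\lambda-i\epsilon}$ is $c_0e^{-ir\lambda}e^{\epsilon r}\mathbf 1_{(-\infty,0)}(r)$. This multiplier has modulus at most $|c_0|$ for every $\lambda\in\mathbb R$ and $\epsilon>0$. Hence
\begin{equation*}
\Big|\int_{\mathbb R}\frac{h(r)}{r-\lambda-i\epsilon}\,dr\Big|
\le C\,\|\widehat h\|_{L^1}
\le C\,\|\widehat F\|_{L^1}\,\|\widehat G\|_{L^1},
\end{equation*}
where the last step is Young's inequality. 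Because $\widehat h\in L^1$, the pairing of $h$ with the kernel is justified by dominated convergence, so no principal-value issues arise.

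\emph{Step 3 (Sobolev control).} By Cauchy--Schwarz,
\begin{equation*}
\|\widehat F\|_{L^1}\le \Big(\int\langle\xi\rangle^{-2s}d\xi\Big)^{1/2}\|F\|_{H^s}=C_s\|F\|_{H^s},
\end{equation*}
since $2s>1$, and likewise for $G$. Combining this with Steps 1 and 2 gives \eqref{eq:uniform-cauchy-bound}, with a constant independent of $\lambda$ and $\epsilon$.

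\emph{Main obstacle.} The delicate point is Step 1, the bounded zero extension. It genuinely needs the vanishing trace (otherwise $E_0F$ has a jump and leaves $H^s$ for $s>\frac12$), and it needs $s<\frac32$; the reduction to $\frac12<s<1$ takes care of the latter. The remaining care is the density argument: approximate $f,g$ in $H^s_0(\mathbb R_+)$ by $C_c^\infty(0,\infty)$ functions, and observe that both sides of the inequality are continuous in that norm for fixed $\epsilon>0$.
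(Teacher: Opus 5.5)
Your proof is correct, but it takes a different route from the paper's.

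\textbf{The paper's route.} The paper stays in physical space. It sets $h=fg$ and uses $H^s(\mathbb R_+)\hookrightarrow C^{0,\alpha}$, for $0<\alpha<\min\{1,s-\frac12\}$, to bound $\|h\|_{L^1}+\|h\|_{L^\infty}+[h]_{C^{0,\alpha}}$. It then extends $h$ itself by zero, not $f$ and $g$; this keeps the H\"older bound exactly because $h(0)=0$. Finally it splits the Cauchy integral at $|r-\lambda|=1$:
\begin{itemize}
\item the H\"older modulus controls the near part;
\item the term $H(\lambda)\int_{-1}^1 (x-i\epsilon)^{-1}dx$ is bounded by $\pi\|h\|_{L^\infty}$;
\item the tail is bounded by $\|h\|_{L^1}$.
\end{itemize}

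\textbf{Your route.} You work on the Fourier side. You use that the transform of $(r-\lambda-i\epsilon)^{-1}$ is a one-sided exponential with modulus bounded uniformly in $\lambda$ and $\epsilon$. You then use that $H^s(\mathbb R)$, for $s>\frac12$, embeds into the Wiener algebra $\{F:\widehat F\in L^1\}$, which is closed under products.

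\textbf{What each buys.}
\begin{itemize}
\item Your argument is shorter and makes the uniformity in $\lambda$ and $\epsilon$ transparent. It proves the stronger intermediate bound $\sup_{\lambda,\epsilon}\bigl|\int h\,(r-\lambda-i\epsilon)^{-1}dr\bigr|\le C\|\widehat{FG}\|_{L^1}$.
\item Your argument relies on the Lions--Magenes fact that zero extension is bounded on $H^s$ functions with vanishing trace for $\frac12<s<\frac32$. As written, it also needs both traces to vanish. You could fix this by extending one factor with a general extension operator instead.
\item The paper's argument uses only the elementary H\"older embedding. It needs only $f(0)g(0)=0$, as the remark after the lemma notes, and it works for any $h\in L^1\cap C^{0,\alpha}$ vanishing at the origin.
\end{itemize}

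\textbf{One technical point.} Your move to the Fourier side is most cleanly justified by Plancherel rather than dominated convergence. For fixed $\epsilon>0$, both $FG$ (since $F\in L^\infty$) and the kernel lie in $L^2$. So $\int FG\,(r-\lambda-i\epsilon)^{-1}dr$ equals the frequency-side pairing directly, and no density argument is needed.
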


\begin{proof}
Fix
\begin{equation}
 0<\alpha<\min\left\{1,s-\frac12\right\},
\end{equation}
and set $h=fg$ on $\mathbb R_+$.  By Sobolev embedding and the product
estimate in $C^{0,\alpha}$,
\begin{equation}
 \|h\|_{L^1(\mathbb R_+)}+\|h\|_{L^\infty(\mathbb R_+)}
 +[h]_{C^{0,\alpha}(\mathbb R_+)}
 \leq C_{s,\alpha}
 \|f\|_{H^s(\mathbb R_+)}\|g\|_{H^s(\mathbb R_+)}.
 \label{eq:product-holder-bound}
\end{equation}
Here, the $L^1$ term follows from Cauchy--Schwarz. Since $h(0)=0$, its
extension by zero,
\begin{equation}
 H(r):=\begin{cases}h(r),&r\geq0,\\0,&r<0,\end{cases}
\end{equation}
belongs to $L^1(\mathbb R)\cap C^{0,\alpha}(\mathbb R)$ and satisfies the
same bound as in \eqref{eq:product-holder-bound}, up to a fixed constant.

For $\lambda\in\mathbb R$ and $\epsilon>0$, changing variables $x=r-\lambda$ and
splitting at $|x|=1$ gives
\begin{align}
 \int_{\mathbb R}\frac{H(r)}{r-\lambda-i\epsilon}\,dr
 &=\int_{|x|<1}\frac{H(\lambda+x)-H(\lambda)}{x-i\epsilon}\,dx
   +H(\lambda)\int_{-1}^{1}\frac{dx}{x-i\epsilon} \notag\\
 &\quad+\int_{|x|\geq1}\frac{H(\lambda+x)}{x-i\epsilon}\,dx.
 \label{eq:cauchy-decomposition}
\end{align}
The three terms on the right satisfy, uniformly in $\lambda$ and $\epsilon$,
\begin{align}
 \left|\int_{|x|<1}\frac{H(\lambda+x)-H(\lambda)}{x-i\epsilon}\,dx\right|
 &\leq [H]_{C^{0,\alpha}}
       \int_{-1}^{1}|x|^{\alpha-1}\,dx
 =\frac{2}{\alpha}[H]_{C^{0,\alpha}},\\
 \left|H(\lambda)\int_{-1}^{1}\frac{dx}{x-i\epsilon}\right|
 &=2|H(\lambda)|\arctan(\epsilon^{-1})
 \leq\pi\|H\|_{L^\infty},\\
 \left|\int_{|x|\geq1}\frac{H(\lambda+x)}{x-i\epsilon}\,dx\right|
 &\leq\|H\|_{L^1}.
\end{align}
Because $H$ vanishes on $(-\infty,0)$, the integral on the left-hand side
of \eqref{eq:cauchy-decomposition} is precisely the integral in
\eqref{eq:uniform-cauchy-bound}.  Combining the preceding estimates with
\eqref{eq:product-holder-bound} proves the claim.
\end{proof}

\begin{remark}
The proof only requires $f(0)g(0)=0$; hence, it is sufficient that one of the two factors has a zero trace at the origin.
\end{remark}


\section{Well-posedness and proof of \texorpdfstring{\propref{prop:lwp-mf-strong-final}}{Proposition~\ref*{prop:lwp-mf-strong-final}}}
\label{sec-appB-0-1}

\begin{lemma}
    \label{lm-Vgeneric-spec-1-0}
In the space 
\begin{equation}
    \mathcal V(\mathbb R^3)
    :=
    \left\{
        V\in L^\infty_{\mathrm{loc}}(\mathbb R^3;\mathbb R):
        \lim_{|x|\to\infty}V(x)=+\infty
    \right\}
\end{equation} of confining potentials endowed with the $L^\infty$ topology on bounded perturbations, the subset of potentials having simple, rationally independent eigenvalues is residual.
    
\end{lemma}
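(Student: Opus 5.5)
The plan is a Baire category argument in the complete metric space $\mathcal V(\R^3)$. Here $d(V,W):=\min\{1,\|V-W\|_{L^\infty}\}$ when $V-W\in L^\infty$, and $d(V,W)=1$ otherwise. Each ``affine leaf'' $V_0+L^\infty$ is open and closed, and it is complete, so the whole space is a Baire space. Since all $V$ in a leaf differ by bounded perturbations, each $-\Delta+V$ has compact resolvent. Its eigenvalues $E_0(V)\le E_1(V)\le\cdots$, counted by min–max, are $1$-Lipschitz in $\|\cdot\|_{L^\infty}$. The target set will be written as a countable intersection of open dense sets. For each $K\in\N$ and each nonzero integer vector $n=(n_0,\dots,n_K)\in\Z^{K+1}$, define
\begin{equation}
\mathcal U_{K,n}:=\Big\{V\in\mathcal V:\ E_0(V)<E_1(V)<\cdots<E_K(V),\ \ \sum_{l=0}^K n_lE_l(V)\neq0\Big\}.
\end{equation}
Simplicity of the whole spectrum together with rational independence is then exactly membership in $\bigcap_{K,n}\mathcal U_{K,n}$. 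Any finite rational relation involves only finitely many indices, so it is excluded by a suitable $(K,n)$.

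\emph{Openness.} Openness of $\mathcal U_{K,n}$ follows directly from the Lipschitz continuity of $V\mapsto E_l(V)$. Both conditions are strict inequalities in finitely many continuous functions.

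\emph{Density.} This is the main step, done in two stages for a given $V_0$ and $\epsilon>0$.

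First, make $E_0,\dots,E_K$ simple by a small bounded perturbation. Take a degenerate eigenvalue $E$ with eigenspace $\mathcal E$ of dimension $m\ge2$. By unique continuation, the eigenfunctions in $\mathcal E$ are linearly independent on any nonempty open ball. Choose a bounded $W$ supported in a ball such that the compression $P_{\mathcal E}WP_{\mathcal E}$ is not a multiple of the identity. First-order perturbation theory (Kato's analytic perturbation of isolated eigenvalues, applied to the family $V_0+\tau W$) shows that the cluster near $E$ splits for small $\tau\neq0$. Doing this finitely many times, and keeping each $\tau$ small enough that previously separated eigenvalues remain separated, gives simplicity of $E_0,\dots,E_K$.

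Second, once these eigenvalues are simple, they are real-analytic in $\tau$ along any direction $W\in L^\infty$, with derivatives $\partial_\tau E_l=\langle\chi_l,W\chi_l\rangle$. Choose $W$ so that $\sum_l n_l\langle\chi_l,W\chi_l\rangle\neq0$. Such a $W$ exists because $\sum_l n_l|\chi_l|^2$ is not identically zero: the densities $|\chi_l|^2$ of distinct simple eigenfunctions are linearly independent, which again follows from unique continuation and the independence of the eigenfunctions. Then $\tau\mapsto\sum n_lE_l(V_0+\tau W)$ is a nonconstant analytic function. Its zeros are isolated, so some $\tau$ with $|\tau|\|W\|_\infty<\epsilon$ gives $V_0+\tau W\in\mathcal U_{K,n}$.

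I expect the main obstacle to be exactly this nondegeneracy: the linear independence of $\{|\chi_l|^2\}_{l\le K}$, equivalently the existence of a direction separating the eigenvalue derivatives. I would first try to cite the genericity results of \cite{MasonSigalotti2010}, which establish such independence for Schrödinger operators with generic potentials. Failing that, I would argue directly: a linear relation $\sum c_l|\chi_l|^2\equiv0$ on an open set, combined with the distinct ODE-type asymptotics of the eigenfunctions, contradicts unique continuation. Baire's theorem then gives residuality.
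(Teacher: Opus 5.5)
Your skeleton coincides with the paper's proof. The paper takes exactly your sets, written $\mathcal O_{K,q}$ with $q\in\mathbb Q^{K+1}\setminus\{0\}$ and ``$E_0,\dots,E_K$ simple''. It then quotes Proposition~3.2 of \cite{MasonSigalotti2010} for the fact that each of these sets is open and dense, and concludes by Baire over the countable index set. Your openness argument via the $1$-Lipschitz dependence of the min--max values is fine. So is the splitting stage of your density argument, provided you also separate $E_{K+1}$ from $E_K$. Your $\mathcal U_{K,n}$ only imposes $E_0<\cdots<E_K$, but the second stage needs $E_K$ itself to be simple, hence analytic.

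The gap is the nondegeneracy claim in the second stage: linear independence of $\{|\chi_l|^2\}_{l\le K}$ for distinct simple eigenvalues does not follow from unique continuation.
\begin{itemize}
\item Unique continuation, together with a Vandermonde argument based on $(-\Delta+V-E_l)\chi_l=0$, gives independence of the $\chi_l$ themselves on open sets.
\item Squaring destroys this structure. The identity $\Delta(\chi_l^2)=2|\nabla\chi_l|^2+2(V-E_l)\chi_l^2$ does not close on the squares, and independent functions can have dependent squares (take $1,\cos x,\sin x$).
\item The closure you are implicitly using exists only in one dimension, namely the third-order equation satisfied by products of solutions of a Sturm--Liouville equation. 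There are no ``ODE-type asymptotics'' for a general $V\in L^\infty_{\mathrm{loc}}(\mathbb R^3)$.
\item In higher dimensions, independence of squares of eigenfunctions appears in the literature only as a \emph{generic} property. It is proved by global perturbation arguments of the same kind as the result the paper cites. So you cannot assume it at your arbitrary point $V_1$.
\end{itemize}

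The difficulty is confined to $\sum_l n_l=0$. If $\sum_l n_l\neq0$, the bounded direction $W\equiv1$ shifts every $E_l$ by $\tau$, so $\sum_l n_lE_l$ shifts by $\tau\sum_l n_l$ and density is trivial. The relations with $\sum_l n_l=0$, for instance $E_k-E_{k'}=E_j-E_{j'}$ (the ones that matter for the cascade), are exactly where you need a genuine argument that $V\mapsto\sum_l n_lE_l(V)$ is not locally constant near $V_1$. You have three ways forward:
\begin{itemize}
\item supply such an argument yourself;
\item cite density of $\mathcal U_{K,n}$ directly from Proposition~3.2 of \cite{MasonSigalotti2010}, which is what the paper does;
\item cite a generic-independence result for the squares, but then you must first make a further small perturbation of $V_1$ into that generic set, since the result does not apply at $V_1$ itself.
\end{itemize}
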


\begin{proof}
For \(K\in\mathbb N\) and
\(q=(q_0,\ldots,q_K)\in\mathbb Q^{K+1}\setminus\{0\}\), define
\begin{equation}
    \mathcal O_{K,q}
    :=
    \left\{
        V\in\mathcal V(\mathbb R^3):
        E_0(V),\ldots,E_K(V)\text{ are simple and }
        \sum_{j=0}^{K}q_jE_j(V)\neq0
    \right\}.
\end{equation} 
Proposition~3.2 of Mason and Sigalotti \cite{MasonSigalotti2010}
states that every $\mathcal O_{K,q}$ is open and dense in
$\mathcal V(\mathbb R^3)$.  Since the collection of pairs
$(K,q)$ is countable, the set
\begin{equation}
    \mathcal G_{\mathrm{nr}}
    :=
    \bigcap_{K\geq0}
    \bigcap_{q\in\mathbb Q^{K+1}\setminus\{0\}}
    \mathcal O_{K,q}
\end{equation}
is residual.  Thus, every $V\in\mathcal G_{\mathrm{nr}}$ has a simple,
rationally independent spectrum.
\end{proof}

\begin{lemma}[Product estimate in $Y^1$]
\label{lem:Y1-product-estimate}
For any $\vfi_t \in Y^1(\mathbb{R}^3)$ and $f \in W^{1,\infty}(\mathbb{R}^3)$, the product
$f\vfi_t$ belongs to $Y^1(\mathbb{R}^3)$, and there exists a universal constant $C>0$
such that
\begin{equation}
\|f\vfi_t\|_{Y^1} \le C \|f\|_{W^{1,\infty}} \|\vfi_t\|_{Y^1}.
\end{equation}
\end{lemma}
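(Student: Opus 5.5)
The plan is to check the $L^2$ part and the gradient part of the $Y^1$ norm separately, using the explicit form \eqref{eq:Y1-norm-def}:
\begin{equation*}
\|f\vfi\|_{Y^1}^2=\|\nabla(f\vfi)\|_{L^2}^2+\int V|f\vfi|^2\,dx+(C_0+1)\|f\vfi\|_{L^2}^2 .
\end{equation*}
Since $Y^1$ is defined as a completion of $C_c^\infty$, we first prove the estimate for $\vfi\in C_c^\infty(\mathbb R^3)$ and then extend by density.

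The zeroth-order terms are immediate from $|f\vfi|^2\le\|f\|_{L^\infty}^2|\vfi|^2$. Because the weight $V+C_0+1\ge1$ is nonnegative, this pointwise bound gives
\begin{equation*}
\int (V+C_0+1)|f\vfi|^2\,dx\le \|f\|_{L^\infty}^2\int (V+C_0+1)|\vfi|^2\,dx .
\end{equation*}
For the gradient, the Leibniz rule gives $\nabla(f\vfi)=(\nabla f)\vfi+f\nabla\vfi$. Hence
\begin{equation*}
\|\nabla(f\vfi)\|_{L^2}\le \|\nabla f\|_{L^\infty}\|\vfi\|_{L^2}+\|f\|_{L^\infty}\|\nabla\vfi\|_{L^2}\le 2\|f\|_{W^{1,\infty}}\|\vfi\|_{Y^1}.
\end{equation*}
Here we used $\|\vfi\|_{L^2}\le (C_0+1)^{-1/2}\|\vfi\|_{Y^1}\le\|\vfi\|_{Y^1}$ and $\|\nabla\vfi\|_{L^2}\le\|\vfi\|_{Y^1}$. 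Adding the two pieces yields the claim with a universal constant, for instance $C=3$.

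One subtlety is that for $\vfi\in C_c^\infty$ and merely Lipschitz $f$, the product $f\vfi$ is a compactly supported $H^1$ function, not a smooth one. So we must confirm that $f\vfi$ lies in $Y^1$. We mollify, setting $f_\delta\vfi$ with $f_\delta=f*\rho_\delta$. These functions are in $C_c^\infty$ with supports in a fixed compact set $K$. On $K$ the function $V$ is bounded, so $Y^1$ convergence there reduces to $H^1(K)$ convergence. We have $f_\delta\vfi\to f\vfi$ in $H^1$, because $f_\delta\to f$ uniformly and $\nabla f_\delta\to\nabla f$ in $L^2_{\rm loc}$. Therefore $f\vfi\in Y^1$ and the bound persists in the limit.

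For general $\vfi\in Y^1$, take $\vfi_n\in C_c^\infty$ with $\vfi_n\to\vfi$ in $Y^1$. The estimate already proven shows that $f\vfi_n$ is Cauchy in $Y^1$. Its $Y^1$ limit must coincide with the $L^2$ limit $f\vfi$, and the inequality passes to the limit.

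The main obstacle is only this density and identification step, specifically checking that the abstract completion $Y^1$ contains $f\vfi$. It is routine: since the support is compact, the weight $V$ is harmless there.
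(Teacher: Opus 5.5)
Your argument is correct and is essentially the paper's: the Leibniz rule for $\nabla(f\vfi)$ together with the pointwise bound $|f\vfi|^2\le\|f\|_{L^\infty}^2|\vfi|^2$ against the weight $V+C_0+1\ge 1$ gives $\|f\vfi\|_{Y^1}^2\le 5\|f\|_{W^{1,\infty}}^2\|\vfi\|_{Y^1}^2$. Your mollification and density step, which shows that $f\vfi$ lies in the completion $Y^1$ and that the explicit norm formula applies to it, supplies a detail the paper leaves implicit. One harmless slip: the bound $\|\vfi\|_{L^2}\le (C_0+1)^{-1/2}\|\vfi\|_{Y^1}$ would require $\langle\vfi,\cH_0\vfi\rangle\ge0$, which $V\ge -C_0$ does not guarantee, but the bound you actually use, $\|\vfi\|_{L^2}\le\|\vfi\|_{Y^1}$, follows directly from $V+C_0+1\ge1$.
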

\begin{proof}
Directly expanding the $Y^1$ norm, we find
\begin{align*}
\|f\vfi_t\|_{Y^1}^2
=&\|\nabla(f\vfi_t)\|_{L^2}^2 + \int_{\mathbb{R}^3} V |f\vfi_t|^2 \, dx + (C_0+1)\|f\vfi_t\|_{L^2}^2 \\
&\le 2\|f\|_{L^\infty}^2 \|\nabla \vfi_t\|_{L^2}^2 + 2\|\nabla f\|_{L^\infty}^2 \|\vfi_t\|_{L^2}^2
+ \|f\|_{L^\infty}^2 \int_{\mathbb{R}^3} (V + C_0 + 1) |\vfi_t|^2 \, dx.
\end{align*}
Since $V + C_0 + 1 \ge 1$, we clearly have $\|\vfi_t\|_{L^2}^2 \le \|\vfi_t\|_{Y^1}^2$ and 
$\|\nabla\vfi_t\|_{L^2}^2 \le \|\vfi_t\|_{Y^1}^2$, as well as $\int (V+C_0+1)|\vfi_t|^2 \le \|\vfi_t\|_{Y^1}^2$. 
This immediately yields $\|f\vfi_t\|_{Y^1}^2 \le 5 \|f\|_{W^{1,\infty}}^2 \|\vfi_t\|_{Y^1}^2$.
\end{proof}

\bigskip

\begin{proof}[\textbf{Proof of \propref{prop:lwp-mf-strong-final}}]
\label{app:lwp-mf-strong-final}
Fix $t_0>0$. Define the Banach space
\begin{equation}
\cX_{t_0} := C([0,t_0];Y^1(\mathbb{R}^3))\times C([0,t_0];H^{1/2}(\mathbb{R}^3))
\end{equation}
equipped with the norm
\begin{equation}
\|(\vfi,u)\|_{\cX_{t_0}}:=\|\vfi\|_{L^\infty_t Y^1(\mathbb{R}^3)}+\|u\|_{L^\infty_t H^{1/2}(\mathbb{R}^3)}.
\end{equation}
Define the map $\Phi: \cX_{t_0} \to \cX_{t_0}$ by 

\begin{equation}\Phi(\vfi_t,u_t) := (\Phi_1(\vfi_t,u_t), \Phi_2(\vfi_t,u_t)),\end{equation} 
where 
\begin{align}
\Phi_1(\vfi_t,u_t)(t) &:= e^{-it\cH_0}\vfi_0 - i\int_0^t e^{-i(t-\tau)\cH_0} \Bigl( \lam (v*|\vfi_\tau|^2) + \eta \, w*(u_\tau+\overline{u_\tau}) \Bigr)\vfi_\tau \, d\tau, \\
\Phi_2(\vfi_t,u_t)(t) &:= e^{-it\omega(-i\nabla)}u_0 - i\eta\int_0^t e^{-i(t-\tau)\omega(-i\nabla)} (w*|\vfi_\tau|^2) \, d\tau.
\end{align}
Since $e^{-it\omega(-i\nabla)}$ is unitary on $H^{1/2}(\mathbb{R}^3)$ and $Y^1 \hookrightarrow H^1 \hookrightarrow L^2$, Young's inequality gives
\begin{equation}
\|w*|\vfi_\tau|^2\|_{H^{1/2}}
\le
\|w\|_{H^{1/2}}\||\vfi_\tau|^2\|_{L^1}
=
\|w\|_{H^{1/2}}\|\vfi_\tau\|_{L^2}^2
\le
\|w\|_{H^{1/2}}\|\vfi_\tau\|_{Y^1}^2.
\end{equation}
Integrating this bound over the time interval $[0,t_0]$ yields
\begin{equation}
\label{eq:local-u-bound}
\|\Phi_2(\vfi,u)\|_{L^\infty_t H^{1/2}} \le \|u_0\|_{H^{1/2}}+ \eta\,t_0\,\|w\|_{H^{1/2}}\,
\|\vfi\|_{L^{\infty}_t Y^1}^2.
\end{equation}
This estimate also demonstrates the local Lipschitz continuity of $\Phi_2$. By Young's inequality for convolutions, Cauchy-Schwarz, and the embedding $Y^1 \hookrightarrow L^2$, we have for any two pairs $(\vfi_t,u_t)$ and $(\tilde{\vfi}_t,\tilde{u}_t)$,
\begin{equation}
    \|\Phi_2(\vfi_t,u_t) - \Phi_2(\tilde{\vfi}_t,\tilde{u}_t)\|_{L^\infty_t H^{1/2}} \leq \eta t_0 \|w\|_{H^{1/2}}  \|\vfi_t + \tilde{\vfi}_t\|_{L^\infty_t Y^1} \|\vfi_t - \tilde{\vfi}_t\|_{L^\infty_t Y^1}.
\end{equation}
For the particle component $\Phi_1$, we estimate the nonlinearities using  \lemref{lem:Y1-product-estimate}. 
For the Hartree term, let $f = v*|\vfi_t|^2$. Since $v\in W^{1,\infty}(\mathbb{R}^3)$,
\begin{equation}
\|f\|_{W^{1,\infty}} \le \|v\|_{W^{1,\infty}} \|\vfi_t\|_{L^2}^2 \le C_v \|\vfi_t\|_{Y^1}^2.
\end{equation}
Applying \lemref{lem:Y1-product-estimate}, we bound the Hartree nonlinearity
\begin{equation}
\|(v*|\vfi_t|^2)\vfi_t\|_{Y^1}
\le
C_v \|\vfi_t\|_{Y^1}^3.
\label{eq:Hartree-Y1-bound}
\end{equation}
For the field coupling term, let $f = w*(u_t +\bar u_t)$. Given that $w \in H^{1/2}$, using the Cauchy-Schwarz inequality, one gets
\begin{align}
    \|f\|_{W^{1,\infty}} = & \norm{w*(u_t+\bar u_t)}_{W^{1,\infty}} 
    \nonumber\\
    \lesssim & \norm{w}_{H^{1/2}} \norm{u_t+\bar u_t}_{H^{1/2}} 
\end{align}
Applying \lemref{lem:Y1-product-estimate} again,
\begin{equation}
\|(w*(u_t+\bar u_t))\vfi_t\|_{Y^1}
\le
C_w\,\|u_t\|_{H^{1/2}}\,\|\vfi_t\|_{Y^1}.
\label{eq:coupling-Y1-bound}
\end{equation}
Since $e^{-it\cH_0}$ is an isometry on $Y^1(\mathbb{R}^3)$, integrating the sum of \eqref{eq:Hartree-Y1-bound} and \eqref{eq:coupling-Y1-bound} yields
\begin{equation}
\label{eq:local-phi-bound}
\|\Phi_1(\vfi_t,u_t)\|_{L^\infty_t Y^1} \le \|\vfi_0\|_{Y^1} + C t_0 \left(
|\lam| \|\vfi_t\|_{L^\infty_t Y^1}^3 + \eta \|u_t\|_{L^\infty_t H^{1/2}} \|\vfi_t\|_{L^\infty_t Y^1} \right).
\end{equation}
The local Lipschitz continuity of $\Phi_1$ follows from analogous estimates using \lemref{lem:Y1-product-estimate}. That is, for any two pairs $(\vfi_t,u_t)$ and $(\tilde{\vfi}_t,\tilde{u}_t)$, we have 
\begin{align}
    \|\Phi_1(\vfi_t,u_t) - \Phi_1(\tilde{\vfi}_t,\tilde{u}_t)\|_{L^\infty_t Y^1}  &\leq  t_0 C \Big(\|\vfi_t - \tilde{\vfi}_t\|_{L^\infty_t Y^1}
    +\|u_t-\widetilde{u}_t\|_{L^\infty_t H^{1/2}}\Big),
\end{align}
where $C$ depends only on the norms of the data in the local space.

Therefore, define the closed ball $B_R:=\{(\vfi,u)\in \cX_{t_0}:\|(\vfi,u)\|_{\cX_{t_0}}\le R\}$. By the estimates \eqref{eq:local-u-bound} and \eqref{eq:local-phi-bound}, we can choose $t_0$ sufficiently small such that $\Phi(B_R) \subseteq B_R$. Moreover, the local Lipschitz continuity of $\Phi$ implies that $\Phi$ is a strict contraction on $B_R$. Therefore, by the Banach fixed-point theorem, there exists a unique fixed point of $\Phi$ in $B_R$, corresponding to a unique mild solution on $[0,t_0]$.
\end{proof}

\begin{lemma}[Propagation of the $H^{\frac12}$-regularity of the field]
\label{lem:field-Hs-propagation}
Let $(\vfi_t,u_t)$ be a mild solution on $[0,T_{\max})$. Assume $w\in H^{\frac12}(\mathbb R^3)$ and that the boson mass $\|\vfi_t\|_{L^2}$ is conserved. Then, for every $t\in [0,T_{\max})$,
\begin{equation}
\label{eq:Hs-propagation-final}
\|u_t\|_{H^{1/2}(\mathbb R^3)}
\le
\|u_0\|_{H^{1/2}(\mathbb R^3)}
+
\eta\,t\,\|w\|_{H^{1/2}(\mathbb R^3)}\,
\|\vfi_0\|_{L^2(\mathbb R^3)}^2.
\end{equation}
\end{lemma}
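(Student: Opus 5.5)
The plan is to work directly from the Duhamel formula \eqref{eq:duhamel-field-LWP} for the field component, which holds for the mild solution on $[0,T_{\max})$:
\begin{equation*}
u_t = e^{-it\omega(-i\nabla)}u_0 - i\eta\int_0^t e^{-i(t-\tau)\omega(-i\nabla)}\,(w*|\vfi_\tau|^2)\,d\tau .
\end{equation*}

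First, I take the $H^{1/2}$ norm and apply the triangle inequality, including Minkowski's inequality for the Bochner integral. Since $\omega(-i\nabla)=|\nabla|$ is a Fourier multiplier, $e^{-is\omega(-i\nabla)}$ commutes with $\langle\nabla\rangle^{1/2}$ and is unitary on $H^{1/2}(\R^3)$ for every $s$. This gives
\begin{equation*}
\|u_t\|_{H^{1/2}}\le \|u_0\|_{H^{1/2}}+\eta\int_0^t \|w*|\vfi_\tau|^2\|_{H^{1/2}}\,d\tau .
\end{equation*}

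Next, I bound the source term by Young's inequality on the Fourier side. We have $\langle\xi\rangle^{1/2}|\widehat{w*\rho}(\xi)|\le \langle\xi\rangle^{1/2}|\widehat w(\xi)|\,\|\widehat\rho\|_{L^\infty}$, and $\|\widehat\rho\|_{L^\infty}\le\|\rho\|_{L^1}$, up to the Fourier normalization constant. Applying this with $\rho=|\vfi_\tau|^2$ yields
\begin{equation*}
\|w*|\vfi_\tau|^2\|_{H^{1/2}}\le \|w\|_{H^{1/2}}\,\||\vfi_\tau|^2\|_{L^1}=\|w\|_{H^{1/2}}\|\vfi_\tau\|_{L^2}^2 .
\end{equation*}
This is the same estimate already used in the proof of \propref{prop:lwp-mf-strong-final}.

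Then I invoke the assumed conservation of boson mass, $\|\vfi_\tau\|_{L^2}=\|\vfi_0\|_{L^2}$ for $\tau\in[0,T_{\max})$. This makes the integrand constant in $\tau$, and integrating over $[0,t]$ gives exactly \eqref{eq:Hs-propagation-final}.

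The only point I expect to need care is justifying the Duhamel formula on the whole interval $[0,T_{\max})$, together with the measurability and continuity of $\tau\mapsto w*|\vfi_\tau|^2$ in $H^{1/2}$. This follows because $\vfi\in C([0,t];Y^1)\subset C([0,t];L^2)$, and the map $\vfi\mapsto w*|\vfi|^2$ is continuous from $L^2$ to $H^{1/2}$ by the same bilinear estimate. The integral is therefore a well-defined Bochner integral, and the mild solution on each compact subinterval, obtained by iterating the local theory, satisfies the formula. Everything else is routine.
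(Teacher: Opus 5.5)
Your proposal is correct and follows the paper's proof essentially step for step. Both use the Duhamel formula and unitarity of $e^{-it\omega(-i\nabla)}$ on $H^{1/2}$, then the Young-type bound $\|w*|\phi_\tau|^2\|_{H^{1/2}}\le\|w\|_{H^{1/2}}\|\phi_\tau\|_{L^2}^2$, and finally mass conservation to integrate in $\tau$. Your added justification of the Bochner integral and of the continuity of $\tau\mapsto w*|\phi_\tau|^2$ in $H^{1/2}$ fills in a detail the paper leaves implicit.
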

\begin{proof}
Since $e^{-it\omega(-i\nabla)}$ is unitary on $H^{1/2}$, the Duhamel formula and Young's inequality yield
\begin{equation}
\|u_t\|_{H^{1/2}}\le\|u_0\|_{H^{1/2}}+\eta\int_0^t \|w\|_{H^{1/2}}\|\vfi_\tau\|_{L^2}^2\,d\tau.
\end{equation}
By $L^2$-mass conservation, $\|\vfi_\tau\|_{L^2}^2 = \|\vfi_0\|_{L^2}^2$, yielding the linear growth bound.
\end{proof}

The global theory relies on the coercive lower bound for the full conserved energy. 

\begin{lemma}[Coercive lower bound for the full energy]
\label{lem:energy-coercive-lower-bound}
Assume that the potentials satisfy
\begin{equation}
V(x)\ge -C_0 \quad \text{on } \mathbb{R}^3,
\qquad
v\in L^\infty(\mathbb{R}^3),
\qquad
w\in L^1(\mathbb{R}^3)\mcap L^\infty(\mathbb{R}^3).
\end{equation}
Assume the dispersion relation $\omega(\xi)=|\xi|$ is strictly positive almost everywhere and defines a field energy space $X_\omega := \bigl\{ u \in \mathcal{S}'(\mathbb{R}^3; \mathbb{C}) : \|\omega(-i\nabla)^{1/2}u\|_{L^2} < \infty \bigr\}$ that embeds continuously into $L^3(\mathbb{R}^3)$. Let $C_\omega>0$ denote the sharp embedding constant such that
\begin{equation}
\|u\|_{L^3(\mathbb{R}^3)} \le C_\omega \|\omega(-i\nabla)^{1/2}u\|_{L^2(\mathbb{R}^3)} \quad \text{for all } u \in X_\omega.
\end{equation}
For any state $(\vfi, u) \in Y^1(\mathbb{R}^3) \times X_\omega$, the full energy
\begin{align}
\label{eq:full-energy-of-mf}
\mathcal{E}[\vfi, u] =\;& \frac{1}{2} \int_{\mathbb{R}^3} |\nabla \vfi(x)|^2 \, dx + \frac{1}{2} \int_{\mathbb{R}^3} V(x) |\vfi(x)|^2 \, dx + \frac{\lambda}{4} \int_{\mathbb{R}^3} (v * |\vfi|^2)(x) |\vfi(x)|^2 \, dx \notag \\
& + \frac{\eta}2 \int_{\mathbb{R}^3} (w * (\bar{u} + u))(x) |\vfi(x)|^2 \, dx + \frac{1}{2} \int_{\mathbb{R}^3} \bar{u} \, \omega(-i \nabla) u \, dx
\end{align}
satisfies the lower bound
\begin{align}
\mathcal{E}[\vfi, u] + \frac{C_0+1}{2}\|\vfi\|_{L^2(\mathbb R^3)}^2
\ge\;&
\frac12\|\vfi\|_{Y^1(\mathbb{R}^3)}^2
+
\frac14\|\omega(-i\nabla)^{1/2}u\|_{L^2(\mathbb{R}^3)}^2 \notag\\
&\; -
\left(
\frac{|\lambda|}{4}\|v\|_{L^\infty}
+
4 C_\omega^2 \eta^2\|w\|_{L^{3/2}}^2
\right)
\|\vfi\|_{L^2(\mathbb{R}^3)}^4.
\label{eq:energy-lower-bound-final}
\end{align}
In particular, the energy is coercive in the sense that
\begin{equation}
\label{eq:energy-lower-bound-coercive-form}
\mathcal{E}[\vfi, u]
\ge
c_1\|\vfi\|_{Y^1(\mathbb{R}^3)}^2
+
c_2\|\omega(-i\nabla)^{1/2}u\|_{L^2(\mathbb{R}^3)}^2
-
C\bigl(1+\|\vfi\|_{L^2(\mathbb{R}^3)}^4\bigr)
\end{equation}
for suitable constants $c_1,c_2,C>0$ depending only on
$C_0,\lambda,\eta,\|v\|_{L^\infty}$, and $\|w\|_{L^{3/2}}$.
\end{lemma}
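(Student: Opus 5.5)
We need to prove the lower bound on the energy (Lemma \ref{lem:energy-coercive-lower-bound}). The plan is to bound each indefinite term of $\mathcal{E}[\vfi,u]$ from below and absorb what is left into the positive kinetic and field terms.

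First, adding $\frac{C_0+1}{2}\|\vfi\|_{L^2}^2$ to the first two terms of \eqref{eq:full-energy-of-mf} gives exactly $\frac12\|\vfi\|_{Y^1}^2$, by the definition \eqref{eq:Y1-norm-def}. Hence
\begin{equation*}
\mathcal{E}[\vfi,u]+\tfrac{C_0+1}{2}\|\vfi\|_{L^2}^2=\tfrac12\|\vfi\|_{Y^1}^2+\tfrac12\|\omega(-i\nabla)^{1/2}u\|_{L^2}^2+\tfrac{\lambda}{4}\langle|\vfi|^2,v*|\vfi|^2\rangle+\tfrac{\eta}{2}\langle (w*(u+\bar u)),|\vfi|^2\rangle .
\end{equation*}
The Hartree term is handled with Hölder and Young: $|\langle|\vfi|^2,v*|\vfi|^2\rangle|\le\|v\|_{L^\infty}\|\vfi\|_{L^2}^4$. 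This produces the $-\frac{|\lambda|}{4}\|v\|_{L^\infty}\|\vfi\|_{L^2}^4$ contribution.

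For the coupling term we proceed as in the proof of Lemma \ref{lem:moment-bound-for-phi}. Since $w$ is even, we move the convolution onto $|\vfi|^2$. Hölder in $L^3\times L^{3/2}$, then Young ($\|w*|\vfi|^2\|_{L^{3/2}}\le\|w\|_{L^{3/2}}\|\vfi\|_{L^2}^2$, where $w\in L^1\cap L^\infty\subset L^{3/2}$), then the assumed embedding $X_\omega\hookrightarrow L^3$ give
\begin{equation*}
\tfrac{\eta}{2}\big|\langle w*(u+\bar u),|\vfi|^2\rangle\big|\le \eta\, C_\omega\|w\|_{L^{3/2}}\|\vfi\|_{L^2}^2\,X,\qquad X:=\|\omega(-i\nabla)^{1/2}u\|_{L^2}.
\end{equation*}

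To close, we keep $\frac14X^2$ and spend the other $\frac14X^2$ against this linear term. The elementary inequality $bX-\frac14X^2\le b^2$ with $b=\eta C_\omega\|w\|_{L^{3/2}}\|\vfi\|_{L^2}^2$ gives $-bX\ge -\frac14X^2-\eta^2C_\omega^2\|w\|_{L^{3/2}}^2\|\vfi\|_{L^2}^4$. This constant is within the stated $4C_\omega^2\eta^2$, so \eqref{eq:energy-lower-bound-final} follows.

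The coercive form \eqref{eq:energy-lower-bound-coercive-form} then follows by moving $\frac{C_0+1}{2}\|\vfi\|_{L^2}^2$ to the right-hand side. We bound it using $\|\vfi\|_{L^2}^2\le \frac12(1+\|\vfi\|_{L^2}^4)$, and take $c_1=\frac12$, $c_2=\frac14$.

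The only delicate point is the coupling term: it is linear in $u$, so it can only be absorbed after exploiting the Sobolev embedding and completing the square. One should also note that $u\mapsto\langle w*u,|\vfi|^2\rangle$ is well defined on $X_\omega$ precisely because of the $L^3$ embedding. Everything else is routine.
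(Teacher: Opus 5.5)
Your proof is correct and follows essentially the same route as the paper: reconstruct $\frac12\|\vfi\|_{Y^1}^2$, bound the Hartree term by $\|v\|_{L^\infty}\|\vfi\|_{L^2}^4$, control the coupling term by H\"older plus the $L^3$ embedding, and complete the square. Your tracking of the prefactor $\frac{\eta}{2}$ even gives the sharper constant $C_\omega^2\eta^2$ instead of $4C_\omega^2\eta^2$, and you do not need evenness of $w$ (which is not among the lemma's hypotheses), since Young's inequality $\|w*u\|_{L^\infty}\le\|w\|_{L^{3/2}}\|u\|_{L^3}$, or transposing with $w(-\,\cdot)$, works for any $w\in L^{3/2}$.
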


\begin{proof}
We observe that adding $\frac{C_0+1}{2}\|\vfi\|_{L^2}^2$ to the linear components of the energy exactly reconstructs the form domain norm defined in \eqref{eq:Y1-norm-def}
\begin{equation}
\frac12\|\nabla\vfi\|_{L^2}^2 + \frac12\int_{\mathbb{R}^3}V|\vfi|^2\,dx + \frac{C_0+1}{2}\|\vfi\|_{L^2}^2 = \frac12\|\vfi\|_{Y^1}^2.
\end{equation}
By Young's inequality, the Hartree term is bounded below by $-\frac{|\lambda|}{4}\|v\|_{L^\infty}\|\vfi\|_{L^2}^4$. 
For the coupling term, Hölder's inequality and the $X_\omega \hookrightarrow L^3$ embedding yield
\begin{equation}
\left| \eta\int_{\mathbb R^3}(w*(u+\bar u))|\vfi|^2 \right|
\le
2 \eta C_\omega\|w\|_{L^{3/2}}\|\vfi\|_{L^2}^2
\|\omega(-i\nabla)^{1/2}u\|_{L^2}.
\end{equation}
Applying Young's product inequality $ab\le \frac14 b^2+a^2$, we obtain
\begin{equation}
\eta\int_{\mathbb R^3}(w*(u+\bar u))|\vfi|^2
\ge
-
\frac14\|\omega(-i\nabla)^{1/2}u\|_{L^2}^2
-
4C_\omega^2\eta^2\|w\|_{L^{3/2}}^2\|\vfi\|_{L^2}^4.
\end{equation}
Summing these interaction bounds with the reconstructed linear $Y^1$ term yields the desired bound \eqref{eq:energy-lower-bound-final}.
\end{proof}

  \bigskip
\noindent{\bf Acknowledgments:} 
The authors thank Prof. Kenji Nakanishi for helpful comments.
T.C. gratefully acknowledges support by the NSF through the grant DMS-2009800, and the RTG Grant DMS-1840314 - {\em Analysis of PDE}. GPT 5.6 has assisted streamlining the presentation of Appendix A; the results were derived by the authors.

\bibliographystyle{abbrvnat}
\bibliography{references}
\end{document}